\tikzstyle{box}=[shape=rectangle, text height=1.5ex, text depth=0.25ex, yshift=0.5mm, fill=white, draw=black, minimum height=5mm, yshift=-0.5mm, minimum width=5mm, font={\small}]
\tikzstyle{Z dot}=[inner sep=0mm, minimum size=2mm, shape=circle, draw=black, fill={rgb,255: red,221; green,255; blue,221}]
\tikzstyle{Z phase dot}=[minimum size=1.2em, font={\footnotesize\boldmath}, shape=rectangle, rounded corners=0.5em, inner sep=0.2em, outer sep=-0.2em, scale=0.8, tikzit shape=circle, draw=black, fill={rgb,255: red,221; green,255; blue,221}, tikzit draw=blue]
\tikzstyle{X dot}=[Z dot, shape=circle, draw=black, fill={rgb,255: red,255; green,136; blue,136}]
\tikzstyle{X phase dot}=[Z phase dot, tikzit shape=circle, tikzit draw=blue, fill={rgb,255: red,255; green,136; blue,136}, font={\footnotesize\boldmath}]
\tikzstyle{hadamard}=[fill=yellow, draw=black, shape=rectangle, inner sep=0.6mm, minimum height=1.5mm, minimum width=1.5mm]
\tikzstyle{vertex}=[inner sep=0mm, minimum size=1mm, shape=circle, draw=black, fill=black]
\tikzstyle{vertex set}=[inner sep=0mm, minimum size=1mm, shape=circle, draw=black, fill=white, font={\footnotesize\boldmath}]
\tikzstyle{target}=[inner sep=0mm, minimum size=3mm, shape=circle, draw=black]
\tikzstyle{hadamard edge}=[-, dashed, dash pattern=on 2pt off 1.5pt, thick, draw={rgb,255: red,68; green,136; blue,255}]
\tikzstyle{brace edge}=[-, tikzit draw=blue, decorate, decoration={brace,amplitude=1mm,raise=-1mm}]
\tikzstyle{diredge}=[->]
\tikzstyle{dashed edge}=[-, dashed, dash pattern=on 2pt off 0.5pt, draw=black]
\newcommand{\symd}{\mathbin{\Delta}\xspace}
\newcommand{\set}[1]{\{#1\}}
\newcommand\etc{etc\@ifnextchar.{}{.\@}\xspace}
\newcommand\ie{i.e.\@\xspace}  
\newcommand{\odd}[2]{\textsf{Odd}_{#1}\left(#2\right)}
\newcommand{\eve}[2]{\textsf{Even}_{#1}\left(#2\right)}
\newcommand{\pow}[1]{\ensuremath{\mathcal{P}\left( #1 \right)}}
\newcommand{\pat}{\ensuremath{\mathfrak{P}}} 
\newcommand{\gflow}{\ensuremath{\mathfrak{g}}} 
\newcommand{\intf}[1]{\left\llbracket #1 \right\rrbracket} 
\newcommand{\ld}{\lambda}
\newcommand{\sse}{\subseteq}
\newcommand{\CZ}{\ensuremath{\textrm{CZ}}\xspace}
\newcommand{\CX}{\ensuremath{\textrm{CNOT}}\xspace}
\newcommand{\CNOT}{\CX}
\newcommand{\XY}{\normalfont XY\xspace}
\newcommand{\XZ}{\normalfont XZ\xspace}
\newcommand{\YZ}{\normalfont YZ\xspace}
\newcommand{\XYm}{\ensuremath\normalfont\textrm{XY}\xspace}
\newcommand{\XZm}{\normalfont\normalfont\textrm{XZ}\xspace}
\newcommand{\YZm}{\normalfont\normalfont\textrm{YZ}\xspace}
\newcommand{\LOG}{labelled open graph}
\newcommand{\bra}[1]{\ensuremath{\left\langle #1 \right|}}
\newcommand{\ket}[1]{\ensuremath{\left|  #1 \right\rangle}}
\newcommand{\ketbra}[2]{\ensuremath{\ket{#1}\!\bra{#2}}}
\newcommand{\abs}[1]{\ensuremath{\left| #1 \right|}}
\theoremstyle{definition}
\newtheorem{theorem}{Theorem}[section]
\newtheorem{corollary}[theorem]{Corollary}
\newtheorem{lemma}[theorem]{Lemma}
\newtheorem*{lemma*}{Lemma}
\newtheorem*{proposition*}{Proposition}
\newtheorem{proposition}[theorem]{Proposition}
\newtheorem{definition}[theorem]{Definition}
\newtheorem{example}[theorem]{Example}
\newtheorem{remark}[theorem]{Remark}
\newcommand{\Gp}{\ensuremath G\star u}
\newcommand{\Gpp}{\ensuremath G\star u\star v}
\newcommand{\idrem}[1]{\ensuremath\underset{#1}{\twoheadrightarrow}} 
\newcommand{\precprimed}{{\prec'}}
\newcommand{\tildeprecprimed}{{\tilde \prec'}}
\def\al{\alpha}
\def\cz#1#2{[Z_{#1}]^{#2}}
\def\cx#1#2{X_{#1}^{#2}}
\def\cx#1#2{[X_{#1}]^{#2}}
\def\phZ#1#2{N_{#1}^{#2}}
\newtheorem*{rep@theorem}{\rep@title}
\newcommand{\newreptheorem}[2]{%
	\newenvironment{rep#1}[1]{%
    \def\rep@title{#2 \ref{##1} (restated)}%
		\begin{rep@theorem}}%
		{\end{rep@theorem}}}
\title{There and back again: A circuit extraction tale}
\author{Miriam Backens}
\affiliation{University of Birmingham}
\email{m.backens@cs.bham.ac.uk}
\author{Hector Miller-Bakewell}
\affiliation{University of Oxford}
\email{hector.miller-bakewell@cs.ox.ac.uk}
\author{Giovanni de Felice}
\affiliation{University of Oxford}
\email{giovanni.defelice@cs.ox.ac.uk}
\author{Leo Lobski}
\affiliation{ILLC, University of Amsterdam (until 2020)}
\email{contact@leolob.ski}
\author{John van de Wetering}
\affiliation{Radboud University Nijmegen}
\email{john@vdwetering.name}
\begin{document}

\maketitle

\begin{abstract}
Translations between the quantum circuit model and the measurement-based one-way model are useful for verification and optimisation of quantum computations.
They make crucial use of a property known as gflow.
While gflow is defined for one-way computations allowing measurements in three different planes of the Bloch sphere, most research so far has focused on computations containing only measurements in the \XY-plane.
Here, we give the first circuit-extraction algorithm to work for one-way computations containing measurements in all three planes and having gflow.
The algorithm is efficient and the resulting circuits do not contain ancillae.
One-way computations are represented using the \zxcalculus, hence the algorithm also represents the most general known procedure for extracting circuits from \zxdiagrams.
In developing this algorithm, we generalise several concepts and results previously known for computations containing only \XY-plane measurements.
We bring together several known rewrite rules for measurement patterns and formalise them in a unified notation using the \zxcalculus.
These rules are used to simplify measurement patterns by reducing the number of qubits while preserving both the semantics and the existence of gflow.
The results can be applied to circuit optimisation by translating circuits to patterns and back again.
\end{abstract}


\tableofcontents

\section{Introduction}

The gate-based model and the measurement-based model are two fundamentally different approaches to implementing quantum computations.
In the gate-based model \cite{NielsenChuang}, the bulk of the computation is performed via unitary (i.e.\ reversible) one- and two-qubit gates.
Measurements serve mainly to read out data and may be postponed to the end of the computation.
Conversely, in the measurement-based model, the bulk of the computation is performed via measurements on some general resource state, which is independent of the specific computation.
We focus here on the one-way model~\cite{MBQC1}, where the resource states are \emph{graph states} (see Section~\ref{sec:MBQC}).
In this paper, we study ways of converting between these two different approaches to quantum computation with a view towards optimising the implementation of both.

While computations in the gate-based model are represented as quantum circuits,
computations in the one-way model are usually represented by \emph{measurement patterns}, which describe both the graph state and the measurements performed on it~\cite{Patterns,danos_kashefi_panangaden_perdrix_2009}.
Measurement patterns in the one-way model generally do not allow arbitrary single-qubit measurements.
Instead, measurements are often restricted to the `planes' of the Bloch sphere that are orthogonal to the principal axes, labelled the \XY-, \XZ-, and \YZ-planes.
In fact, most research has focused on measurements in just the \XY-plane, which alone are sufficient for universal quantum computation~\cite{Patterns}.
Similarly, measurements in the \XZ-plane are also universal~\cite{mhalla2012graph}, although this model has been explored less in the literature.
In this paper, we will consider measurements in all three of the planes, since this usually leads to patterns involving fewer qubits, and allows for more non-trivial transformations of the graph state.

Due to the non-deterministic nature of quantum measurements, a one-way computation needs to be adaptive, with later measurement angles depending on the outcomes of earlier measurements~\cite{MBQC1}.
While the ability to correct undesired measurement outcomes is necessary for obtaining a deterministic computation, not all sequences of measurements support such corrections.
The ability to perform a deterministic computation depends on the underlying graph state and the choice of measurement planes, which together form an object called a \LOG.
If all measurements are in the \XY-plane, \emph{causal flow} (sometimes simply called `flow') is a sufficient condition for the \LOG\ to support deterministically implementable patterns~\cite{Danos2006Determinism-in-}.
Yet causal flow is not necessary for determinism.
Instead, the condition of \emph{generalized flow} (or \emph{gflow}) \cite{GFlow} is both sufficient and necessary for deterministic implementability.
Gflow can be defined for \LOG{}s containing measurements in all three planes, in which case it is sometimes called \emph{extended gflow}~\cite[Theorems~2 \&~3]{GFlow}.

A given representation of some computation can be transformed into a different representation of the same computation using local rewriting.
The new representation may be chosen to have more desirable properties.
For quantum circuits, such desirable properties include low depth~\cite{amy2013meet}, small total gate count~\cite{CliffOpt}, or small counts of some particular type of gate, such as the T-gate~\cite{amy2014polynomial}.
For measurement patterns, desirable properties include a small number of qubits~\cite{eslamy2018optimization,houshmand2018minimal} or a particularly simple underlying graph state~\cite{mhalla2012graph}.

Local processes can also be used to translate a pattern into a circuit.
This is used, for example, to verify that the pattern represents the desired operation~\cite{Danos2006Determinism-in-,beaudrap2010unitary,duncan2010rewriting,daSilva2013compact,miyazaki2015analysis}.
Conversely, a translation of a circuit into a pattern can be used to implement known algorithms in the one-way model, or it can be combined with a translation back to a circuit to trade depth against width, to parallelise Clifford operations, or to reduce the number of T gates~\cite{broadbent_2009_parallelizing,daSilva2013global,houshmand2017quantum}.
No complete set of rewrite rules is known for quantum circuits or for measurement patterns, although a completeness result does exist for 2-qubit circuits over the Clifford+T gate set \cite{Bian2Qubit}.

Rewriting of patterns or circuits, as well as translations between the two models, can be performed using the \zxcalculus, a graphical language for quantum computation \cite{CD2}.
This language is  more flexible than quantum circuit notation and also has multiple complete sets of graphical rewrite rules \cite{SimonCompleteness,HarnyAmarCompleteness,JPV-universal,euler-zx}.
While translating a measurement pattern to a quantum circuit can be difficult, the translation between patterns and \zxdiagrams is straightforward \cite{duncan2010rewriting,cliff-simp,kissinger2017MBQC}.

\subsection{Our contributions}

In this paper, we give an algorithm that extracts a quantum circuit from any measurement pattern whose underlying \LOG\ has extended gflow.
Our algorithm does not use ancillae.
While circuit extraction algorithms for patterns containing only XY-plane measurements were already known~\cite{broadbent_2009_parallelizing,duncan2010rewriting,miyazaki2015analysis,cliff-simp},
this is the first circuit extraction algorithm for extended gflow, i.e.\ where patterns may contain measurements in more than one plane.
The algorithm works by translating the pattern into the \zxcalculus and transforming the resulting \zxdiagram into a circuit-like form.
It generalises a similar algorithm, which works only for patterns where all measurements are in the \XY-plane \cite{cliff-simp}.
The circuit extraction algorithm employs the \zxcalculus, so it can be used not only on diagrams arising from measurement patterns but on any \zxdiagram satisfying certain properties.
Thus, this procedure is not only the most general circuit extraction algorithm for measurement patterns but also the most general known circuit extraction algorithm for ZX-diagrams.
In particular, our algorithm can handle ZX-diagrams that contain \emph{phase gadgets}~\cite{tcountpreprint}, which have been used recently in a variety of circuit optimisation schemes~\cite{phaseGadgetSynth,pi4parity,cowtan2020generic}.

In developing the circuit extraction algorithm, we derive a number of explicit rewrite rules for \zxdiagrams representing measurement patterns, in particular rewrites that involve graph transformations on the underlying resource state.
We show how the gflow changes for each of these rewrite rules, i.e.\ how the rewrites affect the instructions for correcting undesired measurement outcomes,
and hence that these transformations preserve the ability to deterministically implement the pattern.
Several of these rules were previously employed in the literature, e.g.\ the pivot-minor transformation in Ref.~\cite{mhalla2012graph} or the elimination of Clifford measurements first derived in a different context in Ref.~\cite{hein2004multiparty},
but we are the first to explicitly work out the general transformations of the gflow resulting from these transformations of the resource state and the pattern.

The rewrite rules serve not only to prove the correctness of the algorithm, but also to simplify the measurement patterns by reducing the number of qubits involved.
It was already known that qubits measured in a Pauli basis can be treated in special ways~\cite{GFlow}, for instance that these measurements can be performed simultaneously without losing determinism~\cite{di2016parallelizing},
and hence that the depth of the pattern only depends on the number of non-Clifford operations.
Additionally, based on~\cite{hein2004multiparty}, Houshmand et al.~\cite{houshmand2018minimal} stated that Pauli measurements can be removed from a pattern as long as the rest of the pattern is transformed appropriately. 
Note that their work is restricted to patterns with \XY-plane measurements only, and that they worked out the effects of the transformation on the gflow for only a few example cases.
We extend these results by combining our different rewrite rules to show exactly how the gflow is transformed when a qubit measured in a Pauli basis is removed from a pattern.
In particular, we explicitly prove that the existence of gflow is preserved, so that deterministic implementability is maintained.
So whereas~\cite{di2016parallelizing} showed that the depth of a computation depends directly on the number of non-Clifford operations,
our results show that the number of measured qubits needed to perform a measurement-based computation is directly related to the number of non-Clifford operations required for the computation, as every non-input qubit measured in a Pauli basis can be removed from a pattern without losing determinism.

These results require us to generalise several concepts originally developed for patterns containing only \XY-plane measurements to patterns with measurements in multiple planes.
In particular, we adapt the definitions of \emph{focused gflow}~\cite{mhalla2011graph} and \emph{maximally delayed gflow}~\cite{MP08-icalp} to the extended gflow case.
Our generalisation of focused gflow differs from the three generalisations suggested by Hamrit and Perdrix~\cite{hamrit2015reversibility}; indeed, the desired applications naturally lead us to one unique generalisation (see Section~\ref{sec:focusing-extended-gflow}).

Combined with the known procedure for transforming a quantum circuit into a measurement pattern using the \zxcalculus~\cite{cliff-simp}, our pattern simplification and circuit extraction procedure can be used to reduce the T-gate count of quantum circuits~\cite{tcountpreprint}.
This involves translating the circuit into a ZX-diagram, transforming this to a diagram which corresponds to a measurement pattern, simplifying the pattern, and then re-extracting a circuit.
A rough overview of the different translation and optimisation procedures is given in Figure~\ref{figRoughTranslationsAndOptimisations}.

\begin{figure}
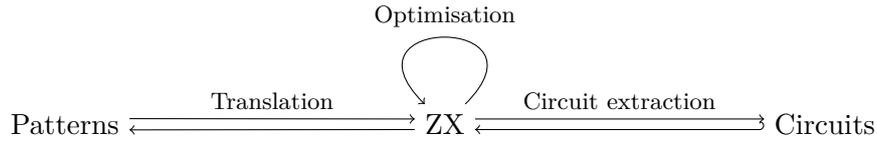

\ctikzfig{translationsOverview}
\caption{A rough overview over the translation procedures between the three paradigms, indicating where the optimisation steps are carried out.
\label{figRoughTranslationsAndOptimisations}}
\end{figure}

The remainder of this paper is structured as follows.
Known definitions and results relating to \zxcalculus, measurement patterns and gflow are presented in Section~\ref{sec:preliminaries}.
We derive the rewrite rules for extended measurement patterns and the corresponding gflow transformations in Section~\ref{sec:rewriting}. These results are used in Section~\ref{sec:simplifying} to simplify measurement patterns in various ways. Then in Section~\ref{sec:circuitextract}, we demonstrate the algorithm for extracting a circuit from a measurement pattern whose underlying \LOG\ has extended gflow.
The conclusions are given in Section~\ref{sec:conclusion}.

\section{Preliminaries} \label{sec:preliminaries}

We give a quick overview over the \zxcalculus in Section~\ref{sec:zx}
and introduce the one-way model of measurement-based quantum computing in Section~\ref{sec:MBQC}.
The graph-theoretic operations of local complementation and pivoting
(on which the rewrite rules for measurement patterns are based)
and their representation in the \zxcalculus are presented in Section~\ref{sec:lc}.
Section~\ref{sec:gflow} contains the definitions and properties of different notions of flow.

\subsection{The ZX-calculus}
\label{sec:zx}

The \zxcalculus is a diagrammatic language similar to the widely-used
quantum circuit notation. We will provide only a brief overview here,
for an in-depth reference see~\cite{CKbook}.

A \emph{\zxdiagram} consists of \emph{wires} and \emph{spiders}.
Wires entering the diagram from the left are called \emph{input wires}; wires exiting to the right are called \emph{output wires}.
Given two diagrams we can compose them horizontally (denoted $\circ$)
by joining the output wires of the first to the input wires of the second, or
form their tensor product (denoted $\otimes$) by simply stacking the two diagrams vertically.

Spiders are linear maps which can have any number of input or output
wires.  There are two varieties: $Z$ spiders, depicted as green dots, and $X$ spiders, depicted as red dots.\footnote{If you are reading this
  document in monochrome or otherwise have difficulty distinguishing green and red, $Z$ spiders will appear lightly-shaded and $X$ spiders will appear darkly-shaded.}
  Written explicitly in Dirac `bra-ket' notation, these linear maps are:
\[
\small
\hfill \tikzfig{Zsp-a} \ := \ \ketbra{0...0}{0...0} +
e^{i \alpha} \ketbra{1...1}{1...1} \hfill
\qquad
\hfill \tikzfig{Xsp-a} \ := \ \ketbra{+...+}{+...+} +
e^{i \alpha} \ketbra{-...-}{-...-} \hfill
\]
A \zxdiagram with $m$ input wires and $n$ output wires then represents a linear map $(\mathbb C^2)^{\otimes m} \to (\mathbb C^2)^{\otimes n}$ built from
spiders (and permutations of qubits) by composition and tensor product
of linear maps.  As a special case, diagrams with no inputs and $n$ outputs represent vectors in $(\mathbb C^2)^{\otimes n}$, i.e.
(unnormalised) $n$-qubit states.

\begin{example}\label{ex:basic-maps-and-states}
  We can immediately write down some simple state preparations and
  unitaries in the \zxcalculus:
  \[
  \begin{array}{rclcrcl}
  \tikzfig{ket-+} & = & \ket{0} + \ket{1} \ \propto \ket{+} &
  \qquad &
  \tikzfig{ket-0} & = & \ket{+} + \ket{-} \ \propto \ket{0} \\[1em]
  \tikzfig{Z-a} & = & \ketbra{0}{0} + e^{i \alpha} \ketbra{1}{1} =
  Z_\alpha &
  &
  \tikzfig{X-a} & = & \ketbra{+}{+} + e^{i \alpha} \ketbra{-}{-} = X_\alpha
  \end{array}
  \]
  In particular we have the Pauli matrices:
  \[
  \hfill
  \tikzfig{Z} = Z \qquad\qquad   \tikzfig{X} = X \qquad\qquad
  \hfill
  \]
\end{example}
It will be convenient to introduce a symbol -- a yellow square -- for
the Hadamard gate. This is defined by the equation:
\begin{equation}\label{eq:Hdef}
\hfill
\tikzfig{had-alt}
\hfill
\end{equation}
We will often use an alternative notation to simplify the diagrams,
and replace a Hadamard between two spiders by a blue dashed edge, as
illustrated below.

\ctikzfig{blue-edge-def}

Both the blue edge notation and the Hadamard box can always be
translated back into spiders when necessary. We will refer to the blue
edge as a \emph{Hadamard edge}.

\begin{definition}\label{def:interpretation}
The \emph{interpretation} of a \zxdiagram $D$ is the linear map that such a diagram represents
and is written $\intf{D}$.
For a full treatment of the interpretation of a ZX diagram see, e.g.\ Ref.~\cite{SimonCompleteness}.
We say two \zxdiagrams $D_1$ and $D_2$ are \emph{equivalent} when $\intf{D_1}=z\intf{D_2}$ for some non-zero complex number $z$.
\end{definition}

We define equivalence up to a global scalar, as those scalars will not be important for the class of diagrams we study in this paper.

\begin{remark}\label{rem:global-scalars}
    There are many different sets of rules for the \zxcalculus. The version we present only preserves equality up to a global scalar. Versions of the \zxcalculus where equality is `on the nose' can be found in Ref.~\cite{Backens:2015aa} for the stabiliser fragment, in Ref.~\cite{SimonCompleteness} for the Clifford+T fragment, and in Ref.~\cite{JPV-universal,ng2017universal} for the full language.
\end{remark}

The interpretation of a \zxdiagram is invariant under
moving the vertices around in the plane, bending,
unbending, crossing, and uncrossing wires, so long as the connectivity
and the order of the inputs and outputs is maintained.
Furthermore, there is an additional set of equations that we call the \emph{rules} of the \zxcalculus; these are shown in
Figure~\ref{fig:zx-rules}. Two diagrams are equivalent if one can be transformed into the other using the rules of the \zxcalculus.

\begin{figure}[h]
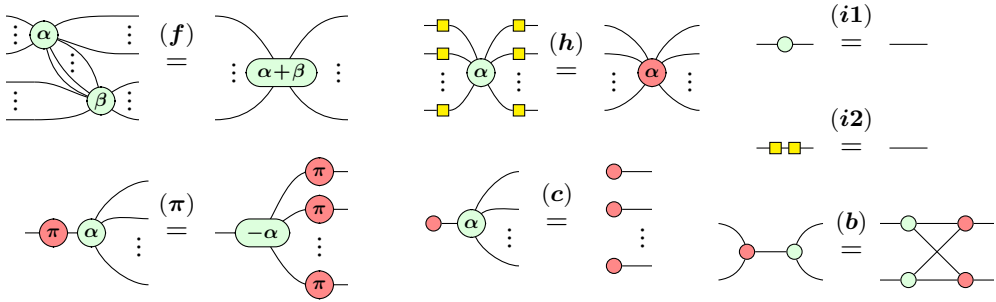

\ctikzfig{ZX-rules}
\vspace{-3mm}
\caption{\label{fig:zx-rules} A convenient presentation for the ZX-calculus. These rules hold
  for all $\alpha, \beta \in [0, 2 \pi)$, and due to $(\bm{h})$ and
  $(\bm{i2})$ all rules also hold with the colours   interchanged. Note the ellipsis should be read as `0 or more', hence the spiders on the LHS of \SpiderRule are connected by one or more wires.}
\end{figure}

\begin{remark}\label{rem:completeness}
  The \zxcalculus is \emph{universal} in the sense that any linear map can be expressed as a \zxdiagram. Furthermore, when restricted to \textit{Clifford \zxdiagrams}, i.e. diagrams whose phases are all multiples of $\pi/2$, the version we present in Figure~\ref{fig:zx-rules} can be made \emph{complete} as follows: for any two non-zero Clifford \zxdiagrams that are equivalent (cf.\ Definition~\ref{def:interpretation}), there exists a series of rewrites transforming one into the other if a meta rule is added that allows non-zero scalar subdiagrams to be dropped. For completeness when considering exact equality, see Remark~\ref{rem:global-scalars}. Recent extensions to the calculus have been introduced which are complete for the larger \textit{Clifford+T} family of \zxdiagrams \cite{SimonCompleteness}, where phases are multiples of $\pi/4$, and for \textit{all} \zxdiagrams~\cite{HarnyAmarCompleteness,JPV-universal,euler-zx}.
\end{remark}

Quantum circuits can be translated into \zxdiagrams in a straightforward manner.
We will take as our starting point circuits constructed
from the following universal set of gates:
\[
\CNOT
\qquad\qquad
Z_\alpha
\qquad\qquad
H
\]
We choose this gate
set because it admits a convenient representation in terms of
spiders:
\begin{align}\label{eq:zx-gates}
\CNOT & = \tikzfig{cnot} &
Z_\alpha & = \tikzfig{Z-a} &
H & = \tikzfig{h-alone}
\end{align}
These gates have the following interpretations:

\begin{align*}
\intf{\tikzfig{cnot}} &=
\left(\begin{matrix}
  1 & 0 & 0 & 0 \\
  0 & 1 & 0 & 0 \\
  0 & 0 & 0 & 1 \\
  0 & 0 & 1 & 0 \\
\end{matrix}\right)
\qquad
\intf{\tikzfig{Z-a}} &=
\left(\begin{matrix}
  1 & 0 \\
  0 & e^{i \alpha}
\end{matrix}\right)
\qquad
\intf{\tikzfig{h-alone}} &= \frac{1}{\sqrt{2}}
\left(\begin{matrix}
  1 & 1 \\
  1 & -1
\end{matrix}\right)
\end{align*}

For the \CNOT gate, the green spider is the first (i.e. control) qubit and the red spider is the second (i.e. target) qubit. Other common gates can easily be expressed in terms of these gates. In particular, $S := Z_{\frac\pi2}$, $T := Z_{\frac\pi4}$ and:
\begin{align}\label{eq:zx-derived-gates}
X_\alpha & = \tikzfig{X-a-expanded} &
\CZ & = \tikzfig{cz-small}
\end{align}

\begin{remark}
  Note that the directions of the wires in the depictions of the \CNOT and \CZ gates are irrelevant, hence we can draw them vertically without ambiguity. Undirected wires are a general feature of \zxdiagrams, and from hence forth we will ignore wire directions without further comment. We will also freely draw inputs/outputs entering or exiting the diagram from arbitrary directions if the meaning is clear from context (as for example in Example~\ref{ex:gflow-in-action}).
\end{remark}

\noindent For our purposes, we will define quantum circuits as a special case of \zxdiagrams.

\begin{definition}\label{def:circuit}
  A \emph{circuit} is a \zxdiagram generated by composition and tensor products of the \zxdiagrams in equations~\eqref{eq:zx-gates} and~\eqref{eq:zx-derived-gates}.
  The interpretation of such a circuit is given by the composition and tensor products of the interpretations of the generating diagrams given
  in  equation~\eqref{eq:zx-gates}, in accordance with:
  \begin{align*}
  \intf{D \otimes D'} = \intf{D} \otimes \intf{D'} \qquad \intf{D \circ D'} = \intf{D} \circ \intf{D'}
  \end{align*}
\end{definition}

Important subclasses of circuits are \textit{Clifford circuits}, sometimes called stabilizer circuits, which are obtained from compositions of only \CNOT, $H$, and $S$ gates. They are efficiently classically simulable, thanks to the \textit{Gottesman-Knill theorem}~\cite{aaronsongottesman2004}. A unitary is \textit{local Clifford} if it arises from a single-qubit Clifford circuit, i.e. a composition of $H$ and $S$ gates.
The addition of $T$ gates to Clifford circuits yields \textit{Clifford+T circuits}, which are capable of approximating any $n$-qubit unitary to arbitrary precision, whereas the inclusion of $Z_\alpha$ gates for all $\alpha$ enables one to construct any unitary exactly~\cite{NielsenChuang}.

\subsection{Measurement-based quantum computation}
\label{sec:MBQC}
Measurement-based quantum computing (MBQC) is a universal model for quantum computation, underlying the \emph{one-way quantum computing} scheme \cite{MBQC2}.
The basic resource for MBQC is a \emph{graph state}~\cite{hein2004multiparty,hein2006entanglement}, built from a register of qubits by applying $CZ$-gates pairwise to obtain an entangled quantum state.\footnote{There are models of MBQC where the basic resource is not a graph state, but we do not consider those models in this paper.}
The graph state is then gradually consumed by measuring single qubits.
By choosing an appropriate resource state and appropriate measurements, any quantum computation can be performed.
The difficulty is the non-determinism inherent in quantum measurements, which means computations need to be adaptive to implement deterministic operations.

Measurement-based quantum computations are usually formalised in terms of \emph{measurement patterns}, a syntax describing both the construction of graph states and their processing via successive measurements.
In the following, we first present measurement patterns, and then introduce other
-- and, in our opinion, simpler --
formalisms for reasoning about these computations.
Instead of allowing arbitrary single-qubit measurements, measurements are usually restricted to three planes of the Bloch sphere, labelled XY, XZ, and YZ.
For each measurement, the state denoted `$+$' is taken to be the desired result of the measurement and the state denoted `$-$' is the undesired result, which will need to be adaptively corrected.
The allowed measurements are (\cite[p.~292]{danos_kashefi_panangaden_perdrix_2009}):
\begin{align*}
  \ket{+_{\XYm,\alpha}} &= \frac{1}{\sqrt{2}}\left(\ket{0} + e^{i\alpha} \ket{1} \right) &
  \ket{-_{\XYm,\alpha}} &= \frac{1}{\sqrt{2}}\left(\ket{0} - e^{i\alpha} \ket{1} \right) \\
  \ket{+_{\XZm,\alpha}} &= \cos\left(\frac{\alpha}{2}\right)\ket{0} + \sin\left(\frac{\alpha}{2}\right) \ket{1} &
  \ket{-_{\XZm,\alpha}} &= \sin\left(\frac{\alpha}{2}\right)\ket{0} - \cos\left(\frac{\alpha}{2}\right) \ket{1} \\
  \ket{+_{\YZm,\alpha}} &= \cos\left(\frac{\alpha}{2}\right)\ket{0} + i \sin\left(\frac{\alpha}{2}\right) \ket{1} &
  \ket{-_{\YZm,\alpha}} &= \sin\left(\frac{\alpha}{2}\right)\ket{0} - i \cos\left(\frac{\alpha}{2}\right) \ket{1}
\end{align*}
\noindent where $0 \leq \alpha \leq 2\pi$.
Usually, the desired measurement outcome is labelled 0 and the undesired measurement outcome is labelled 1.

\begin{definition}[\cite{Patterns}]\label{def:meas_pattern}
  A \emph{measurement pattern} consists of an $n$-qubit register $V$ with distinguished sets $I, O \sse V$ of input and output qubits and a sequence of commands consisting of the following operations:
  \begin{itemize}
    \item Preparations $N_i$, which initialise a qubit $i \notin I$ in the state $\ket{+}$.
    \item Entangling operators $E_{ij}$, which apply a $CZ$-gate to two distinct qubits $i$ and $j$.
    \item Destructive measurements $M_i^{\ld,\gamma(\alpha,s,t)}$, which project a qubit $i\notin O$ onto the orthonormal basis $\{\ket{+_{\ld,\gamma}},\ket{-_{\ld,\gamma}}\}$, where $\lambda \in \{ \XYm, \XZm, \YZm \}$ is the measurement plane, $\alpha$ is the non-corrected measurement angle and $\gamma(\alpha,s,t)\coloneqq (-1)^s\alpha+t\pi$ is the corrected one with variables $s,t\in\{0,1\}$ indicating measurement outcomes. The projector $\ket{+_{\ld,\gamma}}\bra{+_{\ld,\gamma}}$ corresponds to outcome $0$ and $\ket{-_{\ld,\gamma}}\bra{-_{\ld,\gamma}}$ corresponds to outcome $1$. If there are no corrections involved, we will simply write $M_i^{\ld,\alpha}$.
    \item Corrections $[X_i]^s$, which depend on a measurement outcome (or a linear combination of measurement outcomes) $s\in\{0,1\}$ and act as the Pauli-$X$ operator on qubit $i$ if $s$ is $1$ and as the identity otherwise,
    \item Corrections $[Z_j]^t$, which depend on a measurement outcome (or a linear combination of measurement outcomes) $t\in\{0,1\}$ and act as the Pauli-$Z$ operator on qubit $j$ if $t$ is $1$ and as the identity otherwise.
  \end{itemize}
\end{definition}

We will only consider \emph{runnable patterns}, which satisfy certain conditions ensuring they are implementable in practice.

\begin{definition}[\cite{Patterns}]\label{def:runnable_pattern}
A measurement pattern is \emph{runnable} if the following conditions hold.
\begin{itemize}
\item No correction depends on an outcome not yet measured.
\item All non-input qubits are prepared.
\item All non-output qubits are measured.
\item A command $C$ acts on a qubit $i$ only if $i$ has not already been measured, and one of (1)-(3) holds:
\begin{enumerate}[label=({\arabic*})]
\item $i$ is an input and $C$ is not a preparation,
\item $i$ has been prepared and $C$ is not a preparation,
\item $i$ has not been prepared, $i$ is not an input, and $C$ is a preparation.
\end{enumerate}
\end{itemize}
\end{definition}

The entangling operators $E_{ij}$ in a pattern determine the resource graph state.
In fact, they correspond to the edges of the underlying \emph{\LOG}. This is formalised in the following definitions and remark, which will be used throughout the paper.

\begin{definition}
    An \emph{open graph} is a tuple $(G,I,O)$ where $G=(V,E)$ is an undirected graph, and $I,O \sse V$ are distinguished (possibly overlapping) subsets representing \emph{inputs} and \emph{outputs}. We will write $\comp{O} := V\setminus O$ for the \emph{non-outputs} and $\comp{I}:= V\setminus I$ for the \emph{non-inputs}. If $v \in \comp{I}$ and $v \in \comp{O}$, we call
    $v$ an \emph{internal} vertex, and if $v\in I\cup O$, we call $v$ a \emph{boundary vertex}.
    For vertices $u,v \in V$ we write $u\sim v$ when $u$ and $v$ are adjacent in $G$, and denote by $N_G(u)\coloneqq\{w\in V \mid w\sim u\}$ the set of neighbours of $u$.
\end{definition}

\begin{definition}[{cf.~\cite[p.5]{GFlow}}]\label{def:LOG}
  A \emph{\LOG} is a tuple $\Gamma = (G,I,O, \lambda)$ where $(G,I,O)$ is an open graph, and $\lambda : \comp{O} \rightarrow \{ \XYm, \YZm, \XZm\}$ assigns a measurement plane to each non-output vertex.%
  \footnote{We adopt the notation of this from Ref.~\cite{GFlow} where their corresponding notion is called an `open graph state'. While they explicitly consider this to be a particular type of quantum state, our notion of a labelled open graph is just a type of graph. We define the associated `open graph state' separately in Definition~\ref{def:ogs-to-linear-map}.}
\end{definition}

\begin{remark}
 A \LOG\ and an assignment of measurement angles $\alpha:\comp{O}\to [0,2\pi)$ carry the same information as a runnable measurement pattern with no corrections.

 Given the measurement pattern, the corresponding \LOG\ $(G,I,O,\ld)$ may be determined as follows: the vertices of the graph $G$ are given by the set of qubits $V$.
The edges of $G$ are given by the set of pairs $(i, j) \in V \times V$ such
that the total number of occurences of $E_{ij}$ and $E_{ji}$ in the pattern is odd.
The sets $I$ and $O$ are the ones given in Definition~\ref{def:meas_pattern}.
The functions $\ld$ and $\al$ are determined by the measurement operators $M_i^{\ld,\al}$; Definition~\ref{def:runnable_pattern} guarantees that both are well-defined.

Given a \LOG\ we can apply this process in reverse to construct a runnable measurement pattern without corrections.

A \LOG\ and an assignment of measurement angles can also be determined from a measurement pattern including corrections by simply ignoring the latter (i.e.\ by assuming that all measurements yield the desired result).
In Section~\ref{sec:gflow}, we discuss necessary and sufficient conditions under which appropriate corrections commands can be determined when constructing a measurement pattern from a \LOG; these corrections then put constraints on the order of the measurements.
\end{remark}

In general, a single measurement pattern can result in a variety of measurement instructions, with each instruction depending on earlier measurement outcomes and the resulting corrections that must be applied.
We are, however, interested in the subset of measurement patterns that always implement the same linear map, regardless of the measurement outcomes.
For these patterns, we can identify the unique linear map implemented by the pattern with the set of measurement instructions obtained when all the measurement outcomes are $0$ (and thus no corrections are necessary).
This leads us to the following definition:
\begin{definition}\label{def:ogs-to-linear-map}
 Suppose $\Gamma=(G,I,O,\ld)$ is a \LOG, and let $\alpha:\comp{O}\to [0,2\pi)$ be an assignment of measurement angles.
 The \emph{linear map associated with $\Gamma$ and $\alpha$} is given by
 \[
  M_{\Gamma,\alpha} := \left( \prod_{i\in\comp{O}} \bra{+_{\ld(i),\alpha(i)}}_i \right) E_G N_{\comp{I}},
 \]
  where $E_G := \prod_{i\sim j} E_{ij}$ and $N_{\comp{I}} := \prod_{i\in\comp{I}} N_i$.
\end{definition}

\begin{remark}
 Note that the projections $\bra{+_{\ld(i),\alpha(i)}}_i$ on different qubits $i$ commute with each other.
 Similarly, the controlled-Z operations $E_{ij}$ commute even if they involve some of the same qubits.
 Finally, all the state preparations $N_i$ on different qubits commute.
 Thus, $M_{\Gamma,\alpha}$ is fully determined by $\Gamma$ and $\alpha$.
\end{remark}

\begin{definition}\label{def:ogs-to-ZX}
 Suppose $\Gamma=(G,I,O,\ld)$ is a \LOG\ and $\alpha:\comp{O}\to [0,2\pi)$ is an assignment of measurement angles.
 Then its \emph{associated \zxdiagram} $D_{\Gamma,\alpha}$ is defined by translating the expression for $M_{\Gamma,\alpha}$ from Definition~\ref{def:ogs-to-linear-map} according to Table~\ref{tab:MBQC-to-ZX}. The elements are composed in the obvious way and any sets of adjacent phase-free green spiders other than measurement effects are merged. In other words, merging affects green spiders which come from the translation of a preparation or entangling command.
 \begin{table}
  \centering
  \renewcommand{\arraystretch}{2}
  \begin{tabular}{c||c|c|c|c|c}
   operator & $N_i$ & $E_{ij}$ & $\bra{+_{\XYm,\alpha(i)}}_i$ & $\bra{+_{\XZm,\alpha(i)}}_i$ & $\bra{+_{\YZm,\alpha(i)}}_i$ \\ \hline
   diagram & \tikzfig{plus-state} & \tikzfig{cz} & \tikzfig{XY-effect} & \tikzfig{XZ-effect} & \tikzfig{YZ-effect}
  \end{tabular}
  \renewcommand{\arraystretch}{1}
  \caption{Translation from an associated linear map to a \zxdiagram.}
  \label{tab:MBQC-to-ZX}
 \end{table}
\end{definition}

\begin{example}
  The measurement pattern with the qubit register $V=\{ 1,2,3,4\}$, input and output sets $I=\{ 1,2 \}$ and $O = \{ 1,4 \}$ and the sequence of commands
  $$ M_2^{\XYm,\frac{\pi} 2}M_3^{\YZm,\pi}E_{14}E_{23}E_{24} E_{34} N_3 N_4$$
  is represented by the following \zxdiagram:
  \ctikzfig{example-MBQC-translation}
\end{example}

\begin{lemma}\label{lem:zx-equals-linear-map}
 Suppose $\Gamma=(G,I,O,\ld)$ is a \LOG\ and $\alpha:\comp{O}\to [0,2\pi)$ is an assignment of measurement angles.
 Let $M_{\Gamma,\alpha}$ be the linear map specified in Definition~\ref{def:ogs-to-linear-map} and let $D_{\Gamma,\alpha}$ be the \zxdiagram constructed according to Definition~\ref{def:ogs-to-ZX}.
 Then $\intf{D_{\Gamma,\alpha}}=M_{\Gamma,\alpha}$.
\end{lemma}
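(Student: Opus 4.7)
The plan is to verify the equation by unpacking both sides into their constituent operators/diagrams and invoking the compositionality of the interpretation. Since $M_{\Gamma,\alpha}$ is defined as a composition of the single-operator pieces $N_i$, $E_{ij}$, and $\bra{+_{\ld(i),\alpha(i)}}_i$, and $D_{\Gamma,\alpha}$ is (by Definition~\ref{def:ogs-to-ZX}) the corresponding composition of the table entries with some green-spider merging, it suffices to (i) check each entry in Table~\ref{tab:MBQC-to-ZX} individually, (ii) show that the final spider-merging step does not change the interpretation, and (iii) assemble these using the compositional identities $\intf{D \circ D'} = \intf{D} \circ \intf{D'}$ and $\intf{D \otimes D'} = \intf{D} \otimes \intf{D'}$ from Definition~\ref{def:circuit}.

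For step (i), I would go through each of the five cases. For $N_i$: a phase-free green spider with no inputs and one output evaluates by Example~\ref{ex:basic-maps-and-states} to $\ket{0}+\ket{1} \propto \ket{+}$, matching the preparation command up to the global scalar permitted by Definition~\ref{def:interpretation}. For $E_{ij}$: the diagram is the standard ZX-representation of $\CZ$, which matches equation~\eqref{eq:zx-derived-gates} after recognising the Hadamard edge via the definition in equation~\eqref{eq:Hdef}. For the \XY-effect, a green spider with phase $-\alpha$ and one input evaluates to $\bra{0}+e^{-i\alpha}\bra{1}$, which is proportional to $\bra{+_{\XYm,\alpha}}$. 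For the \XZ- and \YZ-effects, I would expand the diagrams using the spider definitions and the Hadamard decomposition from~\eqref{eq:Hdef}; a short trigonometric identity (writing $\cos(\alpha/2)$ and $\sin(\alpha/2)$ in terms of $e^{\pm i\alpha/2}$) confirms each matches the corresponding $\bra{+_{\XZm,\alpha}}$ or $\bra{+_{\YZm,\alpha}}$ up to a global scalar.

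For step (ii), the merging of adjacent phase-free green spiders arising from preparation and entangling commands is a direct application of the spider fusion rule \SpiderRule from Figure~\ref{fig:zx-rules}, which preserves the interpretation (again up to the admissible global scalar). For step (iii), once each local piece is shown to be correct, compositionality gives $\intf{D_{\Gamma,\alpha}} = M_{\Gamma,\alpha}$ since the two expressions are assembled from matching ingredients in identical tensor/composition patterns.

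The argument is essentially bookkeeping: no step is conceptually deep. The main place to be careful is the trigonometric rewriting for the \XZ- and \YZ-effect diagrams and keeping track of global scalars consistently with Definition~\ref{def:interpretation}; the Remark following Definition~\ref{def:ogs-to-linear-map} (commutativity of the factors in $M_{\Gamma,\alpha}$) ensures that the ordering ambiguities in how one composes the individual pieces do not matter, so the verification is well-defined.
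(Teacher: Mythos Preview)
Your proposal is correct and follows essentially the same approach as the paper: check each entry of Table~\ref{tab:MBQC-to-ZX} individually, use compositionality of $\intf{\cdot}$, and invoke the fact that the spider-fusion rewrite preserves semantics. The paper's own proof is a two-sentence sketch of exactly these three points; you have simply spelled out the bookkeeping (the five cases, the trigonometric check for the \XZ- and \YZ-effects, and the role of~\SpiderRule) that the paper leaves as ``straightforward to check''.
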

\begin{proof}
 For each operator $M$ in Table~\ref{tab:MBQC-to-ZX} and its corresponding diagram $D_M$, it is straightforward to check that $\intf{D_M}=M$.
 The result thus follows by the compositional properties of the interpretation $\intf{\cdot}$ and the fact that rewriting ZX-diagrams preserves semantics.
\end{proof}

In order to specify a converse direction to this result, we will define a special class of ZX-diagrams. Before we do that, we recall which ZX-diagrams correspond to graph states.

\begin{definition}[\cite{DP1}]\label{def:graph-state}
  A \emph{graph state diagram} is a \zxdiagram where all vertices are green, all the connections between vertices are Hadamard edges and a single output wire is incident on each vertex in the diagram.
  The \emph{graph corresponding to a graph state diagram} is the graph whose vertices are the green spiders of the \zxdiagram and whose edges are given by the Hadmard edges of the \zxdiagram.
\end{definition}

\begin{definition}\label{def:MBQC-form}
 A \zxdiagram is in \emph{MBQC form} if it consists of a graph state diagram in which each vertex of the graph may also be connected to:
 \begin{itemize}
  \item an input (in addition to its output), and
  \item a measurement effect (in one of the three measurement planes) instead of the output.
 \end{itemize}
\end{definition}

\begin{definition}\label{def:graph-of-diagram}
 Given a \zxdiagram $D$ in MBQC form, its \emph{underlying graph} $G(D)$ is the graph corresponding to the graph state part of $D$.
\end{definition}

See Figure~\ref{fig:graph-state} for an example of a graph state diagram and a diagram in MBQC form.

\begin{figure}
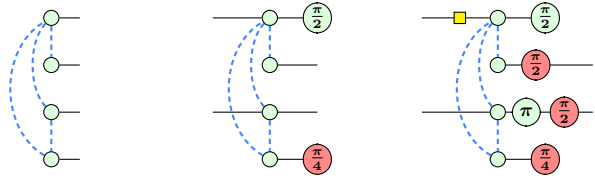

    \ctikzfig{graph-state-ex}
    \caption{On the left, a graph state diagram. In the middle, a diagram in MBQC form with the same underlying graph. On the right, an MBQC+LC form diagram with the same underlying labelled open graph.\label{fig:graph-state}}
\end{figure}

\noindent Given these definitions we can now show the following:

\begin{lemma}\label{lem:ogs-to-ZX-is-MBQC-Form}
 Let $\Gamma=(G,I,O,\ld)$ be a \LOG\ and let $\alpha:\comp{O}\to [0,2\pi)$ be an assignment of measurement angles.
 Then the \zxdiagram $D_{\Gamma,\alpha}$ constructed according to Definition~\ref{def:ogs-to-ZX} is in MBQC form.
\end{lemma}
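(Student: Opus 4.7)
The plan is to unpack Definition~\ref{def:ogs-to-ZX} one command at a time and verify, vertex by vertex, that after the prescribed merging of phase-free green spiders the resulting diagram meets all the conditions spelled out in Definition~\ref{def:MBQC-form}. Since the conditions are purely structural, the argument is essentially a bookkeeping exercise and no calculation of interpretations is required (that was already handled by Lemma~\ref{lem:zx-equals-linear-map}). The proof should be organised around the four possible types of qubit, according to membership in $I$ and $O$.

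First I would catalogue the diagram fragments produced by Table~\ref{tab:MBQC-to-ZX}: a preparation $N_i$ yields a phase-free green spider with a single output wire; an entangling command $E_{ij}$ yields two phase-free green spiders, one attached by a plain wire to qubit $i$ and the other to qubit $j$, joined to each other by a Hadamard edge; and each measurement effect contributes a small gadget (green spider with phase, possibly through a Hadamard) attached to qubit $i$ in place of its output. For each vertex $v \in V$ I would then collect all fragments that touch the wire of $v$: if $v \in \comp{I}$ there is one preparation spider; if $v \in \comp{O}$ there is one measurement gadget (and no output wire leaving the vertex); if $v \in I$ there is an input wire; if $v \in O$ there is an output wire; and for every edge $\{v,w\} \in E$ of $G$ there is one end of a CZ-fragment spider on $v$'s wire and a Hadamard edge leading to the matching spider on $w$'s wire.

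Next I would apply the merging rule stipulated at the end of Definition~\ref{def:ogs-to-ZX}: all adjacent phase-free green spiders coming from preparations and entanglers fuse into one green spider per vertex $v$. The measurement-effect spiders (which carry phases and hence are excluded from merging) remain as the external effect attached to $v$'s spider. After merging, the resulting diagram contains exactly one green spider per $v \in V$; the only edges between these spiders are the Hadamard edges coming from the CZ fragments, whose positions are in bijection with the edges of $G$; each $v \in O$ has an output wire on its spider; each $v \in \comp{O}$ has a measurement effect (in the plane specified by $\ld(v)$ with angle $\alpha(v)$) where the output would be; and each $v \in I$ has an additional input wire on its spider.

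It remains to observe that this is precisely the data required by Definitions~\ref{def:graph-state} and~\ref{def:MBQC-form}: stripping away the inputs and measurement effects and restoring the outputs on measured vertices gives a graph state diagram whose underlying graph is $G$, and the allowed extra connections at each vertex (an input and/or a measurement effect in place of the output) match the MBQC-form specification exactly. The only subtlety worth handling carefully is the case $v \in I \cap O$, where no preparation or measurement spider is added, so I should verify that the CZ-fragment spiders at $v$ still merge into a single green spider on $v$'s wire (which they do, since they are phase-free green spiders meeting along that wire). With that case dispatched, $D_{\Gamma,\alpha}$ is in MBQC form, completing the proof.
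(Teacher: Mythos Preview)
Your proposal is correct and follows essentially the same approach as the paper: unpack the translation of Definition~\ref{def:ogs-to-ZX}, observe that merging the phase-free green spiders from preparations and entanglers yields a graph state diagram (with extra inputs) whose underlying graph is $G$, and then note that the measurement effects sit in place of outputs as required by Definition~\ref{def:MBQC-form}. The paper's proof is terser---it organises the argument into two steps (preparations plus entanglers, then measurements) rather than by vertex type---but the content is the same; your explicit treatment of the $I\cap O$ case is a reasonable extra detail the paper leaves implicit.
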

\begin{proof}
 Consider performing the translation described in Definition~\ref{def:ogs-to-ZX} in two steps.
 The first step involves translating the preparation and entangling commands of the linear map $M_{\Gamma,\alpha}$ according to Table~\ref{tab:MBQC-to-ZX} and then merging any sets of adjacent green spiders.
 This yields a graph state diagram with some additional inputs.
 (The underlying graph is $G$.)
 The second step is the translation of the measurement projections of $M_{\Gamma,\alpha}$.
 This yields measurement effects on some of the outputs of the graph state diagram.
 Thus, the resulting \zxdiagram is in MBQC form by Definition~\ref{def:MBQC-form}.
\end{proof}

The converse of Lemma~\ref{lem:ogs-to-ZX-is-MBQC-Form} also holds.

\begin{lemma}\label{lem:zx-to-pattern}
 Suppose $D$ is a \zxdiagram in MBQC form.
 Then there exists a \LOG\ $\Gamma=(G,I,O,\ld)$ and an assignment of measurement angles $\alpha:\comp{O}\to [0,2\pi)$ such that $\intf{D} = M_{\Gamma,\alpha}$.
\end{lemma}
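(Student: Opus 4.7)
The plan is to extract the \LOG\ and measurement angles directly from the diagram $D$ and then appeal to Lemma~\ref{lem:zx-equals-linear-map}. Since $D$ is in MBQC form, it decomposes (by Definition~\ref{def:MBQC-form}) into a graph state diagram together with, at each vertex, an optional input wire and an optional measurement effect replacing the output wire. This decomposition gives us all the data we need.

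Concretely, I would first define $G := G(D)$ via Definition~\ref{def:graph-of-diagram}, so the vertex set $V$ consists of the green spiders of the graph state part and the edge set is given by the Hadamard edges. Then let $I \sse V$ be the set of vertices incident to an input wire of $D$, and $O \sse V$ the set of vertices that retain an output wire (equivalently, those with no measurement effect attached). For each $v \in \comp{O}$, the diagram has exactly one measurement effect attached to $v$, of one of the three types \XYm, \XZm, \YZm\ from Table~\ref{tab:MBQC-to-ZX}; set $\lambda(v)$ to be this type and set $\alpha(v)$ to the phase label on that effect. This yields a \LOG\ $\Gamma = (G,I,O,\lambda)$ and an angle assignment $\alpha : \comp{O} \to [0,2\pi)$.

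Next I would show that $D$ coincides, as a \zxdiagram, with $D_{\Gamma,\alpha}$ built from $\Gamma$ and $\alpha$ by Definition~\ref{def:ogs-to-ZX}. By construction, the graph state part of $D_{\Gamma,\alpha}$ is obtained from the translation of $N_{\comp{I}}$ and $E_G$ in Table~\ref{tab:MBQC-to-ZX} followed by merging of adjacent phase-free green spiders; this produces one phase-free green spider per vertex of $G$, with output wire, input wire if $v\in I$, and Hadamard-edges between $v$ and each of its neighbours, which is exactly the graph state part of $D$. The remaining pieces of $D_{\Gamma,\alpha}$ are the measurement effects on the non-output vertices, which by our choice of $\lambda$ and $\alpha$ match the effects attached in $D$. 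Hence $D$ and $D_{\Gamma,\alpha}$ are the same \zxdiagram.

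Applying Lemma~\ref{lem:zx-equals-linear-map} gives $\intf{D_{\Gamma,\alpha}} = M_{\Gamma,\alpha}$, so $\intf{D} = M_{\Gamma,\alpha}$ as required. The only subtle point — and the one I would take most care with — is the spider-fusion step: the MBQC-form diagram already has a single green spider per vertex, whereas the naive translation in Definition~\ref{def:ogs-to-ZX} first emits several phase-free green spiders (one from each $N_i$ and two from each $E_{ij}$) and then fuses them via the spider rule \SpiderRule. One must check that, after fusion, exactly one phase-free green spider per vertex remains, with the correct incident Hadamard edges, input, and output. This follows directly from the definitions of $N_i$ and $E_{ij}$ in Table~\ref{tab:MBQC-to-ZX} together with the fact that \SpiderRule\ preserves connectivity, so no real obstacle arises.
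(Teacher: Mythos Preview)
Your proof is correct and follows essentially the same approach as the paper: extract $G$, $I$, $O$, $\lambda$, and $\alpha$ from the MBQC-form diagram $D$, observe that $D = D_{\Gamma,\alpha}$, and then invoke Lemma~\ref{lem:zx-equals-linear-map}. The paper's version is terser, simply asserting $D = D_{\Gamma,\alpha}$ without unpacking the spider-fusion step you discuss, but the argument is the same.
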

\begin{proof}
 Let $G:=G(D)$ be the underlying graph of the \zxdiagram $D$, cf.\ Definition~\ref{def:graph-of-diagram}.
 Define $I\sse V$ to be the set of vertices of $D$ on which an input wire is incident.
 Analogously, define $O\sse V$ to be the set of vertices of $D$ on which an output wire is incident.
 Fix $\ld:\comp{O}\to\{\XYm,\XZm,\YZm\}$ by using Table~\ref{tab:MBQC-to-ZX} in reverse to determine the measurement planes from the measurement effects in the \zxdiagram.
 Let $\Gamma := (G,I,O,\ld)$.
 Finally, define $\alpha:\comp{O}\to [0,2\pi)$ to be the phase of the measurement effect connected to each non-output vertex in the \zxdiagram.
 Then $D = D_{\Gamma,\alpha}$ and thus the desired result follows from Lemma~\ref{lem:zx-equals-linear-map}.
\end{proof}

\begin{remark}
 Lemmas~\ref{lem:ogs-to-ZX-is-MBQC-Form} and \ref{lem:zx-to-pattern}
 show that the correspondence between MBQC form \zxdiagrams and
 the pairs $(\Gamma,\alpha)$, where $\Gamma$ is a \LOG\ and $\alpha$ is an assignment of measurement angles, is one-to-one up to isomorphisms of labelled open graphs that also preserve $\alpha$.
\end{remark}

It will turn out to be useful to consider a `relaxed' version of the MBQC form for \zxdiagrams.

\begin{definition}
  We say a \zxdiagram is in \emph{MBQC+LC} form when it is in MBQC form (see Definition~\ref{def:MBQC-form}) up to arbitrary single-qubit Clifford unitaries on the input and output wires (LC stands for `local Clifford').
  When considering the underlying graph of a \zxdiagram in MBQC+LC form, we ignore these single qubit Clifford unitaries.
\end{definition}
Note that an MBQC form diagram is an MBQC+LC form diagram with trivial single-qubit unitaries on its inputs and outputs. An example diagram in MBQC+LC form is given in Figure~\ref{fig:graph-state}.

\subsection{Graph-theoretic rewriting}
\label{sec:lc}

The rewrites we will use are based on the graph-theoretic notions of \emph{local complementation} and \emph{pivoting}.
We present these operations (and their effects) in Definitions~\ref{def:loc-comp} and~\ref{def:pivot} as they appear in Ref.~\cite{DP3}.
Our interest is in the effect these operations have on a measurement pattern.
In particular, we consider whether a \zxdiagram in MBQC form will remain in MBQC form after applying a local complementation or pivot
(or remain close enough to MBQC form to still be useful).

\begin{definition}[{\cite{kotzig}}]\label{def:loc-comp}
Let $G=(V,E)$ be a graph and $u\in V$ a vertex. The {\em local complementation of $G$ about the vertex $u$} is the operation resulting in the graph
$$G\star u\coloneqq \left( V, E\symd\{(b,c) : (b,u), (c,u)\in E\ \textrm{and}\ b\neq c\}\right) ,$$
where $\symd$ is the symmetric set difference, i.e.~$A\symd B\coloneqq (A\cup B)\setminus (A\cap B)$.
\end{definition}
In other words, $G\star u$ is a graph that has the same vertices as $G$. Two neighbours $b$ and $c$ of $u$ are connected in $G\star u$ if and only if they are not connected in $G$. All other edges are the same as in $G$.

\begin{definition}[{\cite{kotzig}}]\label{def:pivot}
Let $G=(V,E)$ be a graph and $u,v\in V$ two vertices connected by an edge. The \emph{pivot of $G$ about the edge $u\sim v$} is the operation resulting in the graph $G\land uv\coloneqq G\star u\star v\star u$.
\end{definition}
If we denote the set of vertices connected to both $u$ and $v$ by $A$,
the set of vertices connected to $u$ but not to $v$ by $B$
 and the set of vertices connected to $v$ but not to $u$ by $C$,
 then pivoting consists of interchanging $u$ and $v$ and complementing the edges between each pair of sets $A$, $B$ and $C$.
 That is, a vertex in $A$ is connected to a vertex in $B$ after pivoting if and only if the two vertices are not connected before pivoting; and similarly for the two other pairs.
 All the remaining edges are unchanged, including the edges internal to $A$, $B$ and $C$.
 We illustrate this by the following picture, where crossing lines between two sets indicate complementing the edges.
\[G \quad\tikzfig{pivot-L}\qquad\qquad \quad G\wedge uv \quad\tikzfig{pivot-R}
\]
\begin{remark}\label{rem:pivot_sym}
From the above characterisation it follows that pivoting is symmetric in the (neighbouring) vertices, that is, $G\land uv = G\land vu$.
\end{remark}

In the \zxcalculus, a spider with a zero phase and exactly two incident wires is equivalent to a plain wire (representing the identity operation) by rule $(\textit{\textbf {i1}})$ in Figure~\ref{fig:zx-rules}. The following definition represents the corresponding graph operation, which will be used to remove such vertices.
\begin{definition}\label{def:identity-removal}
Let $G=(V,E)$ be a graph and let $u,v,w\in V$ be vertices such that $N_G(v)=\{u,w\}$ and $u\notin N_G(w)$, that is, the neighbours of $v$ are precisely $u$ and $w$, and $u$ is not connected to $w$.
We then define \emph{identity removal} as
$$G\idrem{v} w\coloneqq ((G\land uv)\setminus\{u\})\setminus\{v\}.$$
Since $v$ has exactly two neighbours, one of which is $w$,
the choice of $u$ is implicit in the notation.
We think of this as `dragging $u$ along $v$ to merge with $w$'.
\end{definition}
The effect of the identity removal is to remove the middle vertex $v$ and to fuse the vertices $u$ and $w$ into one, as illustrated in the picture below. Thus the operation is symmetric in $u$ and $w$, in the sense that $G\idrem{v} u$ and $G\idrem{v} w$ are equal up to a relabelling of one vertex. Note that $u$ and $w$ are allowed to have common neighbours (which will disconnect from the fused vertex $w$ as a result of identity removal).
\[G \quad\tikzfig{identity-removal-L}\qquad\qquad \quad G\idrem{v} w \quad\tikzfig{identity-removal-R}
\]
\begin{example}\label{ex:identity-removal}
Consider the following graph:
\[G \coloneqq \tikzfig{identity-removal-example1}.
\]
Note that the vertices $u,v$ and $w$ satisfy the condition for identity removal: $u$ and $w$ are not connected and are precisely the neighbours of $v$. Hence identity removal results in the graph
\[G\idrem{v} w = \tikzfig{identity-removal-example3}.
\]
\end{example}

\begin{remark}\label{rem:identity-removal-connected-vertices}
If we have vertices $u,v$ and $w$ with $N_G(v)=\{u,w\}$ but, unlike in the definition above, $u\in N_G(w)$, we can first perform a local complementation on $v$, so that $u$ and $w$ become disconnected, and then remove the identity vertex $v$. In symbols:
$$(G\star v)\idrem{v} w .$$
\end{remark}

The abstract application of a local complementation to a graph corresponds to the application of a specific set of local Clifford gates on the corresponding graph state:
\begin{theorem}[\cite{NestMBQC}, in the manner of {\cite[Theorem~2]{DP1}}]\label{thm:lc-in-zx}
 Let $G=(V,E)$ be a graph with adjacency matrix $\theta$ and let $u\in V$, then
 \[
  \ket{G\star u} = X_{\pi/2,u}\otimes\bigotimes_{v\in V} Z_{-\pi/2,v}^{\theta_{uv}}\ket{G}.
 \]
\end{theorem}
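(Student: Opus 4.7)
The plan is to prove this identity via the \zxcalculus, following the approach of \cite{DP1}. Translate both sides into ZX: $\ket{G}$ is a graph-state diagram (Definition~\ref{def:graph-state}) with one green spider per vertex, connected by Hadamard edges according to $E(G)$; the right-hand side additionally carries a red $\pi/2$-spider on $u$'s output wire and a green $(-\pi/2)$-spider on the output wire of each $v \in N_G(u)$. The left-hand side is the analogous graph-state diagram for $G \star u$, which differs from the diagram for $G$ only in that the Hadamard edges between every pair of distinct neighbours of $u$ have been toggled.

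The argument therefore reduces to a local rewrite that (i) removes the red $\pi/2$ on $u$ and the green $(-\pi/2)$'s on its neighbours, and (ii) toggles the Hadamard edges between pairs of neighbours. I would proceed in three stages. First, use the colour-change rule \HadamardRule\ to turn the red $\pi/2$ on $u$'s output into a green $\pi/2$-spider sandwiched between two Hadamards; absorb one Hadamard via \SpiderRule\ into $u$'s central green spider, and cancel the other against the Hadamard edges incident to $u$ using \HCancelRule. Second, apply the strong complementarity / bialgebra rule \BialgRule\ to push the resulting phased structure through $u$'s neighbours, which is the step that creates new pairwise Hadamard edges between distinct neighbours while depositing $Z_{\pi/2}$ phases on each neighbour's output wire. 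Third, use \SpiderRule\ to fuse each such $Z_{\pi/2}$ with the pre-existing green $(-\pi/2)$-spider on the same wire; the phases cancel to $0$, leaving the graph-state diagram for $G \star u$.

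The main obstacle will be the second stage: the bialgebra step generates all $|N_G(u)|(|N_G(u)|-1)/2$ pairwise edges among neighbours, and each must be matched --- via \HCancelRule\ cancellation when a Hadamard edge is already present, or by simple addition otherwise --- to the graph-theoretic edge toggling prescribed by local complementation. One must also verify that the signs, phases and scalar factors produced by \BialgRule\ and \SpiderRule\ combine consistently, which is where the specific Clifford structure of $X_{\pi/2}$ and $Z_{-\pi/2}$ is essential. As an alternative (and perhaps more concise) route, one can use the stabilizer formalism: via the single-qubit identities $X_{\pi/2} Z X_{-\pi/2} = -Y$ and $Z_{-\pi/2} X Z_{\pi/2} = -Y$, check that conjugation by $U := X_{\pi/2,u} \otimes \bigotimes_v Z_{-\pi/2,v}^{\theta_{uv}}$ sends each stabilizer generator $K_v^G := X_v \otimes \bigotimes_{w \in N_G(v)} Z_w$ of $\ket{G}$ into the stabilizer group of $\ket{G \star u}$, case-splitting on $v = u$, $v \in N_G(u)$, or $v \notin N_G(u) \cup \{u\}$ (the middle case being the only nontrivial one, where one obtains $U K_v^G U^{-1} = K_u^{G \star u} K_v^{G \star u}$). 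Equivalence up to global scalar (Definition~\ref{def:interpretation}) then gives the result.
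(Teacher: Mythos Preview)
The paper does not actually prove this theorem: it is stated with attribution to \cite{NestMBQC} and \cite{DP1} and used as a black box, with the subsequent Lemma~\ref{lem:ZX-lcomp} (also cited from \cite{DP1}) giving the graphical form. So there is no ``paper's own proof'' to compare against.

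That said, your proposal is sound. The stabilizer route you sketch at the end is essentially the original Van den Nest argument and goes through cleanly: your single-qubit conjugation identities are correct, and in the nontrivial case $v\in N_G(u)$ one indeed finds $U K_v^G U^{-1}=K_u^{G\star u}K_v^{G\star u}$ once the operator ordering on qubits $u$ and $v$ is tracked carefully (the product $K_u^{G\star u}K_v^{G\star u}$ contributes $X_uZ_u$ on qubit~$u$ and $Z_vX_v$ on qubit~$v$, which combine to $Y_uY_v$, matching what conjugation produces). Your ZX-calculus sketch is also the right shape and is precisely what \cite{DP1} carries out; you correctly flag that the bialgebra step is where the bookkeeping lives. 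Either argument would be an acceptable proof here, with the stabilizer version being shorter and the ZX version being closer in spirit to how the result is deployed in the paper.
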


This result can be represented graphically in the \zxcalculus:

\begin{lemma}[{\cite[Theorem~3]{DP1}}]\label{lem:ZX-lcomp}
  The following equality involving graph state diagrams and Clifford phase shifts follows from the graphical rewrite rules:
  {
  \ctikzfig{local-comp-ex}
  }
  Here, the underlying graph on the LHS is $G\star u$ and the underlying graph on the RHS is $G$.
  Any vertices not adjacent to $u$ are unaffected and are not shown in the above diagram.
\end{lemma}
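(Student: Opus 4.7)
The plan is to derive the stated identity by a sequence of ZX-rewrites, working from the right-hand side (the graph state of $G$ carrying a red $\pi/2$ on $u$'s output wire and a green $-\pi/2$ on each neighbor's output wire) to the left-hand side (the bare graph state of $G\star u$). The key conceptual point is that local complementation about $u$ only alters edges between pairs of neighbors of $u$ (Definition~\ref{def:loc-comp}), so all non-trivial rewriting can be confined to the subgraph on $\{u\}\cup N_G(u)$; every other vertex is left untouched, matching the lemma's claim that vertices not adjacent to $u$ need not be shown.

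As the first step I would use spider fusion \FusionRule to absorb each green $-\pi/2$ output phase into the corresponding neighbor spider, so that every vertex in $N_G(u)$ carries phase $-\pi/2$ and the only residual correction outside the graph state is the red $\pi/2$ on $u$'s output. Next I would use the Euler-style decomposition of the red $\pi/2$ (derivable in the Clifford fragment from \FusionRule, \HadamardRule and \HCancelRule) to produce a Hadamard that can be pushed across each of $u$'s Hadamard edges and cancelled via \HCancelRule; after this step, the spider at $u$ has effectively become a red $\pi/2$ spider joined by plain edges to its green, $-\pi/2$-phased neighbors, plus a single Hadamard remaining on $u$'s output. The core of the argument is then the bialgebra rule \BialgRule (strong complementarity) applied between the red $u$-spider and each of its green neighbors: this replaces the central spider with a complete-bipartite-style structure which, after simplification via \FusionRule, \IdRule and \HadamardRule, produces precisely one new Hadamard edge between every pair of neighbors of $u$. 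These new edges toggle against existing ones by \HCancelRule, implementing exactly the symmetric difference in Definition~\ref{def:loc-comp}. The $\pi/2$ phase from $u$ gets copied outward by the bialgebra step and cancels with the $-\pi/2$ already sitting on each neighbor, so all neighbor phases become $0$; and a final color-change at $u$ turns the residual red structure into a phase-free green spider with Hadamard edges to $N_G(u)$. This is the graph-state diagram for $G\star u$, as required.

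The main obstacle is the combinatorial bookkeeping in the bialgebra step: one must verify that exactly one new Hadamard edge is generated per unordered pair of distinct neighbors of $u$ (so the symmetric-difference condition really holds), and that every phase cancels on the nose. A clean way to make this rigorous is induction on $|N_G(u)|$, peeling off one neighbor $v$ at a time and tracking how the red $u$-spider splits across $v$ versus the remaining neighbors; an equivalent approach is to organize the whole rewrite as a single ``scalable'' multi-party application of \BialgRule. Either way the derivation is local around $u$, uses only the rules of Figure~\ref{fig:zx-rules}, and yields the desired equality up to the global scalars we are permitted to ignore by Definition~\ref{def:interpretation}.
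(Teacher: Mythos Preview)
The paper does not actually prove this lemma; it is stated with a citation to \cite[Theorem~3]{DP1} and no proof is given in the body or appendices. So there is no in-paper argument to compare against beyond the reference itself.

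Your outline is broadly along the lines of the original Duncan--Perdrix derivation: Euler-decompose the red $\pi/2$, change colour at $u$, and use strong complementarity to generate the complete graph of new edges among $N_G(u)$, which then toggle against the existing edges. That strategy is correct, and your acknowledgement that the rigorous version proceeds by induction on $|N_G(u)|$ is exactly how the cited proof handles the combinatorics. One point to tighten: the sentence ``the $\pi/2$ phase from $u$ gets copied outward by the bialgebra step'' is not accurate as stated---\BialgRule does not copy non-Pauli phases. What actually happens is that after the colour change at $u$ and the application of \BialgRule, each neighbour acquires an additional $\pi/2$ contribution coming from the Euler-decomposition residues (not from phase copying), and it is these that cancel against the $-\pi/2$ you placed there earlier. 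Similarly, the description of ``producing a Hadamard that can be pushed across each of $u$'s Hadamard edges'' conflates two steps: the Euler decomposition yields a single Hadamard plus green phases at $u$, and it is the subsequent colour change together with \BialgRule that handles the multiple incident edges, not per-edge Hadamard cancellation. These are repairable imprecisions rather than a wrong approach, but as written the middle of your sketch would not survive a line-by-line check.
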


Combining this result with the definition of pivoting in terms of local complementations (cf.\ Definition~\ref{def:pivot}) we also get:

\begin{lemma}[{\cite[Theorem~3.3]{DP3}}]\label{lem:ZX-pivot}
  The following diagram equality follows from the graphical rewrite rules:
    \ctikzfig{pivot-desc}
    Here $u$ and $v$ are a connected pair of vertices, and the underlying graph on the LHS is $G\wedge uv$ while the RHS is $G$.
  Any vertices not adjacent to $u$ or $v$ are unaffected and are not shown in the above diagram.
\end{lemma}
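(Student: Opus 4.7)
The plan is to derive Lemma~\ref{lem:ZX-pivot} from Lemma~\ref{lem:ZX-lcomp} by invoking the identity $G\wedge uv = G\star u\star v\star u$ of Definition~\ref{def:pivot} and chaining together three applications of the local-complementation rewrite. Starting from the LHS graph-state diagram for $G\wedge uv$ (together with the ``green/red'' boundary phases that appear on the LHS of Lemma~\ref{lem:ZX-pivot}), I would rewrite the underlying graph back to $G$ in three steps: first undo $\star u$ to obtain the diagram for $G\star u\star v$, then undo $\star v$, and finally undo the remaining $\star u$. Each step is an application of Lemma~\ref{lem:ZX-lcomp} in reverse, producing an $X_{\pi/2}$ phase on the pivot vertex and $Z_{-\pi/2}$ phases on all of its neighbours in the \emph{current} graph.

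The core of the argument is bookkeeping. Partition the remaining vertices (those other than $u$ and $v$) into the four sets determined by their adjacency to $u$ and $v$ in $G$: $A$ (adjacent to both), $B$ (only to $u$), $C$ (only to $v$), and $D$ (to neither). Using Definition~\ref{def:loc-comp}, one can read off the adjacency of $u$ and $v$ in each of the three intermediate graphs $G\star u\star v\star u$, $G\star u\star v$ and $G\star u$, noting in particular that the $u$--$v$ edge is preserved throughout. Then on each vertex one records (in the correct order) the contributions $X_{\pi/2}$ and $Z_{-\pi/2}$ that arise from the three rewrites. Vertices in $D$ receive no contribution, while vertices in $A$, $B$, $C$ each accumulate a specific composite of two or three Clifford phases, and $u$ and $v$ each accumulate compositions of the form $X_{\pi/2} Z_{-\pi/2} X_{\pi/2}$ (up to reordering determined by which undo-step produced which phase).

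The final step is to simplify these composites using the ZX rewrite rules of Figure~\ref{fig:zx-rules}. For the internal neighbours, a short calculation shows that all three of the accumulations on $A$, $B$, and $C$ reduce to the identity, so no phases remain on internal vertices. For $u$ and $v$, the composite $X_{\pi/2} Z_{-\pi/2} X_{\pi/2}$ is, up to a global scalar, the Hadamard gate (this is one of the standard Euler decompositions and follows from rules \FusionRule, \HadamardRule, \IdentityRule, \HCancelRule); combined with the remaining Hadamard edges it accounts for the boundary decoration on the RHS of Lemma~\ref{lem:ZX-pivot}, as well as the apparent swap of $u$ and $v$ that is characteristic of pivoting.

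The main obstacle, as indicated above, is the careful bookkeeping of adjacency relations across the three intermediate graphs, in order to determine which vertices pick up a $Z_{-\pi/2}$ correction at each step. Once the tally is correct on each of the sets $\{u\}$, $\{v\}$, $A$, $B$, $C$, $D$, the algebraic simplification to a Hadamard on $u$ and $v$ (and nothing elsewhere) is routine. The symmetry $G\wedge uv = G\wedge vu$ recorded in Remark~\ref{rem:pivot_sym} can be used as a consistency check that the two endpoints $u$ and $v$ play symmetric roles in the final diagram.
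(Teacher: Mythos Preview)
Your overall strategy---deriving the pivot rule from three applications of Lemma~\ref{lem:ZX-lcomp} via $G\wedge uv=G\star u\star v\star u$---is exactly the route the paper indicates. The bookkeeping you sketch, however, contains a genuine error.

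Track the three relevant neighbourhoods: $N_G(u)=A\cup B\cup\{v\}$, $N_{G\star u}(v)=B\cup C\cup\{u\}$, and $N_{G\star u\star v}(u)=A\cup C\cup\{v\}$. Each of $A$, $B$, $C$ therefore receives exactly two $Z_{-\pi/2}$ contributions and so accumulates $Z_\pi$, \emph{not} the identity. Similarly, the composite on $u$ is $X_{\pi/2}Z_{-\pi/2}X_{\pi/2}$, which a direct calculation shows is proportional to $XHX$ (equivalently $ZHZ$), not to $H$; the same holds for $v$. After the three local complementations you thus have $ZHZ$ on $u$ and $v$ together with $Z_\pi$ on \emph{every} vertex of $A\cup B\cup C$, which is not yet the statement of the lemma.

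The missing step is the graph-state stabiliser relation $X_w\ket{G}=\prod_{w'\sim w}Z_{w'}\ket{G}$ (derivable in the calculus from \PiRule and \CopyRule). Writing $ZHZ\propto XHX$ and pushing the inner $X_u,X_v$ through $\ket{G}$ produces $Z$'s on $N_G(u)$ and on $N_G(v)$; combined with the existing $Z_\pi$'s these cancel on $B$ and $C$, leave a single $Z_\pi$ on $A=N_G(u)\cap N_G(v)$, and reduce the operators on $u,v$ to bare Hadamards. Only then do you recover the form stated in the lemma (compare the proof of Lemma~\ref{lem:pivot-MBQC-form-non-input}, where the residual phases are precisely a Hadamard on each of $u,v$ and a green $\pi$ on $N(u)\cap N(v)$).
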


\subsection{Generalised flow}
\label{sec:gflow}

The notion of \emph{flow} or \emph{causal flow} on open graphs was introduced by Danos and Kashefi~\cite{Danos2006Determinism-in-} as a sufficient condition to distinguish those open graphs capable of supporting a deterministic MBQC pattern with measurements in the \XY-plane.
Causal flow, however, is not a necessary condition for determinism.
That is there are graphs that implement a deterministic pattern even though they do not have causal flow~\cite{GFlow,duncan2010rewriting}.
Browne et al.~\cite{GFlow} adapted the notion of flow to what they called
\emph{generalised flow} (gflow), which is both a necessary and sufficient condition for
`uniformly and strongly stepwise deterministic' measurement patterns (defined below).
Unlike causal flow, gflow can also be applied to arbitrary \LOG{}s, i.e.\ it supports measurement patterns with measurements in all three planes.
This even more general case is sometimes called \emph{extended gflow}.

\begin{definition}[{\cite[p.5]{GFlow}}]\label{def:determinism}
    The linear map implemented by a measurement pattern for a specific set of measurement outcomes is called a \emph{branch} of the pattern.
    A pattern is \emph{deterministic} if all branches are equal up to a scalar.
	A pattern is \emph{strongly deterministic} if all branches are equal up to a global phase.
	It is \emph{uniformly deterministic} if it is deterministic for any choice of measurement angles.
	Finally, the pattern is \emph{stepwise deterministic} if any intermediate pattern -- resulting from performing some subset of the measurements and their corresponding corrections -- is again deterministic.
\end{definition}

The existence of gflow implies the uniform, strong and stepwise determinism of any pattern on the open graph (cf.~Theorem~\ref{t-flow} below).
Hence, by applying transformations to an open graph that preserve the existence of a gflow, we ensure that the modified open graph still supports a uniformly, strongly and stepwise deterministic pattern.
Note that the condition of interest is preservation of the \textit{existence} of gflow, not preservation of the specific gflow itself.

We will now give the formal definitions of (causal) flow and (extended) gflow.

\begin{definition}[{\cite[Definition~2]{Danos2006Determinism-in-}}]\label{def:causal-flow}
 Let $(G,I,O)$ be an open graph. We say $G$ has \textit{(causal) flow} if there exists a map $f:\comp{O}\longrightarrow \comp{I}$ (from measured qubits to prepared qubits) and a strict partial order $\prec$ over $V$ such that for all $u\in \comp{O}$:
\begin{itemize}
    \item $u \sim f(u)$
    \item $u \prec f(u)$
    \item $u \prec v$ for all neighbours $v\neq u$ of $f(u)$.
\end{itemize}
\end{definition}

When vertices are smaller than a vertex $v$ in the order~$\prec$, they are referred to as being `behind' or `in the past of' of $v$.

The notion of gflow differs from the above definition of causal flow in two ways.
The value of $f(u)$ is allowed to be a set of vertices instead of a single vertex, so that corrections can be applied to more than one vertex at a time.
As a result of this change, the third condition of causal flow is now too strong: requiring that no element of $f(u)$ is adjacent to any vertex `in the past' of $u$ would be too restrictive.
Since corrections are applied to sets of vertices at a time, it is possible to make use of the following idea: if corrections are simultaneously applied to an even number of neighbours of $v$, then there is no net effect on $v$.
Thus, the second change takes the form of a parity condition: all vertices in the neighbourhood of $f(u)$ that lie `in the past' of $u$ are required to be in the even neighbourhood of $f(u)$.
As a result, net effects of corrections do not propagate into `the past'.

Allowing measurements in more than one measurement plane requires further careful adjustment of the parity conditions depending on the measurement plane of the vertex being measured.

\begin{definition}[{\cite[p.7]{GFlow}}]
Given a graph $G=(V,E)$, for any $K\sse V$, let $\odd{G}{K}= \{u\in V: \abs{N(u)\cap K}\equiv 1 \mod 2\}$ be the \emph{odd neighbourhood} of $K$ in $G$, i.e.\ the set of vertices having an odd number of neighbours in $K$.
If the graph $G$ is clear from context, we simply write $\odd{}{K}$.
The \emph{even neighbourhood} of $K$ in $G$, $\eve{G}{K}$,  is defined in a similar way; $\eve{G}{K}= \{u\in V: \abs{N(u)\cap K}\equiv 0 \mod 2\}$.
\end{definition}

\begin{definition}[{\cite[Definition~3]{GFlow}}]
\label{defGFlow}
 A \LOG{} $(G,I,O,\ld)$ has generalised flow (or \emph{gflow}) if there exists a map $g:\comp{O}\to\pow{\comp{I}}$ and a partial order $\prec$ over $V$ such that for all $v\in \comp{O}$:
 \begin{enumerate}[label=({g}\theenumi), ref=(g\theenumi)]
  \item\label{it:g} If $w\in g(v)$ and $v\neq w$, then $v\prec w$.
  \item\label{it:odd} If $w\in\odd{}{g(v)}$ and $v\neq w$, then $v\prec w$.
  \item\label{it:XY} If $\ld(v)=\XYm$, then $v\notin g(v)$ and $v\in\odd{}{g(v)}$.
  \item\label{it:XZ} If $\ld(v)=\XZm$, then $v\in g(v)$ and $v\in\odd{}{g(v)}$.
  \item\label{it:YZ} If $\ld(v)=\YZm$, then $v\in g(v)$ and $v\notin\odd{}{g(v)}$.
 \end{enumerate}
 The set $g(v)$ is called the \emph{correction set} of $v$.
\end{definition}

\begin{remark}
 Every causal flow is indeed a gflow, where $g(v):=\{f(v)\}$ and the partial order remains the same.
 To see this, first note causal flow can only be defined on \LOG{}s where $\ld(v)=\XYm$ for all $v\in\comp{O}$, so conditions \ref{it:XZ} and \ref{it:YZ} are vacuously satisfied.
 Now, \ref{it:g} follows from the second bullet point of Definition~\ref{def:causal-flow}, \ref{it:odd} follows from the third bullet point, and \ref{it:XY} follows from the first bullet point.
\end{remark}

\begin{remark}
 In the original definition of gflow in Ref.~\cite{GFlow}, condition \ref{it:odd} is given as:
\begin{align}\label{eq:wrong-g2}
\text{if }j \preccurlyeq i \text{ and } j \neq i \text{ then } j \notin \odd{}{g(i)}
\end{align}
In other publications, such as Ref.~\cite{danos_kashefi_panangaden_perdrix_2009}, the definition is changed to the version we give as \ref{it:odd}, yet this is usually done without comment.
For completeness, we provide an example which demonstrates that the condition \eqref{eq:wrong-g2} is insufficient for determinism.
Consider the following open graph:
 \ctikzfig{bialgebra}
Here, the set of inputs is $\{i_1,i_2\}$, the set of outputs is $\{o_1,o_2\}$, and the non-outputs are measured in the planes $\lambda(i_1) = \lambda(i_2) = \XYm$.
If we choose both measurement angles to be~$0$, it is straightforwardly checked that this diagram implements the linear map:
\[
	\begin{pmatrix} 1&1&1&1\\ 1&-1&-1&1\\1&1&1&1\\ 1&-1&-1&1
	\end{pmatrix}
\]
This has rank 2 and thus is not invertible, and in particular not unitary.
It therefore cannot be deterministically implementable and hence it should not have a gflow.
However, it is considered to have a gflow under condition \eqref{eq:wrong-g2} instead of \ref{it:odd}:
pick the partial order $i_1 \prec o_1, o_2$ and $i_2 \prec o_1, o_2$ with all other vertices incomparable.
Set $g(i_1) = \{o_1\}$ and $g(i_2) = \{o_2\}$. It is then easily checked that $(g,\prec)$ satisfies conditions \ref{it:g}, \eqref{eq:wrong-g2}, and \ref{it:XY}--\ref{it:YZ}.
Yet $(g,\prec)$ does not satisfy condition \ref{it:odd} and hence is not a gflow under the revised definition.
\end{remark}

To demonstrate that, with condition~\ref{it:odd}, the presence of a gflow indeed guarantees determinism of a pattern, we give a detailed proof of the following sufficiency theorem, which was first stated in Ref.~\cite{GFlow} as Theorem 2 with a sketch proof. The precise statement of the theorem requires some additional notation and the proof is quite lengthy and technical, so we state a coarse version of the theorem here and refer the reader to Appendix~\ref{sec:gflow-determinism} (and more specifically to Theorem~\ref{t-flow-app}) for the details.

\begin{theorem}\label{t-flow}
Let $\Gamma = (G,I,O,\ld)$ be a \LOG~with a gflow and let $\alpha:\comp{O}\rightarrow [0,2\pi)$ be an assignment of measurement angles. Then there exists a runnable measurement pattern which is uniformly, strongly and stepwise deterministic, and which realizes the associated linear map $M_{\Gamma,\alpha}$ (cf.~Definition~\ref{def:ogs-to-linear-map}).
\end{theorem}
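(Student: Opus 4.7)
The plan is to construct the measurement pattern explicitly from the gflow data and then verify it has the required properties by induction on the number of non-output vertices, ordered by~$\prec$.

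\textbf{Construction.} Given a gflow $(g,\prec)$ for $\Gamma$ and angles $\alpha$, I would build a pattern as follows. Start with preparation commands $N_i$ for every non-input vertex and entangling commands $E_{ij}$ for every edge of $G$. Choose a total order on $\comp{O}$ that refines $\prec$ and, following this order, insert for each $v$ the measurement $M_v^{\ld(v),\alpha(v)}$ followed by correction commands
\[
 \prod_{u\in g(v)\setminus\{v\}}[X_u]^{s_v}\;\prod_{u\in\odd{}{g(v)}\setminus\{v\}}[Z_u]^{s_v},
\]
where $s_v\in\{0,1\}$ is the outcome of $M_v$. (The cases where $v\in g(v)$ or $v\in\odd{}{g(v)}$, prescribed by \ref{it:XZ} and \ref{it:YZ}, are absorbed into the choice of measurement plane and are not propagated as further correction operators.) Conditions \ref{it:g} and \ref{it:odd} guarantee that every $u$ receiving a correction with $u\neq v$ satisfies $v\prec u$, so $u$ has not yet been measured when the correction triggers; together with the standard form this makes the pattern runnable in the sense of Definition~\ref{def:runnable_pattern}.

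\textbf{Stepwise deterministic correction argument.} The heart of the proof is to show that after measuring $v$ and applying the above correction, the post-measurement state does not depend on $s_v$ (up to a global phase). For this I would use the identity $CZ_{uv}\cdot X_u\cdot CZ_{uv}=X_u Z_v$, valid for any edge $u\sim v$ of $G$, to commute the $X$-corrections on $g(v)$ through the remaining $CZ$-edges incident to $v$. The net effect on $v$ is a $Z_v^{|N(v)\cap g(v)| \bmod 2}$, i.e.\ a $Z_v$ exactly when $v\in\odd{}{g(v)}$, while the $Z$-corrections on $\odd{}{g(v)}$ commute with all $CZ$'s and contribute $Z_v$ exactly when $v\in\odd{}{g(v)}\setminus\odd{}{g(v)}$-ish bookkeeping gives the parity conditions. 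Combining with the measurement-plane-specific parities \ref{it:XY}--\ref{it:YZ} yields:
\begin{itemize}
  \item $\ld(v)=\XYm$: net action on $v$ is $Z_v$, which swaps $\ket{+_{\XYm,\alpha}}\leftrightarrow\ket{-_{\XYm,\alpha}}$, fixing the unwanted branch;
  \item $\ld(v)=\XZm$: net action on $v$ is $Y_v$ (up to a phase), which swaps $\ket{\pm_{\XZm,\alpha}}$;
  \item $\ld(v)=\YZm$: net action on $v$ is $X_v$, which swaps $\ket{\pm_{\YZm,\alpha}}$.
\end{itemize}
At the same time, all surviving Pauli operators on \emph{unmeasured} qubits $u\succ v$ are precisely the advertised corrections, so the residual pattern on $V\setminus\{v\}$ has the same form as the original with the same associated $g|_{\comp{O}\setminus\{v\}}$ restricted, setting up the induction.

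\textbf{Conclusion.} The inductive hypothesis then yields a uniformly and strongly deterministic implementation of the residual pattern on $V\setminus\{v\}$; glueing the step for~$v$ in front preserves this, and since every intermediate residual pattern is itself produced by this construction it is stepwise deterministic. Finally, taking the branch with all $s_v=0$ applies no corrections, so by Lemma~\ref{lem:zx-equals-linear-map} the pattern realises exactly $M_{\Gamma,\alpha}$; by strong determinism every other branch realises the same map up to a global phase.

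\textbf{Main obstacle.} The combinatorial/algebraic bookkeeping in the stepwise step is where all the subtlety lies: one must track simultaneously the commutations of $X$-corrections through the $CZ$-edges of $G$, the contributions of the $Z$-corrections on $\odd{}{g(v)}$, and the measurement-plane identities for $\ket{\pm_{\ld,\alpha}}$. The parity conditions \ref{it:XY}--\ref{it:YZ} in Definition~\ref{defGFlow} are tailored precisely so that the three contributions cancel on every unmeasured neighbour and combine to the correct Pauli on $v$; turning this into a clean inductive invariant (rather than a sequence of ad hoc case checks) is where most of the appendix-level work goes.
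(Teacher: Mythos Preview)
Your construction of the pattern and your runnability argument are the same as the paper's. The stepwise argument via restriction of $(g,\prec)$ to the residual \LOG\ is also the paper's. The substantive difference is in how you argue strong determinism.

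The paper does not commute corrections through $CZ$-edges. Instead it inserts identity terms $X_i^{2\beta_Z(i)s_i}Z_i^{2\beta_X(i)s_i}$ next to each measurement, uses a small lemma to absorb one copy into the measurement effect (turning $\bra{-_{\ld,\al}}$ into $\bra{+_{\ld,\al}}$ up to phase), and then observes that the remaining factors $X_{g(i)}^{s_i}Z_{\odd{}{g(i)}}^{s_i}$ are exactly $\prod_{u\in g(i)}K_u^{s_i}$ with $K_u=X_u\prod_{j\in N_G(u)}Z_j$ the graph stabiliser. Since $K_uE_GN_{\comp I}=E_GN_{\comp I}$, these factors vanish on the state and the whole product collapses to $\prod_i\bra{+_{\ld(i),\al(i)}}_iE_GN_{\comp I}$ in one algebraic sweep. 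No induction is needed for strong determinism.

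Your commutation approach, as written, has a conceptual wobble: in a standardised pattern all of $E_G$ is applied \emph{before} any measurement or correction, so there are no ``remaining $CZ$-edges incident to~$v$'' to commute the $X$-corrections through. To make your argument work you would have to commute the correction Paulis \emph{backward} through the projection $\bra{\pm_{\ld(v),\al(v)}}_v$ and through $E_G$ onto the prepared state --- and doing that is precisely what the graph-stabiliser lemma packages. Your parity bookkeeping (``$Z_v$ exactly when $v\in\odd{}{g(v)}\setminus\odd{}{g(v)}$-ish'') is also garbled; the clean statement is that conditions \ref{it:XY}--\ref{it:YZ} make $X_v^{\beta_Z(v)}Z_v^{\beta_X(v)}X_{g(v)\setminus\{v\}}Z_{\odd{}{g(v)}\setminus\{v\}} = X_{g(v)}Z_{\odd{}{g(v)}}$ up to phase, and the right-hand side is a product of $K_u$'s. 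Once you phrase it that way the case analysis on measurement planes disappears and the argument becomes uniform.
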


The converse also holds:
\begin{theorem}
 If a pattern is stepwise, uniformly and strongly deterministic, then its underlying \LOG{} $(G, I, O, \ld)$ has a gflow.
\end{theorem}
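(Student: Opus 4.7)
The plan is to read the gflow off directly from the correction commands that the given pattern $\mathcal P$ already supplies. Let $\mathcal P$ be a stepwise, uniformly, strongly deterministic pattern on $(G,I,O,\ld)$. For each $v\in\comp O$ the measurement $M_v^{\ld(v),\alpha(v)}$ produces an outcome $s_v\in\{0,1\}$, and $\mathcal P$ specifies commands $[X_i]^{s_v}$ and $[Z_j]^{s_v}$ that correct the $s_v=1$ branch back to the $s_v=0$ one. Define $g(v)\sse\comp I$ to be the set of qubits carrying an $X$-correction dependent on $s_v$, and take $\prec$ to be the transitive closure of the relation ``$v$ is measured strictly before $w$'' (with outputs placed after every measurement). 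By runnability, every correction $[X_w]^{s_v}$ must precede the measurement of $w$, so $w\in g(v)$ immediately gives $v\prec w$, establishing axiom \ref{it:g}.

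For axiom \ref{it:odd} I would argue via the stabiliser structure of the underlying graph state: applying $\prod_{u\in g(v)} X_u$ to $\ket G$ is equivalent (modulo the stabiliser generators $K_u=X_u\prod_{w\in N(u)}Z_w$) to applying $\prod_{u\in g(v)} X_u\cdot\prod_{w\in\odd{}{g(v)}} Z_w$. These ``virtual'' $Z$-corrections on $\odd{}{g(v)}$ must not touch qubits that have already been measured, since that would change the already-realised projector on such a branch and break strong, stepwise determinism; hence every $w\in\odd{}{g(v)}\setminus\{v\}$ is measured after $v$, i.e.\ $v\prec w$. The plane-specific axioms \ref{it:XY}--\ref{it:YZ} then follow from uniform determinism: because the same Pauli correction must realign $\ket{+_{\ld(v),\alpha}}$ with $\ket{-_{\ld(v),\alpha}}$ for every angle $\alpha$, a direct calculation in each plane determines the required net effective Pauli on the qubit $v$, namely $Z_v$ for \XY, $Y_v$ for \XZ, and $X_v$ for \YZ. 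Matching these to the $X$-content $v\in g(v)$ and $Z$-content $v\in\odd{}{g(v)}$ coming from the stabiliser rewrite above yields exactly the three conditions in Definition~\ref{defGFlow}.

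The main obstacle is that a generic pattern need not present its corrections in the ``canonical'' $X$-then-induced-$Z$ form assumed in the argument above; the pattern designer may have absorbed stabiliser equivalences into the $[Z_j]^{s_v}$ commands, producing superficially different but semantically equivalent corrections. To handle this cleanly I would first put $\mathcal P$ into a normal form, pushing every $Z$-correction backwards through the entangling layer using the graph-state commutation identity $Z_j E_G = E_G\,Z_j\prod_{k\in N(j)}Z_k\cdot(\text{conjugation of }X\text{'s})$, until the only outcome-dependent corrections remaining on non-output qubits are $X$-corrections. This normalisation can be carried out in the \zxcalculus using the rewrite rules already collected in the paper, and in the normalised pattern the assignment $g(v)$ is canonical, so that the verifications above go through without ambiguity. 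A secondary subtlety is that uniformity is essential: without it one could in principle use angle-dependent corrections that would not produce a plane-consistent $g$, and the plane-specific calculation in the previous paragraph would fail.
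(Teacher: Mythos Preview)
The paper itself only gives a short inductive sketch and defers the ``essential part'' to \cite[Theorem~7.9.7]{danos_kashefi_panangaden_perdrix_2009}: one peels off the first measurement, uses stepwise determinism to conclude the residual pattern is again stepwise, uniformly and strongly deterministic, obtains a gflow on the smaller \LOG{} by the induction hypothesis, and then argues that a suitable correction set $S\sse\comp I$ for the first-measured qubit exists.  Your approach---read $g$ globally from the $X$-corrections already present in the pattern and take $\prec$ from the measurement order---is structurally different from this inductive reduction, and is closer in spirit to how one would actually construct the set $S$ that the paper's sketch leaves unspecified.  The induction buys the paper a clean reduction (at each step only one correction set must be produced, and stepwise determinism is what makes the induction go through), whereas your route makes the construction of $g$ explicit all at once but forces you to verify the gflow axioms simultaneously for every vertex.

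There are however two concrete gaps in your execution.  First, the commutation identity you quote for normalisation is wrong: $Z_j$ \emph{commutes} with every $E_{kl}$ (CZ is diagonal in the computational basis), so ``pushing $Z$-corrections backwards through the entangling layer'' is a no-op and cannot convert explicit $Z$-corrections into $X$-corrections as you claim.  The relation you need is $X_j E_G = E_G\, X_j\prod_{k\in N(j)}Z_k$, and the relevant normal form is obtained by quotienting by the stabilisers $K_u$, not by commuting past $E_G$.  Second, your definition ``$g(v)$ is the set of qubits carrying an $X$-correction dependent on $s_v$'' can never yield $v\in g(v)$, since no correction is applied to $v$ after it has been measured; by your definition $v\notin g(v)$ for every $v$, which already violates \ref{it:XZ} and \ref{it:YZ}.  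The repair is to adjoin $v$ to $g(v)$ by hand whenever $\ld(v)\in\{\XZm,\YZm\}$, but then one must actually prove---using uniform and strong determinism---that the resulting $\odd{}{g(v)}$ has the correct membership of $v$ in each plane.  That verification is exactly the ``essential part'' that the paper's sketch defers to the reference, and it does not follow from the ``matching'' you describe without further argument.
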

\begin{proof}
We present the proof sketch here, a complete treatment of the proof can be found in Ref.~\cite[Theorem~7.9.7]{danos_kashefi_panangaden_perdrix_2009}. The proof is by induction on the number of measurements. If the number of measurements is $n=0$, then the pattern trivially has a gflow. Suppose the pattern has $n+1$ qubits to be measured. Since the pattern is assumed to be stepwise deterministic, after performing the first measurement, the remaining pattern is still stepwise, uniformly and strongly deterministic. Hence it has a gflow by the induction hypothesis, where the partial order is given by the order in which the measurements are performed.

It remains to extend this gflow to include the first qubit to be measured. The essential part of this is to find a subset $S \subseteq \comp{I}$ that can act as the correction set of the first measurement (cf.\ Definition~\ref{defGFlow}). Given such a subset $S$, we define the full gflow as:
\begin{align*}
g^\prime(i) :=
\begin{cases}
 g(i) & i\neq n \\ S& i=n,
\end{cases}
\end{align*}
where $g$ is the gflow of the smaller pattern.
\end{proof}

It is not straightforward to actually find a concrete gflow from the procedure in this proof. A constructive algorithm has since been given in Ref.~\cite{mhalla2011graph}, which finds gflow on \LOG{}s where all measurements are in the \XY plane. In Section~\ref{sec:MaximallyDelayedGflow}, we extend this algorithm to find extended gflow on \LOG{}s with measurements in all three planes.

Gflow is a property that applies to \LOG{}s. For convenience we also define it for ZX-diagrams.
\begin{definition}\label{dfn:zx-gflow}
We say a \zxdiagram in MQBC(+LC) form has \emph{gflow} $(g,\prec)$ if the corresponding \LOG\ $\Gamma$ has gflow $(g,\prec)$.
\end{definition}

The following result from Ref.~\cite{cliff-simp} shows that any unitary circuit can be converted into a deterministic measurement pattern.

\begin{proposition}[{\cite[Lemma 3.7]{cliff-simp}}]\label{prop:circuit-to-pattern}
    Given a circuit there is a procedure for converting it into an equivalent measurement pattern.
    Furthermore, this measurement pattern only contains \XY-plane measurements and has causal flow, so it also has gflow.
\end{proposition}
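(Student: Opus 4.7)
The plan is to establish the result by induction on the number of gates in the circuit, after exhibiting base-case patterns for each generator in the universal set $\{\CNOT, Z_\alpha, H\}$ and then verifying that pattern composition preserves both the \XYm-only property and the existence of causal flow.

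For the base cases, I would give an explicit \LOG\ (equivalently, by the correspondence in Lemma~\ref{lem:zx-to-pattern}, an MBQC-form \zxdiagram) for each generator. The Hadamard gate is realised by a two-vertex pattern: one input vertex measured in the \XYm-plane with angle $0$, connected by an edge to an output vertex. The $Z_\alpha$ gate is realised by a similar two-vertex pattern where the input is measured in the \XYm-plane with angle $-\alpha$. The \CNOT\ gate is implemented by a standard four-vertex pattern (two inputs and two outputs, with the control's target-side qubit obtained via two intermediate \XYm-measurements at angle $0$). Correctness of each translation can be checked directly by interpreting the associated \zxdiagram using Lemma~\ref{lem:zx-equals-linear-map} and comparing with the matrices given in equations~\eqref{eq:zx-gates} and~\eqref{eq:zx-derived-gates}; alternatively, each appears already in the literature cited in the excerpt.

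Next, I would verify that each elementary pattern has causal flow by inspection. For the $H$ and $Z_\alpha$ patterns, the unique measured vertex is mapped by $f$ to its unique neighbour in $O$, and the two-element partial order placing the input before the output satisfies the three conditions of Definition~\ref{def:causal-flow}. For the \CNOT\ pattern, $f$ sends each measured vertex to its right-hand neighbour along the graph, and the natural left-to-right total order on the four vertices is easily checked to satisfy the neighbourhood condition. Since none of the measurements are in \XZm\ or \YZm, the pattern is entirely in the \XYm-plane.

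Finally, I would show that if $P_1$ and $P_2$ are patterns with causal flows $(f_1,\prec_1)$ and $(f_2,\prec_2)$ respectively, and the composite $P_2 \circ P_1$ is formed by identifying the outputs of $P_1$ with the corresponding inputs of $P_2$, then the composite also has causal flow: take $f := f_1 \cup f_2$ (well-defined because no output of $P_1$ was in the domain of $f_1$, and once identified with an input of $P_2$ it may now be measured), and extend the partial order by $u \prec v$ whenever $u$ is a vertex of $P_1$ and $v$ is a non-interface vertex of $P_2$, combined with $\prec_1$ and $\prec_2$ on the respective sides. The only condition that requires checking is the third clause of Definition~\ref{def:causal-flow} at the interface, which holds because every neighbour of $f(u)$ for $u \in P_1$ either lies in $P_1$ (handled by $\prec_1$) or lies later in $P_2$ (handled by the extension), and symmetrically for $P_2$. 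Composition thus preserves both properties, and the induction closes. The final sentence of the claim is immediate from the remark following Definition~\ref{defGFlow} that every causal flow is a gflow.

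The main obstacle will be the interface checks in the composition step: when an output vertex of $P_1$ becomes measured in the composite, the causal-flow conditions at that vertex must be inherited cleanly from $P_2$, which requires that the neighbourhoods used in $P_2$'s flow are not disturbed by the gluing — this is automatic because identification only merges a boundary vertex with an input vertex of $P_2$, without adding any new edges among pre-existing vertices of $P_1$ or $P_2$.
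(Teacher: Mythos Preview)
The paper does not actually prove this proposition; it simply cites Lemma~3.7 of Ref.~\cite{cliff-simp}, and the paragraph following the statement explains that the ``graph-like diagrams'' of that reference are exactly MBQC-form diagrams with only \XYm\ measurements. The argument in \cite{cliff-simp} is quite different from yours: it does not decompose the circuit into generators and compose patterns. Instead it takes the entire \zxdiagram of the circuit, colour-changes every red spider to green, fuses all adjacent green spiders, and arranges every internal edge to be a Hadamard edge; the resulting graph-like diagram is then shown directly to admit gflow. Your generator-and-composition route is the classical Danos--Kashefi approach and is a perfectly legitimate alternative; the composition step for causal flow is in fact proved in their original flow paper~\cite{Danos2006Determinism-in-}.

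There are, however, two concrete issues with your sketch. First, the $Z_\alpha$ base case is wrong: a two-vertex chain with an \XYm-measurement at angle $-\alpha$ on the input realises $HZ_\alpha$, not $Z_\alpha$ (compute $\bra{+_{\XYm,-\alpha}}_1 E_{12}N_2$ acting on $\ket\psi\otimes\ket+$). You need a three-vertex chain, or equivalently the composite of your $H$ pattern with this one. Second, Definition~\ref{def:circuit} builds circuits by composition \emph{and} tensor product, and your induction only treats sequential composition; the tensor case is easy (disjoint union of flows) but must be stated. A smaller point: in your composition argument the partial-order extension only forces $u\prec v$ for $v$ a \emph{non-interface} vertex of $P_2$, so if $f_1(u)$ is an interface vertex that acquires, via $P_2$, an edge to another interface vertex $w$, the third clause of Definition~\ref{def:causal-flow} at $u$ is not covered by either $\prec_1$ or your extension. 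This does not arise for your specific generators (their inputs are never mutually adjacent), but the argument as written does not close without saying so.
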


Note that Ref.~\cite{cliff-simp} does not explicitly talk about measurement patterns.
What they call graph-like diagrams however correspond in a straightforward manner to diagrams in MBQC form with every measured vertex being measured in the \XY-plane.
(We also note that the procedure of Proposition~\ref{prop:circuit-to-pattern} takes $O(n^2)$ operations,
where $n$ is the number of vertices.)

Below, we present a concrete example illustrating how the presence of gflow allows measurement errors to be corrected as the computation progresses.
\begin{example}\label{ex:gflow-in-action}
Consider the following \LOG{} $\Gamma$:
\[
 \tikzfig{gflow-example-geometry}
\]
where $a$ is an input, $e$ and $f$ are outputs, and the measurement planes are given by $\ld(a)=\ld(b)=\XYm$, $\ld(c)=\XZm$ and $\ld(d)=\YZm$. As usual, we denote the measurement angles by $\alpha : V \rightarrow [0,2\pi)$, where $V$ is the vertex set of $\Gamma$. Using Definition~\ref{def:ogs-to-ZX} (that is, we translate according to Table~\ref{tab:MBQC-to-ZX}), we obtain the corresponding \zxdiagram:
\[\tikzfig{gflow-example-zx}
\]
Note that the \LOG\ we started with has a gflow $(g,\prec)$ given by the following partial order
$$a\prec b\prec c\prec d\prec e,f,$$
with the function $g$ taking the values
\begin{align*}
g(a) &= \{b\} \\
g(b) &= \{c\} \\
g(c) &= \{c,d\} \\
g(d) &= \{d,e,f\}.
\end{align*}
It follows that we have
\begin{align*}
\odd{}{g(a)} &= \{a,c,d,e\} \\
\odd{}{g(b)} &= \{b,d,f\} \\
\odd{}{g(c)} &= \{c,d,f\} \\
\odd{}{g(d)} &= \varnothing,
\end{align*}
from which it is easy to verify that the conditions \ref{it:g}-\ref{it:YZ} hold.

By Theorem~\ref{t-flow}, the presence of gflow guarantees that we can correct the measurement errors provided that we measure the qubits according to the partial order. We demonstrate this for $\Gamma$ in Figure~\ref{fig:error-propagation}.

Thus suppose a measurement error of $\pi$ occurs when performing the measurement corresponding to the vertex $c$, as indicated in the top left part of Figure~\ref{fig:error-propagation}. The labels in this figure refer to the rules in Figure~\ref{fig:zx-rules}. In order to get the left diagram on the second row, we move each red $\pi$-phase past the corresponding Hadamard gate, which changes the colour of the phase to green. For the right diagram, the left green $\pi$ travels past the green node on the left and flips the sign of $\alpha(d)$. Next, to obtain the left diagram on the third row, the middle green $\pi$ travels along the middle triangle and past another Hadamard gate to become a red $\pi$. Finally, in the bottom left diagram, the red $\pi$ on the left has been fused with $-\alpha(d)$; and the red $\pi$ on the right has passed through the Hadamard gate switching its colour to green, and the adjacent green nodes have fused into a node with phase $\pi-\frac{\pi}{2}=\frac{\pi}{2}$. We then rearrange the diagram so that it looks like a measurement pattern again.

Note that all the vertices that are affected by the error are above $c$ in the partial order and hence `not yet measured' at this stage of the computation. Thus the necessary corrections may be applied to these vertices when they are measured.
\end{example}
\begin{figure}
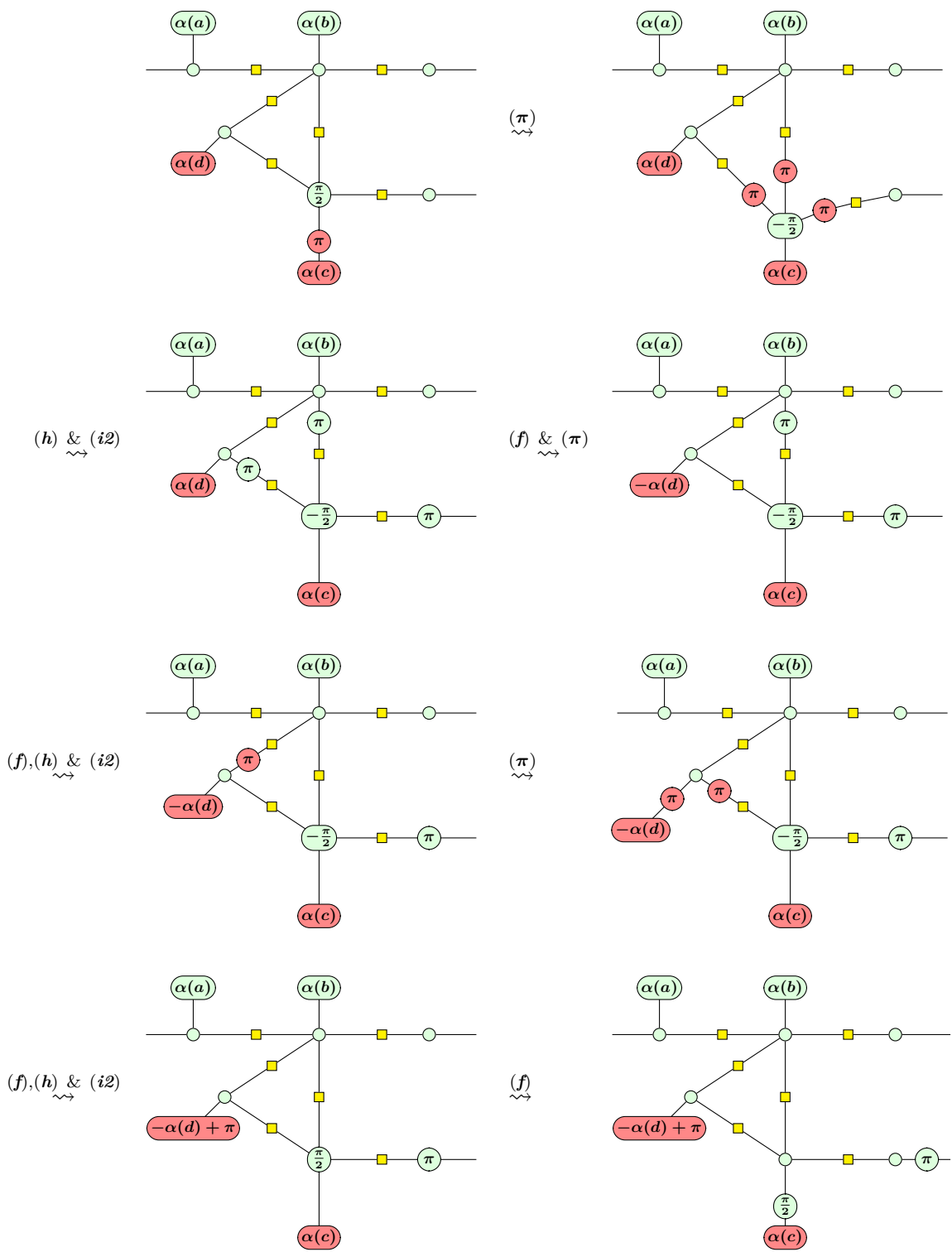

\begin{align*}
&\tikzfig{gflow-example-corr2-11}\quad\stackrel{(\boldsymbol{\pi})}{\rightsquigarrow} & &\tikzfig{gflow-example-corr2-12} \\ \\
\stackrel{(\textit{\textbf h})\ \&\ (\textit{\textbf {i2}})}{\rightsquigarrow}\quad &\tikzfig{gflow-example-corr2-21}\quad\stackrel{(\textit{\textbf f})\ \&\ (\boldsymbol{\pi})}{\rightsquigarrow} & &\tikzfig{gflow-example-corr2-22} \\ \\
\stackrel{(\textit{\textbf f}), (\textit{\textbf h})\ \&\ (\textit{\textbf {i2}})}{\rightsquigarrow}\quad &\tikzfig{gflow-example-corr2-31}\quad\stackrel{(\boldsymbol{\pi})}{\rightsquigarrow} & &\tikzfig{gflow-example-corr2-32} \\ \\
\stackrel{(\textit{\textbf f}), (\textit{\textbf h})\ \&\ (\textit{\textbf {i2}})}{\rightsquigarrow}\quad &\tikzfig{gflow-example-corr2-41}\quad\stackrel{(\textit{\textbf f})}{\rightsquigarrow} & &\tikzfig{gflow-example-corr2-42}
\end{align*}
\caption{\label{fig:error-propagation} Propagation of a measurement error of the pattern in Example~\ref{ex:gflow-in-action}.}
\end{figure}

\subsection{Focusing gflow for {\XY} plane measurements}\label{sec:focusing-gflow}

For a \LOG{} $(G,I,O,\ld)$ in which all measurements are in the \XY-plane, there is a special type of gflow which is specified by the correction function alone.
This gflow is called \emph{focused} because of the property that, among non-output vertices, corrections only affect the vertex they are meant to correct.
A \LOG{} where all measurements are in the \XY plane has gflow if and only if it has focused gflow.

\begin{definition}[{adapted from \cite[Definition~5]{mhalla2011graph}}]\label{def:focused-gflow}
 Suppose $(G,I,O,\ld)$ is a \LOG{} with the property that $\ld(v)=\XYm$ for all $v\in\comp{O}$.
 Then $(g,\prec)$ is a \emph{focused gflow} on $(G,I,O,\ld)$ if for all $v\in\comp{O}$, we have $\odd{G}{g(v)}\cap \comp{O}=\{v\}$, and furthermore $\prec$ is the transitive closure of the relation $\{(v,w) \mid v\in\comp{O} \wedge w\in g(v)\}$.
\end{definition}

\begin{theorem}[{reformulation of \cite[Theorem~2]{mhalla2011graph}}]\label{thm:mhalla2}
 Suppose $(G,I,O,\ld)$ is a \LOG{} with the property that $\ld(v)=\XYm$ for all $v\in\comp{O}$, then $(G,I,O,\ld)$ has gflow if and only if it has a focused gflow.
\end{theorem}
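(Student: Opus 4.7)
The plan is to prove the two implications separately. The right-to-left direction, that a focused gflow is a gflow, is essentially a definition-check once the partial order in Definition~\ref{def:focused-gflow} is unpacked, while the left-to-right direction, constructing a focused correction function from a generic gflow, contains the real content of the theorem.

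For the easy direction, given $(g,\prec)$ as in Definition~\ref{def:focused-gflow}, I would verify \ref{it:g}, \ref{it:odd}, and \ref{it:XY} (the other two gflow conditions being vacuous since every $v\in\comp{O}$ is measured in the \XY~plane). Condition \ref{it:g} is immediate from $\prec$ being the transitive closure of $\{(v,w)\mid w\in g(v)\}$. For \ref{it:XY}, the focused hypothesis $\odd{G}{g(v)}\cap\comp{O}=\{v\}$ gives $v\in\odd{G}{g(v)}$ for free, and $v\notin g(v)$ follows because $g(v)\subseteq\comp{I}$ combined with the strictness of the transitive closure. For \ref{it:odd}, any $w\in\odd{G}{g(v)}$ with $w\neq v$ must lie in $O$; extending $\prec$ so that outputs sit maximally then gives $v\prec w$.

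For the hard direction I would process vertices from top to bottom of $\prec$. Pick a linear extension and enumerate $\comp{O}$ as $v_1,\dots,v_n$ with $v_i\prec v_j\Rightarrow i<j$, then define $g'(v_n),g'(v_{n-1}),\dots,g'(v_1)$ by reverse induction. At the base step, condition \ref{it:odd} on the original $g$ forces $\odd{G}{g(v_n)}\cap\comp{O}\subseteq\{v_n\}$, so set $g'(v_n):=g(v_n)$. For the inductive step, let $S_i:=\left(\odd{G}{g(v_i)}\cap\comp{O}\right)\setminus\{v_i\}$ and set
\[
  g'(v_i) \ :=\ g(v_i)\ \symd\ \Symdi{u\in S_i} g'(u).
\]
Every $u\in S_i$ satisfies $v_i\prec u$ by \ref{it:odd} applied to the original gflow, so $g'(u)$ is already defined. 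Using additivity of $\textsf{Odd}$ with respect to symmetric difference together with the inductive property $\odd{G}{g'(u)}\cap\comp{O}=\{u\}$, each such $u$ is removed from $\odd{G}{g'(v_i)}\cap\comp{O}$ while no new non-output vertex other than $v_i$ itself is introduced, yielding the focused condition for $v_i$.

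The main obstacle I expect is the bookkeeping needed to close the induction: showing $g'(v)\subseteq\comp{I}$ (immediate since each $g(u)$ is), that $v\notin g'(v)$ (which needs that $v\notin g(w)$ whenever $w\succ v$ is used in the construction, which follows because $v\in g(w)$ would force $w\prec v$ by \ref{it:g} on the original gflow, contradicting $v\prec w$), and that the transitive closure of $\{(v,w)\mid w\in g'(v)\}$ is antisymmetric (which holds because each $g'(v)$ is supported on vertices $\succeq v$ in the original order, so the induced relation refines $\prec$ on $\comp{O}$). Once these invariants are maintained along the induction, the final $g'$ together with its transitive-closure partial order is a focused gflow by construction.
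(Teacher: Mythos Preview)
The paper does not prove this theorem at all: it is stated as a reformulation of \cite[Theorem~2]{mhalla2011graph} and simply cited. So there is no ``paper's own proof'' to match against. Your argument stands on its own, and it is essentially correct.

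A few remarks. In the easy direction you correctly spot that \ref{it:odd} can fail for output vertices under the transitive-closure order of Definition~\ref{def:focused-gflow}, and you patch this by extending $\prec$ to make outputs maximal. That is the right move, but be explicit that you are then exhibiting a gflow $(g,\prec')$ different from the focused gflow $(g,\prec)$; this is fine because the statement only asks for existence. In the hard direction your reverse-induction works cleanly: the invariant $x\in g'(u)\Rightarrow u\prec x$ (in the original order) is exactly what closes all three bookkeeping items, and your computation $\odd{}{g'(v_i)}\cap\comp{O}=(S_i\cup\{v_i\})\symd S_i=\{v_i\}$ is correct.

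It is worth noting that the paper does contain a closely related focusing argument in Section~\ref{sec:focusing-extended-gflow} (Lemma~\ref{lem:focus-single-vertex} and Proposition~\ref{prop:focused-gflow}), generalised to all three planes. The technique there is subtly different from yours: to focus a single vertex $v$, the paper repeatedly picks a \emph{minimal} offending vertex $w_k$ and XORs in the \emph{original} $g(w_k)$, arguing that any new offenders lie strictly above $w_k$ so the process terminates. Your approach instead processes vertices globally from top to bottom and XORs in the already-focused $g'(u)$, which makes each step a one-shot fix. Both are valid; yours is slightly slicker in the \XY-only setting, while the paper's version is what is needed once \XZ and \YZ vertices enter the picture.
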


A \LOG{} in which all measurements are in the \XY-plane can be \emph{reversed} by swapping the roles of inputs and outputs.
More formally:

\begin{definition}\label{def:reversed-LOG}
 Suppose $(G,I,O,\ld)$ is a \LOG{} with the property that $\ld(v)=\XYm$ for all $v\in\comp{O}$.
 The corresponding \emph{reversed \LOG{}} is the \LOG{} where the roles of inputs and outputs are swapped, i.e.\ it is $(G,O,I,\ld')$, where $\ld'(v):=\XYm$ for all $v\in\comp{I}$.
\end{definition}

Now if the number of inputs and outputs in the \LOG{} is the same, its focused gflow can also be reversed in the following sense.

\begin{corollary}\label{cor:reverse_unitary_gflow}
 Suppose $(G,I,O,\ld)$ is a \LOG{} with the properties that $\abs{I}=\abs{O}$ and $\ld(v)=\XYm$ for all $v\in\comp{O}$, and suppose it has a focused gflow $(g,\prec)$.
 For all $v\in\comp{I}$, let $g'(v):=\{w\in\comp{O} \mid v\in g(w)\}$, and for all $u,w\in V$, let $u\prec' w$ if and only if $w\prec u$.
 Then $(g',\prec')$ is a focused gflow for the reversed \LOG{} $(G,O,I,\ld')$.
\end{corollary}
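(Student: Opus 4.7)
The plan is to verify that $(g', \prec')$ satisfies the two defining requirements of a focused gflow on $(G, O, I, \ld')$ from Definition~\ref{def:focused-gflow}: (i) the odd-neighbourhood condition $\odd{G}{g'(v)} \cap \comp{I} = \{v\}$ for every $v \in \comp{I}$, and (ii) that $\prec'$ is the transitive closure of the relation $R' := \{(v,w) \mid v \in \comp{I} \wedge w \in g'(v)\}$.

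Condition (ii) will follow almost formally. Directly from the definitions, $w \in g'(v) \iff v \in g(w) \wedge w \in \comp{O}$, so $R'$ is precisely the converse of $R := \{(v,w) \mid v \in \comp{O} \wedge w \in g(v)\}$. Since forming the transitive closure commutes with inversion of relations, and $\prec$ is by hypothesis the transitive closure of $R$, the stipulation $\prec' := \prec^{-1}$ makes $\prec'$ the transitive closure of $R'$; irreflexivity and transitivity of $\prec'$ are inherited from $\prec$.

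The substantive step is (i), and it is here that the hypothesis $|I| = |O|$ does the work. The strategy is to recast the focused-gflow condition as a matrix identity over $\mathbb F_2$. Let $P$ be the $\comp{I} \times \comp{O}$ indicator matrix of $g$ defined by $P_{w,v} := [w \in g(v)]$, and let $A := \Gamma_{\comp{I}, \comp{O}}$ denote the corresponding block of the adjacency matrix of $G$. Unpacking the focused-gflow condition on $(G, I, O, \ld)$ yields
\[
(P^T A)_{v, v'} = \sum_{w \in \comp{I}} [w \in g(v)]\,[w \sim v'] \equiv \delta_{v, v'} \pmod 2
\]
for $v, v' \in \comp{O}$, i.e.\ $P^T A = I$ over $\mathbb F_2$. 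The hypothesis $|I| = |O|$ forces $|\comp{I}| = |\comp{O}|$, so $A$ is a square matrix; consequently $P^T$ is both a left and a right inverse of $A$, and transposing the identity $A P^T = I$ (using $\Gamma = \Gamma^T$) gives $P\,\Gamma_{\comp{O}, \comp{I}} = I_{|\comp{I}|}$. Unpacking this identity at $(v, v') \in \comp{I} \times \comp{I}$ reads $\sum_{u \in \comp{O}} [u \in g'(v)]\,[u \sim v'] \equiv \delta_{v, v'}$, which is precisely condition (i).

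The main obstacle is setting up the right matrix dictionary and recognising that the focused-gflow equation is only a one-sided matrix inverse in general, but becomes two-sided under the dimensional hypothesis $|I| = |O|$; taking the transpose on the other side then delivers the reversed condition. The remaining clauses of Definition~\ref{defGFlow} for $(g', \prec')$ either reduce to (i) and (ii) or are vacuous: \ref{it:XZ} and \ref{it:YZ} are vacuous since all measurements of the reversed \LOG{} lie in the \XY-plane, while \ref{it:XY} reduces to (i) together with $v \notin g'(v)$, a consequence of the irreflexivity of $\prec'$.
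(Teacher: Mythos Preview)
Your argument for the two defining conditions of Definition~\ref{def:focused-gflow} is correct, and it is precisely the $\mathbb F_2$-linear-algebraic content of Theorems~3--4 in \cite{mhalla2011graph}, to which the paper defers without spelling out any details. So the routes coincide; you have simply written out what the paper leaves implicit.

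One correction to your closing paragraph: condition~\ref{it:odd} of Definition~\ref{defGFlow} does \emph{not} in general reduce to (i) and (ii). If some $w\in I$ lies in $\odd{G}{g'(v)}$---which (i) permits, since such a $w$ is an output of the reversed \LOG---nothing forces $v\prec' w$: elements of $I$ never occur as second coordinates of the relation $R$ generating $\prec$ (because $g$ takes values in $\pow{\comp{I}}$), so $w\prec v$ need not hold. Concretely, take $V=\{1,2,3\}$ with $I=\{1,2\}$, $O=\{2,3\}$ and edges $1\sim 2$, $1\sim 3$; then $g(1)=\{3\}$ and $\prec=\{(1,3)\}$ give the unique focused gflow, whence $g'(3)=\{1\}$ and $\prec'=\{(3,1)\}$, yet $2\in\odd{G}{g'(3)}=\{2,3\}$ while $3\not\prec' 2$. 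This does not affect the corollary, which asserts only the conditions of Definition~\ref{def:focused-gflow}; that the reversed \LOG{} admits a genuine gflow (with a possibly coarser order) is the separate content of Theorem~\ref{thm:mhalla2}.
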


This follows immediately from the proofs of Ref.~\cite[Theorems~3--4]{mhalla2011graph} but it is not explicitly stated in that paper.

\section{Rewriting while preserving the existence of gflow}
\label{sec:rewriting}

In this section, we study a variety of topics dealing with \LOG{}s and gflow.
In the first subsection, we show how certain graph operations, such as local complementation and pivoting, affect the gflow.
In Section~\ref{sec:MaximallyDelayedGflow} we give a polynomial time algorithm for finding extended gflow using the concept of \emph{maximally delayed} gflow.
We combine this notion with that of a \emph{focused} extended gflow in Section~\ref{sec:focusing-extended-gflow} to transform a given gflow to give it certain useful properties.

\subsection{Graph operations that preserve the existence of gflow}

In this section, we prove some of our main technical lemmas, establishing that local complementation and related graph rewrites interact well with the gflow of the graph.

First, we show that a \LOG{} resulting from the local complementation of a \LOG{} with gflow will also have a gflow.

\newcommand{\statelcgflow}{
Let $(g,\prec)$ be a gflow for $(G,I,O,\ld)$ and let $u\in\comp{O}\cap\comp I$. Then $(g',\prec)$ is a gflow for $(G\star u, I, O,\ld')$, where
 \[
  \ld'(u) := \begin{cases} \XZm &\text{if } \ld(u)=\XYm \\ \XYm &\text{if } \ld(u)=\XZm \\ \YZm &\text{if } \ld(u)=\YZm \end{cases}
 \]
 and for all $v\in \comp{O}\setminus\{u\}$
 \[
  \ld'(v) := \begin{cases} \YZm &\text{if } v\in N_G(u) \text{ and } \ld(v)=\XZm \\ \XZm &\text{if } v\in N_G(u) \text{ and } \ld(v)=\YZm \\ \ld(v) &\text{otherwise.} \end{cases}
 \]
 Furthermore,
 \[
  g'(u) := \begin{cases} g(u)\symd \{u\} &\text{if } \ld(u)\in\{\XYm,\XZm\} \\ g(u) &\text{if } \ld(u)=\YZm \end{cases}
 \]
 and for all $v\in \comp{O}\setminus\{u\}$,
 \[
  g'(v) := \begin{cases} g(v) &\text{if } u\notin\odd{G}{g(v)} \\ g(v)\symd g'(u) \symd \{u\} &\text{if } u\in\odd{G}{g(v)}. \end{cases}
 \]
}

\begin{lemma}\label{lem:lc_gflow}
 \statelcgflow
\end{lemma}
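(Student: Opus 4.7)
Since the partial order $\prec$ is unchanged, the plan is to verify the five gflow conditions \ref{it:g}--\ref{it:YZ} for $(g',\prec)$ on $(G\star u, I, O, \ld')$ by direct combinatorial case analysis on each $v \in \comp{O}$.

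The workhorse of the proof is a transformation law for odd neighbourhoods under local complementation. Using $N_{G\star u}(w) = N_G(w)$ for $w \notin N_G(u)$ (in particular for $w = u$) and $N_{G\star u}(w) = N_G(w) \symd (N_G(u)\setminus\{w\})$ for $w \in N_G(u)$, reduction modulo $2$ shows: for any $K \sse V$, membership of $w$ in $\odd{G\star u}{K}$ coincides with membership in $\odd{G}{K}$ whenever $w \notin N_G(u)$, while for $w \in N_G(u)$ the two memberships differ by the XOR of $[w \in K]$ and $[u \in \odd{G}{K}]$. Applying this identity with $v = u$ shows $|N_{G\star u}(u) \cap g'(u)| = |N_G(u) \cap g(u)|$ regardless of the sub-case $\ld(u) \in \{\XYm,\XZm,\YZm\}$, so the prescribed flip of $\ld(u)$ together with the optional toggling of $u$ in $g'(u)$ translates the old plane condition \ref{it:XY}--\ref{it:YZ} at $u$ into the new one.

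For $v \neq u$, I would split on whether $u \in \odd{G}{g(v)}$. If $u \notin \odd{G}{g(v)}$ then $g'(v) = g(v)$, and the transformation law shows that membership in the odd neighbourhood can change only for vertices in $N_G(u)$, with the toggle being given by the indicator of $g(v)$; this is exactly matched by the prescribed swap of \XZm\ and \YZm\ on $N_G(u)$ (with \XYm\ unaffected), so conditions \ref{it:XY}--\ref{it:YZ} at $v$ carry over. If $u \in \odd{G}{g(v)}$, then the definition $g'(v) := g(v) \symd g'(u) \symd \{u\}$ is chosen so that, using linearity of $\odd{G\star u}{\cdot}$ in its argument and the computation already performed at $u$, the $u$-contribution cancels; the plane conditions \ref{it:XY}--\ref{it:YZ} at $v$ again carry over with $\ld'(v)$.

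The main obstacle is the order conditions \ref{it:g} and \ref{it:odd}: one must show that every vertex $w \neq v$ appearing in $g'(v) \cup \odd{G\star u}{g'(v)}$ but not already in $g(v) \cup \odd{G}{g(v)}$ satisfies $v \prec w$. All such new vertices lie in $\{u\} \cup N_G(u) \cup g(u) \cup \odd{G}{g(u)}$. If $u \in \odd{G}{g(v)}$ (so that $g'(v)$ genuinely differs from $g(v)$), condition \ref{it:odd} for the original gflow gives $v \prec u$; combined with transitivity and conditions \ref{it:g}--\ref{it:odd} applied at $u$, this yields $v \prec w$ for every such $w$. In the remaining case the new odd-neighbourhood elements lie in $N_G(u) \cap g(v)$, and condition \ref{it:g} of the original gflow supplies $v \prec w$ directly.
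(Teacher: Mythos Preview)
Your proposal is correct and follows essentially the same approach as the paper: both verify the gflow axioms directly via case analysis, using the same transformation law for odd neighbourhoods under local complementation (what the paper cites as Lemma~\ref{lem:oddneighbours}), and both handle \ref{it:g}--\ref{it:odd} for the modified correction sets by transitivity through $v\prec u$. The only organisational difference is that the paper separates the verification of \ref{it:XY}--\ref{it:YZ} for $v\neq u$ by splitting on $v\in N_G(u)$ rather than on $u\in\odd{G}{g(v)}$, but this is cosmetic since (as you implicitly observe) the membership of $v$ itself in $g'(v)$ and $\odd{G\star u}{g'(v)}$ is unaffected by the $g'(u)\symd\{u\}$ adjustment.
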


\noindent The proof of this lemma can be found in Appendix~\ref{sec:proofs}. Note that the condition that the complemented vertex is not an output can in fact be dropped:

\begin{lemma}
 Let $(g,\prec)$ be a gflow for $(G,I,O,\ld)$ and let $u\in O$. Then $(g',\prec)$ is a gflow for $(G\star u, I, O,\ld')$, where for all $v\in \comp{O}$
 \[
  \ld'(v) := \begin{cases} \YZm &\text{if } v\in N_G(u) \text{ and } \ld(v)=\XZm \\ \XZm &\text{if } v\in N_G(u) \text{ and } \ld(v)=\YZm \\ \ld(v) &\text{otherwise.} \end{cases}
 \]
 Furthermore, for all $v\in \comp{O}$,
 \[
  g'(v) := \begin{cases} g(v) &\text{if } u\notin\odd{G}{g(v)} \\ g(v) \symd \{u\} &\text{if } u\in\odd{G}{g(v)}. \end{cases}
 \]
\end{lemma}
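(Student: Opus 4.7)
The plan is to directly verify the three gflow conditions of Definition~\ref{defGFlow} for $(g', \prec)$ on $(G \star u, I, O, \ld')$. Because $u \in O$, no gflow condition is imposed at $u$ itself, so it suffices to check all conditions for $v \in \comp{O}$. The central computational tool is an analysis of how odd neighbourhoods transform under local complementation at $u$: for any $K \sse V$, the sets $\odd{G \star u}{K}$ and $\odd{G}{K}$ agree outside $N_G(u) \cup \{u\}$; at $u$ itself they always agree (since $u \notin N_G(u)$ means $N_{G \star u}(u) = N_G(u)$); and at $w \in N_G(u)$ the parities differ by $|(N_G(u)\setminus\{w\}) \cap K| \pmod{2}$.

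For condition \ref{it:g} on $v \in \comp{O}$, one has $g'(v) \sse g(v) \cup \{u\}$, with $u$ added or removed precisely when $u \in \odd{G}{g(v)}$, in which case \ref{it:odd} for $(g,\prec)$ yields $v \prec u$. For the remaining elements $w \in g'(v) \cap g(v)$, the order condition follows directly from \ref{it:g} for $(g,\prec)$.

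For condition \ref{it:odd}, set $K := g'(v)$ and split on whether $u \in \odd{G}{g(v)}$. When $u \notin \odd{G}{g(v)}$, so $K = g(v)$, the parity analysis shows that for $w \in N_G(u)$ we have $w \in \odd{G\star u}{K}$ iff $w \in \odd{G}{g(v)}$ differs from the truth of $w \in g(v)$, so any $w \in \odd{G\star u}{K} \setminus \{v\}$ lies in $\odd{G}{g(v)}$ or in $g(v)$, each yielding $v \prec w$; moreover $u \notin \odd{G\star u}{K}$ in this case. When $u \in \odd{G}{g(v)}$, so $K = g(v) \symd \{u\}$, the fact that $u \notin N_G(u)$ makes $|N_G(u) \cap K| \equiv |N_G(u) \cap g(v)| \pmod{2}$, so the parity shifts combine to give the same formula at $w \in N_G(u)$ as in the first case; and now $u \in \odd{G\star u}{K}$ with $v \prec u$ coming for free from \ref{it:odd} for $(g,\prec)$.

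For the plane conditions \ref{it:XY}--\ref{it:YZ} on $v \in \comp{O}$, the hypothesis $v \neq u$ forces $v \in g'(v)$ iff $v \in g(v)$, so the distinction between $\XYm$ and $\{\XZm, \YZm\}$ is preserved; this matches the fact that $\ld'$ never toggles $\XYm$. The analysis at $w = v$ in the formula above shows that $v \in \odd{G \star u}{g'(v)}$ agrees with $v \in \odd{G}{g(v)}$ unless both $v \in N_G(u)$ and $v \in g(v)$, in which case the two memberships disagree; this precisely triggers the $\XZm \leftrightarrow \YZm$ swap prescribed by $\ld'$. The main bookkeeping hurdle is the case $u \in \odd{G}{g(v)}$, where both the symmetric difference $g(v) \symd \{u\}$ in $K$ and the local edge changes contribute to the parity at vertices in $N_G(u)$; the observation that $u$ has no self-loop is what makes these two contributions cancel appropriately.
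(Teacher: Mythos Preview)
Your proposal is correct and follows essentially the same approach as the paper. The paper's own proof simply observes that the argument of Lemma~\ref{lem:lc_gflow} goes through verbatim upon setting $g(u)$ and $g'(u)$ to be empty (since $u\in O$ has no label and no correction set); your write-up amounts to carrying out that verification directly, developing the parity analysis that the paper packages as Lemma~\ref{lem:oddneighbours} and arriving at the identity $\odd{G\star u}{g'(v)} = \odd{G}{g(v)} \symd (N_G(u)\cap g(v))$ in both cases, which is exactly what drives Parts~2 and~3 of the appendix proof of Lemma~\ref{lem:lc_gflow}.
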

\begin{proof}
 The proof is basically the same as that of Lemma~\ref{lem:lc_gflow} if we take $g(u)$ and $g'(u)$ to be empty.
 The output vertex has no label, so its label does not need to be updated.
\end{proof}

Now by applying this lemma three times we see that a pivot also preserves the existence of a gflow.

\newcommand{\statecorpivotgflow}{
Let $(G,I,O,\ld)$ be a \LOG\ which has a gflow, and let $u,v\in\comp{O}\cap\comp I$ be connected by an edge. Then $(G\land uv, I, O,\hat\ld)$, where
 \[
  \hat\ld(a) = \begin{cases} \YZm &\text{if } \ld(a)=\XYm \\
                                     \XZm &\text{if } \ld(a)=\XZm \\
                                     \XYm &\text{if } \ld(a)=\YZm \end{cases}
 \]
 for $a\in\{u,v\}$, and $\hat\ld(w)=\ld(w)$ for all $w\in \comp{O}\setminus\{u,v\}$ also has a gflow.
}

\begin{corollary}\label{cor:pivot_gflow}
 \statecorpivotgflow
\end{corollary}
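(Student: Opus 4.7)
The plan is to use the identity $G\land uv = G\star u\star v\star u$ from Definition~\ref{def:pivot} and apply Lemma~\ref{lem:lc_gflow} three times in succession. Since $u,v\in\comp{O}$, each intermediate local complementation is performed at a non-output vertex, so Lemma~\ref{lem:lc_gflow} applies at every stage: complementing first at $u$, then at $v$, then at $u$ again produces in turn gflows on $(G\star u, I, O, \ld_1)$, $(G\star u\star v, I, O, \ld_2)$ and finally $(G\land uv, I, O, \ld_3)$, with the same partial order throughout. The only remaining task is to verify that $\ld_3$ agrees with the $\hat\ld$ stated in the corollary.

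For the two vertices $u$ and $v$, write $T_c$ for the plane transformation that Lemma~\ref{lem:lc_gflow} applies to the complemented vertex ($\XYm\leftrightarrow\XZm$, with $\YZm$ fixed) and $T_n$ for the one it applies to each neighbour ($\XZm\leftrightarrow\YZm$, with $\XYm$ fixed). The edge $u\sim v$ is never deleted by local complementation at $u$ or $v$ (since such operations only toggle edges between pairs of neighbours of the complemented vertex), so the plane of $u$ undergoes the sequence $T_c,T_n,T_c$ while that of $v$ undergoes $T_n,T_c,T_n$. A short direct check shows both composites swap $\XYm$ with $\YZm$ and fix $\XZm$, matching $\hat\ld$ on $\{u,v\}$.

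For $w\in\comp{O}\setminus\{u,v\}$, the plane of $w$ is altered by $T_n$ at a given stage precisely when $w$ is adjacent to the vertex being complemented in the current intermediate graph. I would therefore do a case analysis on the four adjacency patterns of $w$ with $\{u,v\}$ in $G$, tracking how the edges $w\sim u$ and $w\sim v$ get toggled by the intervening complementations. I expect this bookkeeping to be the main obstacle: each adjacency has to be re-evaluated in the correct intermediate graph before the next complementation is applied. A routine enumeration nevertheless shows that in every case $w$ is adjacent to the complemented vertex in an even number of the three stages, so $T_n$ is applied an even number of times to $w$ and $\hat\ld(w) = \ld(w)$ as required.
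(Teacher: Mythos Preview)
Your proposal is correct and follows essentially the same approach as the paper: apply Lemma~\ref{lem:lc_gflow} three times along the decomposition $G\land uv=G\star u\star v\star u$ and then track how the labels change. The paper's treatment of $w\in\comp{O}\setminus\{u,v\}$ computes the intermediate labels explicitly via the identities $N_{G\star u}(v)=N_G(v)\symd N_G(u)$ and $N_{G\star u\star v}(u)=N_G(v)$, whereas your parity argument (that $T_n$ is an involution applied an even number of times in each of the four adjacency cases) packages the same bookkeeping a little more cleanly.
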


For more details regarding the correctness of this corollary, we refer to Appendix~\ref{sec:proofs}.

Somewhat surprisingly, the deletion of some types of vertices preserves the existence of gflow:

\begin{lemma}\label{lem:deletepreservegflow}
    Let $(g,\prec)$ be a gflow for $(G,I,O,\ld)$ and let $u\in \comp{O}$ with $\ld(u) \neq \XYm$. Then $(g',\prec)$ is a gflow for $(G\setminus\{u\},I,O,\ld)$ where $\forall v\in V, v\neq u$:
    \[g'(v) := \begin{cases} g(v) &\text{if } u\not \in g(v)\\ g(v)\symd g(u) &\text{if } u \in g(v) \end{cases}\]
\end{lemma}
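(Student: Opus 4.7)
The goal is to verify that $(g',\prec)$ satisfies all five conditions of Definition~\ref{defGFlow} for the reduced \LOG\ $(G\setminus\{u\},I,O,\ld)$. The key fact enabling the definition of $g'$ is that $\ld(u)\in\{\XZm,\YZm\}$, so by conditions \ref{it:XZ}/\ref{it:YZ} applied to the original gflow we have $u\in g(u)$. Hence if $u\in g(v)$ then $g'(v)=g(v)\symd g(u)$ genuinely removes $u$, guaranteeing $g'(v)\subseteq\comp{I}\setminus\{u\}$ so that $g'$ has the correct codomain. A further small but essential observation is that for any $K\subseteq V\setminus\{u\}$, the odd neighbourhoods satisfy $\odd{G\setminus\{u\}}{K}=\odd{G}{K}\setminus\{u\}$, since deleting $u$ does not change $|N(w)\cap K|$ for $w\neq u$ when $u\notin K$.

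I would split the verification into two cases according to the two clauses defining $g'(v)$. If $u\notin g(v)$, then $g'(v)=g(v)$ and the conditions \ref{it:g}, \ref{it:odd} and the plane-dependent self-condition at $v$ follow directly from the corresponding conditions on $(g,\prec)$, together with the odd-neighbourhood identity and the fact that $v\neq u$. If instead $u\in g(v)$, then $v\prec u$ by \ref{it:g} for the original gflow, and $g'(v)=g(v)\symd g(u)$. For \ref{it:g}, any $w\in g'(v)\setminus\{v\}$ lies in exactly one of $g(v),g(u)$: in the first case $v\prec w$ directly, and in the second case $w\neq u$ gives $u\prec w$ and transitivity yields $v\prec w$. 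Condition \ref{it:odd} is proved identically using $\odd{G}{g'(v)}=\odd{G}{g(v)}\symd\odd{G}{g(u)}$ and the odd-neighbourhood identity.

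The main delicate step is the plane-dependent self-condition at $v$ in the case $u\in g(v)$. The key observation is that $v\in g(u)$ (or $v\in\odd{G}{g(u)}$) with $v\neq u$ would force $u\prec v$ by \ref{it:g} (or \ref{it:odd}) of the original gflow, contradicting $v\prec u$ in the strict partial order. Hence $v\notin g(u)$ and $v\notin\odd{G}{g(u)}$, so membership of $v$ in $g'(v)$ and in $\odd{G\setminus\{u\}}{g'(v)}$ coincides with the corresponding membership in $g(v)$ and $\odd{G}{g(v)}$. The plane condition for $\ld(v)$ in the new gflow therefore follows from the same condition in the original one.

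I expect the most subtle point to be juggling the three ingredients in the second case: the xor expansion of symmetric differences for both $g$ and $\odd{}{\cdot}$, the use of $v\prec u$ together with transitivity, and the rigidity argument above that prevents $v$ from sitting in $g(u)$ or $\odd{G}{g(u)}$. Everything else is a direct translation between $G$ and $G\setminus\{u\}$ via the odd-neighbourhood identity.
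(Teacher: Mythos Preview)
Your proposal is correct and follows essentially the same approach as the paper's proof: both use $u\in g(u)$ to show $g'$ has the right codomain, split into the two cases of the definition, use $v\prec u$ and transitivity for \ref{it:g}--\ref{it:odd} in the second case, and appeal to the impossibility of $v\in g(u)$ or $v\in\odd{G}{g(u)}$ (which would force $u\prec v$) to transfer the plane condition at $v$. Your write-up is slightly more explicit (e.g.\ you spell out the identity $\odd{G\setminus\{u\}}{K}=\odd{G}{K}\setminus\{u\}$), but the argument is the same.
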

\begin{proof}
    First, observe that $u\in g(u)$ since $\ld(u)\neq \XYm$. Thus $u\not \in g'(v)$ for either case of the definition. Hence, $g'$ is indeed a function on the graph $G\setminus u$.
    To check that $g'$ is indeed a gflow we check the necessary conditions for all $v\in G\setminus\{u\}$. If $u\not \in g(v)$, then $g'(v) = g(v)$ and hence we are done. If $u \in g(v)$, then $v\prec u$ and hence also $v\prec w$ for all $w \in g(u)$ or $w\in \odd{G}{g(u)}$. Since $g'(v) = g(v)\symd g(u)$,  conditions \ref{it:g} and \ref{it:odd} are satisfied. For conditions \ref{it:XY}-\ref{it:YZ}, note that we cannot have $v \in g(u)$ or $v\in \odd{G}{g(u)}$ because $v\prec u$. As a result, $v\in g'(v) \iff v\in g(v)$ and $v\in \odd{G\setminus\{u\}}{g'(v)} \iff v \in \odd{G}{g(v)}$. Since the labels of all the vertices stay the same, \ref{it:XY}-\ref{it:YZ} remain satisfied.
\end{proof}

\begin{remark}
    The condition that $\ld(u) \neq \XYm$ is necessary in the previous lemma. Removing a vertex with label \XY will, in general, lead to a \LOG{} which no longer has a gflow. For instance consider the following \LOG:
    \ctikzfig{line-graph}
    where the first two vertices both have label \XY. This graph has a gflow specified by $I\prec u \prec O$ and $g(I) = \{u\}$, $g(u) = \{O\}$, but removing $u$ will disconnect the graph and hence the resulting graph does not have a gflow.

    Note that if we were to consider the same \LOG, but with $u$ measured in a different plane, it would \emph{not} have a gflow to start with.
    This is because, if it did, we would need $u\in g(I)$, so that $I\prec u$ but also $u\in g(u)$ so that $I\in \odd{}{g(u)}$ giving $u\prec I$.
    Hence this does not contradict the lemma.
\end{remark}

The next corollary shows how the previous results can be combined to remove a vertex with arity 2 from a \LOG{} while preserving gflow. In the ZX-calculus, the idea behind this is that we use \IdRule to remove a vertex and then \SpiderRule to fuse the adjacent vertices (cf.~Definition~\ref{def:identity-removal}).
Recall from that definition that $G\idrem{v} w\coloneqq ((G\land uv)\setminus\{u\})\setminus\{v\}$.

\begin{corollary}\label{cor:id_removal}
Let $(g,\prec)$ be a gflow for the \LOG{} $(G,I,O,\ld)$, and let $u,v\in\comp O$ and $w\in V$ be vertices such that $N_G(v)=\{u,w\}$ and $\ld(u),\ld(v)\neq \YZ$. Then $(\tilde g,\prec)$ as defined below is a gflow for $(G\idrem{v} w,I,O,\ld)$.
For all $z\in \comp O\setminus\{u,v\}$ we have
 \[
  \tilde g(z) = \begin{cases} \hat g(z) &\text{if } u\notin\hat g(z), v\notin\hat g(z) \\
                              \hat g(z)\symd\hat g(v) &\text{if } u\notin\hat g(z)\symd\hat g(v), v\in\hat g(z) \\
                              \hat g(z)\symd\hat g(u)\symd\hat g(v) &\text{if } u\in\hat g(z)\symd\hat g(v), v\in\hat g(z)\symd\hat g(u) \\
                              \hat g(z)\symd\hat g(u) &\text{if } u\in\hat g(z), v\notin\hat g(z)\symd\hat g(u), \end{cases}
 \]
where $\hat g$ is as defined in Corollary \ref{cor:pivot_gflow}.
\end{corollary}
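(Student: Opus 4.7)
The plan is to decompose the identity removal into its constituent operations and apply the results already established for each. By Definition~\ref{def:identity-removal}, $G\idrem{v} w = ((G\wedge uv)\setminus\{u\})\setminus\{v\}$, so one performs the pivot at the edge $uv$ (which exists since $u\in N_G(v)$) and then deletes $u$ and $v$ in turn from the resulting \LOG{}.

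First, Corollary~\ref{cor:pivot_gflow} applied to the edge $uv$ produces the \LOG{} $(G\wedge uv, I, O, \hat\lambda)$ together with a gflow $(\hat g,\prec)$, where $\hat\lambda$ is obtained from $\lambda$ by swapping $\XY$ and $\YZ$ on $u$ and $v$ and leaving $\XZ$ fixed. Since the hypothesis gives $\lambda(u),\lambda(v)\in\{\XY,\XZ\}$, after the pivot one has $\hat\lambda(u),\hat\lambda(v)\in\{\YZ,\XZ\}$; in particular, neither is $\XY$. Lemma~\ref{lem:deletepreservegflow} can therefore be invoked twice in succession, first to delete $u$ and then $v$, yielding an intermediate gflow $g_1$ on $(G\wedge uv)\setminus\{u\}$ and finally $\tilde g$ on $G\idrem{v} w$, with the partial order $\prec$ preserved throughout. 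The labels of vertices other than $u$ and $v$ are untouched by all three steps, so the resulting \LOG{} has label function $\lambda$.

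The explicit formula for $\tilde g$ in terms of $\hat g$ follows by composing the two case distinctions from Lemma~\ref{lem:deletepreservegflow}. For each $z\in\comp O\setminus\{u,v\}$, the value of $g_1(z)$ splits into two subcases according to whether $u\in\hat g(z)$; then $\tilde g(z)$ splits further according to whether $v\in g_1(z)$, which in turn requires the value of $g_1(v)$, itself determined by whether $u\in\hat g(v)$. Straightforward bookkeeping of the resulting subcases produces the four cases stated in the corollary.

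The main subtlety is that certain subcases hinge on whether $v$ lies in $\hat g(u)$, which is not determined by the triple of conditions in the case distinction. The partial order resolves this ambiguity: condition \ref{it:g} forces $v\in\hat g(u)\Rightarrow u\prec v$ and symmetrically $u\in\hat g(v)\Rightarrow v\prec u$, so antisymmetry of $\prec$ rules out both holding simultaneously. In any subcase assuming $u\in\hat g(v)$ one may therefore conclude $v\notin\hat g(u)$, and the enumeration collapses onto exactly the four lines of the stated formula.
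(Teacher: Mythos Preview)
Your proof is correct and follows exactly the approach indicated in the paper, whose proof is the single line ``A computation using Corollary~\ref{cor:pivot_gflow} and Lemma~\ref{lem:deletepreservegflow}.'' You have supplied the details of that computation, including the observation that $\hat\lambda(u),\hat\lambda(v)\neq\XYm$ (so Lemma~\ref{lem:deletepreservegflow} applies) and the antisymmetry argument ruling out $u\in\hat g(v)$ and $v\in\hat g(u)$ simultaneously, which is needed to collapse the case analysis onto the four stated lines.
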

\begin{proof}
A computation using Corollary \ref{cor:pivot_gflow} and Lemma \ref{lem:deletepreservegflow}.
\end{proof}

\begin{lemma}\label{lem:gflow-add-output}
Let $\Gamma=(G,I,O,\ld)$ be a \LOG\ with $G=(V,E)$. Let $\Gamma'$ be the \LOG\ that results from converting an output $u\in O$ into a vertex measured in the \XY-plane and adding a new output vertex $u'$ in its stead:
\ctikzfig{gflow-add-output}
Formally, let $\Gamma'=(G',I,O',\ld')$, where $G'=(V',E')$ with $V'=V\cup\{u'\}$ and $E'=E\cup\{u\sim u'\}$, $O'=(O\setminus\{u\})\cup\{u'\}$, and $\ld'(v)$ is the extension of $\ld$ to domain $V'\setminus O'$ with $\ld'(u)=\XYm$. Then if $\Gamma$ has gflow, $\Gamma'$ also has gflow.
\end{lemma}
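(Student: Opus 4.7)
The plan is to build a gflow for $\Gamma'$ by extending the data of $(g,\prec)$ with just enough information to handle the newly created non-output $u$ and the newly added output $u'$. Explicitly, I set $g'(u) := \{u'\}$, keep $g'(v) := g(v)$ for every $v \in \comp{O}$, and take $\prec'$ to be the transitive closure of $\prec \,\cup\, \{(u,u')\}$. Since $u'$ appears only on the greater side of the single new relation, no cycle is created and $\prec'$ is a strict partial order extending $\prec$. Also, $g'(u) = \{u'\} \sse \comp{I'}$ because $u' \notin I$, and $g'(v) = g(v) \sse \comp{I} \sse \comp{I'}$, so the typing of $g'$ is correct.

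For the vertex $u$, which is measured in the \XYm plane in $\Gamma'$, observe that $\odd{G'}{\{u'\}} = N_{G'}(u') = \{u\}$. Hence $u \notin g'(u)$ and $u \in \odd{G'}{g'(u)}$, satisfying condition \ref{it:XY}. Condition \ref{it:g} at $u$ reduces to requiring $u \prec' u'$, which holds by construction, and \ref{it:odd} at $u$ is vacuous because the only element of $\odd{G'}{g'(u)}$ is $u$ itself.

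For an old non-output $v \in \comp{O}$, the key observation is that $G'$ differs from $G$ only in the new edge $u \sim u'$, and $u' \notin g(v)$ since $g(v) \sse V$. Therefore, for every $w \in V$ one has $N_{G'}(w) \cap g(v) = N_G(w) \cap g(v)$, which means $w \in \odd{G'}{g(v)} \iff w \in \odd{G}{g(v)}$. Consequently, conditions \ref{it:g} and \ref{it:odd} at any $w \in V$, together with the plane-dependent conditions \ref{it:XY}--\ref{it:YZ} (which only mention $v$ itself, not $u'$), all transfer directly from the gflow of $\Gamma$ using $\prec' \supseteq \prec$.

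The one remaining obligation is condition \ref{it:odd} at $w = u'$ for $v \in \comp{O}$: since $N_{G'}(u') = \{u\}$, we have $u' \in \odd{G'}{g(v)}$ iff $u \in g(v)$, in which case we must verify $v \prec' u'$. But $u \in g(v)$ with $u \neq v$ already forces $v \prec u$ in the original gflow via condition \ref{it:g} on $\Gamma$ (this uses the fact that \ref{it:g} constrains $v \prec w$ for any $w \in g(v)$, even when $w$ happens to be an output of $\Gamma$). Combined with the new $u \prec' u'$, transitivity gives $v \prec' u'$. The main step, which I expect to be entirely routine once the construction is written down, is exactly this bookkeeping around $u$'s change of status from output in $\Gamma$ to non-output in $\Gamma'$; no real obstacle arises, because the new edge $u\sim u'$ is ``absorbed'' into the correction of $u$ and the order extension forced on $u'$ is consistent with everything already required by $(g,\prec)$.
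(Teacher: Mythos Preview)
Your proof is correct and follows essentially the same approach as the paper's: you define $g'(u)=\{u'\}$, leave $g'$ unchanged on the old non-outputs, take $\prec'$ to be the transitive closure of $\prec\cup\{(u,u')\}$, and then verify the gflow conditions, with the only nontrivial check being \ref{it:odd} at $u'$ via $u'\in\odd{G'}{g(v)}\Rightarrow u\in g(v)\Rightarrow v\prec u\prec' u'$. Your write-up is in fact slightly more careful than the paper's, making explicit that $\prec'$ is a strict partial order and that $g'$ has the correct codomain.
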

\begin{proof}
 Suppose $\Gamma$ has gflow $(g,\prec)$.
 Let $g'$ be the extension of $g$ to domain $V'\setminus O'$ which satisfies $g'(u)=\{u'\}$, and let $\prec'$ be the transitive closure of $\prec\cup\{(u,u')\}$.

 The tuple $(g',\prec')$ inherits \ref{it:g} and \ref{it:XY}--\ref{it:YZ} for all $v\in V\setminus O$ because the correction sets have not changed for any of the original vertices.
 Furthermore, $u'\in\odd{G'}{g'(v)}$ for any $v$ implies $u\in g'(v)$, as $u$ is the only neighbour of $u'$.
 Hence $u'\in\odd{G'}{g'(v)}$ implies $v\prec' u \prec' u'$.
 Therefore \ref{it:odd} continues to be satisfied for all $v\in V\setminus O$.

 Now, for $u$, \ref{it:g} holds because $u\prec' u'$ by definition, \ref{it:odd} holds because $\odd{G'}{g'(u)}=\{u\}$, and \ref{it:XY} can easily be seen to hold.
 Thus, $(g',\prec')$ is a gflow for $\Gamma'$.
\end{proof}

\begin{lemma}\label{lem:gflow-add-input}
Let $\Gamma=(G,I,O,\ld)$ be a \LOG\ with $G=(V,E)$.
 Let $\Gamma'$ be the \LOG\ that results from adding an additional vertex measured in the \XY-plane `before' the input $u\in I$:
 \ctikzfig{gflow-add-input}
 Formally, let $\Gamma'=(G',I',O,\ld')$, where $G'=(V',E')$ with $V'=V\cup\{u'\}$ and $E'=E\cup\{u\sim u'\}$, $I'=(I\setminus\{u\})\cup\{u'\}$, and $\ld'(v)$ is the extension of $\ld$ to domain $V'\setminus O$ which satisfies $\ld'(u')=\XYm$. Then if $\Gamma$ has gflow, $\Gamma'$ also has gflow.
\end{lemma}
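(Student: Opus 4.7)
The plan is to exhibit an explicit gflow for $\Gamma'$ by extending the given gflow $(g,\prec)$ of $\Gamma$, in direct parallel with the construction used for Lemma~\ref{lem:gflow-add-output}. Since $u'$ is the new input of $\Gamma'$ while $u$ is demoted from input to non-input (now lying in $\comp{I'}$ but keeping its original role and measurement label, if any), I would set $g'(u'):=\{u\}$ and keep every original correction set unchanged: $g'(v):=g(v)$ for every $v\in\comp{O}$. This assignment lands in $\pow{\comp{I'}}$ since $\comp{I}\subseteq\comp{I'}=\comp{I}\cup\{u\}$ and $u\notin I'$. For the order I would take $\prec'$ to be the transitive closure of $\prec$ together with the new relations $u'\prec' u$ and $u'\prec' w$ for every $w\in N_G(u)$. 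This is still a strict partial order, since $u'$ is a fresh vertex and every added relation has $u'$ on the left, so no cycle with the existing $\prec$ can form.

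The verification then rests on one key observation: since $u\in I$ we have $u\notin\comp{I}$, so $u$ does not appear in any correction set $g(v)$ of the original gflow. Because $u'$ is adjacent only to $u$ in $G'$, this yields $u'\notin\odd{G'}{g(v)}$ for every $v\in\comp{O}$ and, on the original vertex set, $\odd{G'}{g(v)}\cap V=\odd{G}{g(v)}$. Consequently, conditions \ref{it:g}--\ref{it:YZ} at every original $v\in\comp{O}$ transfer directly from the gflow of $\Gamma$, since neither its correction set, its relevant odd neighborhood, nor any comparison in $\prec'$ involving vertices of $V$ has changed. For the new vertex $u'$ itself, we compute $\odd{G'}{g'(u')}=\odd{G'}{\{u\}}=N_G(u)\cup\{u'\}$, so condition \ref{it:XY} holds trivially (as $u'\neq u$ and $u'$ sits in this odd neighborhood), and conditions \ref{it:g} and \ref{it:odd} are precisely the new order relations we inserted.

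There is no substantive obstacle; the only subtlety is that $u$'s position in $\prec$ may be quite arbitrary (in particular $u$ need not be minimal), but since $u'$ is fresh and we only place it strictly below $u$ and below every neighbor of $u$, the transitive closure remains acyclic, so $\prec'$ is a well-defined strict partial order. This establishes that $(g',\prec')$ is a gflow for $\Gamma'$, completing the proof.
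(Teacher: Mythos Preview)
Your proposal is correct and follows essentially the same approach as the paper: you define $g'(u')=\{u\}$, keep all other correction sets unchanged, take $\prec'$ as the transitive closure of $\prec$ together with $u'\prec' w$ for $w\in N_G(u)\cup\{u\}$, and verify the gflow conditions using the key observation that $u\in I$ implies $u\notin g(v)$ for any $v$, so $u'\notin\odd{G'}{g(v)}$. Your explicit check that $\prec'$ remains a strict partial order is a small point of extra care not spelled out in the paper.
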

\begin{proof}
 Suppose $\Gamma$ has gflow $(g,\prec)$.
 Let $g'$ be the extension of $g$ to domain $V'\setminus O$ which satisfies $g'(u')=\{u\}$, and let $\prec'$ be the transitive closure of $\prec\cup\{(u',w):w\in N_G(u)\cup\{u\}\}$.

 The tuple $(g',\prec')$ inherits the gflow properties for all $v\in V\setminus O$ because the correction sets have not changed for any of the original vertices and because the additional inequalities in $\prec'$ do not affect the gflow properties for any $v\in V\setminus O$.
 The latter is because
 \begin{itemize}
  \item $u'\notin g'(v)=g(v)$ for any $v\in V\setminus O$, and
  \item $u'\notin\odd{G'}{g'(v)}=\odd{G'}{g(v)}$ for any $v\in V\setminus O$ since its only neighbour $u$ was an input in $\Gamma$ and thus satisfies $u\notin g(v)$ for any $v\in V\setminus O$.
 \end{itemize}
 Now, for $u'$, \ref{it:g} holds by the definition of $\prec'$.
 Note that $\odd{G'}{g(u')}=N_{G'}(u)$, so \ref{it:odd} also holds by the definition of $\prec'$.
 Finally, \ref{it:XY} holds because $u'\notin g(u')$ and $u'\in\odd{G'}{g(u')}=N_{G'}(u)$.
 Thus, $(g',\prec')$ is a gflow for $\Gamma'$.
\end{proof}

\subsection{Finding extended gflow} \label{sec:MaximallyDelayedGflow}

Ref.~\cite{MP08-icalp} gives a polynomial time algorithm for finding gflow for \LOG{}s with all measurements in the \XY-plane. In this section, we present an extension of this algorithm that works for measurements in all three measurement planes.

Before doing so, we note a few details from the algorithm. The intuition behind the procedure is to `maximally delay' any measurements,
thereby keeping potential correction options available for as long as possible. As a result, the algorithm finds a gflow of minimal `depth' (a notion we will make precise later).

The algorithm works backwards:
Starting from the output vertices, it iteratively constructs disjoint subsets of vertices
that can be corrected by vertices chosen in previous steps.
Viewed instead as information travelling forwards, from the inputs to the outputs,
this corresponds to only correcting a vertex at the last possible moment,
hence the name maximal delayed.

\begin{definition}[Generalisation of {\cite[Definition~4]{MP08-icalp}} to multiple measurement planes]
\label{defVk}
 For a given labelled open graph $(G,I,O,\ld)$ and a given gflow $(g,\prec)$ of $(G,I,O,\ld)$, let
 \[
  V_k^\prec = \begin{cases} \max_\prec (V) &\text{if } k= 0 \\ \max_\prec (V\setminus(\bigcup_{i<k} V_i^\prec)) &\text{if } k > 0 \end{cases}
 \]
 where $\max_\prec(X) := \{u\in X \text{ s.t. } \forall v\in X, \neg(u\prec v)\}$ is the set of the maximal elements of $X$.
\end{definition}

\begin{definition}[Generalisation of {\cite[Definition~5]{MP08-icalp}} to multiple measurement planes]
\label{defMoreDelayed}
 For a given labelled open graph $(G,I,O,\ld)$ and two given gflows $(g,\prec)$ and $(g',\prec')$ of $(G,I,O,\ld)$,
 $(g,\prec)$ is \emph{more delayed} than $(g',\prec')$ if for all $k$,
 \[
  \abs{\bigcup_{i=0}^k V_i^\prec} \geq \abs{\bigcup_{i=0}^k V_i^{\prec'}}
 \]
 and there exists a $k$ such that the inequality is strict.
 A gflow $(g,\prec)$ is \emph{maximally delayed} if there exists no gflow of the same open graph that is more delayed.
\end{definition}

\begin{theorem}[Generalisation of {\cite[Theorem~2]{MP08-icalp}} to multiple measurement planes]
\label{thmGFlowAlgo}
There exists a polynomial time algorithm that decides whether a given
\LOG\ has an extended gflow, and that outputs such a gflow if it exists.
Moreover, the output gflow is maximally delayed.
\end{theorem}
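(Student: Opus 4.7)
The plan is to generalise the Mhalla--Perdrix backwards algorithm by building the gflow layer by layer, starting from the outputs and, at each step, identifying all not-yet-processed non-output vertices that admit a valid correction set drawn from vertices placed in strictly earlier layers (plus possibly themselves, when the measurement plane is \XZ\ or \YZ). The key primitive at each step is the solution of a linear system over $\mathbb{F}_2$ encoding the constraints from Definition~\ref{defGFlow}.

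Concretely, set $V_0 := O$ and $U := \comp{O}$. At step $k \geq 1$, for each $v \in U$ attempt to find $K \subseteq \comp{I}$ such that $K \setminus \{v\} \subseteq \bigcup_{i<k} V_i$, $\odd{G}{K} \setminus \{v\} \subseteq \bigcup_{i<k} V_i$, and the plane-specific membership conditions hold: \XY\ requires $v \notin K$ and $v \in \odd{G}{K}$; \XZ\ requires $v \in K$ and $v \in \odd{G}{K}$; \YZ\ requires $v \in K$ and $v \notin \odd{G}{K}$. Collect such successful $v$'s into $V_k$, setting $g(v) := K$. If $V_k$ is empty while $U$ is not, declare that no gflow exists; otherwise remove $V_k$ from $U$ and iterate. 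Finally, define $\prec$ as the transitive closure of the layering: $u \prec v$ whenever $u \in V_j$, $v \in V_i$ with $i < j$. Finding $K$ amounts to solving an $O(n) \times O(n)$ linear system over $\mathbb{F}_2$, which is polynomial, and the algorithm performs at most $O(n^2)$ such solves, yielding polynomial total runtime.

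Correctness when the algorithm succeeds is immediate from the construction: the three bulleted conditions in the search directly enforce \ref{it:g}, \ref{it:odd} and \ref{it:XY}--\ref{it:YZ} of Definition~\ref{defGFlow}. For completeness and maximal delay, I will prove by induction on $k$ that $\bigcup_{i \leq k} V_i \supseteq \bigcup_{i \leq k} V_i^{\prec'}$ for any gflow $(g', \prec')$ on the input \LOG, where $V_i^{\prec'}$ is as in Definition~\ref{defVk}. The base case $k = 0$ is trivial since both equal $O$. For the inductive step, any $v \in V_{k+1}^{\prec'}$ has $g'(v) \setminus \{v\}$ and $\odd{G}{g'(v)} \setminus \{v\}$ contained in $\bigcup_{i \leq k} V_i^{\prec'}$ by the definition of $V_{k+1}^{\prec'}$, and hence in $\bigcup_{i \leq k} V_i$ by the inductive hypothesis; thus $g'(v)$ is a valid candidate $K$ at step $k+1$, so $v$ lies in $\bigcup_{i \leq k+1} V_i$. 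This simultaneously shows the algorithm does not halt prematurely when a gflow exists, and that its output is maximally delayed.

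The main obstacle is the inductive step above. In the \XY-only case treated by Mhalla--Perdrix, the correction set always avoids its own vertex, so the inductive containment is slightly cleaner; here, \XZ\ and \YZ\ vertices require $v \in g(v)$, so $g'(v)$ may contain $v$ itself, and care is needed to check that the linear system still admits $g'(v)$ as a solution at the correct step without forcing $v$ to have already been processed. A further minor subtlety is the correct treatment of boundary vertices (enforcing $K \subseteq \comp{I}$ and ensuring inputs that are also non-outputs are handled uniformly), which is simply a matter of restricting the variables of the linear system. Once these are in place, the completeness-plus-delay induction closes as in the \XY-only case.
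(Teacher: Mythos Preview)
Your algorithm coincides with the paper's, and your correctness-when-successful argument is the same. Your completeness and maximal-delay argument, however, differs from the paper's: you run a direct induction on the layer index showing $\bigcup_{i\leq k} V_i \supseteq \bigcup_{i\leq k} V_i^{\prec'}$ for every gflow $(g',\prec')$, whereas the paper first proves a suite of structural lemmas about maximally delayed gflows (that $V_0^\prec=O$; that restricting to $O\cup V_1^\prec$ again yields a maximally delayed gflow; an explicit characterisation of $V_1^\prec$; and a converse) and then argues by induction on the number of non-output vertices, peeling off $V_1^\prec$ at each step. Your route is shorter and self-contained; the paper's route produces the intermediate lemmas as standalone results, and these are in fact reused later (for instance in Lemma~\ref{lem:maxdelayednotempty}, which the extraction algorithm relies on).

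One point does need attention. Your base case asserts that ``both equal $O$'', but $V_0^{\prec'}=\max_{\prec'}(V)$ need not equal $O$ for an arbitrary gflow $(g',\prec')$; you only need the containment $V_0^{\prec'}\subseteq O$, and even that is not trivially true. It requires ruling out a maximal non-output vertex: for $\lambda=\XYm$ or $\XZm$ this follows immediately from \ref{it:XY}--\ref{it:XZ}, but for $\lambda=\YZm$ an isolated non-output vertex would satisfy all gflow conditions with $g(u)=\{u\}$, so you need the paper's standing assumption that every connected component meets $I\cup O$ (Lemma~\ref{lem:connected-output}). Once you supply that short case analysis, your induction closes exactly as you describe.
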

The proof of this theorem can be found in Appendix~\ref{sec:FindingGflow}.
The algorithm we find is a relatively straightforward generalisation of that of Ref.~\cite{MP08-icalp}, but the proofs required to show its correctness are complicated somewhat by the case distinctions required for the three measurement planes. We note also that our detailed proof shows that the original proofs of Ref.~\cite{MP08-icalp} missed some case distinctions and conditions.

\subsection{Focusing extended gflow}\label{sec:focusing-extended-gflow}

In Section~\ref{sec:focusing-gflow}, we introduced the notion of focused gflow for \LOG{}s in which all measurements are in the \XY plane.
There is no canonical generalisation of this notion to \LOG{}s with measurements in multiple planes.
Hamrit and Perdrix suggest three different extensions of focused gflow to the case of multiple measurement planes, which restrict correction operations on non-output qubits to only a single type of Pauli operator overall \cite[Definition~2]{hamrit2015reversibility}.
Here, we go a different route by requiring that non-output qubits only appear in correction sets, or odd neighbourhoods of correction sets, if they are measured in specific planes.
This means the correction operators which may be applied to some non-output qubit depend on the plane in which that qubit is measured.
The new notion of focused gflow will combine particularly nicely with the phase-gadget form of MBQC+LC diagrams of Section~\ref{sec:phasegadgetform}.

We begin by proving some lemmas that allow any gflow to be focused in our sense.

\begin{lemma}\label{lem:successor-gflow}
 Let $(G,I,O,\ld)$ be a labelled open graph which has gflow $(g,\prec)$.
 Suppose there exist $v,w\in\comp{O}$ such that $v\prec w$.
 Define $g'(v):=g(v)\symd g(w)$ and $g'(u):=g(u)$ for all $u\in\comp{O}\setminus\{v\}$, then $(g',\prec)$ is a gflow.
\end{lemma}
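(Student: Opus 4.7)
The plan is to verify conditions \ref{it:g}--\ref{it:YZ} for $(g',\prec)$ directly. Since $g'(u)=g(u)$ for every $u\neq v$ and neither the graph nor the order changes, the correction sets and their odd neighbourhoods at any $u\neq v$ are unchanged, so all five conditions at such $u$ are automatic. The entire argument therefore reduces to the five conditions at $v$. The main tool throughout is that odd neighbourhoods are linear over symmetric difference, $\odd{G}{A\symd B}=\odd{G}{A}\symd\odd{G}{B}$, which follows because $\abs{N_G(u)\cap(A\symd B)}$ and $\abs{N_G(u)\cap A}+\abs{N_G(u)\cap B}$ have the same parity. In particular $\odd{G}{g'(v)}=\odd{G}{g(v)}\symd\odd{G}{g(w)}$.

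I would next establish two facts that use only the hypothesis $v\prec w$ and the anti-symmetry of $\prec$: namely $v\notin g(w)$ and $v\notin\odd{G}{g(w)}$. The former would otherwise force $w\prec v$ via \ref{it:g} at $w$, contradicting $v\prec w$; the latter would similarly force $w\prec v$ via \ref{it:odd} at $w$. From these two facts, $v\in g'(v)\iff v\in g(v)$ and $v\in\odd{G}{g'(v)}\iff v\in\odd{G}{g(v)}$, so conditions \ref{it:XY}--\ref{it:YZ} at $v$ are inherited directly from those of $g$ at $v$ (since $\ld(v)$ is unchanged).

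For \ref{it:g} at $v$, take any $u'\in g'(v)$ with $u'\neq v$; then $u'\in g(v)\cup g(w)$. If $u'\in g(v)$, then $v\prec u'$ by \ref{it:g} for $g$. If $u'\in g(w)$, then either $u'=w$, giving $v\prec u'$ by hypothesis, or $u'\neq w$, in which case \ref{it:g} for $g$ gives $w\prec u'$ and transitivity with $v\prec w$ yields $v\prec u'$. Condition \ref{it:odd} at $v$ follows by the same argument, replacing $g(\cdot)$ by $\odd{G}{g(\cdot)}$ throughout and invoking \ref{it:odd} for $g$ in place of \ref{it:g}.

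I do not anticipate any genuine obstacle; the proof is essentially bookkeeping. The one subtle step is verifying that $v$ and $w$ themselves behave correctly within the various symmetric differences, and this is where the anti-symmetry of $\prec$ together with the original gflow conditions applied \emph{at $w$} (rather than at $v$) is essential.
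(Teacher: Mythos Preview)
Your proposal is correct and follows essentially the same approach as the paper's own proof: reduce to checking the five conditions at $v$ alone, use linearity of $\odd{G}{\cdot}$ over symmetric difference, derive $v\notin g(w)$ and $v\notin\odd{G}{g(w)}$ from $v\prec w$ to handle \ref{it:XY}--\ref{it:YZ}, and use $g'(v)\subseteq g(v)\cup g(w)$ together with transitivity through $w$ for \ref{it:g} and \ref{it:odd}. If anything, your version is slightly more careful than the paper's in explicitly separating the subcase $u'=w$ when $u'\in g(w)$.
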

\begin{proof}
 As the correction set only changes for $v$, the gflow properties remain satisfied for all other vertices.
 Now, suppose $w'\in g'(v)$, then $w'\in g(v) \vee w'\in g(w)$.
 In the former case, $v\prec w'$, and in the latter case, $v\prec w\prec w'$, since $(g,\prec)$ is a gflow, so \ref{it:g} holds.
 Similarly, suppose $w'\in\odd{}{g'(v)}$, then by linearity of $\odd{}{\cdot}$ we have $w'\in\odd{}{g(v)} \vee w'\in\odd{}{g(w)}$.
 Again, this implies $v\prec w'$ or $v\prec w\prec w'$ since $(g,\prec)$ is a gflow, so \ref{it:odd} holds.
 Finally, $v\prec w$ implies $v\notin g(w)$ and $v\notin\odd{}{g(w)}$ since $(g,\prec)$ is a gflow.
 Therefore $v\in g'(v) \Longleftrightarrow v\in g(v)$ and $v\in\odd{}{g'(v)}\Longleftrightarrow v\in\odd{}{g(v)}$.
 Thus \ref{it:XY}--\ref{it:YZ} hold and $(g',\prec)$ is a gflow.
\end{proof}

\begin{lemma}\label{lem:focus-single-vertex}
 Let $(G,I,O,\ld)$ be a labelled open graph, let $(g,\prec)$ be a gflow for this open graph, and let $v\in\comp{O}$.
 Then there exists $g':\comp{O}\to\pow{\comp{I}}$ such that
 \begin{enumerate}
  \item for all $w\in\comp{O}$, either $v=w$ or $g'(w)=g(w)$,
  \item for all $w\in g'(v)\cap\comp{O}$, either $v=w$ or $\ld(w) = \XYm$,
  \item for all $w\in \odd{}{g'(v)}\cap\comp{O}$, either $v=w$ or $\ld(w)\neq \XYm$, and
  \item $(g',\prec)$ is a gflow for $(G,I,O,\ld)$.
 \end{enumerate}
 We can construct this $g'$ in a number of steps that is polynomial in the number of vertices of $G$.
\end{lemma}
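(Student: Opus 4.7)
The approach is to apply Lemma~\ref{lem:successor-gflow} repeatedly, each time keeping $v$ fixed but choosing a different second vertex, so that only $g'(v)$ is ever modified and condition~(1) of the statement holds automatically. Call $w \in \comp{O} \setminus \{v\}$ with $v \prec w$ an \emph{offender} (relative to the current $g'$) if either $w \in g'(v)$ with $\ld(w) \neq \XYm$, or $w \in \odd{G}{g'(v)}$ with $\ld(w) = \XYm$; these two cases are disjoint. I initialise $g'(v) := g(v)$ and, as long as an offender $w$ exists, I replace $g'(v)$ by $g'(v) \symd g(w)$. Since $v \prec w$, Lemma~\ref{lem:successor-gflow} ensures that $(g',\prec)$ remains a gflow, so condition~(4) is preserved. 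Once no offender is left, conditions~(2) and~(3) follow: the gflow conditions \ref{it:g} and \ref{it:odd} force any $w \in g'(v) \cap \comp{O} \setminus \{v\}$ or $w \in \odd{G}{g'(v)} \cap \comp{O} \setminus \{v\}$ to satisfy $v \prec w$, and among such vertices none are offenders.

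The first thing to check is that one XOR actually eliminates the chosen offender $w$. This is immediate from \ref{it:XY}--\ref{it:YZ} applied to $w$: $w \in g(w) \iff \ld(w) \neq \XYm$ and $w \in \odd{G}{g(w)} \iff \ld(w) \in \{\XYm,\XZm\}$. For a type-1 offender ($\ld(w) \in \{\XZm,\YZm\}$, $w \in g'(v)$), the XOR flips $w$ out of $g'(v)$; its $\odd{G}{g'(v)}$-membership may or may not toggle, but since $\ld(w) \neq \XYm$ it cannot become a type-2 offender. For a type-2 offender ($\ld(w) = \XYm$, $w \in \odd{G}{g'(v)}$), the XOR flips $w$ out of $\odd{G}{g'(v)}$ and leaves its $g'(v)$-membership unchanged (because $w \notin g(w)$), and again $w$ cannot become a type-1 offender because $\ld(w) = \XYm$.

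The main obstacle is termination, since an XOR could in principle introduce new offenders further up in $\prec$. I control this by processing offenders in a prescribed order: fix any linear extension $v_1, v_2, \ldots, v_N$ of $\prec$ and iterate $k$ from $1$ to $N$, performing the XOR step using $g(v_k)$ if and only if $v_k$ is currently an offender. The key invariant is that by \ref{it:g} and \ref{it:odd}, $g(v_k) \cup \odd{G}{g(v_k)} \subseteq \{v_k\} \cup \{u : v_k \prec u\}$, and every such $u$ has index strictly greater than $k$ in the linear extension. Therefore the step at index $k$ never alters offender status at indices strictly less than $k$, so positions already fixed stay fixed, and after the final iteration no offenders remain. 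The loop runs at most $N = O(|V|)$ iterations, each performing a symmetric difference and a neighbourhood parity check, giving polynomial running time.
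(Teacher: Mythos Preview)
Your proof is correct and follows essentially the same approach as the paper: iteratively replace $g'(v)$ by $g'(v)\symd g(w)$ for offending vertices $w$, using Lemma~\ref{lem:successor-gflow} to maintain gflow, until no offenders remain. The only difference is in the termination argument---the paper always selects a $\prec$-minimal offender and argues that newly created offenders lie strictly above it, whereas your single pass through a fixed linear extension makes the invariant (indices already passed stay non-offending) and the $O(|V|)$ bound on iterations immediate; these are minor variations of the same core idea.
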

\begin{proof}
 Let $g_0:=g$, we will modify the function in successive steps to $g_1,g_2$, and so on.
 For each non-negative integer $k$ in turn, define
 \begin{align*}
  S_{k,\XYm} &:= \{u\in (\odd{}{g_k(v)}\cap\comp{O}) \setminus\{v\} : \ld(u)=\XYm\}, \\
  S_{k,\XZm} &:= \{u\in (g_k(v)\cap\comp{O}) \setminus\{v\} : \ld(u)=\XZm\}, \\
  S_{k,\YZm} &:= \{u\in (g_k(v)\cap\comp{O}) \setminus\{v\} : \ld(u)=\YZm\},
 \end{align*}
 and set $S_k := S_{k,\XYm} \cup S_{k,\XZm} \cup S_{k,\YZm}$.
 Finding $S_k$ takes $O(\abs{V}^2)$ operations.
 If $S_k=\emptyset$, let $g':=g_k$ and stop.
 Otherwise, choose $w_k\in S_k$ among the elements minimal in $\prec$, and define
 \[
  g_{k+1}(u) := \begin{cases} g_k(v)\symd g_k(w_k) &\text{if } u=v \\ g_k(u) &\text{otherwise.} \end{cases}
 \]
 Note $w_k\in S_k$ implies $w_k\neq v$, as well as either $w_k\in g_k(v)$ or $w_k\in\odd{}{g_k(v)}$.
 Thus if $(g_k,\prec)$ is a gflow, then $v\prec w_k$, and hence by Lemma~\ref{lem:successor-gflow}, $(g_{k+1},\prec)$ is also a gflow.
 Since $(g_0,\prec)$ is a gflow, this means $(g_k,\prec)$ is a gflow for all $k$.

 Now, if $w_k\in S_{k,\XYm}$, then $w_k\in\odd{}{g_k(w_k)}$ by \ref{it:XY}.
 This implies $w_k\notin\odd{}{g_{k+1}(v)}$, and thus $w_k\notin S_{k+1}$.
 Similarly, if $w_k\in S_{k,\XZm} \cup S_{k,\YZm}$, then $w_k\in g_k(w_k)$ by \ref{it:XZ} or \ref{it:YZ}.
 This implies $w_k\notin g_{k+1}(v)$, and thus $w_k\notin S_{k+1}$.
 Hence, in each step we remove a minimal element from the set.

 Suppose there exists $w'\in S_{k+1}\setminus S_k$, then either $w'\in g_k(w_k)$ or $w'\in\odd{}{g_k(w_k)}$; in either case $w_k\prec w'$.
 In other words, we always remove a minimal element from the set and add only elements that come strictly later in the partial order.
 Therefore, the process terminates after $n\leq\abs{V}$ steps, at which point $S_n=\emptyset$,
 and the process requires $O(\abs{V}^2)$ operations at each step.
 The total complexity is therefore $O(\abs{V}^3)$.
The function $g'=g_n$ has the desired properties: (1) holds because we never modify the value of the function on inputs other than $v$, (2) and (3) hold because $S_n=\emptyset$, and (4) was shown to follow from Lemma~\ref{lem:successor-gflow}.
\end{proof}

Based on these lemmas, we can now show the focusing property.
These results state that correction sets can be simplified to only contain qubits measured in the \XY plane.
Moreover, side-effects of corrections (i.e.\ effects on qubits other than the one being corrected) never affect qubits measured in the \XY plane.

\begin{proposition}\label{prop:focused-gflow}
 Let $(G,I,O,\ld)$ be a labelled open graph which has gflow.
 Then $(G,I,O,\ld)$ has a maximally delayed gflow $(g,\prec)$ with the following properties for all $v\in V$:
 \begin{itemize}
  \item for all $w\in g(v)\cap\comp{O}$, either $v=w$ or $\ld(w)= \XYm$, and
  \item for all $w\in \odd{}{g(v)}\cap\comp{O}$, either $v=w$ or $\ld(w)\neq \XYm$.
 \end{itemize}
 This maximally delayed gflow can be constructed in a number of steps that is polynomial in the number of vertices in $G$.
\end{proposition}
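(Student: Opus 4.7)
The plan is to start from a maximally delayed gflow (obtained via Theorem~\ref{thmGFlowAlgo}) and then iteratively apply Lemma~\ref{lem:focus-single-vertex} to each non-output vertex in turn, exploiting the fact that this lemma only modifies the correction set of the vertex it is focused on while leaving the partial order intact.

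More concretely, I would begin by invoking Theorem~\ref{thmGFlowAlgo} to produce some maximally delayed gflow $(g_0,\prec)$ of $(G,I,O,\ld)$. Then I would enumerate the non-output vertices in some fixed order $v_1, v_2, \ldots, v_n$ (any order will do; $n = \abs{\comp{O}}$), and for each $i$ apply Lemma~\ref{lem:focus-single-vertex} to $v_i$ with input gflow $(g_{i-1},\prec)$, obtaining a new gflow $(g_i,\prec)$ that coincides with $g_{i-1}$ everywhere except possibly at $v_i$, and that satisfies the two focusing conditions at $v_i$. Finally, I would take $(g,\prec) := (g_n,\prec)$ and check that it satisfies both bullet points for every vertex.

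The main thing to verify is that focusing a later vertex does not destroy focusing already established at an earlier one. This follows immediately from property~(1) of Lemma~\ref{lem:focus-single-vertex}: applying the lemma at $v_j$ only alters $g$ at $v_j$, so the values $g_n(v_i) = g_i(v_i)$ for $i < n$ are untouched by subsequent steps, and the focusing conditions at $v_i$ established at step $i$ persist through step $n$. Maximal delay is automatic because $V_k^\prec$ (Definition~\ref{defVk}) depends only on the partial order, and $\prec$ is preserved at every step of the construction.

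For the complexity bound, Theorem~\ref{thmGFlowAlgo} runs in polynomial time, and Lemma~\ref{lem:focus-single-vertex} gives an $O(\abs{V}^3)$ construction per vertex; iterating over at most $\abs{V}$ vertices yields an overall $O(\abs{V}^4)$ procedure (on top of the cost of the gflow-finding algorithm), which is polynomial. I do not expect any real obstacle here: the technical heavy lifting has already been done in Lemmas~\ref{lem:successor-gflow} and~\ref{lem:focus-single-vertex}, and the only content of this proposition is the observation that those single-vertex focusing steps compose cleanly because they act on disjoint parts of the data $g$.
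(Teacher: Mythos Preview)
Your proposal is correct and matches the paper's own proof almost exactly: start from a maximally delayed gflow, sweep through the non-output vertices applying Lemma~\ref{lem:focus-single-vertex} one at a time, and observe that each application touches only one correction set, so earlier focusing is never undone and the partial order (hence maximal delay) is preserved throughout. The complexity analysis you give, $O(\abs{V}^4)$ on top of the gflow-finding cost, is also the same as in the paper.
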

\begin{proof}
 Let $(g_0,\prec)$ be a maximally delayed gflow of $(G,I,O,\ld)$.
 Set $n:=\abs{V}$ and consider the vertices in some order $v_1,\ldots,v_n$.
 For each $k=1,\ldots,n$, let $g_k$ be the function that results from applying Lemma~\ref{lem:focus-single-vertex} to the gflow $(g_{k-1},\prec)$ and the vertex $v_k$.
 Then $g_k$ satisfies the two properties for the vertex $v_k$.
 The function $g_k$ also equals $g_{k-1}$ on all inputs other than $v_k$, so in fact $g_k$ satisfies the two properties for all vertices $v_1,\ldots,v_k$.
 Thus, $g_n$ satisfies the two properties for all vertices.
 Moreover, the partial order does not change, so $(g_n,\prec)$ is as delayed as $(g_0,\prec)$; i.e.\ it is maximally delayed.
 Hence if $g:=g_n$, then $(g,\prec)$ has all the desired properties.
 The construction of each successive $g_{k+1}$ via  Lemma~\ref{lem:focus-single-vertex} takes $O(n^3)$ operations,
 which we perform at most $n$ times, giving a complexity of $O(n^4)$.
\end{proof}

The extended notion of focused gflow also allows us to prove another result which will be useful for the optimisation algorithm later.

First, note that if a \LOG{} has gflow, then the \LOG{} that results from deleting all vertices measured in the \XZ or \YZ planes still has gflow.

\begin{lemma}\label{lem:gflow_drop_gadgets}
 Suppose $(G,I,O,\ld)$ is a \LOG{} which has gflow.
 Let $(G',I,O,\ld')$ be the induced \LOG{} on the vertex set $V'=\{v\in V\mid v\in O \text{ or } \ld(v)=\XYm\}$.
 Then $(G',I,O,\ld')$ has gflow.
\end{lemma}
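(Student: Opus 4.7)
The plan is to obtain this as a direct induction using Lemma~\ref{lem:deletepreservegflow}. That lemma tells us that deleting a single non-output vertex whose measurement plane is not \XYm{} preserves the existence of a gflow, and crucially the labels of the remaining vertices are unchanged by the deletion. Since the target \LOG{} $(G', I, O, \ld')$ is obtained from $(G, I, O, \ld)$ by precisely removing all non-output vertices with label in $\{\XZm, \YZm\}$, iterated application of the lemma will yield exactly $(G', I, O, \ld')$ together with a gflow.

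More concretely, I would proceed as follows. Let $W := \comp{O} \cap \ld^{-1}(\{\XZm, \YZm\}) = V \setminus V'$ and write $\abs{W} = n$. I will argue by induction on $n$. The base case $n = 0$ is immediate, since then $(G, I, O, \ld) = (G', I, O, \ld')$ already. For the inductive step, pick any $u \in W$. By hypothesis $u \in \comp{O}$ and $\ld(u) \neq \XYm$, so Lemma~\ref{lem:deletepreservegflow} applies and produces a gflow on $(G \setminus \{u\}, I, O, \ld|_{\comp{O}\setminus\{u\}})$. The set of non-output vertices in $G\setminus\{u\}$ whose label is in $\{\XZm, \YZm\}$ is exactly $W \setminus \{u\}$, which has size $n-1$, so the induction hypothesis applies and delivers a gflow on the \LOG{} obtained by further deleting $W \setminus \{u\}$. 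Since the order in which vertices are deleted does not affect the underlying induced subgraph, this final \LOG{} is $(G', I, O, \ld')$.

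There is essentially no obstacle here; the only thing to check carefully is that the hypothesis of Lemma~\ref{lem:deletepreservegflow} is satisfied at every step, which follows because the set of vertices we still need to delete consists entirely of non-outputs with label in $\{\XZm, \YZm\}$ at every stage (the labels never change under the operation of Lemma~\ref{lem:deletepreservegflow}, and no output is ever removed). The resulting proof is therefore a short induction, and it can be written in a few lines.
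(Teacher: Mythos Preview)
Your proposal is correct and follows essentially the same approach as the paper: iterate Lemma~\ref{lem:deletepreservegflow} over all non-output vertices with label in $\{\XZm,\YZm\}$. The paper's proof is even terser (it does not spell out the induction), but it does add one remark you omit: inputs are never among the deleted vertices, since any non-output input $v$ with $\ld(v)\neq\XYm$ would require $v\in g(v)\subseteq\comp{I}$ by \ref{it:XZ} or \ref{it:YZ}, a contradiction---this is what guarantees $I\subseteq V'$ so that $(G',I,O,\ld')$ is a well-defined \LOG.
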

\begin{proof}
 Apply Lemma~\ref{lem:deletepreservegflow} to each vertex measured in the \XZ or \YZ plane one by one.
 Recall from Definition~\ref{defGFlow} that input vertices are measured in the \XY plane and so
 are not removed by this process.
\end{proof}

We can now show that in a \LOG{} which has gflow and which satisfies $\abs{I}=\abs{O}$, any internal \XY vertex must have more than one neighbour.\footnote{The condition $\abs{I}=\abs{O}$ is necessary: consider the \LOG{} $(G, \emptyset, \{o\}, \ld)$, where $G$ is the connected graph on two vertices $\{v,o\}$, and $\ld(v)=\XYm$.
Then $v$ is internal and has only a single neighbour, yet the \LOG{} has gflow with $g(v)=\{o\}$ and $v\prec o$.}

\begin{proposition}\label{prop:XY-neighbours}
 Let $(G,I,O,\ld)$ be a \LOG{} which has gflow and for which $\abs{I}=\abs{O}$.
 Suppose $v\in\comp{O}$ satisfies $\ld(v)=\XYm$.
 Then either $v\in I$ and $\abs{N_{G}(v)}\geq 1$, or $v\notin I$ and $\abs{N_{G}(v)}\geq 2$.
\end{proposition}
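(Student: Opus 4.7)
The approach will combine the basic gflow condition \ref{it:XY} with a reversal argument that exploits $\abs{I}=\abs{O}$, after first passing to the XY-only subgraph so that Corollary~\ref{cor:reverse_unitary_gflow} becomes applicable. I will first handle the input case directly and then reduce the internal case to a contradiction.

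From \ref{it:XY} applied to $v$ in the given gflow, $v \in \odd{G}{g(v)}$, so $\abs{N_G(v) \cap g(v)}$ is odd and in particular nonzero; hence $\abs{N_G(v)} \geq 1$. This already gives the conclusion when $v \in I$, so assume $v \notin I$ and suppose for contradiction that $\abs{N_G(v)}=1$, say $N_G(v)=\{u\}$. The plan is then to pass to the restricted \LOG{} $(G',I,O,\ld')$ obtained by deleting every non-output vertex labelled \XZ{} or \YZ; Lemma~\ref{lem:gflow_drop_gadgets} guarantees that this restricted \LOG{} still has gflow. If $\ld(u) \in \{\XZm,\YZm\}$, then $u$ is deleted and $v$ becomes isolated in $G'$, so \ref{it:XY} applied to $v$ in the restricted gflow immediately produces the contradiction.

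Otherwise $u \in V'$, meaning $u$ is either an output or itself XY-measured, and $N_{G'}(v)=\{u\}$. Since every non-output of $(G',I,O,\ld')$ is XY-measured, Theorem~\ref{thm:mhalla2} supplies a focused gflow $(g',\prec')$, and since $\abs{I}=\abs{O}$, Corollary~\ref{cor:reverse_unitary_gflow} then produces a focused gflow $(g'',\prec'')$ on the reversed \LOG{} with underlying open graph $(G',O,I)$, in which $v$ remains a non-input non-output XY-measured vertex. Applying \ref{it:XY} to $v$ under each of these two gflows, and using that $u$ is the unique neighbour of $v$ in $G'$, forces $u \in g'(v)$ and $u \in g''(v)$. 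Unpacking the defining formula $g''(v)=\{w\in\comp{O}\mid v\in g'(w)\}$ from Corollary~\ref{cor:reverse_unitary_gflow} then gives $v \in g'(u)$ (and en route rules out the subcase $u\in O$, since $g''(v)\subseteq\comp{O}$). Finally, \ref{it:g} applied to both $u \in g'(v)$ and $v \in g'(u)$ yields $v \prec' u$ together with $u \prec' v$, the desired contradiction.

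The main obstacle is setting up the reversal cleanly: Corollary~\ref{cor:reverse_unitary_gflow} is only available for \LOG{}s whose non-outputs are all XY-measured, which is precisely why the passage to the XY-only subgraph via Lemma~\ref{lem:gflow_drop_gadgets} is needed, and care must be taken to track what happens to $v$'s single neighbour $u$ under that restriction. Once that reduction is in place, the remainder is a short chase through the definitions of gflow and of the reversed correction function.
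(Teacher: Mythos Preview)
Your proof is correct and follows essentially the same route as the paper: establish $\abs{N_G(v)}\geq 1$ from \ref{it:XY}, then for the internal case reduce to the XY-only subgraph via Lemma~\ref{lem:gflow_drop_gadgets}, take a focused gflow and its reversal (Theorem~\ref{thm:mhalla2} and Corollary~\ref{cor:reverse_unitary_gflow}), and derive a contradiction from the single-neighbour assumption. The only cosmetic difference is in how the final contradiction is phrased: the paper shows that $v\prec u$ forces $u\notin g'(v)$ in the reversed gflow and hence \ref{it:XY} fails there, whereas you run the same implication contrapositively to obtain $v\in g'(u)$ and hence a cycle $v\prec' u\prec' v$ in the partial order; these are equivalent.
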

\begin{proof}
 Consider some $v\in\comp{O}$ such that $\ld(v)=\XYm$.
 Note that a vertex with no neighbours is in the even neighbourhood of any set of vertices.
 Therefore we must have $\abs{N_{G}(v)}\geq 1$, since $(G,I,O,\ld)$ has gflow and $v$ must be in the odd neighbourhood of its correction set by \ref{it:XY}.

 Now suppose for a contradiction that $v\notin I$ and $\abs{N_{G}(v)}=1$.
 Denote by $u$ the single neighbour of $v$.

 If the \LOG{} contains any vertices measured in the \XZ or \YZ planes, by Lemma~\ref{lem:gflow_drop_gadgets}, we can remove those vertices while preserving the property of having gflow.
 Since $\ld(v)=\XYm$, the removal process preserves~$v$.
 The new \LOG{} has gflow and $v$ cannot have gained any new neighbours, so $u$ must also be preserved by the argument of the first paragraph above.
 Thus, without loss of generality, we will assume that all non-outputs of $(G,I,O,\ld)$ are measured in the \XY plane.

 The \LOG{} $(G,I,O,\ld)$ has gflow and satisfies $\ld(w)=\XYm$ for all $w\in\comp{O}$, so by Theorem~\ref{thm:mhalla2} it has a focused gflow $(g,\prec)$.
 To satisfy the gflow condition \ref{it:XY} for $v$, that is, to satisfy $v\in\odd{G}{g(v)}$, we must have $u\in g(v)$.
 This then implies $v\prec u$ by \ref{it:g}.

 Since $\abs{I}=\abs{O}$, the focused gflow $(g,\prec)$ can be reversed in the sense of Corollary~\ref{cor:reverse_unitary_gflow}.
 Denote by $(G,O,I,\ld')$ the reversed \LOG{} (cf.\ Definition~\ref{def:reversed-LOG}) and by $(g',\prec')$ the corresponding reversed focused gflow.
 Since $v\notin I$, $v$ remains a non-output in the reversed graph, so it has a correction set.
 But $g'(v)=\{w\in\comp{O}\mid v\in g(w)\}$, so it cannot contain $u$ because $v\notin g(u)$ by $v\prec u$.
 Thus, $v\notin\odd{G}{g'(v)}$, contradicting \ref{it:XY}.

 Therefore, the initial assumption must be wrong, i.e.\ if $v\notin I$ then $\abs{N_{G}(v)}\geq 2$.
\end{proof}

\begin{remark}
 This implies that in any unitary MBQC-form \zxdiagram with gflow, any vertex measured in the \XY-plane has at least two incident wires (plus the wire leading to the measurement effect), since being an input vertex of the \LOG{} implies being connected to an input wire in the \zxdiagram.
\end{remark}

\section{Simplifying measurement patterns}\label{sec:simplifying}

In the previous section, we saw several ways in which \LOG{}s can be modified while preserving the existence of gflow. In this section, we will see how these modifications can be done on measurement patterns in a way that preserves the computation being performed.
The goal of the simplifications in this section is to reduce the number of qubits needed to implement the computation.
Since we are specifically interested in patterns with gflow, we will represent a pattern by a ZX-diagram in MBQC+LC form, which carries essentially the same information.

We show how local complementations affect a measurement pattern. The relation between local complementations and graph states was first considered in Ref.~\cite{NestMBQC}, and the combined action of a local complementation or pivot followed by a vertex deletion on a measurement pattern consisting of just XY-plane measurements was considered in Ref.~\cite{cliff-simp}. We here show the effect of a local complementation on a generic measurement pattern containing measurements in three possible planes.
We combine these results with ways to delete qubits measured in a Clifford angle. That qubits measured in a Clifford angle can be treated in special ways is well-known. For instance, Hein et al.\ showed how graph states are changed under Pauli measurements \cite{hein2004multiparty}.
More specific to MBQC, Browne at al.\ found gflow conditions specific to Pauli observables \cite{GFlow} and Di Matteo and Mosca showed that all Clifford measurements (and their corrections) in a pattern containing only XY-plane measurements can be done simultaneously, reducing the depth of a computation \cite{di2016parallelizing}.
Our results go beyond this in three ways. First, by explicitly showing how to find an equivalent measurement pattern where Clifford qubits are actually removed, instead of just `compressed' into a single layer. Second, by doing this for measurement patterns containing measurements in three planes instead of just one, and third, by showing how these operations preserve the existence of a gflow, and hence preserve deterministic realisability.
This last point was also considered in Ref.~\cite{cliff-simp}, restricted to XY-plane measurements, where they showed this property for the combination of a local complementation and a qubit deletion. This result actually becomes simpler for generic patterns, as the local complementation and deletion can be done separately.

Before we find qubit-removing rewrite rules however, we establish how local Cliffords in an MBQC+LC form diagram can be changed into measurements in Section~\ref{sec:local-Cliffords} and how local complementations affect a pattern in Section~\ref{sec:pattern-local-complementation}. We use local complementations to remove Clifford vertices from a pattern in Section~\ref{sec:removecliffordqubits}, and to change a pattern so that only two measurement planes are necessary in Section~\ref{sec:phasegadgetform}. Finally, in Section~\ref{sec:further-opt} we find some further simplifications that allow the removal of additional qubits.

\subsection{Transforming local Cliffords into measurements}\label{sec:local-Cliffords}

We used MBQC+LC diagrams as an extension of MBQC form diagrams. In this section we will see that we can always convert the local Clifford gates into measurements to turn the diagram into MBQC form.

\begin{lemma}\label{lem:SQU-to-MBQC-form}
 Any \zxdiagram $D$ which is in MBQC+LC form can be brought into MBQC form.
 Moreover, if the MBQC-form part of $D$ involves $n$ qubits, of which $p$ are inputs and $q$ are outputs, then the resulting MBQC-form diagram contains at most $n+2p+4q$ qubits.
\end{lemma}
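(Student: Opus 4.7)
The plan is to handle each local Clifford one at a time, absorbing it into the MBQC-form part via a small gadget that introduces at most a constant number of fresh graph-state vertices. The key ingredient I would exploit is that every single-qubit Clifford admits a short Euler-angle decomposition, e.g.\ $U = Z_\alpha H Z_\beta H Z_\gamma$ with $\alpha,\beta,\gamma \in \{0, \pi/2, \pi, 3\pi/2\}$, so it suffices to show how to incorporate each of the finitely many factors of such a decomposition into an MBQC-form diagram. The bound on added qubits will then be obtained by bookkeeping over the (constantly many) factors per input/output wire.

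For an input-wire Clifford I would first use the spider rule \SpiderRule to fuse each Z-rotation into the adjacent input vertex (shifting either its phase or, equivalently, the phase of its measurement effect if present). What remains are the two Hadamards of the Euler decomposition. Each Hadamard can be absorbed by prepending a fresh green vertex: the new vertex becomes the input, it is connected to the old input vertex via a Hadamard edge (turning the explicit Hadamard box into a blue dashed edge), and the old input vertex --- now no longer a boundary input --- gets a measurement effect (in the XY-plane at angle 0) attached so the diagram stays in MBQC form. Iterating at most twice contributes at most $2p$ extra qubits in total.

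For an output-wire Clifford no Z-rotation can be absorbed for free, because the output vertex has an outgoing boundary wire rather than a measurement effect, and placing a nonzero phase on its green spider would take us out of MBQC form. The workaround is to realise the Clifford as a short chain of teleportation steps appended to the output: one inserts a measured vertex for each Z-rotation factor (so that its phase is carried by a measurement effect) together with the two Hadamard vertices as in the input case, and relabels the endpoint of the chain as the new boundary output. A careful count bounds the number of extra vertices per output Clifford by $4$, giving at most $4q$ overall. Correctness of each gadget will follow directly from the spider, Hadamard and identity rules of Figure~\ref{fig:zx-rules}, which are exactly what is needed to show that the inserted chain computes the desired Clifford.

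The main obstacle I anticipate is the bookkeeping in the output case: one has to verify simultaneously that after inserting the teleportation chain (i) every internal edge remains a Hadamard edge, (ii) every new non-output vertex carries a measurement effect in one of the three planes, (iii) the new boundary vertex has a plain output wire, and (iv) the measurement angles in the chain are chosen (as multiples of $\pi/2$) so that the gadget really computes $U$. The input case is comparatively routine, since the original input vertex can simply be re-purposed as a measured internal vertex. Once both constructions are in place, summing the per-Clifford additions across the $p$ inputs and $q$ outputs yields exactly the stated bound of at most $n + 2p + 4q$ qubits.
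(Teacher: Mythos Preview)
Your approach is essentially the paper's: it too uses the three-phase Euler decomposition (phrased there as ``green-red-green'', i.e.\ $Z_\alpha X_\beta Z_\gamma$, which unfolds to your $Z_\alpha H Z_\beta H Z_\gamma$), pushes the innermost green phase through the input vertex onto the measurement effect (or onto the output Clifford if the input is also an output), and then realises the remaining factors by prepending a short chain of fresh vertices, arriving at the same counts of $2$ per input and $4$ per output.

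One detail in your input argument is inverted and should be fixed. When you prepend a fresh vertex to absorb a Hadamard, it is the \emph{new} vertex that must carry an \XY-plane measurement effect, not the old input vertex: the old vertex already had either a measurement effect or an output wire, and that decoration is unchanged. Moreover, the angle on each new vertex is not $0$ in general --- it is whichever $Z$-rotation ($\beta$, then $\alpha$) fuses into it. Relatedly, the three $Z$-rotations cannot all be fused ``first'' since they are interleaved with the Hadamards; the fusions and the vertex-insertions have to alternate. These are bookkeeping slips rather than a different strategy; once corrected, your input construction coincides with the paper's. Your output sketch (``a measured vertex for each $Z$-rotation together with the two Hadamard vertices'') does not literally add up to four; the clean way to see the bound is that the chain $v,v_1,v_2,v_3,v_4$ with $v_4$ the new output and $v,v_1,v_2,v_3$ measured in \XY at suitable Clifford angles realises any single-qubit Clifford, which is what the paper's diagram encodes.
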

\begin{proof}
 Any single-qubit Clifford unitary can be expressed as a composite of three phase shifts \cite[Lemma~3]{backens1}.
 Note that this result holds with either choice of colours, i.e.\ any single-qubit Clifford unitary can be expressed as \tikzfig{SQC-red} or \tikzfig{SQC-green}.

 Now, with the green-red-green version, for any Clifford operator on an input, we can `push' the final green phase shift through the graph state part onto the outgoing wire.
 There, it will either merge with the measurement effect or with the output Clifford unitary:
\ctikzfig{SQC-in-replacement}
 If $\gamma\in\{0,\pi\}$, merging the phase shift with a measurement effect may change the angle but not the phase label, e.g.\ if $\gamma=\pi$:
 \begin{center}
  \tikzfig{pivot-pi-phases-XY} \qquad \tikzfig{pivot-pi-phases-XZ} \qquad \tikzfig{pivot-pi-phases-YZ}
 \end{center}
 If $\gamma\in\{\frac\pi2,-\frac\pi2\}$, merging the phase shift with a measurement effect will flip the phase labels \XZ and \YZ, e.g.\ if $\gamma=-\frac\pi2$:
 \begin{center}
  \tikzfig{lc-N-XY} \qquad \tikzfig{lc-N-XZ} \qquad \tikzfig{lc-N-YZ}
 \end{center}
 Thus we need to add at most two new qubits to the MBQC-form part when removing a Clifford unitary on the input.

 For a Clifford unitary on the output, we have
  \ctikzfig{SQU-out-replacement}
 Thus we add at most four new qubits.

 Combining these properties, we find that rewriting to MBQC form adds at most $2p+4q$ new qubits to the pattern.
\end{proof}

\begin{proposition}
 Suppose $D$ is a \zxdiagram in MBQC+LC form and that its MBQC part has gflow.
 Let $D'$ be the \zxdiagram that results from bringing $D$ into MBQC form as in Lemma~\ref{lem:SQU-to-MBQC-form}.
 Then $D'$ has gflow.
\end{proposition}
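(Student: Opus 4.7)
The strategy is to track gflow through each individual step of the procedure described in Lemma \ref{lem:SQU-to-MBQC-form}. That procedure Euler-decomposes each local Clifford into three phase shifts and then absorbs them one at a time; the net effect on the underlying \LOG{} is a sequence of small local modifications at the boundary vertices. Since each individual step preserves the existence of gflow (as argued below), an induction on the number of phase shifts processed gives the result.

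There are three kinds of elementary moves to consider. First, pushing a green $Z$-phase through an adjacent green spider of the graph-state part alters only a phase angle and does not change the underlying graph, the plane assignment, or the input/output sets. Since the gflow conditions of Definition \ref{defGFlow} depend only on the graph and the plane assignment, any gflow of the old diagram is automatically a gflow of the new one.

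Second, absorbing a red $X$-phase by creating a new boundary qubit. On the input side this prepends a fresh \XY-plane measured vertex in place of the old input, which is exactly the operation of Lemma \ref{lem:gflow-add-input}, so gflow is preserved. On the output side, the analogous move inserts a new \XY-plane measured vertex together with a fresh output wire in place of the old output, which is covered by Lemma \ref{lem:gflow-add-output}.

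Third, and this is the main obstacle, absorbing a $\pm\pi/2$ red phase into a boundary vertex. Although the figures in Lemma \ref{lem:SQU-to-MBQC-form} appear to change only the measurement plane of a single vertex, this rewrite is really the ZX-identity from Lemma \ref{lem:ZX-lcomp}: an $X_{\pm\pi/2}$ on vertex $u$ of a graph state equals $\ket{G\star u}$ together with a compensating $Z_{\mp\pi/2}$ on every neighbour of $u$. Those compensating phases are absorbed either into the measurement angle of a measured neighbour or into the outer single-qubit Clifford on an output neighbour (at this intermediate stage the diagram is still in MBQC+LC form, so such Cliffords are available on every output). The underlying graph operation is therefore a local complementation about $u$, whose gflow-preservation is handled by Lemma \ref{lem:lc_gflow} when $u\in\comp{O}$ and by its boundary variant when $u\in O$. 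Verifying that the plane-label updates predicted by Lemma \ref{lem:lc_gflow} agree with the pictures of Lemma \ref{lem:SQU-to-MBQC-form} is a short case check driven by the fact that both sides describe the same ZX-rewrite. Combining these three kinds of moves, every step of the procedure preserves the existence of gflow, and hence the fully rewritten diagram $D'$ has gflow.
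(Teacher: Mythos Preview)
Your first two cases are exactly the paper's argument: the green phase pushed through an input vertex leaves the underlying \LOG{} unchanged, and each red phase is turned into a fresh \XY-plane boundary qubit, which is precisely the operation covered by Lemma~\ref{lem:gflow-add-input} on the input side and Lemma~\ref{lem:gflow-add-output} on the output side. The paper's proof stops there, invoking only those two lemmas.

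Your third case, however, is a misreading of Lemma~\ref{lem:SQU-to-MBQC-form}. That procedure never absorbs a red phase into an existing graph vertex: every red phase, regardless of whether it is $0$, $\pi$, or $\pm\pi/2$, is handled by adjoining a new boundary qubit (your case~2). The plane changes you see in the displayed equations of that lemma come from a \emph{green} $\pm\pi/2$ phase merging with a measurement effect, not from a red phase, and they take place on the measurement-effect side only. No local complementation is performed anywhere; the graph underlying $D'$ is obtained from that of $D$ simply by attaching short paths at the inputs and outputs. If you actually carried out a local complementation as in your case~3, you would produce a different graph from the one Lemma~\ref{lem:SQU-to-MBQC-form} yields, so invoking Lemma~\ref{lem:lc_gflow} would establish gflow for the wrong diagram. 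Drop case~3 entirely.

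One small refinement to case~1: your claim that pushing a green phase ``does not change the plane assignment'' is not true in general, since a green $\pm\pi/2$ merged with an \XZ{} or \YZ{} effect flips those two planes. It is true here only because the vertex in question is an input, and inputs must carry label \XY{} whenever a gflow exists (otherwise $v\in g(v)$ would violate $g(v)\subseteq\comp{I}$). You should make that explicit.
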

\begin{proof}
 By applying Lemma~\ref{lem:SQU-to-MBQC-form} repeatedly, we can incorporate any local Clifford operators into the MBQC form part of the diagram.
 Lemmas~\ref{lem:gflow-add-output} and~\ref{lem:gflow-add-input} ensure that each step preserves the property of having gflow.
\end{proof}

\subsection{Local complementation and pivoting on patterns}\label{sec:pattern-local-complementation}

Lemmas~\ref{lem:ZX-lcomp} and~\ref{lem:ZX-pivot} showed how to apply a local complementation and pivot on a ZX-diagram by introducing some local Clifford spiders. In this section we will show how these rewrite rules can be used on MBQC+LC diagrams.

\begin{lemma}\label{lem:lc-MBQC-form-non-input}
 Let $D$ be an MBQC+LC diagram and suppose $u\in G(D)$ is not an input vertex.
 Then the diagram resulting from applying Lemma~\ref{lem:ZX-lcomp} on $u$ (\ie locally complementing), can be turned back into an MBQC+LC diagram $D'$ with $G(D')=G(D)\star u$. If $D$ had gflow, then $D'$ will also have gflow.
\end{lemma}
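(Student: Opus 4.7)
The plan is to apply Lemma~\ref{lem:ZX-lcomp} directly to the graph-state part of $D$ at the vertex $u$. This rewrite replaces the underlying graph of the graph-state part by $G(D)\star u$ at the cost of inserting a single-qubit Clifford phase $X_{\pi/2}$ on $u$'s outgoing wire (the one leaving the graph state) and a $Z_{-\pi/2}$ on the outgoing wire of every neighbour of $u$. The remaining task is to reabsorb all these Clifford phases into measurement effects or into output local-Clifford boxes, so that the resulting diagram $D'$ is again in MBQC+LC form with $G(D') = G(D)\star u$.

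First I would deal with $u$ itself. Since by hypothesis $u$ is not an input, the side of $u$ on which the new $X_{\pi/2}$ now sits carries either (a) a measurement effect, in which case $X_{\pi/2}$ is fused into that effect, or (b) an output wire carrying a local Clifford (possibly trivial), in which case $X_{\pi/2}$ is absorbed into (or appended to) that local Clifford. In case (a) a direct calculation on the kets $\ket{+_{\lambda,\alpha}}$ of Section~\ref{sec:MBQC} shows that $X_{\pi/2}$ sends an \XY-effect to an \XZ-effect and vice versa (up to angle shift), while a \YZ-effect remains \YZ with its angle modified. For each neighbour $v$ of $u$ the $Z_{-\pi/2}$ is handled analogously; here the induced plane transformation is \XZ$\leftrightarrow$\YZ, leaving \XY-plane effects (and output Cliffords) absorbed without changing their plane label. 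Crucially, no Clifford ever has to be pushed through an input wire: for $u$ the $X_{\pi/2}$ lives on the \emph{non-input} side by hypothesis, and for each neighbour $v$ the $Z_{-\pi/2}$ sits on the measurement/output side of $v$, not between $v$ and its input wire. Thus the construction never leaves the MBQC+LC fragment, and $G(D')=G(D)\star u$ by construction.

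For the gflow statement, if $u\in\comp O$ we invoke Lemma~\ref{lem:lc_gflow}; if $u\in O$ we invoke the subsequent lemma covering the output case. Each prescribes a specific relabelling $\lambda'$ of the measurement planes on $u$ and its neighbours, and one checks that this relabelling coincides exactly with the plane changes forced by the phase absorptions described above (\XY$\leftrightarrow$\XZ at $u$ for \XY/\XZ measurements, \YZ fixed at $u$; and \XZ$\leftrightarrow$\YZ at the neighbours of $u$ with \XY fixed). Therefore the \LOG\ underlying $D'$ is precisely the one produced by Lemma~\ref{lem:lc_gflow}, and so it inherits a gflow.

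The main obstacle is simply bookkeeping: verifying carefully that the effect of fusing $X_{\pi/2}$ and $Z_{-\pi/2}$ into each kind of measurement effect matches the plane relabelling prescribed by Lemma~\ref{lem:lc_gflow}. This is a finite case analysis on the three planes $\{\XYm,\XZm,\YZm\}$, and is routine once one computes the action of $X_{\pi/2}$ and $Z_{-\pi/2}$ on the six basis kets $\ket{\pm_{\lambda,\alpha}}$.
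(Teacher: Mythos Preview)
Your proposal is correct and follows essentially the same route as the paper: apply Lemma~\ref{lem:ZX-lcomp}, absorb the resulting $X_{\pi/2}$ at $u$ and $Z_{-\pi/2}$ at its neighbours into measurement effects (or output local Cliffords), verify that the induced plane relabellings coincide with those in Lemma~\ref{lem:lc_gflow}, and conclude gflow is preserved. The paper spells out the explicit new angles in each of the six subcases whereas you leave this as a routine check; conversely, you are slightly more careful than the paper in explicitly invoking the output-vertex variant of Lemma~\ref{lem:lc_gflow} when $u\in O$.
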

\begin{proof}
 Suppose $D$ is an MBQC+LC diagram, $\Gamma=(G,I,O,\ld)$ the corresponding \LOG, and $\alpha:\comp{O}\to[0,2\pi)$ gives the associated measurement angles.
 By assumption, $u\notin I$, so -- with the exception of the output wire or the edge to the measurement effect -- all edges incident on $u$ connect to neighbouring vertices in the graph.
 The input wires on the other qubits can be safely ignored.
 To get back an MBQC+LC diagram after Lemma~\ref{lem:ZX-lcomp} is applied to $u$, we only need to rewrite the measurement effects, and hence we need to construct new $\lambda'$ and $\alpha'$ for these measurement effects. We do that as follows.

 First of all, there are no changes to the measurement effects on vertices $v\not\in N(u)\cup\{u\}$, and hence for those vertices we have $\lambda'(v)=\lambda(v)$ and $\alpha'(v)=\alpha(v)$.

 The vertex $u$ gets a red $\frac\pi2$ phase from the application of Lemma~\ref{lem:ZX-lcomp}. If $u\in O$, then it has no associated measurement plane or angle. In this case, this red $\frac\pi2$ simply stays on the output wire, as allowed in an MBQC+LC diagram. When $u\notin O$, there are three possibilities, depending on $\ld(u)$:
 \begin{itemize}
  \item If $\ld(u)=\XYm$, then the new measurement effect is
   \ctikzfig{lc-u-XY}
   i.e.\ $\ld'(u)=\XZm$ and $\alpha'(u)=\frac{\pi}{2}-\alpha(u)$.
  \item If $\ld(u)=\XZm$, then the new measurement effect is
   \ctikzfig{lc-u-XZ}
   i.e.\ $\ld'(u)=\XYm$ and $\alpha'(u)=\alpha(u)-\frac{\pi}{2}$.
  \item If $\ld(u)=\YZm$, then the new measurement effect is
   \ctikzfig{lc-u-YZ}
   i.e.\ $\ld'(u)=\YZm$ and $\alpha'(u)=\alpha(u)+\frac{\pi}{2}$.
 \end{itemize}
 The vertices $v$ that are neighbours of $u$ get a green $-\frac\pi2$ phase. Again, if such a $v$ is an output, this phase can be put as a local Clifford on the output. If it is not an output, then there are also three possibilities depending on $\ld(v)$:
 \begin{itemize}
  \item If $\ld(v)=\XYm$, then the new measurement effect is
   \ctikzfig{lc-N-XY}
   i.e.\ $\ld'(v)=\XYm$ and $\alpha'(v)=\alpha(v)-\frac{\pi}{2}$.
  \item If $\ld(v)=\XZm$, then the new measurement effect is
   \ctikzfig{lc-N-XZ}
   i.e.\ $\ld'(v)=\YZm$ and $\alpha'(v)=\alpha(v)$.
  \item If $\ld(v)=\YZm$, then the new measurement effect is
   \ctikzfig{lc-N-YZ}
   i.e.\ $\ld'(v)=\XZm$ and $\alpha'(v)=-\alpha(v)$.
 \end{itemize}

 With these changes, we see that the resulting diagram $D'$ is indeed in MBQC+LC form. The underlying graph $G(D')$ results from the local complementation about $u$ of the original graph $G(D)$. Furthermore, the measurement planes changed in the same way as in Lemma~\ref{lem:lc_gflow}, and hence if $D$ had gflow, then $D'$ will also have gflow.
\end{proof}

\begin{proposition}\label{prop:MBQC-lc-MBQC}
 Let $D$ be an MBQC+LC diagram and suppose $u$ is an arbitrary vertex.
Then we can find an MBQC+LC diagram $D'$ such that $D'$ is equivalent to $D$ and $G(D)\star u \sse G(D')$.
 Furthermore, if $D$ has gflow, then $D'$ does too.
\end{proposition}
\begin{proof}
 If $u$ is not an input vertex, the result is immediate from Lemma~\ref{lem:lc-MBQC-form-non-input}.

 If instead $u$ is an input vertex, we modify $D$ by replacing the input wire incident on $u$ by an additional graph vertex $u'$ measured in the \XY-plane at angle 0, and a Hadamard unitary on the input wire:
 \ctikzfig{input-replacement}
 Throughout this process, the measurement effect on $u$ (if any) does not change, so it is left out of the above equation.
 In the resulting diagram $D''$, $u$ is no longer an input.
 It is straightforward to see that $D''$ is equivalent to $D$.
 Furthermore, $D''$ is an MBQC+LC diagram with $G(D)\sse G(D'')$, and $D''$ has gflow if $D$ does.
 Thus, the desired result follows by applying Lemma~\ref{lem:lc-MBQC-form-non-input} to $D''$ and noting that local complementations commute with subgraph relationships.
\end{proof}

A pivot is just a sequence of three local complementations.
Thus, the previous lemma already implies that when Lemma~\ref{lem:ZX-pivot} is applied to an MBQC+LC diagram the resulting diagram can also be brought back into MBQC+LC form. Nevertheless, it will be useful to explicitly write out how the measurement planes of the vertices change.

\begin{lemma}\label{lem:pivot-MBQC-form-non-input}
 Let $D$ be an MBQC+LC diagram and suppose $u$ and $v$ are neighbouring vertices in the graph state and are not input vertices of the underlying \LOG.
 Then the diagram resulting from applying Lemma~\ref{lem:ZX-pivot} to $u$ and $v$ (\ie a pivot about $u\sim v$) can be brought back into MBQC+LC form.
 The resulting \zxdiagram $D'$ satisfies $G(D') = G(D)\wedge uv$. If $D$ had gflow, then $D'$ will also have gflow.
\end{lemma}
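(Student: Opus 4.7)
The plan is to reduce the pivot to three local complementations and appeal to Lemma~\ref{lem:lc-MBQC-form-non-input} at each stage. By Definition~\ref{def:pivot}, $G \wedge uv = G \star u \star v \star u$, so applying Lemma~\ref{lem:ZX-pivot} to $D$ has the same diagrammatic effect, up to the already-known local Clifford rewrites, as applying Lemma~\ref{lem:ZX-lcomp} successively at $u$, then at $v$, then at $u$. This will let me build $D'$ as a composition of three MBQC+LC-preserving steps.

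First I would observe that local complementation does not change the input set of the underlying \LOG: the input wires of the diagram remain attached to the same vertices. In particular, since $u, v \notin I$ at the outset, they continue to be non-input vertices at every intermediate stage. Hence Lemma~\ref{lem:lc-MBQC-form-non-input} is applicable at each of the three steps. I would apply it once to $u$ in $D$ to obtain an MBQC+LC diagram $D_1$ with $G(D_1) = G(D) \star u$, then to $v$ in $D_1$ to obtain $D_2$ with $G(D_2) = G(D) \star u \star v$, and then to $u$ in $D_2$ to obtain $D'$ with $G(D') = G(D) \star u \star v \star u = G(D) \wedge uv$. Each application also returns, as part of Lemma~\ref{lem:lc-MBQC-form-non-input}, updated measurement planes and angles and absorbs any leftover $\pm\pi/2$ phase shifts either into measurement effects or onto the output wires as local Cliffords, so the final diagram is genuinely in MBQC+LC form.

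For the gflow claim, I would note that Lemma~\ref{lem:lc-MBQC-form-non-input} already guarantees that each of the three local-complementation steps preserves the existence of gflow; chaining these three statements gives preservation under the composite pivot. Alternatively, this is exactly the content of Corollary~\ref{cor:pivot_gflow}, which gives an explicit formula for the gflow after a pivot and whose correctness is by the same three-fold application of Lemma~\ref{lem:lc_gflow}. Either route yields a gflow for $D'$.

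The main potential obstacle is purely bookkeeping: between the first and third local complementations at $u$, both the plane labelling $\lambda(u)$ and the neighbourhood $N(u)$ have changed (the latter because of the intervening $\star v$), so one cannot simply compose the effects as if the two $\star u$'s commuted with $\star v$. However, the present lemma asks only that $D'$ be an MBQC+LC diagram with the stated underlying graph and preserved gflow, and does not require an explicit closed-form expression for the new $\lambda'$ and $\alpha'$; therefore the iterative application of Lemma~\ref{lem:lc-MBQC-form-non-input} is sufficient, with the plane-tracking handled locally at each step. An explicit table (swapping \XYm{} and \YZm{} on $u$ and $v$ while leaving \XZm{} fixed, as in Corollary~\ref{cor:pivot_gflow}) could be recorded for later use, but is not needed for the proof itself.
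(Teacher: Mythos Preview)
Your approach is correct, and indeed the paper itself remarks just before this lemma that a pivot is three local complementations, so Lemma~\ref{lem:lc-MBQC-form-non-input} already implies the MBQC+LC and gflow claims. The paper nevertheless takes a different, more direct route: rather than iterating three local-complementation steps, it works directly with the output of Lemma~\ref{lem:ZX-pivot}, namely a Hadamard on each of $u$ and $v$ and a green $\pi$ phase on each vertex of $N(u)\cap N(v)$, and shows explicitly how each of these is absorbed into a measurement effect (or left as an output Clifford). This yields closed-form expressions for $\ld'$ and $\alpha'$ --- in particular the swap $\XYm\leftrightarrow\YZm$ on $u,v$ with $\XZm$ fixed, and plane-preserving angle shifts on $N(u)\cap N(v)$ --- which are then matched with Corollary~\ref{cor:pivot_gflow} for the gflow claim. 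The advantage of your iterative argument is economy: it is a two-line reduction to an already-proved lemma. The advantage of the paper's direct computation is that the explicit plane and angle formulas are used verbatim later (e.g.\ in Lemma~\ref{lem:pivot-simp} and Proposition~\ref{prop:ZXtophasegadgetform}), so placing them inside this proof makes those later arguments self-contained. Your final paragraph correctly anticipates this: the table can be recovered by composing the three plane maps from Lemma~\ref{lem:lc-MBQC-form-non-input}, but you would need to carry that out if you want those later lemmas to go through as written.
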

\begin{proof}
  Suppose $\Gamma=(G,I,O,\ld)$ is the \LOG\ underlying $D$ and suppose $\alpha:\comp{O}\to[0,2\pi)$ gives the measurement angles.
  We will denote the measurement planes after pivoting by $\ld':\comp{O}\to\{\XYm,\XZm,\YZm\}$ and the measurement angles after pivoting by $\alpha':\comp{O}\to[0,2\pi)$.
  Let $a\in\{u,v\}$, then:
  \begin{itemize}
    \item If $a$ is an output, we consider the Hadamard resulting from the pivot operation as a Clifford operator on the output.
    \item If $\ld(a)=\XYm$ then $\ld'(a) = \YZm$ and if $\ld(a)=\YZm$ then $\ld'(a) = \XYm$:
   \ctikzfig{pivot-u-XY}
   In both cases, the measurement angle stays the same: $\alpha'(a) = \alpha(a)$.
   \item If $\ld(a)=\XZm$, then
   \ctikzfig{pivot-u-XZ}
   \ie $\ld'(a) = \XZm$ and $\alpha'(a) = \frac\pi2 - \alpha(a)$.
  \end{itemize}

  The only other changes are new green $\pi$ phases on vertices $w\in N(u)\cap N(v)$.
  For measured (i.e.\ non-output) vertices, these preserve the measurement plane and are absorbed into the measurement angle in all three cases:
   \begin{align*}
  (\ld'(w), \alpha'(w)) =
  \begin{cases}
  (\XYm, \alpha(w) + \pi) & \text{if } \ld(w) = \XYm \mspace{-1.5mu} \quad \tikzfig{pivot-pi-phases-XY} \\
  (\YZm, -\alpha(w)) & \text{if } \ld(w) = \YZm \quad  \tikzfig{pivot-pi-phases-YZ} \\
  (\XZm, -\alpha(w)) & \text{if } \ld(w) = \XZm \quad  \tikzfig{pivot-pi-phases-XZ}
  \end{cases}
  \end{align*}
  If instead $w$ is an output vertex, we consider the green $\pi$ phase shift as a Clifford operator on the output wire.
  The measurement planes and the graph change exactly like in Corollary~\ref{cor:pivot_gflow} and hence $D'$ has gflow when $D$ had gflow.
\end{proof}

\subsection{Removing Clifford vertices}\label{sec:removecliffordqubits}

In this section, we show that if a qubit is measured in one of the Pauli bases, i.e.\ at an angle which is an integer multiple of $\frac\pi2$, it can be removed from a pattern while preserving the semantics as well as the property of having gflow.

\begin{definition}\label{dfn:internal-boundary-Clifford}
  Let $D$ be a \zxdiagram in MBQC+LC form, with underlying \LOG\ $(G,I,O,\ld)$. Let $\alpha:\comp{O}\to [0,2\pi)$ be the corresponding set of measurement angles. We say a measured vertex $u\in G$ is \emph{Clifford} when $\alpha(u) = k\frac\pi2$ for some $k$.
\end{definition}

Our goal will be to remove as many internal Clifford vertices as possible.
We make a key observation for our simplification scheme: a \YZ-plane measurement with a $0$ or $\pi$ phase can be removed from the pattern by modifying its neighbours in a straightforward manner.

\begin{lemma}\label{lem:ZX-remove-YZ-Pauli}
Let $D$ be a ZX-diagram in MBQC+LC form with vertices $V$.
Suppose $u\in V$ is a non-input vertex measured in the \YZ or \XZ plane with an angle of $a\pi$ where $a=0$ or $a=1$. Then there is an equivalent diagram $D'$ with vertices $V\setminus \{u\}$. If $D$ has gflow, then $D'$ does as well.
\end{lemma}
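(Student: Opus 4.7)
The plan is to simplify the measurement effect at $u$ to a Pauli-$Z$ projection, then use the copy rule to push it through the green graph-state vertex, eliminating $u$ and distributing the resulting phase onto its neighbours.

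First, I would unpack the measurement effect at $u$ via Table~\ref{tab:MBQC-to-ZX}. At angle $a\pi$, both the XZ and YZ effects satisfy $\ket{+_{\lambda(u),a\pi}} \propto \ket{a}$, so the effect becomes the bra $\bra{a}$ up to a nonzero scalar; in the \zxcalculus this is a one-legged red spider with phase $a\pi$. Using $(\bm{h})$, $(\bm{f})$, and $(\bm{i2})$ as needed, I can arrange for this red phase-$a\pi$ spider to be connected directly to the green graph-state vertex $u$.

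Second, I would apply the copy rule $(\bm{c})$ when $a=0$ or the $\pi$-copy rule $(\bm{\pi c})$ when $a=1$ to push the red spider through $u$. The vertex $u$ is thereby eliminated, and each of its remaining legs (all Hadamard edges to neighbours of $u$) terminates in a fresh red phase-$a\pi$ state. Absorbing each Hadamard converts that red spider into a green phase-$a\pi$ spider, which then fuses by $(\bm{f})$ into the corresponding neighbouring graph-state vertex, contributing a phase of $a\pi$ there.

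Third, I would check that the resulting diagram is in MBQC+LC form. For $a=0$ there is nothing to do. For $a=1$, the extra $\pi$ phase on each neighbour $v$ must be propagated into either $v$'s measurement effect (if $v$ is measured) or onto its output wire as a local Clifford (if $v$ is an output). A short case analysis, analogous to the $\pi$-phase propagation calculations displayed in Lemma~\ref{lem:pivot-MBQC-form-non-input}, shows that in each case the measurement plane is preserved while the angle is shifted by $\pi$ (for XY) or negated (for XZ and YZ).

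Finally, the gflow claim follows directly from Lemma~\ref{lem:deletepreservegflow}: $u \in \comp{O}$ satisfies $\lambda(u) \in \{\XZm, \YZm\} \neq \XYm$, and the labels of the remaining vertices are unchanged by the phase modifications above. The main potential obstacle, rather than a conceptual difficulty, is simply the bookkeeping in the case analysis for $a=1$; no new ideas are required beyond the rewrites and gflow-preservation results already developed in Sections~\ref{sec:rewriting} and~\ref{sec:pattern-local-complementation}.
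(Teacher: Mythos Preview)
Your proposal is correct and follows essentially the same approach as the paper: reduce the XZ/YZ effect at angle $a\pi$ to $\bra{a}$, copy it through the graph-state vertex via the copy rule, absorb the resulting $a\pi$ phases into the neighbours (as measurement-angle shifts or output local Cliffords), and invoke Lemma~\ref{lem:deletepreservegflow} for gflow preservation. The paper compresses your first three steps into a single diagrammatic equality and the observation that absorbing an $a\pi$ phase into a measurement effect preserves the plane, but the content is the same.
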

\begin{proof}
  Using the axioms of the ZX-calculus, it is straightforward to show that:
  \ctikzfig{remove-YZ-measurement}
  These $a\pi$ phase shifts on the right-hand side can be absorbed into the measurement effects of the neighbouring vertices (or, for output vertices, considered as a local Clifford operator). Absorbing an $a\pi$ phase shift into a measurement effect does not change the measurement plane, only the angle. The resulting diagram $D'$ is then also in MBQC+LC form. Since $G(D')$ is simply $G(D)$ with a \YZ or \XZ plane vertex removed, $D'$ has gflow if $D$ had gflow by Lemma~\ref{lem:deletepreservegflow}.
\end{proof}

We can combine this observation with local complementation and pivoting to remove vertices measured in other planes or at other angles.
\begin{lemma}\label{lem:lc-simp}
  Let $D$ be a ZX-diagram in MBQC+LC form with vertices $V$. Suppose $u\in V$ is a non-input vertex measured in the \YZ or \XY plane with an angle of $\pm\frac\pi2$. Then there is an equivalent diagram $D'$ with vertices $V\setminus \{u\}$. If $D$ has gflow, then $D'$ does as well.
\end{lemma}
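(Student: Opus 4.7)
The plan is to reduce this statement to the previous lemma (Lemma~\ref{lem:ZX-remove-YZ-Pauli}) by first applying a local complementation at $u$, which converts the given Pauli measurement into one of the forms already handled. Since $u$ is a non-input vertex, Lemma~\ref{lem:lc-MBQC-form-non-input} applies: we can locally complement about $u$ and recover an MBQC+LC diagram $D''$ with $G(D'') = G(D)\star u$, with measurement planes and angles transformed in the specified way; moreover, if $D$ had gflow, so does $D''$.

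The key computation is just tracking what happens to the label and angle of $u$ under local complementation. If $\ld(u)=\XYm$ with $\alpha(u)=\pm\frac{\pi}{2}$, Lemma~\ref{lem:lc-MBQC-form-non-input} gives $\ld''(u)=\XZm$ with new angle $\frac{\pi}{2}-\alpha(u)\in\{0,\pi\}$. If instead $\ld(u)=\YZm$ with $\alpha(u)=\pm\frac{\pi}{2}$, the same lemma gives $\ld''(u)=\YZm$ with new angle $\alpha(u)+\frac{\pi}{2}\in\{0,\pi\}$. In both cases the transformed vertex satisfies the hypothesis of Lemma~\ref{lem:ZX-remove-YZ-Pauli} (a non-input \YZ- or \XZ-plane vertex with angle in $\{0,\pi\}$).

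Applying Lemma~\ref{lem:ZX-remove-YZ-Pauli} to $D''$ then produces an equivalent diagram $D'$ on the vertex set $V\setminus\{u\}$, still in MBQC+LC form, and preserving gflow whenever $D''$ has it. Composing the two steps gives $\intf{D}=\intf{D'}$ (possibly up to the nonzero scalar allowed in Definition~\ref{def:interpretation}) and the claimed vertex set, with gflow inherited from $D$ via $D''$.

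The only subtlety to check is that the intermediate diagram $D''$ really falls within the hypothesis of Lemma~\ref{lem:ZX-remove-YZ-Pauli}: namely, that $u$ remains a non-input vertex after local complementation. This is immediate, since local complementation does not change the input/output sets and $u$ was assumed to be a non-input of $D$. Thus the only genuine content is the small case check on the four $(\ld(u),\alpha(u))$ combinations above, and both pieces of the argument (the graph-transformation and the vertex-removal) have already been shown to preserve gflow, so no new gflow bookkeeping is required.
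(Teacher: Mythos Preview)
Your proof is correct and follows essentially the same approach as the paper: apply a local complementation at $u$ via Lemma~\ref{lem:lc-MBQC-form-non-input}, check that the resulting label and angle of $u$ land in the \XZ or \YZ plane with angle $0$ or $\pi$, and then invoke Lemma~\ref{lem:ZX-remove-YZ-Pauli} to remove $u$, with gflow preserved at each step. The only addition you make is the explicit remark that $u$ remains a non-input after local complementation, which the paper leaves implicit.
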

\begin{proof}
  We apply a local complementation about $u$ using Lemma~\ref{lem:ZX-lcomp} and reduce the diagram to MBQC+LC form with Lemma~\ref{lem:lc-MBQC-form-non-input}. By these lemmas, if the original diagram had gflow, this new diagram will also have gflow.
  As can be seen from Lemma~\ref{lem:lc-MBQC-form-non-input}, if $u$ was in the \XY plane, then it will be transformed to the \XZ plane and will have a measurement angle of $\frac\pi2 \mp\frac\pi2$. As a result its measurement angle is of the form $a\pi$ for $a\in\{0,1\}$.
  If instead it was in the \YZ plane, then it stays in the \YZ plane, but its angle is transformed to $\frac\pi2 \pm\frac\pi2$ in which case it will also be of the form $a\pi$ for $a\in\{0,1\}$.
  In both cases we can remove the vertex $u$ using Lemma~\ref{lem:ZX-remove-YZ-Pauli} while preserving semantics and the property of having gflow.
\end{proof}

\begin{lemma}\label{lem:pivot-simp}
  Let $D$ be a ZX-diagram in MBQC+LC form with vertices $V$, and let $u,v \in V$ be two non-input measured vertices which are neighbours.
  Suppose that either $\ld(u)=\XYm$ with $\alpha(u) = a\pi$ for $a\in \{0,1\}$ or $\ld(u) = \XZm$ with $\alpha(u) = (-1)^a\frac\pi2$.
  Then there is an equivalent diagram $D'$ with vertices $V\setminus \{u\}$. Moreover, if $D$ has gflow, then $D'$ also has gflow.
\end{lemma}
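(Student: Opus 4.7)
My plan is to prove Lemma~\ref{lem:pivot-simp} by first pivoting about the edge $u\sim v$ using Lemma~\ref{lem:pivot-MBQC-form-non-input}, which will transform the measurement at $u$ into one that is removable by Lemma~\ref{lem:ZX-remove-YZ-Pauli}, and then performing that removal. The hypothesis that both $u$ and $v$ are non-input measured vertices is exactly what Lemma~\ref{lem:pivot-MBQC-form-non-input} requires, so the pivot is available and yields an equivalent MBQC+LC diagram with underlying graph $G\wedge uv$ that inherits the gflow from $D$.

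Next I would analyze what happens to the measurement data at $u$ under the pivot, using the transformation rules spelled out in the proof of Lemma~\ref{lem:pivot-MBQC-form-non-input}. If $\ld(u)=\XYm$ with $\alpha(u)=a\pi$ for $a\in\{0,1\}$, the pivot changes the plane to \YZ and leaves the angle unchanged, so after the pivot $u$ is measured in the \YZ-plane at angle $a\pi$. If instead $\ld(u)=\XZm$ with $\alpha(u)=(-1)^a\frac{\pi}{2}$, the pivot keeps the plane \XZ but sends the angle to $\frac{\pi}{2}-(-1)^a\frac{\pi}{2}$, which evaluates to $0$ when $a=0$ and to $\pi$ when $a=1$. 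In both cases, $u$ is now a non-input vertex measured in the \YZ or \XZ plane at an angle of the form $a\pi$ with $a\in\{0,1\}$, so it satisfies the hypothesis of Lemma~\ref{lem:ZX-remove-YZ-Pauli}.

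Finally, I would apply Lemma~\ref{lem:ZX-remove-YZ-Pauli} to remove $u$, obtaining an equivalent MBQC+LC diagram $D'$ with vertex set $V\setminus\{u\}$. For the gflow claim, both steps preserve the property of having gflow: the pivot does so by Lemma~\ref{lem:pivot-MBQC-form-non-input} (which in turn invokes Corollary~\ref{cor:pivot_gflow}), and the vertex removal does so by Lemma~\ref{lem:ZX-remove-YZ-Pauli} (which invokes Lemma~\ref{lem:deletepreservegflow}, applicable because after the pivot $\ld(u)\neq\XYm$). Composing the two equivalences and gflow-preservations gives the result.

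There is no substantive obstacle here beyond bookkeeping; the only subtle point is checking that the pivot genuinely lands $u$ at a Pauli angle in the \YZ or \XZ plane, which is exactly the content of the two case computations above. The non-input assumption on $v$ is what ensures the pivot can be carried out entirely within MBQC+LC form without introducing complications from input wires.
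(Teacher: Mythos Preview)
Your proposal is correct and follows essentially the same approach as the paper: pivot about $u\sim v$ via Lemma~\ref{lem:pivot-MBQC-form-non-input}, observe that $u$ becomes a \YZ- or \XZ-plane measurement at angle $a\pi$, and then remove it with Lemma~\ref{lem:ZX-remove-YZ-Pauli}. Your case analysis of the post-pivot angle and your tracking of gflow preservation match the paper's argument exactly.
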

\begin{proof}
  We apply a pivot to $uv$ using Lemma~\ref{lem:ZX-pivot} and reduce the diagram to MBQC+LC form with Lemma~\ref{lem:pivot-MBQC-form-non-input}. If the original diagram had gflow, this new diagram will also have gflow.
  As can be seen from Lemma~\ref{lem:pivot-MBQC-form-non-input}, if $\lambda(u) = \XYm$ then $\lambda'(u) = \YZm$ with $\alpha'(u)=\alpha(u) = a\pi$. If instead we had $\lambda(u) = \XZm$ (and thus $\alpha(u) = (-1)^a\frac\pi2$), then $\lambda'(u) = \XZm$, but $\alpha'(u) = \frac\pi2 - \alpha(u) = \frac\pi2 - (-1)^a \frac\pi2 = a\pi$. In both cases, using Lemma~\ref{lem:ZX-remove-YZ-Pauli}, $u$ can be removed while preserving semantics and the existence of gflow.
\end{proof}

\begin{remark}
  The `graph-like' \zxdiagrams studied in Ref.~\cite{cliff-simp} are MBQC+LC form diagrams where every vertex is measured in the \XY plane.
  Our Lemmas~\ref{lem:lc-simp} and \ref{lem:pivot-simp} are generalisations of the work found in Ref.~\cite[Lemmas~5.2 and 5.3]{cliff-simp} and Ref.~\cite[(P2) and (P3)]{tcountpreprint}.
\end{remark}

Combining the three previous lemmas we can remove most non-input Clifford vertices. The exceptions are some non-input Clifford vertices which are only connected to input and output vertices. While it might not always be possible to remove such vertices, when the diagram has a gflow, we can find an equivalent smaller diagram:

\begin{lemma}\label{lem:removeboundaryPauli}
  Let $D$ be a ZX-diagram in MBQC+LC form with vertices $V$ that has a gflow. Let $u$ be a non-input measured vertex that is only connected to input and output vertices. Suppose that either $\ld(u)=\XYm$ with $\alpha(u) = a\pi$ for $a\in \{0,1\}$ or $\ld(u) = \XZm$ with $\alpha(u) = (-1)^a\frac\pi2$. Then there exists an equivalent diagram $D'$ which has gflow and has vertices $V\backslash\{u\}$.
\end{lemma}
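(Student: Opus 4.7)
The plan is to use the gflow of $D$ to locate a suitable non-input neighbour of $u$, then pivot along the edge to it and invoke Lemma~\ref{lem:ZX-remove-YZ-Pauli} to delete $u$.

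First, I would observe that the gflow forces $u$ to have a neighbour in $O\setminus I$. Since $\ld(u)\in\{\XYm,\XZm\}$, both conditions \ref{it:XY} and \ref{it:XZ} require $u\in\odd{G}{g(u)}$, so $u$ has an odd (hence nonzero) number of neighbours in $g(u)$. Because $g(u)\sse\comp{I}$, every such witness lies in $\comp{I}$, and by hypothesis every neighbour of $u$ lies in $I\cup O$. Therefore any witnessing neighbour must lie in $(I\cup O)\cap\comp{I}=O\setminus I$; pick one such vertex $v$, which is simultaneously non-input and an output.

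Next, since $u$ and $v$ are both non-inputs joined by an edge, I would apply Lemma~\ref{lem:ZX-pivot} to pivot about $uv$ and return to MBQC+LC form via Lemma~\ref{lem:pivot-MBQC-form-non-input}; the latter absorbs any Hadamards and $\pi$-phases landing on $v$ (or on common output neighbours of $u$ and $v$) as local Clifford operators on output wires, and preserves gflow. The effect on $u$ follows the case analysis in that lemma: if $\ld(u)=\XYm$ with $\alpha(u)=a\pi$, then afterwards $\ld'(u)=\YZm$ and $\alpha'(u)=a\pi$; if $\ld(u)=\XZm$ with $\alpha(u)=(-1)^a\frac{\pi}{2}$, then afterwards $\ld'(u)=\XZm$ and $\alpha'(u)=\frac{\pi}{2}-(-1)^a\frac{\pi}{2}=a\pi$. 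In either situation $u$ is now measured in the \YZ or \XZ plane at angle $a\pi$, so Lemma~\ref{lem:ZX-remove-YZ-Pauli} applies, removing $u$ while preserving semantics and gflow. The resulting diagram $D'$ has vertex set $V\setminus\{u\}$ and has gflow, as required.

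The only concern I anticipate is checking that the pivot step genuinely returns an MBQC+LC diagram when $v$ is an output rather than an interior measured vertex, and that gflow is preserved in that asymmetric situation. This is exactly what Lemma~\ref{lem:pivot-MBQC-form-non-input} handles: it only requires $u,v\notin I$ and treats output-side Hadamards and $\pi$-phases as permitted local Clifford unitaries, deducing gflow preservation from the componentwise behaviour of local complementation on outputs. Thus no new technical work is needed beyond combining the existing lemmas in the correct order.
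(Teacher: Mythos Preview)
Your proof is correct and follows the same overall structure as the paper's (find a non-input output neighbour of $u$, pivot along that edge, then delete $u$ via Lemma~\ref{lem:ZX-remove-YZ-Pauli}), but the way you locate the required neighbour is genuinely different and more direct. The paper argues semantically: if every neighbour of $u$ were an input, the diagram would factor through a non-invertible projector, contradicting the fact that gflow forces the computation to be an isometry. You instead read the neighbour off the gflow combinatorially: since $\ld(u)\in\{\XYm,\XZm\}$ forces $u\in\odd{G}{g(u)}$, some neighbour of $u$ lies in $g(u)\sse\comp{I}$, and the hypothesis $N_G(u)\sse I\cup O$ then pins it down to $O\setminus I$. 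Your route is more elementary in that it never invokes the determinism/isometry consequence of gflow, staying entirely at the level of Definition~\ref{defGFlow}; the paper's route has the minor advantage of making explicit why the obstruction would be semantically fatal. Either argument suffices, and the pivot-and-delete endgame is identical.
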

\begin{proof}
  We prove the result for $\ld(u)=\XYm$ and $\alpha(u) = a\pi$. The other case is similar.

  We claim that $u$ is connected to at least one output that is not also an input. In order to obtain a contradiction suppose otherwise. The diagram then looks as follows:
  \ctikzfig{ZX-Pauli-projector}
  Here `LC' indicates that there are local Clifford operators on the inputs.
  Since $D$ has gflow, the entire diagram must be (proportional to) an isometry, and hence it must still be an isometry if we remove the local Cliffords on the inputs. But we then have the map
  \ctikzfig{ZX-Pauli-projector2}
  as the first operation in the diagram. This map is a projector, and it is not invertible. This is a contradiction, as the entire diagram cannot then be an isometry.

  Therefore, $u$ must be connected to some output vertex $v$, which is not an input. We can thus perform a pivot about $uv$. This adds a Hadamard operator after $v$, and changes the label of $u$ to \YZ. We can then remove $u$ using Lemma~\ref{lem:ZX-remove-YZ-Pauli}. As all these operations preserve gflow, the resulting diagram still has gflow.
\end{proof}

The following result about removing Pauli measurements (i.e.\ Clifford vertices) from patterns while preserving semantics is already contained in Ref.~\cite[Section III.A]{hein2004multiparty} (if outside the context of MBQC), and is also mentioned in Ref.~\cite{houshmand2018minimal}.
Nevertheless, we are the first to explicitly state the effects of this process on the measurement pattern and the gflow.

\begin{theorem}\label{thm:simplifiedZXdiagram}
  Let $D$ be a ZX-diagram in MBQC+LC form that has gflow. Then we can find an equivalent ZX-diagram $D'$ in MBQC+LC form, which also has gflow and which contains no non-input Clifford vertices.
  The algorithm uses a number of graph operations that is polynomial in the number of vertices of $D$.
\end{theorem}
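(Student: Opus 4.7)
The plan is to iteratively remove non-input Clifford vertices one at a time, applying the four removal lemmas established earlier in this section. Given such a vertex $u$ with $\alpha(u) = k\pi/2$, I case-split on the pair $(\ld(u), k \bmod 2)$. If $\ld(u) \in \{\XZm, \YZm\}$ and $\alpha(u) \in \{0, \pi\}$, Lemma~\ref{lem:ZX-remove-YZ-Pauli} removes $u$ directly. If $\ld(u) \in \{\XYm, \YZm\}$ and $\alpha(u) = \pm\pi/2$, Lemma~\ref{lem:lc-simp} applies. The remaining cases are $\ld(u) = \XYm$ with $\alpha(u) \in \{0, \pi\}$ and $\ld(u) = \XZm$ with $\alpha(u) = \pm\pi/2$: here Lemma~\ref{lem:pivot-simp} applies provided $u$ has a measured non-input neighbour, and Lemma~\ref{lem:removeboundaryPauli} handles the remaining situation in which $u$ is only connected to input and output vertices. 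Every such step produces an equivalent MBQC+LC diagram which still has gflow and has exactly one fewer vertex.

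The critical invariant to establish is that none of these operations creates a new non-input Clifford vertex, so that the total count strictly decreases. Tracing through Lemmas~\ref{lem:lc-MBQC-form-non-input} and~\ref{lem:pivot-MBQC-form-non-input}, every modification to a neighbour's measurement angle is either a shift by an integer multiple of $\pi/2$ or a sign flip, and every change of measurement plane is between members of $\{\XYm, \XZm, \YZm\}$. Since the property of being a multiple of $\pi/2$ is preserved under such modifications, a non-Clifford vertex remains non-Clifford. Lemma~\ref{lem:ZX-remove-YZ-Pauli} likewise only shifts neighbours' angles by $a\pi$, preserving the Clifford/non-Clifford dichotomy. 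Hence induction on the number of non-input Clifford vertices yields the desired diagram $D'$ with none of them.

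For the polynomial complexity, each local complementation or pivot touches $O(|V|^2)$ edges, and each vertex deletion plus accompanying angle adjustment costs $O(|V|)$, so a single removal step uses $O(|V|^2)$ graph operations. Since there are at most $|V|$ non-input Clifford vertices to remove, the total cost is $O(|V|^3)$, which is polynomial in the size of the input diagram.

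The main obstacle I anticipate is the boundary case in which a Clifford vertex is adjacent only to inputs and outputs: here a direct pivot via Lemma~\ref{lem:pivot-simp} is not available because any pivot partner must be a non-input measured vertex. This is precisely what Lemma~\ref{lem:removeboundaryPauli} handles, its proof leveraging the fact that a diagram with gflow must represent an isometry in order to rule out the pathological case where $u$ would be connected exclusively to inputs. Verifying that invoking this lemma keeps us within MBQC+LC form and retains gflow is the only genuinely nontrivial point, but this follows from the bookkeeping already carried out in Lemmas~\ref{lem:pivot-MBQC-form-non-input} and~\ref{lem:ZX-remove-YZ-Pauli}.
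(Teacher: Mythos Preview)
Your proposal is correct and follows essentially the same approach as the paper: apply the four removal lemmas (\ref{lem:ZX-remove-YZ-Pauli}, \ref{lem:lc-simp}, \ref{lem:pivot-simp}, \ref{lem:removeboundaryPauli}) case by case until no non-input Clifford vertex remains, each step preserving equivalence and gflow and costing $O(|V|^2)$ graph operations for a total of $O(|V|^3)$. The only minor difference is your termination argument: the paper simply observes that every step removes some vertex, so the process halts after at most $|V|$ iterations, whereas you additionally verify the invariant that angle shifts by multiples of $\pi/2$ cannot turn a non-Clifford vertex into a Clifford one, so the non-input Clifford count itself strictly decreases---a slightly sharper but unnecessary refinement.
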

\begin{proof}
  Starting with $D$ we simplify the diagram step by step using the following algorithm:
  \begin{enumerate}
  \item Using Lemma~\ref{lem:lc-simp} repeatedly, remove any non-input \YZ or \XY measured vertex with a $\pm \frac\pi2$ phase.
  \item Using Lemma~\ref{lem:ZX-remove-YZ-Pauli} repeatedly, remove any non-input vertex with measured in the \YZ or \XZ plane with angle $a\pi$.
  \item Using Lemma~\ref{lem:pivot-simp} repeatedly, remove any \XY vertex with an $a\pi$ phase and any \XZ vertex with a $\pm \frac\pi2$ phase that are connected to any other internal vertex. If any have been removed, go back to step 1.
  \item If there are non-input measured Clifford vertices that are only connected to boundary vertices, use Lemma~\ref{lem:removeboundaryPauli} to remove them. Then go back to step 1. Otherwise we are done.
\end{enumerate}
By construction there are no internal Clifford vertices left at the end. Every step preserves gflow, so the resulting diagram still has gflow.
As every step removes a vertex, this process terminates in at most $n$ steps, where $n$ is the number of vertices in $D$. Each of the steps possibly requires doing a pivot or local complementation requiring $O(n^2)$ elementary graph operations. Hence, the algorithm takes at most $O(n^3)$ elementary graph operations.
\end{proof}

\begin{theorem}\label{thm:simplifiedMBQCpattern}
  Suppose $(G,I,O,\ld,\alpha)$ is a uniformly deterministic MBQC pattern representing a unitary operation.
  Assume the pattern involves $q$ inputs and $n$ qubits measured at non-Clifford angles, i.e.\ $q:=\abs{I}$ and $n := \abs{\{u\in\comp{O} \mid \alpha(u)\neq k\frac\pi2 \text{ for any } k \in \mathbb{Z}\}}$.
  Then we can find a uniformly deterministic MBQC pattern that implements the same unitary and uses at most $(n+8q)$ measurements.
  This process finishes in a number of steps that is polynomial in the number of vertices of $G$.
\end{theorem}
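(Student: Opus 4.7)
Here is my plan. I would begin by invoking the one-to-one correspondence established in Lemmas~\ref{lem:zx-to-pattern} and~\ref{lem:ogs-to-ZX-is-MBQC-Form} to translate the input pattern to an MBQC-form \zxdiagram $D$ with underlying \LOG\ $\Gamma=(G,I,O,\ld)$ and angles $\alpha$. Since the pattern realises a unitary we have $|I|=|O|=q$; since it is uniformly deterministic (and, because the implemented map is unitary, automatically strongly and stepwise deterministic as well), the converse of Theorem~\ref{t-flow} yields that $\Gamma$ has gflow. Next, I would apply Theorem~\ref{thm:simplifiedZXdiagram} to $D$, obtaining an equivalent MBQC+LC diagram $D'$ that still has gflow and contains no non-input Clifford vertices.

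The crucial step is bounding $|V(D')|$. I would argue that every rewrite used in the proof of Theorem~\ref{thm:simplifiedZXdiagram} --- local complementation (Lemma~\ref{lem:lc-MBQC-form-non-input}), pivoting (Lemma~\ref{lem:pivot-MBQC-form-non-input}), and absorption of an $a\pi$-phase with $a\in\{0,1\}$ (Lemma~\ref{lem:ZX-remove-YZ-Pauli}) --- modifies the angle of any surviving measured vertex only by a sign flip, a $\pm\pi/2$ shift, or an $a\pi$ shift. Since each such operation sends multiples of $\pi/2$ to multiples of $\pi/2$ and non-multiples to non-multiples, the Clifford character of every surviving vertex is invariant under the simplification. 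Because the algorithm only ever deletes vertices, and neither inputs nor outputs are removed, the number of non-Clifford measured vertices in $D'$ is at most $n$. Combining this with $|I|=|O|=q$ and the fact (from Theorem~\ref{thm:simplifiedZXdiagram}) that the non-input measured vertices of $D'$ are all non-Clifford, we obtain $|V(D')| \leq n + 2q$.

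Finally, I would convert $D'$ back to MBQC form via Lemma~\ref{lem:SQU-to-MBQC-form}, giving $D''$ with $|V(D'')| \leq |V(D')| + 2|I| + 4|O| \leq n + 8q$. The proposition immediately following Lemma~\ref{lem:SQU-to-MBQC-form} ensures $D''$ retains gflow. Translating $D''$ back into a runnable pattern using Lemma~\ref{lem:zx-to-pattern}, Theorem~\ref{t-flow} guarantees that the resulting pattern is uniformly, strongly, and stepwise deterministic, while Lemma~\ref{lem:zx-equals-linear-map} guarantees that it implements the original unitary. Each stage of the construction is polynomial-time: the algorithm of Theorem~\ref{thm:simplifiedZXdiagram} is $O(|V|^3)$ and Lemma~\ref{lem:SQU-to-MBQC-form} adds only $O(|I|+|O|)$ qubits, so the full procedure is polynomial in the number of vertices of $G$.

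The main obstacle is the Clifford-preservation argument: one must traverse the angle update rules in the local-complementation, pivoting, and $a\pi$-absorption lemmas and verify in each case that a multiple of $\pi/2$ is mapped to a multiple of $\pi/2$ on every surviving vertex. Once that invariant is established, the remainder is routine bookkeeping: stringing together the explicit bound from Theorem~\ref{thm:simplifiedZXdiagram} with the $n+2p+4q$ estimate from Lemma~\ref{lem:SQU-to-MBQC-form}.
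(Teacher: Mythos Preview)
Your proposal is correct and follows essentially the same route as the paper: translate to a \zxdiagram, apply Theorem~\ref{thm:simplifiedZXdiagram}, bound the vertex count by $n+2q$, then use Lemma~\ref{lem:SQU-to-MBQC-form} to absorb the local Cliffords at a cost of $2q+4q$ extra vertices, and translate back. The paper's proof is terser on the key $n+2q$ bound---it simply asserts that since $D'$ has no non-input Clifford vertices, the internal part has at most $n$ vertices---whereas you spell out the underlying invariant (all angle updates in Lemmas~\ref{lem:lc-MBQC-form-non-input}, \ref{lem:pivot-MBQC-form-non-input}, and~\ref{lem:ZX-remove-YZ-Pauli} are by multiples of $\pi/2$, so Clifford status is preserved and the non-Clifford count never increases); this is a welcome clarification of a step the paper leaves implicit.
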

\begin{proof}
  Let $D$ be the ZX-diagram in MBQC form from Lemma~\ref{lem:zx-equals-linear-map} that implements the same unitary as the MBQC pattern $\pat:=(G,I,O,\ld,\alpha)$.
  Since $\pat$ is uniformly deterministic, it has gflow, and hence $D$ also has gflow by Definition~\ref{dfn:zx-gflow}.
  Let $D'$ be the ZX-diagram in MBQC+LC form produced by Theorem~\ref{thm:simplifiedZXdiagram}.
  Since $D'$ has no internal Clifford vertices, its MBQC-form part can have at most $n$ internal vertices.
  It may still have boundary Clifford vertices, and by unitarity $\abs{O}=\abs{I}=q$, so the MBQC-form part contains at most $(n+2q)$ vertices.

  Denote by $D''$ the MBQC-form diagram produced by applying Lemma~\ref{lem:SQU-to-MBQC-form} to $D'$.
  Then $D''$ has at most $((n+2q)+6q)$ vertices in its MBQC-form part.

  We can construct a new pattern $\pat'$ from $D''$ using Lemma~\ref{lem:zx-to-pattern}.
  As $D''$ has gflow, $\pat'$ also has gflow, and hence is uniformly deterministic.
  The new pattern $\pat'$ involves at most $(n+8q)$ qubits.

  For the complexity of these operations:
  \begin{itemize}
    \item Constructing $D$ from $\pat$ takes $O(\abs{G})$ operations
    \item Constructing $D'$ from $D$ takes $O(\abs{G}^3)$ operations
    \item Constructing $D''$ from $D$ takes $O(\abs{G})$ operations
    \item Constructing $\pat'$ from $D''$ takes $O(\abs{G})$ operations
  \end{itemize}
  So the entire process uses $O(\abs{G}^3)$ graph operations.
\end{proof}

\subsection{Phase-gadget form}\label{sec:phasegadgetform}
Using the local complementation and pivoting rules of Section~\ref{sec:pattern-local-complementation} we can transform the geometry of MBQC+LC form diagrams so that they no longer contain any vertices measured in the \XZ plane, nor will \YZ vertices be connected.

\begin{definition}\label{def:phasegadgetform}
 An MBQC+LC diagram is in \emph{phase-gadget form} if
 \begin{itemize}
  \item there does not exist any $v\in\comp{O}$ such that $\ld(v) = \XZm$, and
  \item there does not exist any pair of neighbours $v,w\in\comp{O}$ such that $\ld(v)=\ld(w)=\YZm$.
 \end{itemize}
\end{definition}
The name `phase gadget' refers to a particular configuration of spiders in the \zxcalculus, which, in our setting, corresponds to spiders measured in the \YZ plane. Phase gadgets have been used particularly in the study of circuit optimisation~\cite{phaseGadgetSynth,tcountpreprint,pi4parity}.

In Section~\ref{sec:further-opt} we introduce another form,
called \emph{reduced} (Definition~\ref{def:reduced-form}),
which requires the pattern to be in phase-gadget form.

\begin{example}
  The following MBQC+LC diagram is in phase-gadget form.
  \ctikzfig{example-phase-gadget-form}
\end{example}

\begin{proposition}\label{prop:ZXtophasegadgetform}
Let $D$ be a ZX-diagram in MBQC+LC form with gflow.
Then we can find an equivalent ZX-diagram $D'$ in MBQC+LC form that has gflow and is in phase-gadget form.
This process takes a number of steps that is polynomial in the number of vertices of $D$.
\end{proposition}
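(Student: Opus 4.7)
The plan is to present a two-phase algorithm: Phase~1 eliminates all non-output vertices measured in the \XZ plane, and Phase~2 eliminates all adjacent pairs of non-output vertices that are both measured in the \YZ plane. Each phase will rely on the MBQC+LC rewrites from Section~\ref{sec:pattern-local-complementation}: local complementations (Lemma~\ref{lem:lc-MBQC-form-non-input} and Proposition~\ref{prop:MBQC-lc-MBQC}) and pivots (Lemma~\ref{lem:pivot-MBQC-form-non-input}), together with the input-replacement step (Lemma~\ref{lem:gflow-add-input}) whenever a rewrite would otherwise have to act on an input vertex. Each of these rewrites preserves semantics, keeps the diagram in MBQC+LC form, and preserves gflow, so the resulting $D'$ will automatically satisfy all the required properties.

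For Phase~1 the idea is: while there exists a non-output vertex $v$ with $\ld(v)=\XZm$, locally complement about $v$. By Lemma~\ref{lem:lc-MBQC-form-non-input}, this flips $v$ from \XZ to \XY, sends each \XZ neighbour of $v$ to \YZ, sends each \YZ neighbour of $v$ to \XZ, and leaves all \XY neighbours unchanged. The key observation is that no non-output vertex is ever moved out of the \XY plane by such a local complementation; therefore the number of non-output \XY vertices strictly grows by at least one per iteration, guaranteeing termination after at most $O(|V|)$ steps, at which point no non-output \XZ vertex can remain.

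For Phase~2 the idea is: while there exist adjacent non-output vertices $v\sim w$ with $\ld(v)=\ld(w)=\YZm$, pivot about $vw$. By Lemma~\ref{lem:pivot-MBQC-form-non-input}, both $v$ and $w$ switch from \YZ to \XY, while every other vertex's measurement plane is untouched (common neighbours only pick up a $\pi$ phase, which does not change their label). Consequently Phase~2 cannot create any new \XZ vertex, so the invariant achieved in Phase~1 is preserved, and the number of non-output \YZ vertices strictly decreases by two per pivot, giving termination in at most $O(|V|)$ iterations.

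The main obstacle to watch out for is the interaction between the two phases: naively one might worry that pivots in Phase~2 could resurrect \XZ vertices, or that the edge changes produced by a pivot could create new \YZ--\YZ adjacencies faster than the phase is removing them. Both concerns are ruled out by the precise plane-change bookkeeping in Lemma~\ref{lem:pivot-MBQC-form-non-input}, where only the two pivot vertices themselves ever change plane label. Each elementary graph operation takes polynomially many steps in $|V|$, and the input-replacement steps contribute at most $O(|I|)$ extra vertices in total, so the whole procedure runs in time polynomial in the size of $D$, producing an equivalent MBQC+LC diagram $D'$ in phase-gadget form with gflow.
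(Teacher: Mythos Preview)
Your proof is correct and uses essentially the same ingredients as the paper: local complementation on an \XZ vertex (turning it into \XY while swapping \XZ$\leftrightarrow$\YZ among its neighbours) and pivoting on an adjacent \YZ pair (turning both into \XY and leaving all other planes untouched), with termination driven by the monotone increase in the number of \XY-labelled non-outputs.

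The differences are purely organisational. The paper interleaves the two cases in a single loop, prioritising the \YZ-pair pivot over the \XZ local complementation, and uses the single invariant ``number of non-\XY vertices strictly decreases'' for termination; you split the work into two sequential phases and track a different monotone quantity in each. Both orderings work because neither operation ever moves a vertex out of the \XY plane. One simplification you could adopt from the paper: your input-replacement fallback via Lemma~\ref{lem:gflow-add-input} is never actually triggered, since any non-output vertex $v$ with $\ld(v)\in\{\XZm,\YZm\}$ satisfies $v\in g(v)\subseteq\comp{I}$ by \ref{it:XZ}/\ref{it:YZ}, so such a $v$ is automatically a non-input. Noting this upfront lets you invoke Lemmas~\ref{lem:lc-MBQC-form-non-input} and~\ref{lem:pivot-MBQC-form-non-input} directly and drop the $O(|I|)$ extra-vertex bookkeeping.
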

\begin{proof}
Set $D_0:=D$ and iteratively construct the diagram $D_{k+1}$ based on $D_k$.

\begin{itemize}
\item
  If the diagram $D_k$ contains a pair of vertices $u \sim v$ that are both measured in the \YZ-plane:
  First, note that any input vertex $w$ has $\ld(w) = \XYm$, as
  otherwise $w \in g(w)$ contradicting the co-domain of $g$ as given in Definition~\ref{defGFlow}.
  Therefore $u,v \notin I$.
  Let $D_{k+1}$ be the diagram that results from pivoting about the edge $u \sim v$ (Lemma~\ref{lem:ZX-pivot})
  and then transforming to MBQC+LC form (Lemma~\ref{lem:pivot-MBQC-form-non-input}.)
  This changes the measurement plane for $u$ and $v$ from \YZ to \XY
  and it does not affect the measurement planes for any other vertices:
  \ctikzfig{rm-adj-red}

\item
  If there is no such connected pair but there is some vertex $u$ that is measured in the \XZ-plane:
  Note that $u$ cannot be an input vertex by the same reasoning as in the first subcase.
  Let $D_{k+1}$ be the diagram that results from applying a local complementation
  on $u$ and transforming back to MBQC+LC form (Lemmas~\ref{lem:ZX-lcomp} and \ref{lem:lc-MBQC-form-non-input}).
  \ctikzfig{rm-adj-red2}

  As can be seen from Lemma~\ref{lem:lc-MBQC-form-non-input},
  this process changes the measurement plane of $u$ from \XZ to \YZ
  and it does not affect the labels of any vertices that are measured in the \XY-plane.
\item
  If there is no such connected pair, nor any vertex that is measured in the \XZ-plane
  then $D_k$ is already in the desired form, so halt.
\end{itemize}

The number of vertices not measured in the \XY-plane decreases with each step,
and no vertices are added, so this process terminates in at most $n$ steps, where $n$ is the number of vertices in $D$.
Each step requires checking every pair of vertices,
or performing local complementation,
each of which have complexity $O(n^2)$, so the total complexity is $O(n^3)$.
Since a pivot is just a sequence of local complementations,
$D_{k+1}$ has gflow if $D_k$ had gflow
(Proposition~\ref{prop:MBQC-lc-MBQC}).
Finally every step preserves equivalence, so $D_{k+1}$ is equivalent to $D_k$.
\end{proof}

Proposition~\ref{prop:ZXtophasegadgetform} finds a phase-gadget form for an MBQC+LC diagram, but note that the phase-gadget form is not guaranteed to be unique.

\subsection{Further pattern optimisation}\label{sec:further-opt}
In Section~\ref{sec:removecliffordqubits} we saw that we can remove all non-input
Clifford qubits from a pattern while preserving determinism and the computation
implemented by the pattern, up to local Clifford transformations.
We will show in this section that it is also possible to remove certain qubits
measured in non-Clifford angles.

These measurement pattern rewrite rules, seen then as transformations of ZX-diagrams, were used in Ref.~\cite{tcountpreprint} to reduce the T-count of circuits. We will see how in our context they can be used to remove a qubit from a pattern, again while preserving determinism.
First of all, any internal \YZ vertex with just one neighbour can be fused with this neighbour, resulting in the removal of the \YZ vertex:

\begin{lemma}\label{lem:removeidvertex}
    Let $D$ be an MBQC+LC diagram with an interior vertex $u$ measured in the \YZ plane, and suppose it has a single neighbour $v$, which is measured in the \XY plane. Then there is an equivalent MBQC+LC diagram $D'$ with $G(D') = G(D)\setminus \{u\}$. If $D$ has gflow, then $D'$ also has gflow.
\end{lemma}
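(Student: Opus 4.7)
The plan is to verify the equivalence $D \equiv D'$ by a purely local ZX-calculus rewrite around $u$ and its unique neighbor $v$, and then invoke Lemma~\ref{lem:deletepreservegflow} to transport gflow.

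First I would unfold the relevant portion of $D$ using the MBQC-form encoding of Table~\ref{tab:MBQC-to-ZX}. The vertex $u$ appears as a green spider with a single graph-state edge (a Hadamard edge to $v$) and a pendant YZ measurement effect with angle $\alpha(u)$; since $u$ is interior it has no input or output wire. The vertex $v$ appears as a green spider carrying its XY measurement effect with angle $\alpha(v)$, its (possible) input wire, and Hadamard edges to its other graph-state neighbors. Because $N_{G(D)}(u)=\{v\}$, the only part of the diagram affected by the rewrite is this subdiagram.

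Next I would apply the standard ZX identity that collapses a degenerate phase gadget: the configuration consisting of $u$ (a phase-free green spider), its Hadamard edge to $v$, and the YZ effect with angle $\alpha(u)$ rewrites, using \hadamardrule, \hcancelrule, and \spiderrule, to a single green spider of phase $\alpha(u)$ sitting on the wire at $v$. This new phase then merges with the XY measurement effect of $v$ via \spiderrule, producing an XY measurement effect on $v$ with angle $\alpha(v)+\alpha(u)$ (and leaving the plane unchanged). The resulting diagram $D'$ is manifestly in MBQC+LC form, its underlying graph is $G(D)\setminus\{u\}$, and all other measurement effects and local Cliffords are untouched, so $\intf{D'} = \intf{D}$ (up to the usual global scalar).

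For the gflow claim, since $\ld(u)=\YZm\neq\XYm$, Lemma~\ref{lem:deletepreservegflow} directly provides a gflow on $(G(D)\setminus\{u\},I,O,\ld)$ from any gflow of $D$, and this is precisely the underlying \LOG\ of $D'$ (with $\ld(v)$ still $\XYm$). The main obstacle is simply writing out the YZ effect in ZX notation cleanly enough to see the phase-gadget collapse; once this is done the argument is routine, and in particular no assumption on the surrounding neighborhood of $v$ is needed because the rewrite is entirely local to the edge $u\sim v$.
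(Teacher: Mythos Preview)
Your proposal is correct and follows essentially the same approach as the paper: perform a local ZX rewrite that collapses the single-neighbour \YZ vertex $u$ into the \XY measurement of its neighbour $v$ (adding $\alpha(u)$ to $\alpha(v)$), and then invoke Lemma~\ref{lem:deletepreservegflow} to carry the gflow to $G(D)\setminus\{u\}$. The paper's proof is terser, simply displaying the rewrite and citing the same lemma, but your more explicit unpacking of the ZX rules is a faithful elaboration of the identical argument.
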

\begin{proof}
    We apply the following rewrite:
    \ctikzfig{id-simp-1}
    The resulting diagram is again an MBQC+LC diagram. The change to the \LOG\ comes down to deleting a YZ vertex. By Lemma~\ref{lem:deletepreservegflow} this preserves gflow.
\end{proof}

Note that, by Proposition~\ref{prop:XY-neighbours}, if the diagram has gflow and an equal numbers of inputs and outputs, then it has no internal \XY vertices with just one neighbour.
Thus, if the diagram is in phase-gadget form (cf.\ Definition~\ref{def:phasegadgetform}), the above lemma allows us to remove all internal vertices which have a single neighbour.

Our second rewrite rule allows us to also `fuse' \YZ vertices that have the same set of neighbours:

\begin{lemma}\label{lem:removepairedgadgets}
    Let $D$ be an MBQC+LC diagram with two distinct interior vertices $u$ and $v$, both measured in the YZ plane and with $N(u) = N(v)$. Then there is an equivalent diagram $D'$ with $G(D') = G(D)\setminus\{u\}$. If $D$ has gflow, then $D'$ also has gflow.
\end{lemma}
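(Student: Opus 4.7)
The plan is to fuse the two \YZ-measurement gadgets into a single one via \zxcalculus rewrites, thereby eliminating $u$ from the diagram, and then appeal directly to Lemma~\ref{lem:deletepreservegflow} for gflow preservation.

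First, I would expand the \YZ-measurement effects on $u$ and $v$ using Table~\ref{tab:MBQC-to-ZX}. This exhibits each of $u, v$ as a green graph-state vertex connected via Hadamard edges to the shared neighbourhood $N(u) = N(v) = \{w_1, \ldots, w_k\}$, carrying a small phase-gadget subdiagram that encodes its measurement angle. The resulting local picture is two parallel phase gadgets with identical legs $\{w_1, \ldots, w_k\}$.

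The key step is the standard phase-gadget fusion: two phase gadgets on the same set of legs can be merged into a single phase gadget whose angle is the sum of the two original angles. This follows by repeated application of the spider rule \SpiderRule\ to bring the two green endpoints together (after using \HadamardRule\ and \HCancelRule\ to pair up the doubled Hadamard edges at each shared neighbour $w_i$), then fusing again at the green phase spiders inside the two gadgets. After fusion, $u$ has disappeared and $v$ remains as a single \YZ-measured graph-state vertex with updated angle $\alpha'(v) = \alpha(u) + \alpha(v)$. The resulting \zxdiagram\ $D'$ is in MBQC+LC form, satisfies $G(D') = G(D) \setminus \{u\}$, and is equivalent to $D$ because every step was a \zxcalculus\ rewrite.

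For the gflow claim, note that $u$ is an internal vertex with $\ld(u) = \YZm \neq \XYm$, so Lemma~\ref{lem:deletepreservegflow} applies verbatim and produces a gflow for $(G \setminus \{u\}, I, O, \ld)$ from the given gflow on $(G, I, O, \ld)$. The main subtle point is the gadget-fusion step itself, which requires carefully tracking Hadamard edges at each shared neighbour $w_i$ and cancelling the resulting doubled Hadamards; this is routine once the \YZ-effect is unfolded, but writing it out cleanly takes several lines of rewriting with \SpiderRule, \HadamardRule, and \HCancelRule, and is the only non-trivial part of the argument.
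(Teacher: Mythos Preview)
Your proposal is correct and follows essentially the same approach as the paper: perform a \zxcalculus\ rewrite that fuses the two \YZ\ phase gadgets with identical neighbourhoods into a single gadget with summed angle, observe that the resulting diagram is still in MBQC+LC form with $u$ deleted, and then invoke Lemma~\ref{lem:deletepreservegflow} for gflow preservation. The paper presents the fusion step as an explicit chain of diagram equalities rather than naming individual rules, but the structure of the argument is the same.
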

\begin{proof}
    We apply the following rewrite:
    \ctikzfig{gadget-simp}
    A straightforward sequence of \zxcalculus transformations shows this rewrite preserves semantics:
    \begin{equation*}
    \scalebox{0.9}{\tikzfig{gf-proof}}
    \end{equation*}
    The new diagram is still an MBQC+LC diagram, and the only change in the underlying \LOG\ is the deletion of a \YZ vertex. Hence, by Lemma~\ref{lem:deletepreservegflow}, this rewrite preserves gflow.
\end{proof}

The analogous result is not true for a pair of \XY vertices. However, when the diagram has gflow, such pairs cannot exist to begin with:
\begin{lemma}\label{lem:nopairedXYvertices}
    Let $G$ be a \LOG{} with gflow and distinct interior vertices $u$ and $v$ both measured in the \XY plane. Then $N(u) \neq N(v)$.
\end{lemma}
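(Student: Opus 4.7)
The plan is to derive a contradiction by invoking the focused gflow of Proposition~\ref{prop:focused-gflow} together with a simple parity observation. Assume for contradiction that $u \neq v$ are distinct interior vertices with $\ld(u) = \ld(v) = \XYm$ and $N(u) = N(v)$.

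Since $G$ has a gflow, Proposition~\ref{prop:focused-gflow} furnishes a (maximally delayed) gflow $(g,\prec)$ with the focusing property that for every $w \in V$ and every $x \in \odd{G}{g(w)} \cap \comp{O}$, either $x = w$ or $\ld(x) \neq \XYm$. Applying condition \ref{it:XY} to $v$ gives $v \in \odd{G}{g(v)}$, i.e.\ $|N(v) \cap g(v)|$ is odd.

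The key step is the parity observation: because $N(u) = N(v)$, we have $|N(u) \cap g(v)| = |N(v) \cap g(v)|$, which is odd, so $u \in \odd{G}{g(v)}$. Since $u$ is interior, $u \in \comp{O}$, and by hypothesis $u \neq v$. The focusing property (applied to the correction set of $v$) then forces $\ld(u) \neq \XYm$, contradicting $\ld(u) = \XYm$.

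I do not expect any real obstacle here: the entire argument rests on the trivial fact that two vertices with identical neighbourhoods contribute equally to the parity of $|N(\cdot) \cap K|$ for any $K$, so they lie in $\odd{G}{K}$ simultaneously. The only care needed is to ensure the proposition we are invoking is applicable (it is, thanks to the assumed gflow) and that $u \in \comp{O}$ (which is immediate from $u$ being interior). Everything else is a direct reading of conditions \ref{it:XY} and the focusing property.
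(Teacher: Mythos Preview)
Your proof is correct, but it takes a different route from the paper's. You invoke the focused gflow of Proposition~\ref{prop:focused-gflow} and let the focusing property do the work: once you know $u\in\odd{G}{g(v)}\cap\comp{O}$ with $u\neq v$, the focusing condition immediately forces $\ld(u)\neq\XYm$. The paper instead argues directly from the raw gflow axioms, without any focusing. It makes the same parity observation (namely $u\in\odd{}{S}\iff v\in\odd{}{S}$ for every $S$), then applies it twice: from $u\in\odd{}{g(u)}$ it deduces $v\in\odd{}{g(u)}$ and hence $u\prec v$ by \ref{it:odd}; by symmetry $v\prec u$, contradicting that $\prec$ is a strict partial order. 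The paper's argument is more elementary in that it avoids the machinery of Proposition~\ref{prop:focused-gflow}; yours is a clean one-shot application of a stronger tool already established. Both are short and valid.
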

\begin{proof}
	Assume for a contradiction that $N(u)=N(v)$ and that the diagram has gflow. Note that, for any subset of vertices $S$, we have $u\in \odd{}{S} \iff v\in \odd{}{S}$. In particular, as $u\in \odd{}{g(u)}$ by \ref{it:XY}, we have $v\in \odd{}{g(u)}$ and thus $u\prec v$ by \ref{it:odd}. Yet, swapping $u$ and $v$ in the above argument, we also find $v\prec u$, a contradiction.
	Thus, if the diagram has gflow, distinct vertices $u$ and $v$ must have distinct neighbourhoods $N(u) \neq N(v)$.\end{proof}

We can now combine these rewrite rules with our previous results to get a more powerful rewrite strategy:

\begin{definition} \label{def:reduced-form}
    Let $D$ be an MBQC+LC diagram. We say $D$ is \emph{reduced} when:
    \begin{itemize}
        \item It is in phase-gadget form (see Definition~\ref{def:phasegadgetform}).
        \item It has no internal Clifford vertices.
        \item Every internal vertex has more than one neighbour.
        \item If two distinct vertices are measured in the same plane, they have different sets of neighbours.
    \end{itemize}
\end{definition}

\begin{theorem}\label{thm:optimisation}
    Let $D$ be an MBQC+LC diagram with gflow and equal numbers of inputs and outputs.
    Then we can find an equivalent diagram $D'$ that is reduced and has gflow.
    This process finishes in a number of steps that is polynomial in the number of vertices of $D$.
\end{theorem}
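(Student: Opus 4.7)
The plan is to give an iterative algorithm that interleaves the four simplification primitives developed earlier in the section, and to control its termination by tracking the number of vertices. Each individual primitive has already been shown to preserve both semantics and the existence of gflow, so the only real work is to orchestrate them and argue polynomial complexity.

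Concretely, I would run the following loop on $D$. At each iteration, test the four clauses of Definition~\ref{def:reduced-form} in order, and apply the first simplification whose precondition is met:
\begin{enumerate}
\item[(i)] If the current diagram contains an internal Clifford vertex, invoke Theorem~\ref{thm:simplifiedZXdiagram} to strip all such vertices; this removes at least one vertex and preserves gflow.
\item[(ii)] Else, if the diagram is not in phase-gadget form, invoke Proposition~\ref{prop:ZXtophasegadgetform}; this preserves vertex count and gflow and leaves the diagram in phase-gadget form. Note that the pivot and local-complementation steps used there only shift angles by $\pm\pi/2$ or negate them, so non-Clifford angles stay non-Clifford, and clause (i) is not re-violated by this step.
\item[(iii)] Else, if some internal vertex has only one neighbour, use Lemma~\ref{lem:removeidvertex} to delete it. By Proposition~\ref{prop:XY-neighbours} (whose hypothesis $|I|=|O|$ is given), no internal \XY vertex can have a single neighbour, and since we are in phase-gadget form there are no internal \XZ vertices, so the vertex must be \YZ and Lemma~\ref{lem:removeidvertex} applies.
\item[(iv)] Else, if two distinct internal vertices share the same neighbourhood and the same measurement plane, fuse them. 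By Lemma~\ref{lem:nopairedXYvertices}, this can only occur for \YZ vertices (there are no \XZ ones in phase-gadget form), and Lemma~\ref{lem:removepairedgadgets} handles this case.
\end{enumerate}
If none of (i)--(iv) apply, the diagram satisfies all four bullets of Definition~\ref{def:reduced-form} and we output it as $D'$.

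For termination, observe that each of (i), (iii), (iv) strictly decreases $|V(D)|$, while (ii) leaves $|V(D)|$ unchanged but strictly decreases the number of non-\XY-measured vertices plus adjacent \YZ pairs (cf.~the argument in the proof of Proposition~\ref{prop:ZXtophasegadgetform}). In particular, between any two successive invocations of (ii) either we terminate or we subsequently apply one of (i), (iii), (iv) and reduce $|V(D)|$; hence (ii) is invoked at most $|V(D_0)|$ times, and the other three primitives are each invoked at most $|V(D_0)|$ times. Since each primitive itself runs in polynomial time in the current vertex count (by the complexity bounds in Theorem~\ref{thm:simplifiedZXdiagram}, Proposition~\ref{prop:ZXtophasegadgetform}, and the cited lemmas), the total cost is polynomial in $|V(D_0)|$.

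The main subtlety, and the only genuine content beyond citing earlier results, is verifying that the four simplification steps can indeed be scheduled without interfering destructively with one another. The delicate point is that Lemma~\ref{lem:removeidvertex} and Lemma~\ref{lem:removepairedgadgets} both fuse \YZ-phases onto neighbouring \XY-vertices, which could produce a fresh internal Clifford vertex; this is why the loop must return to clause (i) rather than proceed linearly. Similarly, running (i) after (ii) may temporarily break phase-gadget form via the pivots and local complementations used to eliminate Cliffords, which is why (ii) is re-run when needed. The termination argument above shows that this back-and-forth is nevertheless bounded, so the algorithm halts and produces a reduced diagram with gflow equivalent to $D$.
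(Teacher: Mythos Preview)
Your proposal is correct and follows essentially the same approach as the paper: both iterate the same four primitives (Theorem~\ref{thm:simplifiedZXdiagram}, Proposition~\ref{prop:ZXtophasegadgetform}, Lemma~\ref{lem:removeidvertex}, Lemma~\ref{lem:removepairedgadgets}) in a loop, returning to Clifford removal whenever the fusion steps produce fresh Clifford angles. The only cosmetic difference is in the termination bookkeeping---the paper uses a lexicographic order on (vertex count, number of non-\XY vertices, number of \XY vertices), whereas you argue directly that the vertex count drops between successive invocations of (ii)---but both arguments are straightforward and yield the same $O(n^4)$ bound.
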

\begin{proof}
    Starting with $D$, we simplify the diagram step by step with the following algorithm:
    \begin{enumerate}
        \item Apply Theorem~\ref{thm:simplifiedZXdiagram} to remove all interior Clifford vertices.
        \item Apply Proposition~\ref{prop:ZXtophasegadgetform} to bring the diagram into phase-gadget form. Then every vertex is of type \YZ or \XY, and the \YZ vertices are only connected to \XY vertices.
        \item Apply Lemma~\ref{lem:removeidvertex} to remove any YZ vertex that has a single neighbour.
        \item Apply apply Lemma~\ref{lem:removepairedgadgets} to merge any pair of \YZ vertices that have the same set of neighbours.
        \item If the application of these lemmas resulted in any new internal Clifford vertices, go back to step 1. Otherwise we are done.
    \end{enumerate}
    Each of the steps preserves gflow, and hence at every stage of the algorithm the diagram has gflow.
    By construction,
    when the algorithm has terminated,
    every vertex is now of type \YZ or \XY, and \YZ vertices are only connected to \XY vertices.
    Furthermore, every \YZ vertex must have more than one neighbour and have a different set of neighbours than any other \YZ vertex.
    This is also true for the \XY vertices by the existence of gflow and the requirement that the number of inputs match the number of outputs (using Lemma~\ref{lem:nopairedXYvertices} and Proposition~\ref{prop:XY-neighbours}).
    Hence, the resulting diagram has all the properties needed for it to be reduced.

    To show this process terminates consider the lexicographic order:
    \begin{itemize}
      \item Number of vertices in the diagram
      \item Number of vertices measured in the XZ or YZ planes
      \item Number of vertices measured in the XY plane
    \end{itemize}
    The result of each step of the algorithm on this order is:
    \begin{itemize}
      \item Applying Theorem~\ref{thm:simplifiedZXdiagram} reduces the number of vertices in the diagram,
      while possibly increasing the number of vertices in any given plane.
      \item Applying Proposition~\ref{prop:ZXtophasegadgetform} reduces the number of vertices in the XZ or YZ planes,
      while possibly increasing the number of vertices in the XY plane.
      \item Applying Lemmas~\ref{lem:removeidvertex} and \ref{lem:removepairedgadgets}
      reduces the number of vertices in the diagram,
      and the number of vertices measured in the YZ plane.
    \end{itemize}
    Therefore each step in the algorithm reduces our order, so the process terminates.
    Writing $n$ for the number of vertices in $D$
    we see that the algorithmic loop can be called at most $n$ times (since we remove vertices each iteration),
    and each of the steps in the loop take at most $O(n^3)$ operations,
    giving a total complexity of $O(n^4)$.
\end{proof}

\begin{remark}
    The algorithm described above uses the same idea as that described in Ref.~\cite{tcountpreprint}. But while they describe the procedure in terms of modifying a graph-like \zxdiagram, we describe it for MBQC+LC diagrams, a more general class of diagrams. Furthermore, we prove that the procedure preserves the existence of gflow. The existence of gflow is used in the next section to show how to recover a circuit from an MBQC+LC diagram.
\end{remark}

\section{Circuit extraction}\label{sec:circuitextract}

In this section we will see that we can extract a circuit from a measurement pattern whose corresponding \LOG\ has a gflow.

There are several results known about extracting circuits from measurement patterns. In the original paper defining the flow condition for measurement patterns~\cite{Danos2006Determinism-in-}, an ancilla-free circuit extraction algorithm for such patterns was presented. A number of procedures have also been found to extract circuits containing ancillae from XY-plane patterns~\cite{di2016parallelizing,daSilva2013compact,BKP10}.
In Ref.~\cite{duncan2010rewriting} an algorithm was sketched to extract an ancilla-free circuit from a XY-plane pattern with gflow. A version of this algorithm was further worked out in~\cite{miyazaki2015analysis}.
Independently, an algorithm for the same task was found in Ref.~\cite{cliff-simp} that uses the ZX-calculus to derive correctness.
Our algorithm modifies that of Ref.~\cite{cliff-simp} so that it can handle measurements in multiples planes (and not just the \XY plane).
While the algorithm itself is a relatively straightforward extension of that of Ref.~\cite{cliff-simp}, its proof of correctness relies on the new notion of focused gflow that we found in Section~\ref{sec:focusing-extended-gflow} and several other lemmas regarding extended gflow that are not obvious.

Instead of describing the algorithm for measurement patterns, we describe it for the more convenient form of MBQC+LC diagrams.
The general idea is that we modify the diagram vertex by vertex to bring it closer and closer to resembling a circuit. We start at the outputs of the diagram and work our way to the inputs. The gflow informs the choice of which vertex is next in line to be `extracted' (specifically, this will always be a vertex maximal in the gflow partial order).
By applying various transformations to the diagram, we change it so that the targeted vertex can easily be pulled out of the MBQC-form part of the diagram and into the circuit-like part. The remaining MBQC-form diagram is then one vertex smaller. Since all the transformations preserve gflow, we can then repeat the procedure on this smaller diagram until we are finished.

Before we explain the extraction algorithm in detail in Section~\ref{sec:generalextractalgorithm}, we state some relevant lemmas.

\begin{lemma}\label{lem:cnotgflow}
  The following equation holds:
  \begin{equation}
  \tikzfig{cnot-pivot}
  \end{equation}
  where $M$ is the biadjacency matrix of the output vertices to the neighbours of $D$, and $M^\prime$ is the matrix produced from $M$ by adding row~1 to row~2, modulo~2. If the full diagram on the LHS has gflow, then so does the RHS.
\end{lemma}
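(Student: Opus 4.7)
I would prove the equation diagrammatically in the ZX-calculus, and then handle the gflow claim separately. First, expand the CNOT into its standard ZX form — a green spider (control) joined by a plain wire to a red spider (target). Using the spider-fusion rule \SpiderRule, fuse the green CNOT-control into the first output spider $o_1$. The red target $r$ now sits between $o_1$ (connected by a plain wire) and the second output spider $o_2$ (also via a plain wire), with the outgoing output wire emerging from $r$.

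The main technical step is a ``push-through'' claim: for a red spider $r$ adjacent to a green spider $o_1$ whose other edges are Hadamard edges to a set $N_1 \coloneqq N_G(o_1)$ of internal vertices, $r$ can be commuted past all those Hadamard edges, installing a Hadamard edge from $r$ to each $v \in N_1$ while simultaneously removing the edge $o_1 \sim v$. I would establish this by induction on $|N_1|$, with each inductive step expanding one Hadamard edge into a Hadamard box (via \HadamardRule and \HHRule), unfolding the box into a pair of spiders, and applying the strong-complementarity rule \BialgRule to move $r$ past a single leg of $o_1$. After the push-through, $r$ is connected to $o_2$ by a plain wire and to each $v \in N_1$ by a Hadamard edge, while $o_1$ is disconnected from $N_1$. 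Absorbing $r$ back into $o_2$ via \SpiderRule (after a final colour change using \HadamardRule) returns the diagram to MBQC+LC form. The net effect on the output-to-neighbour biadjacency matrix is precisely $\text{row}_2 \mapsto \text{row}_1 + \text{row}_2 \pmod{2}$, that is $M \mapsto M'$, as required.

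For the gflow claim, let $(g, \prec)$ be a gflow of the \LOG{} underlying the LHS. The graph change from $G$ to $G'$ toggles only edges of the form $o_2 \sim v$ with $v \in N_G(o_1)$; crucially, no edge incident on $o_1$ itself is modified, so $N_{G'}(o_1) = N_G(o_1)$. Define $g'(v) \coloneqq g(v) \symd \{o_1\}$ whenever $o_2 \in g(v)$, and $g'(v) \coloneqq g(v)$ otherwise; extend $\prec$ to a partial order $\prec'$ in which both $o_1$ and $o_2$ are maximal (always possible since outputs carry no gflow constraints). A direct calculation using the linearity of $\odd{G'}{\cdot}$, together with the identity $\odd{G'}{\{o_1\}} = N_G(o_1)$, shows that $\odd{G'}{g'(v)}$ agrees with $\odd{G}{g(v)}$ on every non-output vertex. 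Consequently conditions \ref{it:XY}--\ref{it:YZ} (which concern only the measured vertex $v$ itself) are inherited from $(g,\prec)$, and conditions \ref{it:g} and \ref{it:odd} are preserved because any newly introduced element in a correction set or its odd neighbourhood is an output, hence maximal in $\prec'$.

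The main obstacle I anticipate is establishing the push-through lemma in sufficient generality: while each individual bialgebra application is routine, the inductive step must carefully track the evolving connectivity at $o_1$ (in particular, it must handle the case where $o_2 \in N_G(o_1)$, so that the plain wire between $r$ and $o_2$ interacts with the push-through and produces an additional colour-change step). Once this diagrammatic lemma is in hand, the gflow verification is a direct bookkeeping exercise.
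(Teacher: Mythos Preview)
The paper does not give a direct proof; it cites \cite[Proposition~6.2]{cliff-simp} for both the diagrammatic equality and the gflow preservation, and merely remarks that the gflow argument there (stated for \XY-only patterns) goes through unchanged when vertices in other planes are present. Your gflow argument is essentially that argument written out: defining $g'(v)=g(v)\symd\{o_1\}$ whenever $o_2\in g(v)$ and checking that $\odd{G'}{g'(v)}$ agrees with $\odd{G}{g(v)}$ on non-outputs is correct and is exactly the bookkeeping the paper is alluding to. (You should note that this requires $o_1\in\comp{I}$, which holds in the extraction context but is worth stating.)

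Your diagrammatic argument, however, contains a genuine error. The ``push-through'' you describe has the red target $r$ acquire Hadamard edges to $N_1=N_G(o_1)$ \emph{while removing} the edges $o_1\sim v$. If that were the effect, then after absorbing $r$ into $o_2$ you would have row~2 $\mapsto$ row~1 $+$ row~2 \emph{and} row~1 $\mapsto 0$, contradicting the statement of the lemma (row~1 is supposed to be unchanged). The bialgebra rule does not move a red spider across a green one in this way; it replaces the pair by a complete bipartite configuration, and naively applying it destroys $o_1$ entirely rather than preserving it with its original neighbourhood.

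The clean derivation is to commute the \CNOT past the \CZ gates encoded by the Hadamard edges. A \CNOT whose control is on wire~$c$ commutes with every $\CZ_{c,v}$ (both are diagonal on the control), while $\CNOT_{c,t}\cdot\CZ_{t,v}=\CZ_{c,v}\cdot\CZ_{t,v}\cdot\CNOT_{c,t}$. Thus each edge incident on the target output is copied onto the control output (toggling it there), and the \CNOT then vanishes against the $\ket{+}$ preparations. This is what adds one row to the other while leaving the first row intact. Your inductive bialgebra scheme could in principle be made to reproduce this, but as stated it computes the wrong intermediate diagram.
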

\begin{proof}
  The equality is proved in Ref.~\cite[Proposition~6.2]{cliff-simp}. There it is also shown that this preserves gflow when all measurements are in the \XY plane, but the same proof works when measurement in all three planes are present.
 \end{proof}

\begin{lemma}\label{lem:remove-output-edges-preserves-gflow}
	Suppose $(G,I,O,\lambda)$ is a \LOG{} with gflow.
	Let $G'$ be the graph containing the same vertices as $G$ and the same edges except those for which both endpoints are output vertices.
	Formally, if $G=(V,E)$, then $G'=(V,E')$, where
	\[
	 E' = \{v\sim w \in E\mid v\in\comp{O} \text{ or } w\in\comp{O}\}.
	\]
	Then $(G',I,O,\lambda)$ also has gflow.
\end{lemma}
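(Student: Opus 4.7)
The plan is to keep the same correction function $g$ from the original gflow $(g, \prec)$ of $(G, I, O, \lambda)$, but possibly modify the partial order. The key observation is that since we only remove edges both of whose endpoints lie in $O$, the neighbourhood of any non-output vertex is unchanged: for every $v \in \comp{O}$ we have $N_{G'}(v) = N_G(v)$, while for an output $w \in O$ one only has $N_{G'}(w) = N_G(w) \cap \comp{O}$.

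Concretely, I would define $\prec'$ on $V$ by declaring $v \prec' w$ iff either $v \in \comp{O}$ and $w \in O$, or $v, w \in \comp{O}$ with $v \prec w$, or $v, w \in O$ with $v \prec w$. A short case analysis shows $\prec'$ is a strict partial order: irreflexivity and antisymmetry are immediate from the three cases, and transitivity follows by splitting on whether each of $u, v, w$ lies in $O$ or $\comp{O}$ and using transitivity of $\prec$ within each block. Intuitively, $\prec'$ agrees with $\prec$ on pairs of the same ``type'' (both outputs or both non-outputs) and pushes all outputs above all non-outputs.

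It remains to check that $(g, \prec')$ satisfies the gflow conditions for $(G', I, O, \lambda)$. Conditions \ref{it:XY}--\ref{it:YZ} concern only $v \in \comp{O}$ and the membership $v \in g(v)$ and $v \in \odd{G'}{g(v)}$; since $N_{G'}(v) = N_G(v)$, both are unchanged from the original graph, so these conditions transfer directly. Condition \ref{it:g} is immediate: if $w \in g(v)$ with $v \neq w$, then $v \prec w$ originally, hence $v \prec' w$. For condition \ref{it:odd}, fix $v \in \comp{O}$ and $w \in \odd{G'}{g(v)}$ with $v \neq w$, and split on $w$: if $w \in \comp{O}$, then $N_{G'}(w) = N_G(w)$ yields $\odd{G'}{g(v)} \cap \comp{O} = \odd{G}{g(v)} \cap \comp{O}$, so $w \in \odd{G}{g(v)}$ and the original gflow gives $v \prec w$, hence $v \prec' w$; if $w \in O$, then since $v \in \comp{O}$ we have $v \prec' w$ directly from the first clause of the definition.

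The main obstacle is precisely the case $w \in O$ above: after removing output--output edges, the parity of $|N(w) \cap g(v)|$ can flip (whenever $g(v)$ contains an odd number of output neighbours of $w$), so a vertex $w \in O$ may lie in $\odd{G'}{g(v)}$ without lying in $\odd{G}{g(v)}$. This means the original $\prec$ cannot in general be reused verbatim, and one has to argue instead that outputs may be made maximal without breaking any other gflow condition--which is exactly what the definition of $\prec'$ accomplishes.
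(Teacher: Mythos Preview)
Your proof is correct, and in fact more careful than the paper's own argument. The paper simply asserts that the \emph{same} pair $(g,\prec)$ remains a gflow for $G'$, relying on the observation $\odd{G'}{g(v)}\cap\comp{O}=\odd{G}{g(v)}\cap\comp{O}$ and leaving the rest as ``easily checked''. You correctly identify that this does not suffice for condition~\ref{it:odd} when $w\in O$: removing output--output edges can place an output $w$ into $\odd{G'}{g(v)}$ even though $w\notin\odd{G}{g(v)}$, and nothing in the original gflow forces $v\prec w$ in that case. (A small example: take $V=\{v,u,o_1,o_2,o_3\}$ with $O=\{o_1,o_2,o_3\}$, edges $v\sim u$, $u\sim o_2$, $u\sim o_3$, $o_1\sim o_2$, labels $\XYm$, and the gflow $g(v)=\{u,o_1\}$, $g(u)=\{o_3\}$ with the minimal compatible order; then $o_2\in\odd{G'}{g(v)}$ but $v$ and $o_2$ are incomparable.)

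Your remedy---replacing $\prec$ by an order $\prec'$ in which every output sits above every non-output---is exactly what is needed, and your verification that $\prec'$ is a strict partial order and that $(g,\prec')$ satisfies all five gflow conditions is clean. The essential content (the invariance of $\odd{}{g(v)}\cap\comp{O}$) is shared with the paper, but your argument actually closes the gap the paper glosses over.
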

\begin{proof}
	We claim that if $(g,\prec)$ is a gflow for $G$, then it is also a gflow for $G'$. Note that $\odd{G'}{g(v)}\cap \comp{O} = \odd{G}{g(v)}\cap \comp{O}$ as the only changes to neighbourhoods are among the output vertices. It is thus easily checked that all properties of Definition~\ref{defGFlow} remain satisfied.
\end{proof}

\begin{lemma}\label{lem:output-neighbours-are-XY}
	Let $(G,I,O,\ld)$ be a labelled open graph with a gflow and the same number of inputs as outputs: $\lvert I\rvert = \lvert O\rvert$.
  Let $v\in O\cap \comp{I}$ be an output which is not an input.
  Suppose $v$ has a unique neighbour $u\in\comp{O}$. Then $\ld(u)=\XYm$.
 \end{lemma}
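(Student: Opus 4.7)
The plan is to argue by contradiction using the same reversibility trick that underpins the proof of Proposition~\ref{prop:XY-neighbours}. Suppose for a contradiction that $\ld(u)\in\{\XZm,\YZm\}$. The goal is to perform a sequence of gflow-preserving reductions that leave $v$ with no neighbours at all in the reduced graph, which is then incompatible with $v$ being a non-output \XYm{} vertex in a (reversed) focused gflow.

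First I would apply Lemma~\ref{lem:remove-output-edges-preserves-gflow} to delete all edges whose endpoints are both outputs. This does not touch the edge $u\sim v$ (since $u\in\comp{O}$) and does not change the set of non-output neighbours of $v$. Hence in the resulting \LOG\ the unique neighbour of $v$ is still $u$, and this \LOG\ still has gflow and still satisfies $\abs{I}=\abs{O}$.

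Next I would invoke Lemma~\ref{lem:gflow_drop_gadgets} to delete every non-output vertex measured in the \XZm{} or \YZm{} plane; by the contradictory hypothesis $u$ is among those deleted. Since $u$ was the only neighbour of $v$ after the first step, the vertex $v$ has no neighbours whatsoever in the resulting graph $G''$. The \LOG\ $(G'',I,O,\ld'')$ still has gflow by Lemma~\ref{lem:gflow_drop_gadgets}, still satisfies $\abs{I}=\abs{O}$, and now every non-output is measured in the \XYm{} plane.

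Finally, by Theorem~\ref{thm:mhalla2} this \LOG\ admits a focused gflow, and by Corollary~\ref{cor:reverse_unitary_gflow} the reversed \LOG\ $(G'',O,I,\ld''')$ also admits a focused gflow $(g',\prec')$. In the reversed \LOG\ the vertex $v\in O\cap\comp{I}$ becomes a non-output measured in the \XYm{} plane, so Definition~\ref{def:focused-gflow} forces $v\in\odd{G''}{g'(v)}$. But $v$ has no neighbours in $G''$, so it belongs to the odd neighbourhood of no set, a contradiction. The only genuine subtlety is lining up the two reductions so that $v$ ends up isolated while $\abs{I}=\abs{O}$ and gflow are preserved; once that is done, the reversed focused gflow condition collapses immediately, and so the hard part is really just bookkeeping rather than a new idea.
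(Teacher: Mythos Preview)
Your proof is correct and follows essentially the same route as the paper: drop non-\XYm{} vertices via Lemma~\ref{lem:gflow_drop_gadgets}, pass to a focused gflow via Theorem~\ref{thm:mhalla2}, reverse via Corollary~\ref{cor:reverse_unitary_gflow}, and derive a contradiction from condition~\ref{it:XY} for $v$ in the reversed graph. The one difference is your preliminary application of Lemma~\ref{lem:remove-output-edges-preserves-gflow} to strip output--output edges; the paper skips this and simply asserts that $v$ has no neighbours after dropping the non-\XYm{} vertices, which as written overlooks possible output neighbours of $v$. Your extra step makes the isolation of $v$ literally true, so your version is slightly more careful than the paper's own argument.
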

\begin{proof}
	Suppose, working towards a contradiction, that $\ld(u) \neq \XYm$.
  Form the \LOG\ $(G',I,O,\ld')$ by removing from $G$ all vertices $w$
  such that $w \in \comp{O}$ and $\ld(w) \neq \XYm$, restricting $\ld$ accordingly.
  By Lemma~\ref{lem:gflow_drop_gadgets} the \LOG\ $(G',I,O,\ld')$ also has a gflow.
  Note that $G'$ does contain $v$, which is still an output vertex in $G'$, but does not contain $u$,
  and hence $v$ has no neighbours in $G'$.
  By Theorem~\ref{thm:mhalla2}, $G'$ has a focused gflow, and
  because $G'$ has the same number of inputs as outputs, its reversed graph also has a gflow $(g,\prec)$ by Corollary~\ref{cor:reverse_unitary_gflow}.
  In this reversed graph $v$ is an input and, since it is not an output, it is measured in the \XY plane.
  It therefore has a correction set $g(v)$ so that $v\in \odd{}{g(v)}$.
  But because $v$ has no neighbours, this is a contradiction.
  We conclude that indeed $\lambda(u)=\XYm$.
\end{proof}

\noindent For any set $A\sse V$, let $N_G(A) = \bigcup_{v\in A} N_G(v)$.
Recall the partition of vertices according to the partial order of the gflow into sets $V_k^\prec$, which is introduced in Definition~\ref{defVk}.

\begin{lemma}\label{lem:maxdelayednotempty}
 Let $(G,I,O,\ld)$ be a labelled open graph in phase-gadget form, which furthermore satisfies $\comp{O}\neq\emptyset$.
 Suppose $(G,I,O,\ld)$ has a gflow.
 Then the maximally delayed gflow, $(g,\prec)$, constructed in Proposition~\ref{prop:focused-gflow}
  exists and moreover $N_G(V_1^\prec)\cap O \neq \emptyset$, \ie the gflow has the property that,
  among the non-output vertices,
  there is a vertex which is maximal with respect to the gflow order and also connected to an output vertex.
\end{lemma}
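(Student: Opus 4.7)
The plan is to first invoke Proposition~\ref{prop:focused-gflow} to produce the maximally delayed focused gflow $(g,\prec)$; its existence is immediate because $(G,I,O,\ld)$ is assumed to have a gflow. To see $V_1^\prec\neq\emptyset$, I would note that $\comp{O}\neq\emptyset$ supplies a non-output vertex $v$, and every such $v$ has a non-empty strict $\prec$-successor set: if $\ld(v)=\XYm$ then condition~\ref{it:XY} forces $g(v)\neq\emptyset$ and hence a successor in $g(v)$, and if $\ld(v)=\YZm$ then either $v$ is isolated (in which case it lies in $V_0^\prec$ and may be set aside) or $v$ has a neighbour in $N(v)=\odd{}{\{v\}}$, giving a successor. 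Hence $V\setminus V_0^\prec\neq\emptyset$, so $V_1^\prec$ is non-empty.

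Fixing any $v\in V_1^\prec$, I would combine maximality with the focused property. Maximality of $v$ forces every element of $(g(v)\cup\odd{}{g(v)})\setminus\{v\}$ into $V_0^\prec$. The focused property from Proposition~\ref{prop:focused-gflow} additionally guarantees that every element of $g(v)\cap\comp{O}$ other than $v$ is $\XYm$-labelled and that every element of $\odd{}{g(v)}\cap\comp{O}$ other than $v$ is non-$\XYm$-labelled. After setting aside isolated $\YZm$ vertices in $V_0^\prec$, these two facts combine to give $g(v)\setminus\{v\}\sse O$ and $\odd{}{g(v)}\setminus\{v\}\sse O$. Case analysis on $\ld(v)$ then handles most of the proof: if $\ld(v)=\XYm$, then $g(v)\sse O$ (since $v\notin g(v)$) and~\ref{it:XY} gives $v\in\odd{}{g(v)}$, so $v$ has an odd number of neighbours in $g(v)$ and hence an output neighbour; if $\ld(v)=\YZm$ and $v$ has an output neighbour, we are immediately done.

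The main obstacle is the remaining sub-case, in which every $v\in V_1^\prec$ is $\YZm$-labelled with no output neighbour. Writing $g(v)=\{v\}\cup S$ with $S\sse O$ and expanding $\odd{}{g(v)}=N(v)\symd\odd{}{S}$, the condition $\odd{}{g(v)}\cap\comp{O}\sse\{v\}$ yields $N(v)\cap\comp{O}=\odd{}{S}\cap\comp{O}$, so each $\XYm$ non-output neighbour $u$ of $v$ satisfies $|N(u)\cap S|$ odd and therefore has an output neighbour in $S$. My plan to finish is to contradict maximal delayedness: I would construct a modified gflow $(g',\prec')$ by choosing a suitable $g'(u)\sse N(u)\cap O$ (using the parity data from $S$) while leaving $g(w)=g'(w)$ unchanged for $w\neq u$, then verify that $(g',\prec')$ is a valid focused gflow with $|V_0^{\prec'}\cup V_1^{\prec'}|>|V_0^\prec\cup V_1^\prec|$, contradicting the maximal delayedness of $(g,\prec)$. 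The principal technical step is proving the existence of such a $g'(u)$: this reduces to solving a parity system over $\mathbb{F}_2$ coming from the incidence matrix between $\comp{O}$ and $O$, whose solvability should follow from the existence of the original $S$ together with the phase-gadget-form structure restricting $v$'s neighbours to $\XYm$ non-outputs and outputs.
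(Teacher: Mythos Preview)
Your handling of the easy cases agrees with the paper: once you know $V_0^\prec=O$ (Lemma~\ref{lemV0precOutputsMaximal}) and $g(v)\sse O\cup\{v\}$ for $v\in V_1^\prec$, the \XY case is immediate. A small confusion to fix: your remarks about ``isolated $\YZm$ vertices in $V_0^\prec$'' are incoherent, since for a maximally delayed gflow $V_0^\prec=O$ and outputs carry no label; this side remark should simply be deleted.

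The genuine gap is in the hard case where every $v\in V_1^\prec$ is $\YZm$ with no output neighbour. Your plan is to take an \XY neighbour $u$ of such a $v$ and produce $g'(u)\sse N(u)\cap O\sse O$ so that the modified pair $(g',\prec')$ is a gflow strictly more delayed than $(g,\prec)$. But this cannot succeed: to make $u$ land in the new $V_1$ you would need $\odd{}{g'(u)}\cap\comp{O}=\{u\}$ with $g'(u)\sse O$, and by the intrinsic characterisation of $V_1^\prec$ for maximally delayed gflows (Lemma~\ref{lemGFlowV1}), any \XY vertex $u$ admitting such a set already lies in $V_1^\prec$. Since you are precisely in the case where $V_1^\prec$ contains no \XY vertices, no such $g'(u)$ exists, and the ``parity system'' you allude to is provably unsolvable. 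The paper avoids this trap by arguing via $V_2^\prec$ instead: if $V_2^\prec=\emptyset$ then every non-output is a \YZ vertex in $V_1^\prec$, and the phase-gadget form plus the connectivity hypothesis force an output neighbour directly; if some $w\in V_2^\prec$ exists, the focused property rules out $g(w)\cap V_1^\prec\neq\emptyset$, and maximal delayedness rules out $g(w),\odd{}{g(w)}\sse O\cup\{w\}$, so some $v\in V_1^\prec$ lies in $\odd{}{g(w)}$---and since $g(w)\setminus\{w\}\sse O$ and $w\not\sim v$ (both would be \YZ or $w\notin g(w)$), this $v$ has an output neighbour.
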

\begin{proof}
    By Proposition~\ref{prop:focused-gflow}, there exists a maximally delayed gflow of $(G,I,O,\ld)$ such that
    no element of a correction set (other than possibly the vertex being corrected) is measured in the \YZ plane.

    Since the open graph does not consist solely of outputs, the set $V_1^\prec$ (as defined in Definition~\ref{defVk}) is non-empty, so the following arguments are non-trivial.
    For any $v\in V_1^\prec$ we must have $g(v)\subseteq O\cup \{v\}$.
    Now if there is a $v\in V_1^\prec$ with $\ld(v) =  \XYm$, then $v\in\odd{}{g(v)}$.
    There are no self-loops, hence this $v$ must be connected to at least one output, and we are done.
    As the graph is in phase-gadget form, there are no vertices labelled \XZ and hence from now on assume that $\ld(v)=\YZm$ for all $v\in V_1^\prec$.
    We distinguish two cases.
    \begin{enumerate}
     \item If $V_2^\prec = \emptyset$, then the only non-output vertices are in $V_1^\prec$.
    Now, any connected component of the graph $G$ must contain an input or an output.
    The vertices in $V_1^\prec$ are all labelled \YZ and thus appear in their own correction sets; this means they cannot be inputs because inputs do not appear in correction sets.
    The vertices in $V_1^\prec$ are not outputs either, so each of them must have at least one neighbour.
    Yet the \LOG{} is in phase-gadget form.
    This implies that two vertices both labelled \YZ cannot be adjacent, and all vertices in $V_1^\prec$ are labelled \YZ.
    Thus any vertex $v\in V_1^\prec$ must have a neighbour in $O$, and we are done.

     \item So now assume there is some vertex $w\in V_2^\prec$.
    Then, regardless of $\ld(w)$, we have $g(w) \sse V_1^\prec\cup O\cup\{w\}$ and $\odd{}{g(w)} \sse V_1^\prec\cup O\cup\{w\}$.
    We distinguish three subcases according to whether one of $g(w)$ or $\odd{}{g(w)}$ intersects $V_1^\prec$.
    \begin{itemize}
     \item Suppose $g(w)\cap V_1^\prec = \odd{}{g(w)}\cap V_1^\prec = \emptyset$, i.e.\ $g(w) \sse O\cup\{w\}$ and $\odd{}{g(w)} \sse O\cup\{w\}$.
     Let $\prec' = \prec \setminus \{(w,u): u\in V_1^\prec\}$.
     Then $(g,\prec')$ is a gflow: dropping the given inequalities from the partial order does not affect the gflow properties since $u\in V_1^\prec$ implies $w\notin g(u)$ and $w\notin \odd{}{g(u)}$.
     Furthermore, $(g,\prec')$ is more delayed than $(g,\prec)$ because $w$ (and potentially some of its predecessors) moves to an earlier layer, contradicting the assumption that $(g,\prec)$ is maximally delayed.
     Hence this case cannot happen.
     \item Suppose $g(w)\cap V_1^\prec \neq \emptyset$, then there exists a \YZ vertex in the correction set of $w$ since all elements of $V_1^\prec$ are measured in the \YZ plane.
     But our gflow satisfies the properties of Proposition~\ref{prop:focused-gflow}, and hence this cannot happen.
     \item Suppose $\odd{}{g(w)}\cap V_1^\prec \neq \emptyset$ and $g(w)\cap V_1^\prec = \emptyset$, then there is a $v\in V_1^\prec$ such that $v\in \odd{}{g(w)}$.
     There are two further subcases.
     \begin{itemize}
      \item If $\ld(w)=\XYm$, we have $w\not\in g(w)$ and hence $g(w)\sse O$ so that there must be some $o\in O$ that is connected to $v$ and we are done.
      \item Otherwise, if $\ld(w)=\YZm$, then $w\in g(w)$.
      Yet both $v$ and $w$ are measured in the \YZ plane, so they are not neighbours, and hence there still must be an $o\in O$ that is connected to $v$ to have $v\in \odd{}{g(w)}$.
     \end{itemize}
    \end{itemize}
    \end{enumerate}
    Thus, the gflow $(g,\prec)$ has the desired property in all possible cases.
\end{proof}

\subsection{General description of the algorithm}\label{sec:generalextractalgorithm}

We first walk through a high-level description of how to extract a circuit from a diagram in MBQC+LC form with gflow, explaining why every step works. After that, we present a more practical algorithm in Section~\ref{s:more-practical}. As we wish the output to be a unitary circuit, we will assume that the diagram has an equal number of inputs and outputs.

The process will be to make sequential changes to the \zxdiagram that make the diagram look progressively more like a circuit. During the process, there will be a `frontier': a set of green spiders such that everything to their right looks like a circuit, while everything to their left (and including the frontier vertices themselves) is an MBQC+LC form diagram equipped with a gflow.
We will refer to the MBQC-form diagram on the left as the \emph{unextracted} part of the diagram, and to the circuit on the right as the \emph{extracted} part of the diagram.
For example:
\begin{equation}\label{ex:frontier-example}
\scalebox{1.2}{\tikzfig{frontier-example}}
\end{equation}
In this diagram, we have merged the \XY measurement effects with their respective vertices, in order to present a tidier picture.
The matrix $M$ is the biadjacency matrix between the vertices on the frontier and all their neighbours to the left of the frontier.
For the purposes of the algorithm below, we consider the extracted circuit as no longer being part of the diagram, and hence the frontier vertices are the outputs of the \LOG\ of the unextracted diagram.

\textbf{Step 0}: First, we transform the pattern into phase-gadget form using Proposition~\ref{prop:ZXtophasegadgetform}, ensuring that all vertices are measured in the \XY or \YZ planes, and that vertices measured in the \YZ plane are only connected to vertices measured in the \XY plane.
This can be done in polynomial time, and preserves the interpretation of the diagram. Furthermore, the resulting diagram still has gflow.

\textbf{Step 1}: We unfuse any connection between the frontier vertices as a CZ gate into the extracted circuit, and we consider any local Clifford operator on the frontier vertices as part of the extracted circuit. For example:
\[\scalebox{1.2}{\tikzfig{example-unfuse-gates}}\]
This process changes the unextracted diagram in two ways: by removing local Clifford operators and by removing connections among the frontier vertices.
The former does not affect the underlying \LOG{} and the latter preserves gflow by Lemma~\ref{lem:remove-output-edges-preserves-gflow}.
Thus, the unextracted diagram continues to be in MBQC form and it continues to have gflow.
If the only unextracted vertices are on the frontier, go to step~5, otherwise continue to step~2.

\textbf{Step 2}: The unextracted diagram is in phase-gadget form and has gflow.
Thus, by Lemma~\ref{lem:maxdelayednotempty}, it has a maximally delayed gflow $(g,\prec)$ such that $N_G(V_1^\prec)\cap O \neq \emptyset$, where $V_1^\prec$ is the `most delayed' layer before the frontier vertices, which are the outputs of the \LOG\ (see Definition~\ref{defVk}).
Such a gflow can be efficiently determined by first finding any maximally delayed gflow using the algorithm of Theorem~\ref{thmGFlowAlgo} and then following the procedure outlined in Proposition~\ref{prop:focused-gflow}.

Now, if any of the vertices in $V_1^\prec$ are labelled \XY, pick one of these vertices and go to step~3. Otherwise, all the maximal non-output vertices (with respect to $\prec$) must have label \YZ; go to step~4.

\textbf{Step 3}: We have a maximal non-output vertex $v$ labelled \XY, which we want to extract. Since it is maximal in $\prec$, we know that $g(v)\sse O$ by Definition~\ref{defVk}.
As the gflow is maximally delayed, we have $\odd{}{g(v)}\cap \comp O = \{v\}$.
We now follow the strategy used in Ref.~\cite{cliff-simp} for the `\XY-plane only' case, illustrating it with an example.
Consider the following diagram, in which the vertex $v$ and its correction set $g(v)$ are indicated:
\begin{equation}\label{eq:example-extracted-vertex}
\scalebox{1.2}{\tikzfig{example-extracted-vertex}}
\end{equation}
For clarity, we are ignoring the measurement effects on the left-hand-side spiders, and we are not showing any frontier vertices that are inputs (although note that by definition of a gflow, the vertices of $g(v)$ cannot be inputs).
In the above example, the biadjacency matrix of the bipartite graph between the vertices of $g(v)$ on the one hand, and their neighbours in the unextracted part on the other hand, is
\begin{equation}\label{eq:biadjacency-example}
	\tikzfig{example-matrix}
\end{equation}
where the rows correspond to vertices of $g(v)$, and vertices are ordered top-to-bottom. We do not include the bottom-most frontier vertex in the biadjacency matrix, as it is not part of $g(v)$, and we do not include the bottom left spider, as it is not connected to any vertex in $g(v)$.

The property that $\odd{}{g(v)}\cap \comp O = \{v\}$ now corresponds precisely to the following property of the matrix: if we sum up all the rows of this biadjacency matrix modulo 2, the resulting row vector contains a single 1 corresponding to the vertex $v$ and zeroes everywhere else.
It is straightforward to see that this is indeed the case for the matrix of Eq.~\eqref{eq:biadjacency-example}.

Now pick any frontier vertex $w\in g(v)$.
Lemma~\ref{lem:cnotgflow} shows that the application of a CNOT to two outputs corresponds to a row operation on the biadjacency matrix, which adds the row corresponding to the target to the row corresponding to the control. Hence if, for each $w'\in g(v)\setminus\{w\}$, we apply a CNOT with control and target on the output wires of $w$ and $w'$, the effect on the biadjacency matrix is to add all the other rows of the vertices of $g(v)$ to that of $w$:
\[\scalebox{1.15}{\tikzfig{example-extracted-vertex-cnots}}\]
As a result, $w$ is now only connected to $v$, but $v$ may still be connected to other vertices in $O\setminus g(v)$. For each such vertex $u$, applying a CNOT with control $u$ and target $w$ removes the connection between $u$ and $v$:
\[\scalebox{1.15}{\tikzfig{example-extracted-vertex-cnots2}}\]
Now we can move $v$ to the frontier by removing $w$ from the diagram, adding a Hadamard to the circuit (this comes from the Hadamard edge between $v$ and $w$), adding the measurement angle of $v$ to the circuit as a Z-phase gate, and adding $v$ to the set of outputs of the graph (i.e.\ the frontier):
\begin{equation}\label{eq:extract-vertex}
\scalebox{1.15}{\tikzfig{extract-vertex}}
\end{equation}
On the underlying \LOG\ this corresponds to removing $w$ and adding $v$ to the list of outputs. We need to check that this preserves the existence of a gflow. The only change we need to make is that for all $v'\neq v$ with $w\in g(v')$ we set $g'(v') = g(v')\backslash\{w\}$. As $w$'s only neighbour is $v$, removing $w$ from $g(v')$ only toggles whether $v\in\odd{}{g'(v')}$. Since $v$ is a part of the outputs in the new \LOG, this preserves all the properties of being a gflow.

Now that the vertex $w$ has been removed, the number of vertices in the unextracted part of the diagram is reduced by 1. We now go back to step 1.

\textbf{Step 4}: All the maximal vertices are labelled \YZ. Since we chose our gflow according to Lemma~\ref{lem:maxdelayednotempty}, we know that at least one of these vertices is connected to an output, and hence a frontier vertex. Pick such a vertex $v$, and pick a $w\in O\cap N_G(v)$ (this set is non-empty). Pivot about $vw$ using Lemma~\ref{lem:ZX-pivot} and reduce the resulting diagram to MBQC form with Lemma~\ref{lem:pivot-MBQC-form-non-input}. Afterwards, $v$ has label \XY and $w$ has a new Hadamard gate on its output wire (which will be dealt with in the next iteration of step 1).

We have changed one vertex label in the unextracted part of the diagram from \YZ to \XY. Since no step introduces new \YZ vertices, step~4 can only happen as many times as there are \YZ vertices at the beginning. Go back to step 1.

\textbf{Step 5:} At this point, there are no unextracted vertices other than the frontier vertices, all of which have arity 2 and can be removed using rule $(\bm{i1})$ of Figure~\ref{fig:zx-rules}.
Yet the remaining frontier vertices might be connected to the inputs in some permuted manner and the inputs might carry some local Cliffords:
\ctikzfig{example-permutation}
This is easily taken care of by decomposing the permutation into a series of SWAP gates, at which point the entire diagram is in circuit form.

\textbf{Correctness and efficiency:} Since step 3 removes a vertex from the unextracted diagram, and step 4 changes a measurement plane from \YZ to \XY (and no step changes measurement planes in the other direction), this algorithm reduces the lexicographic order ($\#$ unextracted vertices, $\#$ unextracted vertices with $\ld = YZ$) each time we repeat an iteration of the process, so that the algorithm terminates. Each step described above takes a number of graph-operations polynomial in the number of unextracted vertices,
and therefore this entire algorithm takes a number of steps polynomial in the number of vertices.
All steps correspond to ZX-diagram rewrites, so the resulting diagram is a circuit that implements the same unitary as the pattern we started with.

\subsection{A more practical algorithm}\label{s:more-practical}

Now we know that the algorithm above is correct and will always terminate, we can take a couple of short-cuts that will make it more efficient.

In step 2, instead of using the gflow to find a maximal vertex, we do the following: Write down the biadjacency matrix of the bipartite graph consisting of frontier vertices on one side and all their neighbours on the other side. For example, the Diagram~\eqref{eq:example-extracted-vertex} would give the matrix:
\begin{equation}\label{eq:matrix2}
	\begin{pmatrix}
		1&1&0&0&0\\
		0&0&1&1&0\\
		0&1&1&1&0\\
		1&1&0&1&1
	\end{pmatrix}
\end{equation}
Now perform a full Gaussian elimination on this $\mathbb{Z}_2$ matrix. In the above case, this results in the matrix:
\begin{equation}\label{eq:matrix_after_elim}
	\begin{pmatrix}
		1&0&0&0&0\\
		0&1&0&0&0\\
		0&0&1&0&1\\
		0&0&0&1&1
	\end{pmatrix}
\end{equation}
Any row in this matrix containing a single 1 corresponds to an output vertex with a single neighbour. By Lemma~\ref{lem:output-neighbours-are-XY}, this neighbour is of type \XY. As an example, in the matrix in Eq.~\eqref{eq:matrix_after_elim}, the first row has a single 1 in the first column, and hence the top-left spider of Diagram~\eqref{eq:example-extracted-vertex} is the unique \XY neighbour to the first output. Similarly, the second row has a single 1, appearing in column 2, and hence the second spider from the top on the left in Diagram~\eqref{eq:example-extracted-vertex} is the unique neighbour to the second output.

If we found at least one row with a single 1 with this method, we implement the row operations corresponding to the Gaussian elimination procedure as a set of CNOT gates using Lemma~\ref{lem:cnotgflow}. Doing this with Diagram~\eqref{eq:example-extracted-vertex} gives:
\begin{equation}
	\scalebox{1.15}{\tikzfig{example-extracted-gauss}}
\end{equation}

We see that every row which had a single 1 now corresponds to a frontier spider with a single neighbour, and hence we can extract vertices using the technique of Eq.~\eqref{eq:extract-vertex}:
\begin{equation}\label{eq:example-extracted-3}
	\scalebox{1.15}{\tikzfig{example-extracted-3}}
\end{equation}

As we now extract multiple vertices at a time, there could be connections between the new frontier vertices (for instance between the top two frontier spiders in Eq.~\eqref{eq:example-extracted-3}). These are taken care of in the next iteration of step 1, turning those into CZ gates.

If the Gaussian elimination does not reveal a row with a single 1, then we are in the situation of step 4. We perform pivots involving a vertex with label \YZ and an adjacent frontier vertex until there is no vertex with a label \YZ which is connected to a frontier vertex. We then go back to step 1.

With these short-cuts, it becomes clear that we do not need an explicitly calculated gflow in order to extract a circuit. The fact that there is a gflow is only used to argue that the algorithm is indeed correct and will always terminate. Pseudocode for this algorithm can be found in Appendix~\ref{sec:pseudocode}.

With the results of this section we have then established the following theorem.
\begin{theorem}\label{thm:extraction-algorithm}
    Let $\pat$ be a measurement pattern with $n$ inputs and outputs containing a total of $k$ qubits, and whose corresponding \LOG\ has a gflow. Then there is an algorithm running in time $O(n^2k^2 + k^3)$ that converts $\pat$ into an equivalent $n$-qubit circuit that contains no ancillae.
 \end{theorem}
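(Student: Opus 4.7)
The plan is to combine the algorithm developed in Sections~\ref{sec:generalextractalgorithm} and~\ref{s:more-practical} with a careful complexity count, using the preservation-of-gflow lemmas from Section~\ref{sec:rewriting} to justify correctness at each iteration. First I would translate $\pat$ into an MBQC+LC form \zxdiagram in $O(k)$ operations via Lemma~\ref{lem:zx-equals-linear-map}, whose underlying \LOG\ inherits the given gflow by Definition~\ref{dfn:zx-gflow}. Then I would apply Proposition~\ref{prop:ZXtophasegadgetform} to bring the diagram into phase-gadget form in $O(k^3)$ operations while preserving gflow; this is a one-time preprocessing cost that contributes the $k^3$ term in the final bound and makes Lemma~\ref{lem:maxdelayednotempty} applicable throughout the main loop.

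Next I would run the main extraction loop. The frontier has size exactly $n$ at all times, because each extraction step (Equation~\eqref{eq:extract-vertex}) removes one frontier vertex and simultaneously promotes one unextracted vertex to the frontier. At each iteration I would form the biadjacency matrix between the $n$ frontier vertices and their at most $k$ unextracted neighbours, perform Gaussian elimination over $\mathbb{F}_2$ in $O(n^2 k)$ operations, and realise the resulting row operations as \CNOT gates on the extracted side via Lemma~\ref{lem:cnotgflow}. A row containing a single $1$ then identifies, by Lemma~\ref{lem:output-neighbours-are-XY}, an \XY-measured neighbour that can be extracted using the move of Equation~\eqref{eq:extract-vertex}; if no such row exists, Lemma~\ref{lem:maxdelayednotempty} guarantees that some maximal \YZ vertex is adjacent to the frontier, and a pivot via Lemma~\ref{lem:pivot-MBQC-form-non-input} turns it into an \XY vertex while preserving gflow and phase-gadget form.

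To bound the loop length I would track the lexicographic potential (number of unextracted vertices, number of unextracted \YZ vertices): each extraction strictly decreases the first coordinate, while each pivot-only iteration leaves the first coordinate fixed but strictly decreases the second. Since neither coordinate exceeds $k$ and no step increases either, the loop runs at most $O(k)$ times, each iteration costing $O(n^2 k)$ (the Gaussian elimination dominates the $O(nk)$ cost of unfusing \CZ\ gates and local Cliffords at the frontier in step~1). Summing gives $O(n^2 k^2)$ for the loop, and adding the preprocessing yields $O(n^2 k^2 + k^3)$. Upon termination only frontier vertices remain unextracted, and step~5 converts them into a permutation composed with local Cliffords, producing an $n$-qubit ancilla-free circuit equal to $M_{\Gamma,\alpha}$.

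The main subtlety is the bookkeeping rather than correctness: one must check that the invariants invoked inside the loop -- phase-gadget form, existence of a suitably focused maximally delayed gflow, and MBQC+LC structure -- are all preserved by both the extraction move and the pivot move, so that the $O(k^3)$ preprocessing does not have to be redone at each iteration. Once this is established, the $O(n^2 k^2)$ loop bound is immediate from the matrix dimensions, and the $O(k^3)$ term can be absorbed into the stated complexity.
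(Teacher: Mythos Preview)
Your approach is essentially the paper's, and the overall structure and final bound are correct. One accounting detail deserves attention: you attribute the $O(k^3)$ term solely to the one-time phase-gadget preprocessing and then claim each loop iteration costs $O(n^2k)$, but a pivot in step~4 costs $O(k^2)$, which dominates the Gaussian elimination when $k>n^2$. The paper instead bounds each iteration by $O(n^2k+k^2)$ and multiplies by $k$ iterations to get $O(n^2k^2+k^3)$ directly; your bound still comes out right because the in-loop pivots contribute at most another $O(k^3)$, but the statement ``each iteration costing $O(n^2k)$'' is not literally true in general.
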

 \begin{proof}
The runtime for the extraction algorithm is dominated by Gaussian elimination of the biadjacency matrices which has complexity $O(n^2m)$, where $n$ is the number of rows, corresponding to the number of outputs, and $m$ is the number of neighbours these output vertices are connected to. In principle $m$ could be as large as the number of vertices in the graph and hence could be as large as $k$ (although in practice it will be much smaller than that).
In the worst case, performing a pivot operation also requires toggling the connectivity of almost the entire graph, which requires $k^2$ elementary graph operations.
Since we might have to apply a pivot and a Gaussian elimination process for every vertex in the graph, the complexity for the entire algorithm is bounded above by $O(k(n^2k + k^2)) = O(n^2k^2 + k^3)$.
\end{proof}

Note that if $k\geq O(n^2)$, which will be the case for most useful computation, the bound becomes $O(k^3)$. In practice however we would not expect to see this worst-case complexity as it would only be attained if everything is almost entirely fully connected all the time. This does not seem possible because the pivots and Gaussian elimination always toggle connectivity, and hence a highly connected graph in one step will become less connected in the following step.

\section{Conclusions and Future Work}\label{sec:conclusion}

We have given an algorithm which extracts a circuit from any measurement pattern whose underlying \LOG\ has extended gflow.
This is the first algorithm which works for patterns with measurements in multiple planes, and does not use ancillae.
Simultaneously, it is the most general known algorithm for extracting quantum circuits from \zxcalculus diagrams.

We have also developed a set of rewrite rules for measurement patterns containing measurements in all three planes.
For each of these rewrite rules, we have established the corresponding transformations of the extended gflow.
The rewrite rules can be used to reduce the number of qubits in a measurement pattern, in particular eliminating all qubits measured in a Pauli basis.
Additionally, we have generalised the notions of focused gflow and maximally delayed gflow to \LOG{}s with measurements in multiple planes, and we have described algorithms for finding such gflows.

The pattern optimisation algorithm of Theorem~\ref{thm:optimisation} and the circuit extraction algorithm of Section~\ref{sec:circuitextract} have been implemented in the \zxcalculus rewriting system \emph{PyZX}\footnote{PyZX is available at \url{https://github.com/Quantomatic/pyzx}. A Jupyter notebook demonstrating the T-count optimisation is available at \url{https://github.com/Quantomatic/pyzx/blob/5d409a246857b7600cc9bb0fbc13043d54fb9449/demos/T-count\%20Benchmark.ipynb}.}~\cite{pyzx}.
The reduction in non-Clifford gates using this method matches the state-of-the-art for ancillae-free circuits~\cite{tcountpreprint} at the time of development.

Our circuit extraction procedure resynthesises the CNOT gates in the circuit. Depending on the input circuit this can lead to drastic decreases in the 2-qubit gate count of the circuit~\cite{cliff-simp,pyzx}, but in many cases it can also lead to an \emph{increase} of the CNOT count.
Such increases are avoided in the procedure of Ref.~\cite{tcountpreprint}, where two-qubit gates are not resynthesised.

Yet re-synthesis of two-qubit gates may be necessary anyway in many applications: current and near-term quantum devices do not allow two-qubit gates to be applied to arbitrary pairs of qubits.
Thus, general circuits need to be adapted to the permitted connections; this is called routing.
Our extraction algorithm currently uses a standard process of Gaussian elimination to produce the two-qubit gates required to implement the circuit, which implicitly assumes that two-qubit gates can be applied between any pair of qubits in the circuit.
It may be useful to replace this procedure with one incorporating routing, such as the \zxcalculus-based routing approach by Kissinger and Meijer-van~de~Griend~\cite{kissinger2019cnot}.
This would allow circuits to be adapted to the desired architecture as they are being extracted.

It would also be interesting to consider whether these routing algorithms can be used more abstractly to transform general measurement patterns to more regular patterns with restricted connectivity.

\medskip

\noindent {\small \textbf{Acknowledgements.}
Many thanks to Fatimah Ahmadi for her contributions in the earlier stages of this project.
We would also like to thank Will Simmons for pointing out an error in Lemma~\ref{lemGFlowV1} in a previous version of this paper.
The majority of this work was developed at the Applied Category Theory summer school during the week of 22--26 July 2019; we thank the organisers of this summer school for bringing us together and making this work possible.
JvdW is supported in part by AFOSR grant FA2386-18-1-4028 and is supported by a Rubicon fellowship financed by the Dutch Research Council (NWO).
HJM-B is supported by the EPSRC.
}
\bibliographystyle{plainnat}
\bibliography{main}

\begin{thebibliography}{51}
\providecommand{\natexlab}[1]{#1}
\providecommand{\url}[1]{\texttt{#1}}
\expandafter\ifx\csname urlstyle\endcsname\relax
  \providecommand{\doi}[1]{doi: #1}\else
  \providecommand{\doi}{doi: \begingroup \urlstyle{rm}\Url}\fi

\bibitem[Aaronson and Gottesman(2004)]{aaronsongottesman2004}
Scott Aaronson and Daniel Gottesman.
\newblock Improved simulation of stabilizer circuits.
\newblock \emph{Physical Review A}, 70\penalty0 (5):\penalty0 052328, 2004.
\newblock \doi{10.1103/PhysRevA.70.052328}.

\bibitem[Amy et~al.(2013)Amy, Maslov, Mosca, and Roetteler]{amy2013meet}
Matthew Amy, Dmitri Maslov, Michele Mosca, and Martin Roetteler.
\newblock A meet-in-the-middle algorithm for fast synthesis of depth-optimal
  quantum circuits.
\newblock \emph{IEEE Transactions on Computer-Aided Design of Integrated
  Circuits and Systems}, 32\penalty0 (6):\penalty0 818--830, 2013.
\newblock \doi{10.1109/TCAD.2013.2244643}.

\bibitem[Amy et~al.(2014)Amy, Maslov, and Mosca]{amy2014polynomial}
Matthew Amy, Dmitri Maslov, and Michele Mosca.
\newblock {Polynomial-time T-depth optimization of Clifford+ T circuits via
  matroid partitioning}.
\newblock \emph{IEEE Transactions on Computer-Aided Design of Integrated
  Circuits and Systems}, 33\penalty0 (10):\penalty0 1476--1489, 2014.
\newblock \doi{10.1109/TCAD.2014.2341953}.

\bibitem[Backens(2014)]{backens1}
Miriam Backens.
\newblock The {ZX}-calculus is complete for stabilizer quantum mechanics.
\newblock \emph{New Journal of Physics}, 16\penalty0 (9):\penalty0 093021,
  2014.
\newblock ISSN 1367-2630.
\newblock \doi{10.1088/1367-2630/16/9/093021}.

\bibitem[Backens(2015)]{Backens:2015aa}
Miriam Backens.
\newblock {Making the stabilizer ZX-calculus complete for scalars}.
\newblock In Chris Heunen, Peter Selinger, and Jamie Vicary, editors,
  \emph{Proceedings of the 12th International Workshop on Quantum Physics and
  Logic (QPL 2015)}, volume 195 of \emph{Electronic Proceedings in Theoretical
  Computer Science}, pages 17--32, 2015.
\newblock \doi{10.4204/EPTCS.195.2}.

\bibitem[Bard(2006)]{Bard06}
Gregory Bard.
\newblock Achieving a log(n) speed up for boolean matrix operations and
  calculating the complexity of the dense linear algebra step of algebraic
  stream ciper attacks and of integer factorization methods.
\newblock \emph{IACR Cryptology ePrint Archive}, 2006:\penalty0 163, 01 2006.

\bibitem[Bian and Selinger(2015)]{Bian2Qubit}
Xiaoning Bian and Peter Selinger.
\newblock Relations for 2-qubit {Clifford+T} operator group, 2015.
\newblock URL \url{https://mathstat.dal.ca/~xbian/talks/slide_cliffordt2.pdf}.

\bibitem[Broadbent and Kashefi(2009)]{broadbent_2009_parallelizing}
Anne Broadbent and Elham Kashefi.
\newblock Parallelizing quantum circuits.
\newblock \emph{Theoretical Computer Science}, 410\penalty0 (26):\penalty0
  2489--2510, 2009.
\newblock \doi{10.1016/j.tcs.2008.12.046}.

\bibitem[Browne et~al.(2007)Browne, Kashefi, Mhalla, and Perdrix]{GFlow}
Daniel~E Browne, Elham Kashefi, Mehdi Mhalla, and Simon Perdrix.
\newblock Generalized flow and determinism in measurement-based quantum
  computation.
\newblock \emph{New Journal of Physics}, 9\penalty0 (8):\penalty0 250, 2007.
\newblock \doi{10.1088/1367-2630/9/8/250}.

\bibitem[Browne et~al.(2011)Browne, Kashefi, and Perdrix]{BKP10}
Daniel~E. Browne, Elham Kashefi, and Simon Perdrix.
\newblock Computational depth complexity of measurement-based quantum
  computation.
\newblock In \emph{Theory of Quantum Computation, Communication, and
  Cryptography (TQC'10)}, volume 6519, pages 35--46. LNCS, 2011.
\newblock \doi{10.1007/978-3-642-18073-6_4}.

\bibitem[Coecke and Duncan(2011)]{CD2}
B.~Coecke and R.~Duncan.
\newblock Interacting quantum observables: categorical algebra and
  diagrammatics.
\newblock \emph{New Journal of Physics}, 13:\penalty0 043016, 2011.
\newblock \doi{10.1088/1367-2630/13/4/043016}.
\newblock arXiv:quant-ph/09064725.

\bibitem[Coecke and Kissinger(2017)]{CKbook}
Bob Coecke and Aleks Kissinger.
\newblock \emph{Picturing Quantum Processes: A First Course in Quantum Theory
  and Diagrammatic Reasoning}.
\newblock Cambridge University Press, 2017.
\newblock ISBN 9781107104228.
\newblock \doi{10.1017/9781316219317}.

\bibitem[Cowtan et~al.(2020{\natexlab{a}})Cowtan, Dilkes, Duncan, Simmons, and
  Sivarajah]{phaseGadgetSynth}
Alexander Cowtan, Silas Dilkes, Ross Duncan, Will Simmons, and Seyon Sivarajah.
\newblock {Phase Gadget Synthesis for Shallow Circuits}.
\newblock In Bob Coecke and Matthew Leifer, editors, \emph{Proceedings 16th
  International Conference on Quantum Physics and Logic, Chapman University,
  Orange, CA, USA., 10-14 June 2019}, volume 318 of \emph{Electronic
  Proceedings in Theoretical Computer Science}, pages 213--228. Open Publishing
  Association, 2020{\natexlab{a}}.
\newblock \doi{10.4204/EPTCS.318.13}.

\bibitem[Cowtan et~al.(2020{\natexlab{b}})Cowtan, Simmons, and
  Duncan]{cowtan2020generic}
Alexander Cowtan, Will Simmons, and Ross Duncan.
\newblock {A Generic Compilation Strategy for the Unitary Coupled Cluster
  Ansatz}.
\newblock \emph{arXiv preprint arXiv:2007.10515}, 2020{\natexlab{b}}.

\bibitem[da~Silva and Galv{\~a}o(2013)]{daSilva2013compact}
Raphael~Dias da~Silva and Ernesto~F Galv{\~a}o.
\newblock Compact quantum circuits from one-way quantum computation.
\newblock \emph{Physical Review A}, 88\penalty0 (1):\penalty0 012319, 2013.
\newblock \doi{10.1103/physreva.88.012319}.

\bibitem[da~Silva et~al.(2013)da~Silva, Pius, and Kashefi]{daSilva2013global}
Raphael~Dias da~Silva, Einar Pius, and Elham Kashefi.
\newblock Global quantum circuit optimization.
\newblock \emph{arXiv:1301.0351}, 2013.

\bibitem[Danos and Kashefi(2006)]{Danos2006Determinism-in-}
V.~Danos and E.~Kashefi.
\newblock Determinism in the one-way model.
\newblock \emph{Phys. Rev. A}, 74\penalty0 (052310), 2006.
\newblock \doi{10.1103/PhysRevA.74.052310}.

\bibitem[Danos et~al.(2007)Danos, Kashefi, and Panangaden]{Patterns}
Vincent Danos, Elham Kashefi, and Prakash Panangaden.
\newblock The measurement calculus.
\newblock \emph{J. ACM}, 54\penalty0 (2), April 2007.
\newblock \doi{10.1145/1219092.1219096}.

\bibitem[Danos et~al.(2009)Danos, Kashefi, Panangaden, and
  Perdrix]{danos_kashefi_panangaden_perdrix_2009}
Vincent Danos, Elham Kashefi, Prakash Panangaden, and Simon Perdrix.
\newblock \emph{Extended Measurement Calculus}, pages 235--310.
\newblock Cambridge University Press, 2009.
\newblock \doi{10.1017/CBO9781139193313.008}.

\bibitem[de~Beaudrap(2010)]{beaudrap2010unitary}
Niel de~Beaudrap.
\newblock Unitary-circuit semantics for measurement-based computations.
\newblock \emph{International Journal of Quantum Information}, 8\penalty0
  (01n02):\penalty0 1--91, 2010.
\newblock \doi{10.1142/s0219749910006113}.

\bibitem[de~Beaudrap et~al.(2020)de~Beaudrap, Bian, and Wang]{pi4parity}
Niel de~Beaudrap, Xiaoning Bian, and Quanlong Wang.
\newblock {Techniques to Reduce $\pi/4$-Parity-Phase Circuits, Motivated by the
  ZX Calculus}.
\newblock In Bob Coecke and Matthew Leifer, editors, \emph{Proceedings 16th
  International Conference on Quantum Physics and Logic, Chapman University,
  Orange, CA, USA., 10-14 June 2019}, volume 318 of \emph{Electronic
  Proceedings in Theoretical Computer Science}, pages 131--149. Open Publishing
  Association, 2020.
\newblock \doi{10.4204/EPTCS.318.9}.

\bibitem[Di~Matteo and Mosca(2016)]{di2016parallelizing}
Olivia Di~Matteo and Michele Mosca.
\newblock Parallelizing quantum circuit synthesis.
\newblock \emph{Quantum Science and Technology}, 1\penalty0 (1):\penalty0
  015003, 2016.
\newblock \doi{10.1088/2058-9565/1/1/015003}.

\bibitem[Duncan and Perdrix(2009)]{DP1}
Ross Duncan and Simon Perdrix.
\newblock Graph states and the necessity of {E}uler decomposition.
\newblock \emph{Mathematical Theory and Computational Practice}, pages
  167--177, 2009.
\newblock \doi{10.1007/978-3-642-03073-4_18}.

\bibitem[Duncan and Perdrix(2010)]{duncan2010rewriting}
Ross Duncan and Simon Perdrix.
\newblock Rewriting measurement-based quantum computations with generalised
  flow.
\newblock In \emph{International Colloquium on Automata, Languages, and
  Programming}, pages 285--296. Springer, 2010.
\newblock \doi{10.1007/978-3-642-14162-1_24}.

\bibitem[Duncan and Perdrix(2014)]{DP3}
Ross Duncan and Simon Perdrix.
\newblock {Pivoting makes the ZX-calculus complete for real stabilizers}.
\newblock In Bob Coecke and Matty Hoban, editors, \emph{{\rm Proceedings of the
  10th International Workshop on} Quantum Physics and Logic (QPL)}, volume 171
  of \emph{Electronic Proceedings in Theoretical Computer Science}, pages
  50--62. Open Publishing Association, 2014.
\newblock \doi{10.4204/EPTCS.171.5}.

\bibitem[Duncan et~al.(2020)Duncan, Kissinger, Perdrix, and van~de
  Wetering]{cliff-simp}
Ross Duncan, Aleks Kissinger, Simon Perdrix, and John van~de Wetering.
\newblock {Graph-theoretic Simplification of Quantum Circuits with the
  ZX-calculus}.
\newblock \emph{{Quantum}}, 4:\penalty0 279, 6 2020.
\newblock ISSN 2521-327X.
\newblock \doi{10.22331/q-2020-06-04-279}.

\bibitem[Eslamy et~al.(2018)Eslamy, Houshmand, Zamani, and
  Sedighi]{eslamy2018optimization}
Maryam Eslamy, Mahboobeh Houshmand, Morteza~Saheb Zamani, and Mehdi Sedighi.
\newblock Optimization of one-way quantum computation measurement patterns.
\newblock \emph{International Journal of Theoretical Physics}, 57\penalty0
  (11):\penalty0 3296--3317, 2018.
\newblock \doi{10.1007/s10773-018-3844-x}.

\bibitem[Hadzihasanovic et~al.(2018)Hadzihasanovic, Ng, and
  Wang]{HarnyAmarCompleteness}
Amar Hadzihasanovic, Kang~Feng Ng, and Quanlong Wang.
\newblock {Two Complete Axiomatisations of Pure-state Qubit Quantum Computing}.
\newblock In \emph{Proceedings of the 33rd Annual ACM/IEEE Symposium on Logic
  in Computer Science}, LICS '18, pages 502--511, New York, NY, USA, 2018. ACM.
\newblock ISBN 978-1-4503-5583-4.
\newblock \doi{10.1145/3209108.3209128}.

\bibitem[Hamrit and Perdrix(2015)]{hamrit2015reversibility}
Nidhal Hamrit and Simon Perdrix.
\newblock Reversibility in extended measurement-based quantum computation.
\newblock In Jean Krivine and Jean-Bernard Stefani, editors, \emph{Reversible
  Computation}, pages 129--138. Springer International Publishing, 2015.
\newblock ISBN 978-3-319-20860-2.
\newblock \doi{10.1007/978-3-319-20860-2_8}.

\bibitem[Hein et~al.(2004)Hein, Eisert, and Briegel]{hein2004multiparty}
M.~Hein, J.~Eisert, and H.~J. Briegel.
\newblock Multiparty entanglement in graph states.
\newblock \emph{Physical Review A}, 69\penalty0 (6):\penalty0 062311, 2004.
\newblock \doi{10.1103/PhysRevA.69.062311}.

\bibitem[Hein et~al.(2006)Hein, D{\"u}r, Eisert, Raussendorf, Nest, and
  Briegel]{hein2006entanglement}
Marc Hein, Wolfgang D{\"u}r, Jens Eisert, Robert Raussendorf, M~Nest, and H-J
  Briegel.
\newblock {Entanglement in graph states and its applications}.
\newblock In G.~Casati, D.~L. Shepelyansky, P.~Zoller, and G.~Benenti, editors,
  \emph{Quantum Computers, Algorithms and Chaos}, volume 162 of
  \emph{Proceedings of the International School of Physics "Enrico Fermi"},
  pages 115--218. IOS Press Books, 2006.
\newblock \doi{10.3254/978-1-61499-018-5-115}.

\bibitem[Houshmand et~al.(2017)Houshmand, Sedighi, Zamani, and
  Marjoei]{houshmand2017quantum}
Mahboobeh Houshmand, Mehdi Sedighi, Morteza~Saheb Zamani, and Kourosh Marjoei.
\newblock Quantum circuit synthesis targeting to improve one-way quantum
  computation pattern cost metrics.
\newblock \emph{ACM Journal on Emerging Technologies in Computing Systems
  (JETC)}, 13\penalty0 (4):\penalty0 55, 2017.
\newblock \doi{10.1145/3064834}.

\bibitem[Houshmand et~al.(2018)Houshmand, Houshmand, and
  Fitzsimons]{houshmand2018minimal}
Monireh Houshmand, Mahboobeh Houshmand, and Joseph~F Fitzsimons.
\newblock Minimal qubit resources for the realization of measurement-based
  quantum computation.
\newblock \emph{Physical Review A}, 98\penalty0 (1):\penalty0 012318, 2018.
\newblock \doi{10.1103/physreva.98.012318}.

\bibitem[Jeandel et~al.(2018{\natexlab{a}})Jeandel, Perdrix, and
  Vilmart]{JPV-universal}
Emmanuel Jeandel, Simon Perdrix, and Renaud Vilmart.
\newblock {Diagrammatic Reasoning Beyond Clifford+T Quantum Mechanics}.
\newblock In \emph{Proceedings of the 33rd Annual ACM/IEEE Symposium on Logic
  in Computer Science}, LICS '18, pages 569--578, New York, NY, USA,
  2018{\natexlab{a}}. ACM.
\newblock \doi{10.1145/3209108.3209139}.

\bibitem[Jeandel et~al.(2018{\natexlab{b}})Jeandel, Perdrix, and
  Vilmart]{SimonCompleteness}
Emmanuel Jeandel, Simon Perdrix, and Renaud Vilmart.
\newblock {A Complete Axiomatisation of the ZX-Calculus for Clifford+T Quantum
  Mechanics}.
\newblock In \emph{Proceedings of the 33rd Annual ACM/IEEE Symposium on Logic
  in Computer Science}, LICS '18, pages 559--568, New York, NY, USA,
  2018{\natexlab{b}}. ACM.
\newblock \doi{10.1145/3209108.3209131}.

\bibitem[Kissinger and Meijer-van~de Griend(2020)]{kissinger2019cnot}
Aleks Kissinger and Arianne Meijer-van~de Griend.
\newblock {CNOT circuit extraction for topologically-constrained quantum
  memories}.
\newblock \emph{Quantum Information and Computation}, 20:\penalty0 581--596,
  2020.
\newblock \doi{10.26421/QIC20.7-8}.

\bibitem[Kissinger and van~de Wetering(2019)]{kissinger2017MBQC}
Aleks Kissinger and John van~de Wetering.
\newblock Universal {MBQC} with generalised parity-phase interactions and
  {P}auli measurements.
\newblock \emph{{Quantum}}, 3:\penalty0 134, 2019.
\newblock \doi{10.22331/q-2019-04-26-134}.

\bibitem[Kissinger and van~de Wetering(2020{\natexlab{a}})]{pyzx}
Aleks Kissinger and John van~de Wetering.
\newblock {PyZX: Large Scale Automated Diagrammatic Reasoning}.
\newblock In Bob Coecke and Matthew Leifer, editors, \emph{Proceedings 16th
  International Conference on Quantum Physics and Logic, Chapman University,
  Orange, CA, USA., 10-14 June 2019}, volume 318 of \emph{Electronic
  Proceedings in Theoretical Computer Science}, pages 229--241. Open Publishing
  Association, 2020{\natexlab{a}}.
\newblock \doi{10.4204/EPTCS.318.14}.

\bibitem[Kissinger and van~de Wetering(2020{\natexlab{b}})]{tcountpreprint}
Aleks Kissinger and John van~de Wetering.
\newblock {Reducing the number of non-Clifford gates in quantum circuits}.
\newblock \emph{Physical Review A, Vol.102-2}, 102:\penalty0 022406, 8
  2020{\natexlab{b}}.
\newblock \doi{10.1103/PhysRevA.102.022406}.

\bibitem[Kliuchnikov and Maslov(2013)]{CliffOpt}
Vadym Kliuchnikov and Dmitri Maslov.
\newblock {Optimization of Clifford circuits}.
\newblock \emph{Phys. Rev. A}, 88:\penalty0 052307, 2013.
\newblock \doi{10.1103/PhysRevA.88.052307}.

\bibitem[Kotzig(1968)]{kotzig}
Anton Kotzig.
\newblock Eulerian lines in finite 4-valent graphs and their transformations.
\newblock In \emph{Colloqium on Graph Theory Tihany 1966}, pages 219--230.
  Academic Press, 1968.

\bibitem[Mhalla and Perdrix(2008)]{MP08-icalp}
Mehdi Mhalla and Simon Perdrix.
\newblock Finding optimal flows efficiently.
\newblock In \emph{the 35th International Colloquium on Automata, Languages and
  Programming (ICALP), LNCS}, volume 5125, pages 857--868, 2008.
\newblock \doi{10.1007/978-3-540-70575-8_70}.

\bibitem[Mhalla and Perdrix(2012)]{mhalla2012graph}
Mehdi Mhalla and Simon Perdrix.
\newblock Graph states, pivot minor, and universality of {(X, Z)}-measurements.
\newblock \emph{International Journal of Unconventional Computing}, 9\penalty0
  (1--2):\penalty0 153--171, 2012.

\bibitem[Mhalla et~al.(2011)Mhalla, Murao, Perdrix, Someya, and
  Turner]{mhalla2011graph}
Mehdi Mhalla, Mio Murao, Simon Perdrix, Masato Someya, and Peter~S Turner.
\newblock Which graph states are useful for quantum information processing?
\newblock In \emph{Conference on Quantum Computation, Communication, and
  Cryptography}, pages 174--187. Springer, 2011.
\newblock \doi{10.1007/978-3-642-54429-3_12}.

\bibitem[Miyazaki et~al.(2015)Miyazaki, Hajdu{\v{s}}ek, and
  Murao]{miyazaki2015analysis}
Jisho Miyazaki, Michal Hajdu{\v{s}}ek, and Mio Murao.
\newblock Analysis of the trade-off between spatial and temporal resources for
  measurement-based quantum computation.
\newblock \emph{Physical Review A}, 91\penalty0 (5):\penalty0 052302, 2015.
\newblock \doi{10.1103/physreva.91.052302}.

\bibitem[Ng and Wang(2017)]{ng2017universal}
Kang~Feng Ng and Quanlong Wang.
\newblock A universal completion of the {ZX}-calculus.
\newblock \emph{arXiv:1706.09877}, 2017.

\bibitem[Nielsen and Chuang(2010)]{NielsenChuang}
M.~A. Nielsen and Isaac~L. Chuang.
\newblock \emph{Quantum computation and quantum information}.
\newblock Cambridge university press, 2010.
\newblock \doi{10.1119/1.1463744}.

\bibitem[Raussendorf et~al.(2003)Raussendorf, Browne, and Briegel]{MBQC2}
R.~Raussendorf, D.E. Browne, and H.J. Briegel.
\newblock Measurement-based quantum computation on cluster states.
\newblock \emph{Physical Review A}, 68\penalty0 (2):\penalty0 22312, 2003.
\newblock \doi{10.1103/physreva.68.022312}.

\bibitem[Raussendorf and Briegel(2001)]{MBQC1}
Robert Raussendorf and Hans~J. Briegel.
\newblock A one-way quantum computer.
\newblock \emph{Phys. Rev. Lett.}, 86:\penalty0 5188--5191, 2001.
\newblock \doi{10.1103/PhysRevLett.86.5188}.

\bibitem[Van~den Nest et~al.(2004)Van~den Nest, Dehaene, and De~Moor]{NestMBQC}
M.~Van~den Nest, J.~Dehaene, and B.~De~Moor.
\newblock Graphical description of the action of local {C}lifford
  transformations on graph states.
\newblock \emph{Physical Review A}, 69\penalty0 (2):\penalty0 9422, 2004.
\newblock \doi{10.1103/physreva.69.022316}.

\bibitem[Vilmart(2019)]{euler-zx}
Renaud Vilmart.
\newblock A {Near}-{Minimal} {Axiomatisation} of {ZX}-{Calculus} for {Pure}
  {Qubit} {Quantum} {Mechanics}.
\newblock In \emph{2019 34th {Annual} {ACM}/{IEEE} {Symposium} on {Logic} in
  {Computer} {Science} ({LICS})}, pages 1--10, 2019.
\newblock \doi{10.1109/LICS.2019.8785765}.

\end{thebibliography}

\appendix

\section{Gflow and determinism}\label{sec:gflow-determinism}

Here, we introduce the notation and concepts needed to state and prove the fact that gflow is a sufficient condition for uniform, strong and stepwise determinism of a measurement pattern. Then we give the detailed statement along with the proof.

Throughout this appendix, $\simeq$ shall denote equality up to a global phase; that is, for linear maps $L$ and $K$, we have $L\simeq K$ iff there is an angle $\al\in[0,2\pi)$ such that $L=e^{i\al}K$.

Recall from Definition~\ref{def:determinism} that the linear map implemented by a particular set of measurement outcomes in a pattern is called a branch of the pattern. We begin by making this more precise.
\begin{definition}\label{def:branch}
Let $\pat$ be a measurement pattern with $n$ measurement commands. A {\em branch} of the pattern is a sequence $p$ of length $n$ whose entries are binary digits $0$ and $1$ together with a linear map $\intf{\pat}_p$, where each measurement is replaced with an outcome corresponding to either $0$ or $1$ as indicated by the corresponding entry in $p$.
\end{definition}
If $s$ is a binary sequence with $k$ entries where $k\leq n$, we may also define an `intermediate pattern' $\intf{\pat}_s$ by replacing the first $k$ measurements in $\pat$ by an outcome corresponding to the entries in $s$.

We may now state various forms of determinism (Definition~\ref{def:determinism}) using the notation of the above definition. A pattern $\pat$ with $n$ measurement commands is {\em deterministic} if for any branches $p$ and $q$ there is a complex number $c$ such that $\intf{\pat}_p=c\intf{\pat}_q$. It is {\em strongly deterministic} if for any branches $p$ and $q$ we have $\intf{\pat}_p\simeq\intf{\pat}_q$. It is {\em stepwise deterministic} if the pattern $\intf{\pat}_s$ is deterministic for any binary sequence $s$ with at most $n$ entries. Recall that a pattern is {\em uniformly deterministic} if it is deterministic for any choice of measurement angles in the measurement commands appearing in $\pat$.

It will be useful to encode the measurement planes using following functions for any \LOG\ $(G,I,O,\ld)$.
\begin{align*}
\beta_X,\beta_Z : \comp O &\rightarrow\{0,1\} \\
\beta_X(i) &= \begin{cases} 1 & \text{if } \ld(i)\in\{\XYm,\XZm\} \\
                            0 & \text{if } \ld(i)=\YZm \end{cases} \\
\beta_Z(i) &= \begin{cases} 1 & \text{if } \ld(i)\in\{\XZm,\YZm\} \\
                            0 & \text{if } \ld(i)=\XYm \end{cases}
\end{align*}

The following observation will be the starting point for our proof of the main theorem.
\begin{lemma}\label{lma:corrections}
Let $s_i$ be the binary digit indicating the measurement outcome of qubit $i$. Then\footnote{Note that the pattern below is not runnable, as the corrections depend on a future measurement outcome.}
\begin{equation*}
\intf{M^{\ld(i), \al(i)}_i [X_i]^{\beta_Z(i)s_i} [Z_i]^{\beta_X(i)s_i}}_p \simeq \bra{+_{\ld(i),\al(i)}}_i
\end{equation*}
for both branches $p$.
\end{lemma}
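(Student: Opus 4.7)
My plan is to split on the value of the measurement outcome $s_i$ determined by the branch $p$, and then verify each of the three measurement planes by a direct calculation on the explicit $\ket{\pm_{\ld,\al}}$ formulas given in Section~\ref{sec:MBQC}.

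First, observe that the result is trivial in the branch where $s_i = 0$: both corrections $[X_i]^{\beta_Z(i)\cdot 0}$ and $[Z_i]^{\beta_X(i)\cdot 0}$ act as the identity on qubit $i$, and by Definition~\ref{def:meas_pattern} the outcome-$0$ measurement projects onto exactly $\bra{+_{\ld(i),\al(i)}}_i$, so both sides coincide on the nose (not merely up to a global phase). The substance of the lemma is therefore the $s_i = 1$ branch, in which the projector is $\bra{-_{\ld(i),\al(i)}}_i$, and we must establish
\[
\bra{-_{\ld(i),\al(i)}} \; X^{\beta_Z(i)}\, Z^{\beta_X(i)} \;\simeq\; \bra{+_{\ld(i),\al(i)}}.
\]

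Now I would simply check this identity in each of the three planes, using the definitions of $\beta_X$ and $\beta_Z$ to read off which Paulis appear. For $\ld(i)=\XYm$ we have $\beta_X=1$, $\beta_Z=0$, so the left-hand side is $\bra{-_{\XYm,\al}} Z$; a one-line computation using $\ket{\pm_{\XYm,\al}} = \tfrac{1}{\sqrt 2}(\ket 0 \pm e^{i\al}\ket 1)$ gives $Z\ket{-_{\XYm,\al}} = \ket{+_{\XYm,\al}}$. For $\ld(i)=\YZm$ we have $\beta_X=0$, $\beta_Z=1$, and the required identity is $\bra{-_{\YZm,\al}} X \simeq \bra{+_{\YZm,\al}}$, which follows from $X\ket{-_{\YZm,\al}} = -i\ket{+_{\YZm,\al}}$ using the given formulas for $\ket{\pm_{\YZm,\al}}$. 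For $\ld(i)=\XZm$ we have $\beta_X=\beta_Z=1$, so we must check $\bra{-_{\XZm,\al}} X Z \simeq \bra{+_{\XZm,\al}}$; since $XZ = -iY$, this reduces to showing that $Y\ket{-_{\XZm,\al}}$ is a unit scalar multiple of $\ket{+_{\XZm,\al}}$, which is again immediate.

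Conceptually the three cases are all the same statement: the Pauli $X^{\beta_Z}Z^{\beta_X}$ is, by design, the (essentially unique) single-qubit Pauli that stabilises the measurement plane while anti-commuting with the measurement observable, so it swaps $\ket{+_{\ld,\al}}$ and $\ket{-_{\ld,\al}}$ within that plane up to a phase. There is no real obstacle: the only thing to be careful about is tracking the global phases that appear in the \XZ and \YZ cases (factors of $\pm i$ coming from $XZ = -iY$ and from the action of $X$ on the $\YZ$-eigenbasis), but these are harmless under $\simeq$. The whole argument fits comfortably into a short case analysis and establishes the lemma.
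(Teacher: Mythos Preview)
Your proposal is correct and follows essentially the same approach as the paper: handle the $s_i=0$ branch trivially, then in the $s_i=1$ branch perform a direct case analysis over the three measurement planes using the explicit $\ket{\pm_{\ld,\al}}$ formulas. The only cosmetic difference is that for the \XZ case you pass through $XZ=-iY$ whereas the paper applies $Z$ then $X$ directly, but the computations are the same.
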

\begin{proof}
If $p=(0)$, so that $s_i=0$, this is clear, as the corrections vanish while there is nothing to correct. Hence suppose $p=(1)$. We have three cases:
\begin{enumerate}
\item $\ld(i) = \XYm$, so that $\beta_Z(i)=0$ and $\beta_X(i)=1$,
\item $\ld(i) = \XZm$, so that $\beta_Z(i)=\beta_X(i)=1$,
\item $\ld(i) = \YZm$, so that $\beta_Z(i)=1$ and $\beta_X(i)=0$.
\end{enumerate}
Recalling that the Pauli operators act on the basis qubits as
\begin{align*}
X\ket{0}=\ket{1}\quad &X\ket{1}=\ket{0} \\
Z\ket{0}=-\ket{0}\quad &Z\ket{1}=\ket{1}
\end{align*}
we compute in each case, starting from the (dual of the) left-hand side:
\begin{align*}
1. && Z_i(\ket 0 - e^{i\al(i)}\ket 1) &= -(\ket 0 + e^{i\al(i)}\ket 1) \\
&&& \simeq \ket{+_{\XYm,\al(i)}}_i, \\
2. && Z_iX_i\left(\sin\left(\frac {\al(i)} 2\right)\ket 0 - \cos\left(\frac {\al(i)} 2\right)\ket 1\right) &= \sin\left(\frac {\al(i)} 2\right)\ket 1 + \cos\left(\frac {\al(i)} 2\right)\ket 0 \\
&&& = \ket{+_{\XZm,\al(i)}}_i, \\
3. && X_i\left(\sin\left(\frac {\al(i)} 2\right)\ket 0 - i\cos\left(\frac {\al(i)} 2\right)\ket 1\right) &= \sin\left(\frac {\al(i)} 2\right)\ket 1 - i\cos\left(\frac {\al(i)} 2\right)\ket 0 \\
&&&\simeq \cos\left(\frac {\al(i)} 2\right)\ket 0 + i\sin\left(\frac {\al(i)} 2\right)\ket 1 \\
&&&= \ket{+_{\YZm,\al(i)}}_i. \qedhere
 \end{align*}
\end{proof}

\begin{definition}\label{def:graph_stabiliser}
Let $(G,I,O)$ be an open graph. View $(V,I,O)$ as the register of a measurement pattern as in Definition~\ref{def:meas_pattern}. Define the following command for each $i\in\comp I$:
$$K_i\coloneqq X_i\prod_{j\in N_G(i)}Z_j.$$
We call $K_i$ the \emph{graph stabiliser} at qubit $i$ for reasons explained below. Observe that the commands $Z_j$ in the product act on different qubits and hence commute with each other. Thus the product does not depend on the order, so that $K_i$ is well-defined.
\end{definition}
The reason for calling the $K_i$ graph stabilisers is the following lemma.
\begin{lemma}[{\cite[pp.~6-7]{GFlow}}]\label{lma:graph_stabiliser}
For any open graph $(G,I,O)$ and for any $i\in\comp I$ with graph stabiliser $K_i$ we have
$$K_iE_GN_{\comp I} = E_GN_{\comp I}.$$
\end{lemma}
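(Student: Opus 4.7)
The strategy is to move $K_i$ across $E_G$ and then across $N_{\comp I}$ using elementary commutation identities between Clifford generators, showing that it acts trivially on both. Write $P_i := \prod_{j\in N_G(i)} Z_j$, so that $K_i = X_i P_i$. I will first collect the relevant commutation relations, and then perform the computation.

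The three identities I need are: (a) Pauli-$Z$ commutes with $CZ$, so every $Z_j$ commutes with every $E_{kl}$ and in particular $P_i$ commutes with $E_G$; (b) conjugation of $X$ through $CZ$ picks up a $Z$ on the other wire, $X_i E_{ij} = E_{ij} X_i Z_j$, which one verifies directly on the computational basis (as one does for the standard graph-state stabiliser relations); and (c) $X_i$ commutes with any $E_{kl}$ supported on qubits $k,l\neq i$. Iterating (b) over the neighbours $j\in N_G(i)$ and combining with (c) for the remaining edges yields
\begin{equation*}
X_i E_G \;=\; E_G\, X_i\, P_i.
\end{equation*}

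Using these, I compute
\begin{equation*}
K_i E_G N_{\comp I}
\;=\; X_i P_i\, E_G\, N_{\comp I}
\;\stackrel{(a)}{=}\; X_i E_G\, P_i\, N_{\comp I}
\;=\; E_G\, X_i P_i\, P_i\, N_{\comp I}
\;=\; E_G\, X_i\, N_{\comp I},
\end{equation*}
using $P_i^2 = \mathrm{id}$ in the last step. Finally, since the hypothesis $i\in\comp I$ ensures that $N_i$ is one of the factors of $N_{\comp I}$, the identity $X\ket{+} = \ket{+}$ gives $X_i N_i = N_i$, while $X_i$ commutes with each $N_j$ for $j\neq i$ (they act on disjoint qubits); hence $X_i N_{\comp I} = N_{\comp I}$, and we conclude $K_i E_G N_{\comp I} = E_G N_{\comp I}$, as desired.

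There is no real obstacle here — the proof is a direct verification, and the only point requiring a sanity check is the bookkeeping of commutations, in particular making sure that the neighbourhood structure in $P_i$ matches exactly the edges across which $X_i$ is propagated through $E_G$. The hypothesis $i\in\comp I$ is used exactly once, namely to ensure that $N_i$ appears in $N_{\comp I}$ so that the leftover $X_i$ can be absorbed into the preparation.
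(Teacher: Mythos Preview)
Your proof is correct: the commutation identities (a)--(c) are all standard and correctly applied, and the bookkeeping for pushing $X_i$ through $E_G$ to produce exactly $P_i$ is sound. The paper does not actually supply its own proof of this lemma --- it simply cites the original reference --- so there is nothing to compare against; what you have written is essentially the standard verification of the graph-state stabiliser identity and would serve as the omitted proof.
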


\paragraph{Notation:} Given a command $C$ in a measurement pattern with register $V$, for any subset of qubits $S\sse V$ we write $C_{S}\coloneqq\prod_{i\in S}C_i$ if the right-hand side is well-defined (in particular, if it does not depend on the order in which the different operations are applied). If all the $C_i$'s are moreover Pauli $X$ or $Z$ operators, and $s_i$ is a binary digit corresponding to the measurement outcome of qubit $i$, we may treat $C_{S}$ as a joint correction by writing $C_{S}^{s_i}\coloneqq\prod_{i\in S}[C_i]^{s_i}$. Note that the graph stabiliser may now be written as $K_i = X_iZ_{N_G(i)}$.

We are now ready to state and prove the central theorem about gflow.
\begin{theorem}\label{t-flow-app}
Let $\Gamma = (G,I,O, \ld)$ be a \LOG. If $\Gamma$ has a gflow $\gflow = (g,\prec)$, then for any set of measurement angles $\alpha:\comp O\rightarrow[0,2\pi)$ the following pattern is runnable as well as strongly and stepwise deterministic:
\begin{equation}\label{pattern}
\pat_{\Gamma,\gflow,\al}\coloneqq\prod^\prec_{i\in \comp{O}} \left( X_{g(i)\setminus\{i\}}^{s_i}Z_{\odd{}{g(i)}\setminus\{i\}}^{s_i} M^{\lambda(i), \al(i)}_i\right) E_GN_{\comp{I}},
\end{equation}
where $\prod^\prec$ denotes concatenation of commands following the order $\prec$. Moreover, $\pat_{\Gamma,\gflow,\al}$ realizes the linear map associated with $\Gamma$ at angles $\al$ (Definition~\ref{def:ogs-to-linear-map}) in the sense that for any branch $p$ we have
\begin{equation}\label{eqn:unit_embedding}
\intf{\pat_{\Gamma,\gflow,\al}}_p\simeq\prod_{i\in \comp{O}}\bra{{+_{\lambda(i), \al(i)}}}_i E_G\phZ {\comp{I}}{}.
\end{equation}
\end{theorem}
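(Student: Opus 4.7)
The plan is to verify each assertion of the theorem in turn: runnability, strong determinism via the explicit branch formula~\eqref{eqn:unit_embedding}, and stepwise determinism, with the latter two handled by a single induction.

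For runnability, I would observe that each correction $[X_w]^{s_i}$ with $w\in g(i)\setminus\{i\}$ or $[Z_w]^{s_i}$ with $w\in\odd{}{g(i)}\setminus\{i\}$ depends on the outcome $s_i$ of the measurement of qubit $i$, and gflow conditions~\ref{it:g} and~\ref{it:odd} guarantee $i\prec w$. Since the commands of $\pat_{\Gamma,\gflow,\al}$ are concatenated following $\prec$, the measurement of $i$ occurs before the correction on $w$, so $s_i$ is already available when the correction is applied. The remaining conditions of Definition~\ref{def:runnable_pattern} are immediate: every non-input qubit is prepared by $N_{\comp I}$, every non-output qubit is measured by some $M_i^{\ld(i),\al(i)}$, and the entangling operators in $E_G$ act before any measurement.

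The core technical step is a \emph{stabiliser identity}: for each $i\in\comp O$, iterated application of Lemma~\ref{lma:graph_stabiliser} shows that $\prod_{w\in g(i)} K_w$ stabilises $E_G N_{\comp I}$. Multiplying out the $K_w = X_w Z_{N_G(w)}$ on different qubits, and using conditions~\ref{it:XY}--\ref{it:YZ} to read off whether $i\in g(i)$ and whether $i\in\odd{}{g(i)}$ in terms of $\beta_X(i),\beta_Z(i)$, one obtains
$$
\prod_{w\in g(i)} K_w \;\simeq\; X_i^{\beta_Z(i)}\, Z_i^{\beta_X(i)}\cdot X_{g(i)\setminus\{i\}}\, Z_{\odd{}{g(i)}\setminus\{i\}},
$$
where $\simeq$ absorbs Pauli signs arising from commuting $X$ and $Z$ past each other on qubits in $g(i)\cap\odd{}{g(i)}$ (the case $\ld(i)=\XZm$). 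Multiplying both sides on the left by $X_i^{\beta_Z(i)} Z_i^{\beta_X(i)}$, which is self-inverse up to sign, and applying the stabilisation, yields
$$
X_{g(i)\setminus\{i\}}\, Z_{\odd{}{g(i)}\setminus\{i\}}\, E_G N_{\comp I} \;\simeq\; X_i^{\beta_Z(i)}\, Z_i^{\beta_X(i)}\, E_G N_{\comp I}.
$$
The main bookkeeping obstacle is tracking these Pauli signs, but they are scalar and are absorbed into the global phase.

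With this identity in hand, I would prove~\eqref{eqn:unit_embedding} by induction on $k\in\{0,1,\ldots,n\}$ using a linear extension $i_1\prec\cdots\prec i_n$ of $\prec$ on $\comp O$. The inductive claim is that after the first $k$ measurements and corrections every branch produces the same state up to global phase, namely $\bigl(\prod_{j\leq k}\bra{+_{\ld(i_j),\al(i_j)}}_{i_j}\bigr)\, E_G N_{\comp I}$. The base case $k=0$ is trivial. For the step from $k$ to $k+1$, the inductive hypothesis lets me assume the previous outcomes were all $0$; the branch $s_{k+1}=0$ then gives the claim immediately, while for $s_{k+1}=1$ I would (i)~commute $C_{i_{k+1}}$ past $\bra{-_{\ld(i_{k+1}),\al(i_{k+1})}}_{i_{k+1}}$ and all earlier projections, which is trivial because $C_{i_{k+1}}$ acts on qubits in $(g(i_{k+1})\cup\odd{}{g(i_{k+1})})\setminus\{i_{k+1}\}$, disjoint from $\{i_1,\ldots,i_{k+1}\}$ by~\ref{it:g} and~\ref{it:odd}; (ii)~apply the stabiliser identity to replace $C_{i_{k+1}}\, E_G N_{\comp I}$ by $X_{i_{k+1}}^{\beta_Z(i_{k+1})} Z_{i_{k+1}}^{\beta_X(i_{k+1})}\, E_G N_{\comp I}$; (iii)~commute this single-qubit Pauli back through the earlier projections (again trivially, since they are on different qubits); and (iv)~apply Lemma~\ref{lma:corrections} to absorb it into $\bra{-_{\ld(i_{k+1}),\al(i_{k+1})}}_{i_{k+1}}$, turning it into $\bra{+_{\ld(i_{k+1}),\al(i_{k+1})}}_{i_{k+1}}$. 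Taking $k=n$ yields~\eqref{eqn:unit_embedding} and hence strong determinism, while the intermediate cases give stepwise determinism; the fact that $\al$ was arbitrary throughout delivers uniform determinism for free.
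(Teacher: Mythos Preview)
Your proof is correct and rests on the same three ingredients as the paper's: Lemma~\ref{lma:corrections}, the graph-stabiliser Lemma~\ref{lma:graph_stabiliser}, and the gflow conditions~\ref{it:g}--\ref{it:YZ}. The organisation, however, differs. The paper does not induct on the number of measurements; instead it manipulates the full product~\eqref{pattern} through a single chain of equalities (labelled (s1)--(s7)): it inserts the trivial factor $X_i^{2\beta_Z(i)s_i}Z_i^{2\beta_X(i)s_i}$, applies Lemma~\ref{lma:corrections} to every $M_i$ simultaneously, then rewrites $X_{g(i)}^{s_i}Z_{\odd{}{g(i)}}^{s_i}$ as $\prod_{u\in g(i)}K_u^{s_i}$ via a parity argument on even versus odd neighbourhoods (this is precisely your ``stabiliser identity'' read in reverse), and finally commutes each such product past the earlier projections using~\ref{it:g}--\ref{it:odd} before annihilating it against $E_GN_{\comp I}$. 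Your inductive packaging has the merit that stepwise determinism drops out of the intermediate cases; the paper instead handles stepwise determinism by a separate structural argument, showing that the gflow restricts to the sub-\LOG\ obtained by deleting already-measured (i.e.\ $\prec$-minimal) vertices, which is a somewhat stronger statement than what your route needs.
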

\begin{remark}\label{rem:strong-unitary}
Note that the pattern in equation~\eqref{pattern} is uniformly deterministic, as it is (strongly) deterministic independently of the choice of $\al$. It can be shown \cite[Proposition 7.3.4]{danos_kashefi_panangaden_perdrix_2009} that any strongly deterministic pattern implements a unitary embedding. Thus the map in equation \eqref{eqn:unit_embedding} is guaranteed to be one. In particular, if $|I|=|O|$, then the map is unitary.
\end{remark}
\begin{proof}
First observe that $\pat_{\Gamma,\gflow,\al}$ is indeed runnable; the conditions \ref{it:g} and \ref{it:odd} ensure that no command acts on a qubit that has been measured\footnote{We remark that the condition \eqref{eq:wrong-g2} would fail to guarantee this.}, and all the other conditions in Definition~\ref{def:runnable_pattern} are easily verified from the shape of the pattern.

For strong determinism, it is sufficient to show the claimed equality \eqref{eqn:unit_embedding}, as this shows that any two branches are equal (up to a global phase). The proof of the equality is based on Lemma~\ref{lma:corrections} and the following observation. Using the fact that the Pauli matrices are involutive, we show that for all $i\in\comp O$,
\begin{align}\label{eqn:identity_Z}
\prod_{u\in g(i)}Z_{\eve{}{g(i)}\cap N_G(u)} &= I, \\ \nonumber
\prod_{u\in g(i)}Z_{\odd{}{g(i)}\cap N_G(u)} &= Z_{\odd{}{g(i)}},
\end{align}
where $I$ is the identity matrix. The first equality follows by taking $u\in g(i)$ and $j\in N_G(u)$ such that $j$ is in the even neighbourhood of $g(i)$. Then the number of neighbours of $j$ in $g(i)$ is even, so that $Z_j$ appears in the product an even number of times. For the second equality, take $u\in g(i)$ and $j\in N_G(u)$ such that $j$ is in the odd neighbourhood of $g(i)$. Then the number of neighbours of $j$ in $g(i)$ is odd, so that $Z_j$ appears in the product an odd number of times.

Thus we compute from equation~\eqref{pattern}:
\begin{align*}
& \intf{\pat_{\Gamma,\gflow,\al}}_p \\
&= \prod^\prec_{i\in \comp{O}} \intf{ M^{\lambda(i), \al(i)}_i X_{g(i)\setminus\{i\}}^{s_i} Z_{\odd{}{g(i)}\setminus\{i\}}^{s_i} }_p E_G N_{\comp{I}} && \textrm{(s1)} \\
&= \prod^\prec_{i\in \comp{O}} \intf{ M^{\lambda(i), \al(i)}_i \cx{i}{2\beta_Z(i)s_i} \cz{i}{2\beta_X(i)s_i} X_{g(i)\setminus\{i\}}^{s_i} Z_{\odd{}{g(i)}\setminus\{i\}}^{s_i} }_p E_G N_{\comp{I}} && \textrm{(s2)} \\
&\simeq \prod^\prec_{i\in \comp{O}} \left( \bra{+_{\lambda(i), \al(i)}}_i \cx{i}{\beta_Z(i)s_i} \cz{i}{\beta_X(i)s_i} X_{g(i)\setminus\{i\}}^{s_i} Z_{\odd{}{g(i)}\setminus\{i\}}^{s_i} \right) E_G N_{\comp{I}} && \textrm{(s3)} \\
&\simeq \prod^\prec_{i\in \comp{O}} \left( \bra{+_{\lambda(i), \al(i)}}_i X_{g(i)}^{s_i} Z_{\odd{}{g(i)}}^{s_i} \right) E_G N_{\comp{I}} && \textrm{(s4)} \\
&= \prod^\prec_{i\in \comp{O}} \left( \bra{+_{\lambda(i), \al(i)}}_i \prod_{u\in g(i)}\cx{u}{s_i} \prod_{v\in g(i)}Z_{\odd{}{g(i)}\cap N_G(v)}^{s_i} \prod_{w\in g(i)}Z_{\eve{}{g(i)}\cap N_G(w)}^{s_i} \right) E_G N_{\comp{I}} && \textrm{(s5)} \\
&\simeq \prod^\prec_{i\in \comp{O}} \left( \bra{+_{\lambda(i), \al(i)}}_i \prod_{u\in g(i)} \left( \cx{u}{s_i} Z_{\odd{}{g(i)}\cap N_G(u)}^{s_i} Z_{\eve{}{g(i)}\cap N_G(u)}^{s_i} \right) \right) E_G N_{\comp{I}} && \textrm{(s6)} \\
&= \prod^\prec_{i\in \comp{O}} \left( \bra{+_{\lambda(i), \al(i)}}_i \prod_{u\in g(i)} \left( \cx{u}{s_i} Z_{N_G(u)}^{s_i} \right) \right) E_G N_{\comp{I}} && \\
&= \prod^\prec_{i\in \comp{O}} \left( \bra{+_{\lambda(i), \al(i)}}_i \prod_{u\in g(i)}K_u^{s_i} \right) E_G N_{\comp{I}} && \\
&= \prod^\prec_{i\in \comp{O}} \bra{+_{\lambda(i), \al(i)}}_i E_G N_{\comp{I}}. && \textrm{(s7)}
\end{align*}
Here we used:
\begin{enumerate}[(s1)]
\item The fact that the correction terms do not act on the vertex $i$.
\item Involutivity of the Pauli matrices.
\item Lemma~\ref{lma:corrections} and that $X_i$ and $Z_i$ anti-commute.
\item Conditions~\ref{it:XY}-\ref{it:YZ}. For example, if $\ld(i)=YZ$, then $\beta_Z(i)=1$ and by \ref{it:YZ} $i\in g(i)$, so that $\cx{i}{s_i}X_{g(i)\setminus\{i\}}^{s_i}=X_{g(i)}^{s_i}$, while $\beta_X(i)=0$ and again by \ref{it:YZ} $i\notin\odd{}{g(i)}$, so that $Z_{\odd{}{g(i)}\setminus\{i\}}=Z_{\odd{}{g(i)}}$. The cases of other measurement planes are similar.
\item Equations~\eqref{eqn:identity_Z} and the definition of $X_{g(i)}^{s_i}$.
\item The facts that $X_i$ and $Z_i$ anti-commute, and that commands acting on different qubits commute.
\item Lemma~\ref{lma:graph_stabiliser} and~\ref{it:g} in order to commute the graph stabilisers past the liner maps.
\end{enumerate}
Thus the pattern is strongly deterministic.

It remains to show that $\pat_{\Gamma,\gflow,\al}$ is stepwise deterministic. Thus consider a subgraph of $\Gamma$ (with $I$, $O$ and $\ld$ restricted appropriately) obtained by removing some minimal vertices in the order $\prec$. The interpretation is that those vertices have already been measured. By conditions \ref{it:g} and \ref{it:odd}, for any vertex $i$ in the subgraph, the correction set $g(i)$ and its odd neighbourhood do not contain any of the removed vertices, and are thus themselves contained in the subgraph. Therefore we may restrict $g$ to the subgraph, and the odd neighbourhood of each $g(i)$ in the subgraph will be equal to its odd neighbourhood in the full graph. Thus the conditions \ref{it:XY}, \ref{it:XZ} and \ref{it:YZ} are satisfied by the subgraph. If we also restrict the partial order to the subgraph, we find that the conditions \ref{it:g} and \ref{it:odd} are satisfied. Hence the gflow of the full graph restricts to a gflow on the subgraph. Therefore, since the measurements in $\pat_{\Gamma,\gflow,\al}$ follow the order $\prec$, we may think of each measurement as removing a minimal vertex in $\Gamma$. As we have just seen, the graph with some minimal vertices removed has a gflow given by restriction of the original gflow, so that repeating the preceding parts of the proof will yield another (strongly) deterministic pattern. This pattern is equal to the proper sub-pattern of $\pat_{\Gamma,\gflow,\al}$ which does not involve the already-performed measurement and its corresponding corrections. Therefore, the original pattern is stepwise deterministic.
\end{proof}

\section{Proofs for Section~\ref{sec:rewriting}}\label{sec:proofs}

The following lemma records how the odd neighbourhood of an arbitrary set of vertices changes under local complementation.

\begin{lemma}(\cite[Lemma B.4]{cliff-simp})\label{lem:oddneighbours}
Given a graph $G=(V,E)$, a subset $A\subseteq V$ and a vertex $u \in V$, we have
$$\odd {G\star u} A=\begin{cases}\odd G A \symd (N_G(u)\cap A) & \text{if $u\notin \odd G A$}\\ \odd G {A} \symd (N_G(u)\setminus A)&\text{if $u\in \odd G A$}.\end{cases}$$
\end{lemma}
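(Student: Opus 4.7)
The plan is to unfold the definition of the odd neighbourhood and track, vertex by vertex, exactly how $|N_{G\star u}(v)\cap A|$ changes in parity after the local complementation. Let me first record how neighbourhoods transform. By Definition~\ref{def:loc-comp}, the only edges toggled are those between pairs of distinct neighbours of $u$. Consequently, for each $v\in V$:
\begin{itemize}
\item if $v=u$ or $v\notin N_G(u)\cup\{u\}$, then $N_{G\star u}(v)=N_G(v)$;
\item if $v\in N_G(u)$, then $N_{G\star u}(v)=N_G(v)\symd(N_G(u)\setminus\{v\})$, since every $w\in N_G(u)\setminus\{v\}$ has its edge to $v$ toggled while the edge $uv$ (and every edge from $v$ to a non-neighbour of $u$) is left alone.
\end{itemize}

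Next I would translate this to a parity statement. For any sets $X,Y,A$ one has $(X\symd Y)\cap A=(X\cap A)\symd(Y\cap A)$ and hence $|(X\symd Y)\cap A|\equiv|X\cap A|+|Y\cap A|\pmod{2}$. Applying this to the second bullet and using that $v\in N_G(u)$ so $|(N_G(u)\setminus\{v\})\cap A|\equiv |N_G(u)\cap A|+[v\in A]\pmod{2}$, one obtains, for $v\in N_G(u)$,
\[
|N_{G\star u}(v)\cap A|\equiv |N_G(v)\cap A|+|N_G(u)\cap A|+[v\in A]\pmod{2},
\]
while for all other $v$ the parity is unchanged.

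From here the result falls out by a case split on whether $u\in\odd{G}{A}$. Writing $c:=[u\in\odd{G}{A}]$, the membership $v\in\odd{G\star u}{A}$ differs from $v\in\odd{G}{A}$ precisely when $v\in N_G(u)$ and $c+[v\in A]$ is odd. If $c=0$, this picks out exactly $v\in N_G(u)\cap A$, giving $\odd{G\star u}{A}=\odd{G}{A}\symd(N_G(u)\cap A)$. If $c=1$, it picks out $v\in N_G(u)\setminus A$, giving $\odd{G\star u}{A}=\odd{G}{A}\symd(N_G(u)\setminus A)$, as claimed.

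The only subtle point, and the one I would be most careful about, is the \mbox{``$\setminus\{v\}$''} correction in $N_G(u)\setminus\{v\}$ when $v\in N_G(u)$: dropping it would give the wrong parity by exactly $[v\in A]$, and it is precisely this term that produces the two different cases in the statement. Everything else is bookkeeping.
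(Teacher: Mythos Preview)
Your proof is correct. The paper itself does not prove this lemma; it simply cites it from \cite[Lemma~B.4]{cliff-simp}, so there is no in-paper argument to compare against. Your direct parity computation---tracking how $|N_{G\star u}(v)\cap A|$ changes for $v\in N_G(u)$ and isolating the $[v\in A]$ correction from the $\setminus\{v\}$ term---is exactly the natural way to establish the result and matches the spirit of the cited source.
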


Now we can prove the main lemma.

\begin{replem}{lem:lc_gflow}
 \statelcgflow
\end{replem}
\begin{proof}
We divide the proof into three parts, first proving that $g'(u)$ satisfies the required properties, then proving that $g'(v)$ for $v\neq u$ satisfies \ref{it:g} and \ref{it:odd}, and finally showing that $g''(v)$ satisfies the conditions \ref{it:XY}--\ref{it:YZ}. Each part is further subdivided based on the values of $\lambda(u)$ or $\lambda(v)$.

\noindent \textbf{Part 1}: Note that condition~\ref{it:g} is trivially satisfied by $g'(u)$. We now show that $g'(u)$ satisfies the remaining conditions, subdividing into two cases based on $\lambda(u)$.

 \textbf{Case 1a}: If $\ld(u)\in\{\XYm,\XZm\}$, then $u\in\odd{G}{g(u)}$ and $\ld'(u)\in\{\XYm,\XZm\}$ with $\ld'(u)\neq\ld(u)$.
 We have $\odd{G\star u}{\{u\}} = N_{G\star u}(u) = N_G(u)$.
 Thus
 \begin{align}
  \odd{G\star u}{g'(u)} &= \odd{G\star u}{g(u)\symd\{u\}} \nonumber \\
  &= \odd{G\star u}{g(u)} \symd \odd{G\star u}{\{u\}} \nonumber \\
  &= \odd{G}{g(u)} \symd (N_G(u)\setminus g(u)) \symd N_G(u) \nonumber \\
  &= \odd{G}{g(u)} \symd (N_G(u)\cap g(u)), \label{eq:ld_XYXZ}
 \end{align}
 where the third equality uses Lemma~\ref{lem:oddneighbours}.
 Hence we have
 \begin{itemize}
  \item $u\in\odd{G\star u}{g'(u)}$ since $u\in\odd{G}{g(u)}$ and $u\notin N_G(u)$,
  \item $u\in g'(u)$ if and only if $u\notin g(u)$ (this is desired because $\ld'(u)\neq \ld(u)$); together with the previous item this yields Condition~\ref{it:XY} or \ref{it:XZ}, as appropriate,
  \item if $v\in\odd{G\star u}{g'(u)}$, then either $v\in\odd{G}{g(u)}$ or $v\in g(u)$, so either $u=v$ or $u\prec v$; this is Condition~\ref{it:odd}.
 \end{itemize}
 So $g'(u)$ indeed satisfies all the required conditions.

 \textbf{Case 1b}: If $\ld(u)=\YZm$, then $u\notin\odd{G}{g(u)}$, hence
 \begin{equation}\label{eq:ld_YZ}
  \odd{G\star u}{g'(u)} = \odd{G\star u}{g(u)} = \odd{G}{g(u)} \symd (N_G(u)\cap g(u)),
 \end{equation}
 again using Lemma~\ref{lem:oddneighbours}.
 This implies
 \begin{itemize}
  \item $u\notin\odd{G\star u}{g'(u)}$ because $u\notin\odd{G}{g(u)}$ and $u\notin N_G(u)$,
  \item $u\in g'(u)$ because $u\in g(u)$; together with the previous item this yields Condition~\ref{it:YZ},
  \item if $v\in\odd{G\star u}{g'(u)}$, then either $v\in\odd{G}{g(u)}$ or $v\in g(u)$, so either $u=v$ or $u\prec v$; this is Condition~\ref{it:odd}.
 \end{itemize}
 So $g'(u)$ indeed satisfies all the required conditions.

 \noindent\textbf{Part 2}:
We show that the correction set $g(v)$ for any $v\neq u$ satisfies Conditions~\ref{it:g} and \ref{it:odd} of the gflow definition.
The proof is split into subcases depending on whether $u\in\odd{G}{g(v)}$.

 \textbf{Case 2a}:
 If $u\notin\odd{G}{g(v)}$, first note that
 \[
  \odd{G\star u}{g'(v)} = \odd{G\star u}{g(v)} = \odd{G}{g(v)} \symd (N_G(u)\cap g(v))
 \]
 Hence if $w\in\odd{G\star u}{g(v)}$, then either $w\in\odd{G}{g(v)}$ or $w\in g(v)$, so $v=w$ or $v\prec w$; this is Condition~\ref{it:odd}.
 Condition~\ref{it:g} is trivially still satisfied.

 \textbf{Case 2b}:
 If $u\in\odd{G}{g(v)}$, first note that $v\prec u$.
 If $w\in g'(v)$, then $w\in g(v)$ or $w\in g'(u)$ or $w=u$, and in each case either $w=v$ or $v\prec w$, this is Condition~\ref{it:g}.
 Furthermore, we have
 \begin{align*}
  \odd{G\star u}{g'(v)} &= \odd{G\star u}{g(v) \symd g'(u) \symd \{u\}} \\
  &= \odd{G\star u}{g(v)} \symd \odd{G\star u}{g'(u)} \symd N_G(u) \\
  &= \odd{G}{g(v)} \symd (N_G(u)\setminus g(v)) \symd \odd{G}{g(u)} \symd (N_G(u)\cap g(u)) \symd N_G(u) \\
  &= \odd{G}{g(v)} \symd (N_G(u)\cap g(v)) \symd \odd{G}{g(u)} \symd (N_G(u)\cap g(u))
 \end{align*}
 where the third step uses the property that $\odd{G\star u}{g'(u)} = \odd{G}{g(u)} \symd (N_G(u)\cap g(u))$ for any $\ld(u)$, which follows from combining \eqref{eq:ld_XYXZ} and \eqref{eq:ld_YZ}.
 Hence if $w\in\odd{G\star u}{g'(v)}$, then at least one of the following holds:
 \begin{itemize}
  \item $w\in\odd{G}{g(v)}$, so $v=w$ or $v\prec w$
  \item $w\in g(v)$, so $v=w$ or $v\prec w$
  \item $w\in\odd{G}{g(u)}$, so $u=w$ or $u\prec w$; in both cases $v\prec w$ since $v\prec u$
  \item $w\in g(u)$, so $u=w$ or $u\prec w$; in both cases $v\prec w$ since $v\prec u$.
 \end{itemize}
 In each case, Condition~\ref{it:odd} is satisfied.

 \noindent\textbf{Part 3}:
 Finally we show that the correction set $g(v)$ for any $v\neq u$ satisfies Conditions~\ref{it:XY}, \ref{it:XZ} or \ref{it:YZ} of gflow.
 The proof is split into subcases depending on whether $v\in N_G(u)$.

 \textbf{Case 3a}: Suppose $v\in N_G(u)$ and distinguish cases according to $\ld(v)$.
 \begin{itemize}
  \item Suppose $\ld(v)=\XYm$, then $v\notin g(v)$ and $v\in\odd{G}{g(v)}$. Furthermore, $\ld'(v)=\XYm$.
  We have
   \begin{itemize}
    \item $v\in\odd{G\star u}{g(v)}$ since $v\in\odd{G}{g(v)}$ and $v\notin g(v)$,
    \item $v\notin g'(v)$,
   \end{itemize}
   which together give Condition~\ref{it:XY}.
  \item Suppose $\ld(v)=\XZm$, then $v\in g(v)$ and $v\in\odd{G}{g(v)}$.
   Furthermore, $\ld'(v) = \YZm$.
   We have
    \begin{itemize}
     \item $v\notin\odd{G\star u}{g(v)}$ since $v\in\odd{G}{g(v)}$ and $v\in N_G(u)\cap g(v)$,
     \item $v\in g'(v)$,
    \end{itemize}
   which together give Condition~\ref{it:YZ}.
  \item Suppose $\ld(v)=\YZm$, then $v\in g(v)$ and $v\notin\odd{G}{g(v)}$.
   Furthermore, $\ld'(v) = \XZm$.
   We have
    \begin{itemize}
     \item $v\in\odd{G\star u}{g(v)}$ since $v\notin\odd{G}{g(v)}$ and $v\in N_G(u)\cap g(v)$,
     \item $v\in g'(v)$,
    \end{itemize}
   which together give Condition~\ref{it:XZ}.
 \end{itemize}

 \textbf{Case 3b}: Suppose $v\notin N_G(u)$ and distinguish cases according to $\ld(v)$.
 \begin{itemize}
  \item Suppose $\ld(v)=\XYm$, this case is analogous to the corresponding one in Subcase~3a, so Condition~\ref{it:XY} is satisfied.
  \item Suppose $\ld(v)=\XZm$, then $v\in g(v)$ and $v\in\odd{G}{g(v)}$.
   Furthermore, $\ld'(v) = \XZm$.
   We have
    \begin{itemize}
     \item $v\in\odd{G\star u}{g(v)}$ since $v\in\odd{G}{g(v)}$ and $v\notin N_G(u)\cap g(v)$,
     \item $v\in g'(v)$,
    \end{itemize}
   which together give Condition~\ref{it:XZ}.
  \item Suppose $\ld(v)=\YZm$, then $v\in g(v)$ and $v\notin\odd{G}{g(v)}$.
   Furthermore, $\ld'(v) = \YZm$.
   We have
    \begin{itemize}
     \item $v\notin\odd{G\star u}{g(v)}$ since $v\notin\odd{G}{g(v)}$ and $v\notin N_G(u)\cap g(v)$,
     \item $v\in g'(v)$,
    \end{itemize}
   which together give Condition~\ref{it:YZ}. \qedhere
 \end{itemize}
\end{proof}

\begin{repcor}{cor:pivot_gflow}
 \statecorpivotgflow
\end{repcor}

\begin{proof}
Recall that $G\land uv=G\star u\star v\star u$. Hence by applying Lemma~\ref{lem:lc_gflow} three times we get the correct underlying graph, which has a gflow. It remains to show that the labels change in the manner stated.
Let us denote the labelling function associated with $\Gp$ by $\ld'$ (as in Lemma~\ref{lem:lc_gflow}). Similarly, we denote the labelling and correction set functions for $\Gpp$ by $\ld''$ and $g''$. We then compute:
 \[
  \ld''(v) = \begin{cases} \XZm &\text{if } \ld'(v)=\XYm \text{ iff } \ld(v)=\XYm \\
                           \XYm &\text{if } \ld'(v)=\XZm \text{ iff } \ld(v)=\YZm \\
                           \YZm &\text{if } \ld'(v)=\YZm \text{ iff } \ld(v)=\XZm, \end{cases}
 \]
so that
 \[
  \hat\ld(v) = \begin{cases} \XYm &\text{if } \ld''(v)=\XYm \text{ iff } \ld(v)=\YZm \\
                             \YZm &\text{if } \ld''(v)=\XZm \text{ iff } \ld(v)=\XYm \\
                             \XZm &\text{if } \ld''(v)=\YZm \text{ iff } \ld(v)=\XZm, \end{cases}
 \]
and by symmetry the same holds for $u$.
For $w\in\comp O\setminus\{u,v\}$ we have
 \[
  \ld''(w) = \begin{cases} \YZm &\text{if } w\in N_{\Gp}(v) \text{ and } \ld'(w)=\XZm \\
                           \XZm &\text{if } w\in N_{\Gp}(v) \text{ and } \ld'(w)=\YZm \\
                           \ld'(w) &\text{otherwise}. \end{cases}
 \]
Note that $w\in N_{\Gp}(v)$ iff $w\in N_G(v)\symd N_G(u)$, whence, denoting complementation in the vertex set by an overline, we obtain
  \begin{align*} \ld''(w) &= \begin{cases} \YZm &\text{if}\quad \begin{aligned} &w\in \comp{N_G(v)}\cap N_G(u), \ld(w)=\YZm, \text{ or} \\
                                                                            &w\in N_G(v)\cap\comp{N_G(u)}, \ld(w)=\XZm, \text{ or} \\
                                                                            &w\in N_G(v)\cap N_G(u), \ld(w)=\XZm \end{aligned} \\ \\
                                             \XZm &\text{if}\quad \begin{aligned} &w\in \comp{N_G(v)}\cap N_G(u), \ld(w)=\XZm, \text{ or} \\
                                                                            &w\in N_G(v)\cap\comp{N_G(u)}, \ld(w)=\YZm, \text{ or} \\
                                                                            &w\in N_G(v)\cap N_G(u), \ld(w)=\YZm \end{aligned} \\ \\
                                             \ld(w) &\text{otherwise}, \end{cases} \\
                            &= \begin{cases} \YZm &\text{if } w\in N_G(v) \text{ and } \ld(w)=\XZm \\
                                             \XZm &\text{if } w\in N_G(v) \text{ and } \ld(w)=\YZm \\
                                             \ld(w) &\text{otherwise}. \end{cases} \end{align*}
Similarly, we have
 \[
  \hat\ld(w) = \begin{cases} \YZm &\text{if } w\in N_{\Gpp}(u) \text{ and } \ld''(w)=\XZm \\
                             \XZm &\text{if } w\in N_{\Gpp}(u) \text{ and } \ld''(w)=\YZm \\
                             \ld''(w) &\text{otherwise}. \end{cases}
 \]
Noting that $w\in N_{\Gpp}(u)$ iff $w\in N_{\Gp}(u)\symd N_{\Gp}(v)$ iff $w\in N_G(v)$ we get
 \[
  \hat\ld(w) = \begin{cases} \YZm &\text{if } w\in N_{G}(v) \text{ and } \ld(w)=\YZm \\
                             \XZm &\text{if } w\in N_{G}(v) \text{ and } \ld(w)=\XZm \\
                             \ld(w) &\text{otherwise}, \end{cases}
 \]
which reduces to $\hat\ld(w)=\ld(w)$. \qedhere

\end{proof}

\section{Finding maximally delayed gflow in multiple measurement planes}\label{sec:FindingGflow}

The original algorithm for finding a maximally delayed gflow from Ref.~\cite{MP08-icalp}
is restricted to patterns where all measurements are in the \XY-plane.
Here we extend this procedure to measurements in the \YZ and \XZ-planes,
in the manner of Ref.~\cite{GFlow}.
We will always assume that any connected component of a labelled open graph contains an input or an output.
This is because any connected component containing neither inputs nor outputs contributes only an irrelevant scalar factor to a measurement-based computation.
In the presence of gflow, this assumption implies the following stronger property.

\begin{lemma}\label{lem:connected-output}
 Let $(G,I,O,\ld)$ be a labelled open graph with the property that any connected component of $G$ intersects $I\cup O$.
 Suppose furthermore that $(G,I,O,\ld)$ has gflow.
 Then any isolated vertex is an output.
\end{lemma}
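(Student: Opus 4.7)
The plan is to argue by contradiction: suppose $v$ is an isolated vertex which is not an output, and derive a contradiction from each of the three possible measurement planes.

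First I would use the hypothesis that every connected component meets $I\cup O$. Since $v$ is isolated, its connected component is exactly $\{v\}$, so $v\in I\cup O$. Under the assumption $v\notin O$, this forces $v\in I$. Moreover, as $v\in \comp{O}$, the gflow assigns $v$ a measurement plane $\ld(v)$ and a correction set $g(v)\in\pow{\comp{I}}$, and the gflow conditions \ref{it:XY}--\ref{it:YZ} apply.

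The key observation is that because $v$ has no neighbours, $\lvert N(v)\cap S\rvert = 0$ for every $S\sse V$, and hence $v\notin\odd{G}{S}$ for any $S$. This immediately rules out $\ld(v)=\XYm$ via \ref{it:XY} (which demands $v\in\odd{G}{g(v)}$) and $\ld(v)=\XZm$ via \ref{it:XZ} (which also demands $v\in\odd{G}{g(v)}$). For the remaining case $\ld(v)=\YZm$, condition \ref{it:YZ} requires $v\in g(v)$; but $g(v)\sse \comp{I}$, contradicting $v\in I$. Having exhausted all three planes, we conclude that $v\in O$.

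The only step requiring any care is the YZ case, where one must explicitly invoke both the codomain restriction $g(v)\sse \comp{I}$ from Definition~\ref{defGFlow} and the component hypothesis forcing $v\in I$; the XY and XZ cases follow purely from the fact that an isolated vertex lies in no odd neighbourhood.
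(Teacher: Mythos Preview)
Your proof is correct and essentially the same as the paper's, just framed as a direct contradiction rather than proving the contrapositive (that every non-output vertex has a neighbour). In the \YZ case the paper reverses your order of reasoning---it first uses $v\in g(v)\sse\comp{I}$ to deduce $v\notin I$, and only then invokes the connected-component hypothesis to force a neighbour---but the logical content is identical.
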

\begin{proof}
 Let $(g,\prec)$ be a gflow on $(G,I,O,\ld)$, and let $v\in V$ be arbitrary.
 We will show that either
 \begin{itemize}
  \item $v\in O$, or
  \item $v\in\comp{O}$ and there exists $w\in V$ with $w \sim v$.
 \end{itemize}
 In the first case, there is nothing to prove.
 In the second case, consider the different measurement planes.
 \begin{itemize}
  \item Suppose $\ld(v)=\XYm$, then $v\notin g(v)$ and $v\in\odd{}{g(v)}$ by \ref{it:XY}.
   This implies there exists $w\in g(v)\cap N_G(v)$.
  \item Suppose $\ld(v)=\XZm$, then $v\in g(v)$ and $v\in\odd{}{g(v)}$ by \ref{it:XZ}.
   This again implies that there exists $w\in g(v)\cap N_G(v)$.
  \item Finally, suppose $\ld(v)=\YZm$, then $v\in g(v)$ and $v\notin\odd{}{g(v)}$ by \ref{it:YZ}.
   Note that $v\notin I$ since inputs cannot appear in correction sets.
   Also, $v$ is not an output, but its connected component must intersect $I\cup O$, so $v$ must have a neighbour $w$. \qedhere
 \end{itemize}
 \end{proof}

\begin{lemma}[Generalisation of {\cite[Lemma~1]{MP08-icalp}} to multiple measurement planes]
\label{lemV0precOutputsMaximal}
If $(g,\prec)$ is a maximally delayed gflow of $(G,I,O,\ld)$, then $V_0^\prec = O$.
\end{lemma}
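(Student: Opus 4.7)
The plan is to prove the two inclusions $V_0^\prec \subseteq O$ and $O \subseteq V_0^\prec$ separately, the first using just the gflow axioms (no maximality) and the second using the maximal-delay property.

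For $V_0^\prec \subseteq O$, suppose $v \in V_0^\prec$ and, for contradiction, that $v \in \comp O$. Case-split on $\ld(v)$. If $\ld(v)=\XYm$ then \ref{it:XY} gives $v\in\odd{}{g(v)}$, so $g(v)$ contains a neighbour of $v$; that neighbour is distinct from $v$ (no self-loops) and lies in $g(v)$, so \ref{it:g} forces $v\prec w$, contradicting $v\in V_0^\prec$. If $\ld(v)=\XZm$, the same idea works: \ref{it:XZ} requires $v\in\odd{}{g(v)}$, hence some neighbour $w\neq v$ of $v$ lies in $g(v)$, and \ref{it:g} again contradicts maximality. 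If $\ld(v)=\YZm$, then \ref{it:YZ} puts $v\in g(v)$. If $g(v)\supsetneq\{v\}$, pick $w\in g(v)\setminus\{v\}$ and apply \ref{it:g}; otherwise $g(v)=\{v\}$, and then $\odd{}{g(v)}=N_G(v)$, so \ref{it:odd} forces $v\prec w$ for every neighbour $w$ of $v$, again contradicting maximality unless $v$ is isolated. An isolated $v$ with gflow must lie in $O$ by Lemma~\ref{lem:connected-output}, contradicting $v\in\comp O$.

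For $O \subseteq V_0^\prec$, I will modify $\prec$ to remove all relations whose left-hand side is an output and exploit maximality. Concretely, define $\prec_0 := \{(v,w)\in\prec : v\in\comp O\}$. This is irreflexive as a sub-relation of $\prec$, and it is transitive because $(v,w),(w,x)\in\prec_0$ implies $v,w\in\comp O$, so $v\in\comp O$ and $v\prec x$, hence $(v,x)\in\prec_0$. The pair $(g,\prec_0)$ is still a gflow: the only conditions involving the partial order are \ref{it:g} and \ref{it:odd}, and in both the hypothesis $v\in\comp O$ is built in, so the required inequalities $v\prec w$ in $\prec$ transfer to $\prec_0$ by definition. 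In $\prec_0$ no output is ever on the left, so every output is $\prec_0$-maximal, giving $O\subseteq V_0^{\prec_0}$.

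Finally, I need to compare the delay of $\prec_0$ with that of $\prec$. Since $\prec_0\subseteq\prec$, a vertex maximal under $\prec$ is also maximal under $\prec_0$, hence $V_0^\prec\subseteq V_0^{\prec_0}$; the same inductive argument (using the inclusion at the previous layer to transfer maximality) shows $\bigcup_{i\le k}V_i^\prec\subseteq\bigcup_{i\le k}V_i^{\prec_0}$ for every $k$, so $(g,\prec_0)$ is at least as delayed as $(g,\prec)$. By the assumed maximal delay of $(g,\prec)$, strict inequality is impossible at any layer, so in particular $V_0^\prec=V_0^{\prec_0}$, which together with $O\subseteq V_0^{\prec_0}$ yields $O\subseteq V_0^\prec$.

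The main obstacle is the subtle handling of the \YZ case in the first inclusion, where gflow alone does not immediately force $v\in O$ and we must invoke Lemma~\ref{lem:connected-output} together with the standing assumption that every connected component meets $I\cup O$; the rest of the argument is largely bookkeeping about how shrinking a partial order interacts with the layered decomposition of Definition~\ref{defVk}.
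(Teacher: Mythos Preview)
Your proof is correct and follows essentially the same strategy as the paper. For the second inclusion $O\subseteq V_0^\prec$, your relation $\prec_0$ is exactly the paper's $\prec'=\prec\setminus(O\times V)$, and the argument is identical. For the first inclusion, the paper takes a slightly more streamlined route: rather than case-splitting on $\ld(v)$, it observes directly from \ref{it:g} and maximality that $g(u)\subseteq\{u\}$, rules out $g(u)=\emptyset$ by \ref{it:XY}--\ref{it:YZ}, and then handles the remaining case $g(u)=\{u\}$ uniformly via Lemma~\ref{lem:connected-output} and \ref{it:odd}; this avoids your separate treatment of the \XY/\XZ cases, but the substance is the same.
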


\begin{proof}
To show that $V_0^\prec \sse O$:
Assume towards a contradiction that there exists $u\in V_0^\prec \setminus O$.
Then $u \in \max_\prec V$ by Definition~\ref{defVk}
and $g(u)$ is either $\set{u}$ or $\emptyset$
by \ref{it:g} of Definition \ref{defGFlow}.
\begin{itemize}
  \item If $g(u) = \emptyset$ then $u \notin \odd{}{g(u)}$ and
$u \notin g(u)$, contradicting all of \ref{it:XY}--\ref{it:YZ}, so the \LOG{} could not have gflow.
  \item Thus $g(u)=\{u\}$.
Now, since $u$ is not an output, Lemma~\ref{lem:connected-output} implies $u$ has a neighbour; call this neighbour $v$.
But $u\in g(u)$ and $v\sim u$ imply $v \in \odd{}{g(u)}$ and therefore $u \prec v$ by \ref{it:odd}.
\end{itemize}

This contradicts the property that $u\in \max_\prec V$.
Therefore such a $u$ cannot exist and instead $V_0^\prec \sse O$.

To show that $O \sse V_0^\prec$ we assume that $O \setminus V_0^\prec\neq\emptyset$.
Let $\prec'=\prec\setminus (O \times V)$,
which considers all outputs to be maximal.
Then $(g,\prec')$ is a gflow of $(G,I,O,\ld)$:
\begin{itemize}
  \item Conditions \ref{it:g} and \ref{it:odd} of Definition~\ref{defGFlow} are satisfied by $\prec'$ because
  elements of $O$ are not in the domain of $g$
\item Conditions \ref{it:XY}--\ref{it:YZ} are unaffected because the image of $g$ has not changed.
\end{itemize}

Thus, $(g,\prec')$ is a gflow of $(G,I,O,\ld)$.
Moreover, for any $k$, $\bigcup_{i=1}^k V_i^\prec \sse \bigcup_{i=1}^k V_i^{\prec'}$,
and $\abs{V_0^\prec} < \abs{V_0^{\prec'}}$, thus $(g,\prec')$ is more delayed than $(g,\prec)$,
which is a contradiction.
\end{proof}

\begin{lemma}[Generalisation of {\cite[Lemma~2]{MP08-icalp}} to multiple measurement planes]
\label{lemGFlowV0V1}
If $(g,\prec)$ is a maximally delayed gflow of $(G,I,O,\ld)$,
then $(\tilde{g},\tilde{\prec})$ is a maximally delayed gflow of $(G,I,O \cup V_1^\prec,\ld)$,
where $\tilde{g}$ is the restriction of $g$ to the domain $V\setminus (V_0^\prec\cup V_1^\prec)$ and $\tilde{\prec}=\prec\setminus (V_1^\prec\times V_0^\prec)$.
\end{lemma}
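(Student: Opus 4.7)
By Lemma~\ref{lemV0precOutputsMaximal}, $V_0^\prec = O$, so the non-outputs of the new \LOG{} are exactly $V \setminus (V_0^\prec \cup V_1^\prec)$, matching the domain of $\tilde g$. I would first verify that $(\tilde g, \tilde \prec)$ is indeed a gflow of $(G, I, O \cup V_1^\prec, \ld)$. Conditions \ref{it:XY}--\ref{it:YZ} are immediate since neither the graph nor the correction sets on the restricted domain have changed. For \ref{it:g} and \ref{it:odd}, note that $\tilde \prec$ differs from $\prec$ only by the removal of pairs in $V_1^\prec \times V_0^\prec$; but for any $v$ in the new non-output set we have $v \notin V_1^\prec$, so every relation $v \prec w$ required by the original gflow is still present in $\tilde \prec$.

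For the second (and harder) part, suppose towards a contradiction that some gflow $(h, \prec^h)$ of $(G, I, O \cup V_1^\prec, \ld)$ is strictly more delayed than $(\tilde g, \tilde \prec)$. I would build a gflow $(g', \prec')$ of the original \LOG\ that is strictly more delayed than $(g, \prec)$. Define
\[
 g'(v) := \begin{cases} g(v) & \text{if } v \in V_1^\prec \\ h(v) & \text{if } v \in V \setminus (V_0^\prec \cup V_1^\prec), \end{cases}
\]
and let $\prec'$ be the transitive closure of $\prec^h$ together with all pairs $(v,w)$ with $w \in V_0^\prec$ and $v \notin V_0^\prec$, and all pairs $(v,w)$ with $w \in V_1^\prec$ and $v \in V \setminus (V_0^\prec \cup V_1^\prec)$. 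Consistency of $\prec'$ is clear because $\prec^h$ relates only elements of $V \setminus (V_0^\prec \cup V_1^\prec)$ to each other (as $V_0^\prec \cup V_1^\prec$ are the outputs for $\prec^h$), so no cycle can arise. To verify the gflow conditions for $(g', \prec')$: for $v \in V_1^\prec$, every $w \in g(v) \setminus \{v\}$ or $w \in \odd{}{g(v)} \setminus \{v\}$ satisfies $v \prec w$ and hence $w \in V_0^\prec$ by maximality of $v$ in $V \setminus V_0^\prec$; the needed $v \prec' w$ then holds by construction. For $v \in V \setminus (V_0^\prec \cup V_1^\prec)$, the relations come from $(h, \prec^h)$ and are inherited by $\prec'$. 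Conditions \ref{it:XY}--\ref{it:YZ} are inherited from $(g, \prec)$ and $(h, \prec^h)$ respectively.

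The final step is the layer accounting. Applying Lemma~\ref{lemV0precOutputsMaximal} to $(h, \prec^h)$ gives $V_0^{\prec^h} = O \cup V_1^\prec = V_0^\prec \cup V_1^\prec$, and the same reasoning applied to $(\tilde g, \tilde \prec)$ gives $V_0^{\tilde \prec} = V_0^\prec \cup V_1^\prec$ as well. By construction of $\prec'$, the vertices of $V_0^\prec$ are maximal, the vertices of $V_1^\prec$ are second-maximal, and below that the layers agree with those of $\prec^h$ restricted to $V \setminus (V_0^\prec \cup V_1^\prec)$; explicitly, $V_0^{\prec'} = V_0^\prec$, $V_1^{\prec'} = V_1^\prec$, and $V_{k+1}^{\prec'} = V_k^{\prec^h}$ for $k \geq 1$. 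Analogously, $V_{k+1}^{\prec} = V_k^{\tilde \prec}$ for $k \geq 1$. Thus for every $k \geq 0$,
\[
 \Bigl|\bigcup_{i=0}^{k+1} V_i^{\prec'}\Bigr| - \Bigl|\bigcup_{i=0}^{k+1} V_i^{\prec}\Bigr|
 = \Bigl|\bigcup_{i=0}^{k} V_i^{\prec^h}\Bigr| - \Bigl|\bigcup_{i=0}^{k} V_i^{\tilde \prec}\Bigr|,
\]
while the $k=0$ totals already agree. The assumption that $(h,\prec^h)$ is strictly more delayed than $(\tilde g, \tilde \prec)$ then directly yields that $(g', \prec')$ is strictly more delayed than $(g, \prec)$, contradicting maximality of $(g, \prec)$.

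The main obstacle is the careful bookkeeping of the layer decomposition: one must ensure that forcing $V_0^\prec$ and $V_1^\prec$ to be the top two layers of $\prec'$ does not inadvertently collapse or permute deeper layers relative to $\prec^h$, and one must correctly transfer the strictness of the layer-size inequality across the index shift between $\prec$ and $\tilde \prec$ (and between $\prec'$ and $\prec^h$).
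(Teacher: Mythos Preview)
Your proposal follows essentially the same strategy as the paper: check the gflow axioms directly for $(\tilde g,\tilde\prec)$, then for maximality assume a strictly more delayed gflow of the smaller \LOG{} and extend it back to $(G,I,O,\ld)$ by reusing $g$ on $V_1^\prec$. Your extended partial order is slightly different from the paper's (you add \emph{all} pairs making $V_0^\prec$ and $V_1^\prec$ the top two layers, whereas the paper only restores the original $\prec$-relations out of $V_1^\prec$), but this actually makes the layer bookkeeping cleaner than in the paper.

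There is one small gap. You invoke Lemma~\ref{lemV0precOutputsMaximal} on $(h,\prec^h)$ to obtain $V_0^{\prec^h}=O\cup V_1^\prec$, but that lemma applies only to \emph{maximally} delayed gflows, and you merely assumed $(h,\prec^h)$ is more delayed than $(\tilde g,\tilde\prec)$. The fix is immediate: take $(h,\prec^h)$ to be maximally delayed without loss of generality. Relatedly, your justification that $\prec'$ is acyclic (``$\prec^h$ relates only elements of $V\setminus(V_0^\prec\cup V_1^\prec)$ to each other'') is not literally correct, since $\prec^h$ may contain pairs $(v,w)$ with $w\in O\cup V_1^\prec$. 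What you actually need, and what the WLOG assumption gives you via Lemma~\ref{lemV0precOutputsMaximal}, is that no element of $O\cup V_1^\prec$ appears on the \emph{left} of a $\prec^h$-pair; that suffices both for acyclicity of the transitive closure and for the layer identities $V_{k+1}^{\prec'}=V_k^{\prec^h}$.
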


\begin{proof}
First show that $(\tilde{g},\tilde{\prec})$ is a gflow:
\begin{itemize}
  \item \ref{it:g} for $\tilde{\prec}$: If $v \in V \setminus (V_0^\prec\cup V_1^\prec)$ and $w\in \tilde{g}(v)$, then $w\in g(v)$ and thus $v=w$ or $v \prec w$
 (\ref{it:g} for $\prec$.)
 Therefore $v=w$ or $v \,\tilde{\prec}\, w$.
 \item \ref{it:odd} for $\tilde{\prec}$: If $v \in V \setminus (V_0^\prec\cup V_1^\prec)$ and  $w\in\odd{}{\tilde{g}(v)}$, then $w\in\odd{}{g(v)}$ and thus $v=w$ or $v\prec w$ (\ref{it:odd} for $\prec$.)
 Again $v=w$ or $v\,\tilde{\prec}\, w$.
 \item  \ref{it:XY}, \ref{it:XZ}, and \ref{it:YZ} for $\tilde{\prec}$: Since $\tilde{g}=g$ on $V\setminus (V_0^\prec\cup V_1^\prec)$, these conditions still hold for all $v\in V\setminus (V_0^\prec\cup V_1^\prec)$.
\end{itemize}

Now to show that $(\tilde{g},\tilde{\prec})$ is maximally delayed:
Assume for a contradiction that there exists a gflow $(g',\prec')$ which is more delayed than $(\tilde{g},\tilde{\prec})$,
i.e. that $\abs{\bigcup_{i=0}^k V_i^{\prec'}} \geq \abs{\bigcup_{i=0}^k V_i^{\tilde\prec}}$,
where the inequality is strict at some $k=j$.
Extend $(g', \prec')$ on $(G,I,O \cup V_1^\prec,\ld)$ to $(g'',\prec'')$ on $(G,I,O,\ld)$ where
\begin{align}
g''(v) &= \begin{cases} g'(v) &\text{if } v\in V\setminus (V_0^\prec\cup V_1^\prec) \\ g(v) &\text{if } v\in V_1^\prec \end{cases} \\
\prec'' &= \prec' \cup \{(u,v)\mid u\in V_1^\prec \wedge u\prec v\}
\end{align}
We can relate $(g',\prec')$ to $(g'', \prec'')$ and $(\tilde g, \tilde \prec)$ to $(g, \prec)$ via
\begin{align}
\bigcup_{i=0}^k V_i^{\prec'} &= \bigcup_{i=0}^{k+1} V_i^{\prec''} &
\bigcup_{i=0}^k V_i^{\tilde\prec} &= \bigcup_{i=0}^{k+1} V_i^{\prec} &
V_0^{\prec} & = V_0^{\prec''}
\end{align}

Therefore $\abs{\bigcup_{i=0}^{k} V_i^{\prec''}} \geq \abs{\bigcup_{i=0}^{k} V_i^{\prec}}$
and the inequality is strict at $k=j+1$.
This is a contradiction of maximality of $(g, \prec)$, so $(\tilde{g},\tilde{\prec})$ must be maximally delayed.
\end{proof}

\begin{lemma}[Generalisation of {\cite[Lemma~3]{MP08-icalp}} to multiple measurement planes]
\label{lemGFlowV1}
 If $(g,\prec)$ is a maximally delayed gflow of a labelled open graph $(G,I,O,\ld)$, then $V_1^\prec = V_1^{\prec,\XYm} \cup V_1^{\prec,\XZm} \cup V_1^{\prec,\YZm}$, where
 \begin{align*}
  V_1^{\prec,\XYm} &= \{u\in \comp{O} \mid \ld(u)=\XYm \wedge \exists K\sse O \cap \comp{I} \text{ s.t.\ } \odd{}{K}\cap \comp{O} = \{u\}\} \\
  V_1^{\prec,\XZm} &= \{u\in \comp{O} \mid \ld(u)=\XZm \wedge \exists K\sse O \cap \comp{I} \text{ s.t.\ } \odd{}{K\cup\{u\}}\cap \comp{O} = \{u\}\} \\
  V_1^{\prec,\YZm} &= \{u\in \comp{O} \mid \ld(u)=\YZm \wedge \exists K\sse O \cap \comp{I} \text{ s.t.\ } \odd{}{K\cup\{u\}}\cap \comp{O} = \emptyset\}.
 \end{align*}
\end{lemma}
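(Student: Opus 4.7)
Proof plan:

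The lemma is an equality of sets, so I would prove each inclusion separately, using Lemma~\ref{lemV0precOutputsMaximal} throughout to identify $V_0^\prec$ with $O$.

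For the inclusion $V_1^\prec \subseteq V_1^{\prec,\XYm} \cup V_1^{\prec,\XZm} \cup V_1^{\prec,\YZm}$, suppose $u \in V_1^\prec$. By Definition~\ref{defVk} and Lemma~\ref{lemV0precOutputsMaximal}, $u$ is maximal in $\prec$ among non-outputs, so conditions \ref{it:g} and \ref{it:odd} force $g(u) \setminus \{u\} \subseteq O$ and $\odd{}{g(u)} \setminus \{u\} \subseteq O$, i.e.\ $\odd{}{g(u)} \cap \comp{O} \subseteq \{u\}$. I would then split by $\ld(u)$: if $\ld(u) = \XYm$, condition \ref{it:XY} gives $u \notin g(u)$ and $u \in \odd{}{g(u)}$, so $K := g(u) \subseteq O$ witnesses $u \in V_1^{\prec,\XYm}$; if $\ld(u) = \XZm$, condition \ref{it:XZ} gives $u \in g(u) \cap \odd{}{g(u)}$, so $K := g(u)\setminus\{u\} \subseteq O$ witnesses $u \in V_1^{\prec,\XZm}$; and if $\ld(u) = \YZm$, condition \ref{it:YZ} gives $u \in g(u)$ and $u \notin \odd{}{g(u)}$, so $K := g(u) \setminus \{u\} \subseteq O$ witnesses $u \in V_1^{\prec,\YZm}$.

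For the reverse inclusion, I would argue by contradiction from maximal delay. Suppose $u$ lies in one of $V_1^{\prec,\XYm}, V_1^{\prec,\XZm}, V_1^{\prec,\YZm}$ but $u \notin V_1^\prec$; let $K \subseteq O$ be the witnessing set. I will build a new gflow $(g',\prec')$ which is strictly more delayed than $(g,\prec)$, contradicting maximality. The construction sets $g'(u) := K$ in the \XY case, and $g'(u) := K \cup \{u\}$ in the \XZ and \YZ cases, leaves $g'$ equal to $g$ everywhere else, and defines $\prec'$ as the minimal partial order making $u$ maximal among non-outputs while agreeing with $\prec$ on the remaining vertices (concretely: remove from $\prec$ every relation of the form $u \prec v$ with $v \notin O$, and keep the remaining relations and their transitive consequences). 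Verifying that $(g',\prec')$ is a gflow is then a direct check: conditions \ref{it:g}, \ref{it:odd} for $u$ follow from $K \subseteq O = V_0^{\prec'}$, the appropriate Pauli condition (\ref{it:XY}, \ref{it:XZ}, or \ref{it:YZ}) holds by the defining property of $V_1^{\prec,\ld(u)}$, and for the other vertices $v$ the correction set is unchanged and the only weakening of $\prec$ concerns the relation between $u$ and later vertices, which does not affect any $g(v)$ or $\odd{}{g(v)}$ that avoided $u$ already. Finally $|V_0^{\prec'}| = |O| = |V_0^\prec|$ but $u \in V_1^{\prec'} \setminus V_1^\prec$, so $(g',\prec')$ is strictly more delayed at level $k=1$, giving the desired contradiction.

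The main obstacle will be verifying carefully that the redefinition of $\prec$ near $u$ does not break gflow for the other vertices $v \neq u$: one must check that no $v$ had $v \prec u$ forced by $u \in g(v)$ or $u \in \odd{}{g(v)}$, which requires knowing that $u$ was not needed as a correcting resource for any other vertex. This is where the non-triviality of the maximal-delay hypothesis is essential, and it is convenient to argue by first restricting $g$ so that $u$ does not appear in any other correction set (using Lemma~\ref{lem:successor-gflow} as in the focused-gflow construction of Lemma~\ref{lem:focus-single-vertex} if necessary) before performing the reassignment. The rest of the argument is bookkeeping with $\odd{}{\cdot}$ and the layer structure.
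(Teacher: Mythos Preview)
Your approach is the same as the paper's: both inclusions are argued the same way, with the reverse inclusion by contradiction via a more-delayed gflow, and your choice of $g'$ matches the paper exactly. The forward inclusion is fine as written.

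There is a gap in your construction of $\prec'$. You propose to obtain $\prec'$ from $\prec$ by deleting the relations $u \prec v$ with $v \notin O$, and then claim that \ref{it:g} and \ref{it:odd} at $u$ ``follow from $K \subseteq O = V_0^{\prec'}$''. But membership in $V_0^{\prec'}$ only says $v$ is $\prec'$-maximal; it does not give $u \prec' v$. Since the original $\prec$ need not relate $u$ to every element of $K$ (outputs are maximal, not necessarily above $u$), your $\prec'$ can fail \ref{it:g}. The paper avoids this by defining the order at the distinguished vertex from scratch: $\prec'$ agrees with $\prec$ on all pairs whose first coordinate is not $u$, and one sets $u \prec' v$ precisely when $v \in K$ or $v \in \odd{}{g'(u)}\setminus\{u\}$. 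With that definition \ref{it:g} and \ref{it:odd} at $u$ are immediate.

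Your final worry, and the suggested detour through Lemma~\ref{lem:focus-single-vertex}, is unnecessary. For $v \neq u$ you keep $g'(v)=g(v)$ and you keep every relation $v \prec w$, so \ref{it:g}--\ref{it:YZ} are inherited verbatim; whether $u$ lies in some $g(v)$ or $\odd{}{g(v)}$ only forces $v \prec u$, a relation you have not touched. The paper does not focus first. One point you should not skip is the ``more delayed'' comparison at \emph{all} levels: showing $u \in V_1^{\prec'}\setminus V_1^\prec$ gives strictness at $k=1$, but Definition~\ref{defMoreDelayed} also requires $\bigl|\bigcup_{i\le k} V_i^{\prec'}\bigr| \geq \bigl|\bigcup_{i\le k} V_i^{\prec}\bigr|$ for every $k$. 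The paper handles this by a depth argument, observing that the new order can only decrease the depth of each vertex.
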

\begin{proof}
We first show that if $(g,\prec)$ is a maximally delayed gflow, then for any $u\in V_1^\prec$, we have $g(u)\sse O\cup\{u\}$.
This is because if $v\in g(u)$ and $v\neq u$, then $u\prec v$ by property \ref{it:g}.
Furthermore, by the definition of $V_1^\prec$, if $u\prec v$ then $v\in V_0^\prec$, and therefore $v \in O$ by Lemma \ref{lemV0precOutputsMaximal}.
Combined with properties~\ref{it:odd} and \ref{it:XY}--\ref{it:YZ}, this shows that
$\odd{}{g(u)}\cap \comp{O} = \{u\}$ if $\ld(u)\in\{\XYm,\XZm\}$,
and $\odd{}{g(u)}\cap \comp{O} = \emptyset$ if $\ld(u)=\YZm$.
Therefore $V_1^\prec \sse V_1^{\prec,\XYm} \cup V_1^{\prec,\XZm} \cup V_1^{\prec,\YZm}$.

 To prove that any vertex satisfying one of the three conditions
 \begin{enumerate}
  \item $\ld(u)=\XYm$ and $\exists K\sse O \cap \comp{I} \text{ s.t.\ } \odd{}{K}\cap \comp{O} = \{u\}$
  \item $\ld(u)=\XZm$ and $\exists K\sse O \cap \comp{I} \text{ s.t.\ } \odd{}{K\cup\{u\}}\cap \comp{O} = \{u\}$
  \item $\ld(u)=\YZm$ and $\exists K\sse O \cap \comp{I} \text{ s.t.\ } \odd{}{K\cup\{u\}}\cap \comp{O} = \emptyset$
 \end{enumerate}
 must be in $V_1^\prec$, proceed by contradiction.
 As in the proof of Ref.~\cite[Lemma~3]{MP08-icalp}, we show that a gflow cannot be maximally delayed
 if there exists some vertex $w$ which satisfies one of conditions~1--3 but is not in $V_1^\prec$.

 Suppose $(g,\prec)$ is a maximally delayed gflow and there exists $w\in V\setminus V_0^\prec$
 such that $w$ satisfies condition~$j$, where $j\in\{1,2,3\}$, and let $K$ be the set identified in that condition.
 Define\footnote{The statement of Ref.~\cite[Lemma~3]{MP08-icalp} is missing the
 condition that the elements of $K$ are not inputs,
 however this error does not appear in that paper's pseudo-code for finding gflow.}
\begin{align}
  g'(u) := \begin{cases}
  K &\text{if } u=w \text{ and } \ld(w)=\XYm \\
  K\cup\{w\} &\text{if } u=w \text{ and } \ld(w)\in\{\XZm,\YZm\} \\
  g(u) &\text{otherwise.}
  \end{cases}
\end{align}
Let $\prec'$ be the partial order defined by\footnote{The proof of Ref.~\cite[Lemma~3]{MP08-icalp} is missing the third case of the definition of $\prec'$,
but it is actually required in that proof as well.}
\begin{align}
  u\prec' v \text{ if } \begin{cases}
  u\neq w \text{ and } u\prec v \\
  u=w \text{ and } v\in K \\
  u=w \text{ and } v\in \odd{}{g'(w)} \setminus \set{w}
  \end{cases}
\end{align}
Then property~\ref{it:g} of the gflow definition holds by the properties of $(g,\prec)$ and the second case of the definition of $\prec'$.
Property~\ref{it:odd} holds by the properties of $(g,\prec)$ and the third case of the definition of $\prec'$.
Properties~\ref{it:XY}--\ref{it:YZ} are satisfied for all $v\in V\setminus\{w\}$ since $g$ has not changed for those vertices.
For $w$ itself, only one of properties~\ref{it:XY}--\ref{it:YZ} is relevant, and it is satisfied by condition~$j$.

Now, assume towards a contradiction that $w\notin V_1^\prec$, i.e.\ assume that there exists $k>1$ such that $w\in V_k^\prec$.
Our construction of $\prec'$ in conjunction with condition $j$ implies that $w \in V_1^{\prec'}$.
For any $u\in V$ define the \emph{depth} of $u$, written $d(u)$, as the index $k$ such that $u \in V_k^\prec$,
and define $d'(u)$ as the index $k$ such that $u \in V_k^{\prec'}$.
Our assumption implies that $d'(w) < d(w)$,
and our construction of $\prec'$ implies that $d'(u) \leq d(u)$ for all vertices $u$.
This shows that:
\begin{align}
V_0^{\prec} & = V_0^{\prec'} &
\abs{V_1^{\prec}} & < \abs{V_1^{\prec'}} &
\abs{\bigcup_{i=0}^k V_i^{\prec}} &\leq \abs{\bigcup_{i=0}^{k} V_i^{\prec'}}
\end{align}

Therefore $(g',\prec')$ is more delayed than $(g,\prec)$, contradicting the assumption that $(g,\prec)$ is maximally delayed.
Hence any vertex $w$ satisfying one of conditions~1--3 must be in $V_1^\prec$,
i.e.\ $V_1^{\prec,\XYm} \cup V_1^{\prec,\XZm} \cup V_1^{\prec,\YZm} \sse V_1^\prec$.
\end{proof}

We combine Lemma~\ref{lemGFlowV1} with Lemma~\ref{lemV0precOutputsMaximal}
to construct a partial converse to Lemma~\ref{lemGFlowV0V1};
if we know that $(\tilde g, \tilde \prec)$ is maximally delayed
on $(G, I, O \cup V_1^\prec, \ld)$,
(with $V_1^\prec$ defined as in Lemma~\ref{lemGFlowV1})
we want to show that $(g, \prec)$ is maximally delayed on $(G, I, O, \ld)$.

\begin{lemma} \label{lemGFlowV0V1Back}
Suppose $(g, \prec)$ is a gflow of $(G, I, O, \lambda)$
such that $V_0^\prec = O$ and $V_1^\prec = V_1^{\prec,\XYm} \cup V_1^{\prec,\XZm} \cup V_1^{\prec,\YZm}$.
Now if $(\tilde g, \tilde \prec)$ defined as in Lemma~\ref{lemGFlowV0V1} is a maximally delayed gflow for $(G,I,O \cup V_1^\prec,\ld)$,
then $(g, \prec)$ is also maximally delayed.
\end{lemma}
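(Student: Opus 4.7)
The plan is a proof by contradiction using the contrapositive structure of Lemma~\ref{lemGFlowV0V1}. Assume for contradiction that $(g,\prec)$ is not maximally delayed on $(G,I,O,\lambda)$. Then there exists some gflow $(g',\prec')$ that is more delayed than $(g,\prec)$. Since the ``more delayed'' relation is transitive and the cumulative layer sizes $|\bigcup_{i=0}^k V_i^{\prec}|$ are bounded above by $|V|$, we may iterate and without loss of generality take $(g',\prec')$ to itself be maximally delayed on $(G,I,O,\lambda)$.

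Next, I would establish that $(g',\prec')$ agrees with $(g,\prec)$ on the first two layers. By Lemma~\ref{lemV0precOutputsMaximal} applied to $(g',\prec')$, we have $V_0^{\prec'} = O = V_0^{\prec}$. By Lemma~\ref{lemGFlowV1}, $V_1^{\prec'} = V_1^{\prec',\XYm} \cup V_1^{\prec',\XZm} \cup V_1^{\prec',\YZm}$. The crucial observation is that the sets $V_1^{\cdot,\XYm}$, $V_1^{\cdot,\XZm}$, $V_1^{\cdot,\YZm}$ as defined in Lemma~\ref{lemGFlowV1} depend only on $(G,I,O,\lambda)$ and not on the partial order, so they coincide with the corresponding sets for $\prec$. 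Combined with the hypothesis $V_1^{\prec} = V_1^{\prec,\XYm} \cup V_1^{\prec,\XZm} \cup V_1^{\prec,\YZm}$, this gives $V_1^{\prec'} = V_1^{\prec}$.

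I would then construct $(\tilde{g}',\tilde{\prec}')$ from $(g',\prec')$ via the same recipe as in Lemma~\ref{lemGFlowV0V1}, i.e.\ restrict $g'$ to $V \setminus (V_0^{\prec'}\cup V_1^{\prec'}) = V\setminus(O \cup V_1^\prec)$ and set $\tilde{\prec}' = \prec'\setminus(V_1^{\prec'}\times V_0^{\prec'})$. The first part of the proof of Lemma~\ref{lemGFlowV0V1} (verifying \ref{it:g}--\ref{it:YZ}) does not use the maximality hypothesis, so it applies verbatim to show $(\tilde{g}',\tilde{\prec}')$ is a gflow on $(G,I,O\cup V_1^\prec,\lambda)$.

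The final step is the layer-counting argument that produces the contradiction. Using $V_k^{\tilde{\prec}} = V_{k+1}^{\prec}$ for $k\geq 1$ and $V_0^{\tilde{\prec}} = O \cup V_1^\prec$ (and likewise for $\tilde{\prec}'$), we obtain $\big|\bigcup_{i=0}^k V_i^{\tilde{\prec}}\big| = \big|\bigcup_{i=0}^{k+1} V_i^{\prec}\big|$ and similarly for $\tilde{\prec}'$. The two orders $\prec$ and $\prec'$ agree on the union of their first two layers (both equal $O \cup V_1^\prec$), so in the ``more delayed'' inequality $\big|\bigcup_{i=0}^k V_i^{\prec}\big| \leq \big|\bigcup_{i=0}^k V_i^{\prec'}\big|$ the strict inequality cannot occur for $k\in\{0,1\}$, and must therefore occur at some $k\geq 2$. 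Re-indexing shows $(\tilde{g}',\tilde{\prec}')$ is more delayed than $(\tilde{g},\tilde{\prec})$, contradicting the assumed maximality of $(\tilde{g},\tilde{\prec})$ on $(G,I,O\cup V_1^\prec,\lambda)$.

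The main obstacle is justifying the ``without loss of generality'' step where I replace $(g',\prec')$ with a maximally delayed more-delayed gflow; this requires noting both that the more-delayed relation is transitive and that the natural numbers $|\bigcup_{i=0}^k V_i^\prec|$ are bounded, so any chain of strictly more-delayed gflows terminates. The rest of the argument is really just careful bookkeeping, exploiting the fact noted above that $V_1^{\cdot,Q}$ is independent of $\prec$.
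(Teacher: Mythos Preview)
Your proposal is correct and follows essentially the same route as the paper: assume a more delayed competitor $(g',\prec')$, show it agrees with $(g,\prec)$ on layers $0$ and $1$, restrict via the recipe of Lemma~\ref{lemGFlowV0V1} to obtain $(\tilde g',\tilde\prec')$, and derive a contradiction by showing the latter is more delayed than $(\tilde g,\tilde\prec)$. Your explicit justification for taking $(g',\prec')$ to be maximally delayed is actually more careful than the paper, which applies Lemmas~\ref{lemV0precOutputsMaximal} and~\ref{lemGFlowV1} to $(g',\prec')$ without stating this assumption; your observation that the sets $V_1^{\cdot,\XYm}$, $V_1^{\cdot,\XZm}$, $V_1^{\cdot,\YZm}$ depend only on $(G,I,O,\lambda)$ is exactly the point needed, and your remark that only the ``is a gflow'' half of Lemma~\ref{lemGFlowV0V1} is required for $(\tilde g',\tilde\prec')$ is also correct (the paper invokes the full lemma, getting maximality of $(\tilde g',\tilde\prec')$, which is unnecessary for the contradiction).
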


\begin{proof} \label{prfLemGFlowV0V1Back}
Recall from Lemma~\ref{lemGFlowV0V1}
that $\tilde{g}$ is the restriction of $g$ to the domain
$V\setminus (V_0^\prec\cup V_1^\prec)$ and $\tilde{\prec}=\prec\setminus (V_1^\prec\times V_0^\prec)$.
Working towards a contradiction, suppose that $(g', \prec')$ is a gflow of $(G, I, O, \lambda)$ which is more delayed than
$(g, \prec)$.
By Lemmas~\ref{lemV0precOutputsMaximal} and \ref{lemGFlowV1}
we know that $V_0^\prec = V_0^{\prec'}$
and that $V_1^\prec = V_1^{\prec'}$.
Let $j$ be such that:
\begin{align}
\abs{V^\precprimed_j} &< \abs{V^{\prec}_j} &
\forall k \quad \abs{\bigcup_{i=0}^k V_i^{\precprimed}} &\leq \abs{\bigcup_{i=0}^{k} V_i^{\prec}}
\end{align}
Such a $j$ must exist, and be greater than 1, because $(g', \prec')$ is more delayed that $(g, \prec)$.
Define $(\tilde g', \tilde \prec')$ analogously to $(\tilde g, \tilde \prec)$:
\begin{align}
\tilde g' &= \text{ the restriction of $g'$ to } V\setminus (V_0^\precprimed\cup V_1^\precprimed) \\
\tilde{\prec}'&=\precprimed\setminus (V_1^\precprimed\times V_0^\precprimed)
\end{align}
By Lemma~\ref{lemGFlowV0V1} $(\tilde g', \tildeprecprimed)$ is maximally delayed
for $(G, I, O \cup V_1^\prec, \lambda)$,
and by our assumption so is $(\tilde g, \tilde \prec)$.
Our contradiction will be shown by noting that $(\tilde g', \tildeprecprimed)$
is more delayed than $(\tilde g, \tilde \prec)$:
\begin{align}
V_j^\prec &= V_{j-1}^{\tilde \prec} & V_j^\precprimed &= V_{j-1}^{\tildeprecprimed}
& \abs{V^\precprimed_j} &< \abs{V^{\prec}_j} &\implies \abs{V^\tildeprecprimed_{j-1}} &< \abs{V^{\tilde \prec}_{j-1}}
\end{align}

and
\begin{align}
\forall k \quad  \abs{\bigcup_{i=0}^k V_i^{\precprimed}} &\leq \abs{\bigcup_{i=0}^{k} V_i^{\prec}}
& \implies  \quad \forall k \quad  \abs{\bigcup_{i=0}^{k-1} V_i^{\tildeprecprimed}} &\leq \abs{\bigcup_{i=0}^{k-1} V_i^{\tilde \prec}}
\end{align}

This contradicts $(\tilde g, \tilde \prec)$ being maximally delayed,
and therefore $(g, \prec)$ must be maximally delayed.
\end{proof}

We now adapt the main gflow theorem and proof from {Ref.~\cite[Theorem~2]{MP08-icalp}}
to our multi-planar setting.
The idea is to iteratively construct the layers $V_k^\prec$,
finding candidate correction sets via matrix manipulation in $\mathbb F_2$.

\begin{theorem}[Generalisation of {\cite[Theorem~2]{MP08-icalp}} to multiple measurement planes]
There exists an algorithm that decides whether a given
\LOG\ has an (extended) gflow, and that outputs such a gflow if it exists.
Moreover, this output gflow is maximally delayed,
and the algorithm takes a number of steps that is polynomial in the number of vertices of the \LOG.
\end{theorem}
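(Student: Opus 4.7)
The plan is to build a maximally delayed gflow layer by layer starting from the outputs, using Lemmas~\ref{lemV0precOutputsMaximal}, \ref{lemGFlowV0V1}, \ref{lemGFlowV1}, and \ref{lemGFlowV0V1Back} to justify both correctness and the ordering of work. Lemma~\ref{lemV0precOutputsMaximal} forces $V_0^\prec=O$; Lemma~\ref{lemGFlowV1} characterises which vertices must populate $V_1^\prec$ in terms of the existence of a set $K\subseteq O\setminus I$ satisfying a parity condition that depends on the measurement plane; Lemma~\ref{lemGFlowV0V1} then lets us reduce to the smaller \LOG\ $(G,I,O\cup V_1^\prec,\ld)$ and recurse; and Lemma~\ref{lemGFlowV0V1Back} tells us that stitching together maximally delayed gflows obtained from the recursion produces a maximally delayed gflow on the original \LOG.

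Concretely, at iteration $k$, let $O_k$ denote the vertices already placed in layers $V_0^\prec,\dots,V_k^\prec$. For each candidate $u\in\comp{O_k}$ we must decide whether there exists $K\subseteq O_k\setminus I$ such that, writing $\chi_K\in\mathbb F_2^{O_k\setminus I}$ for the indicator of $K$ and $M$ for the biadjacency matrix between $\comp{O_k}$ and $O_k\setminus I$, the vector $M\chi_K$ equals a target vector $t_u\in\mathbb F_2^{\comp{O_k}}$ determined by $u$ and $\ld(u)$: namely $t_u=\chi_{\{u\}}$ when $\ld(u)=\XYm$, $t_u=\chi_{\{u\}}\symd\chi_{N_G(u)\cap\comp{O_k}}$ when $\ld(u)=\XZm$, and $t_u=\chi_{N_G(u)\cap\comp{O_k}}$ when $\ld(u)=\YZm$. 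All these systems share the same matrix $M$, so one Gaussian elimination per iteration recovers the reduced row echelon form and then, for every candidate $u$, checking solvability and reading off a particular solution (which defines $g(u)=K$ in the \XYm case and $g(u)=K\cup\{u\}$ in the other two) amounts to testing membership of $t_u$ in the column span and back-substituting.

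The iteration halts in one of two ways. If the newly produced layer $V_{k+1}^\prec$ is empty while $O_k\subsetneq V$, we return ``no gflow'': indeed if a gflow existed, it would have a maximally delayed representative by iterated use of Lemma~\ref{lemGFlowV0V1}, and Lemma~\ref{lemGFlowV1} applied to that representative on the reduced \LOG\ would guarantee the existence of at least one qualifying candidate, contradicting failure. Otherwise the iteration terminates with $O_k=V$, at which point the assembled $(g,\prec)$ is a gflow by construction and is maximally delayed by induction using Lemma~\ref{lemGFlowV0V1Back}. For complexity, there are at most $|V|$ iterations, each requiring one Gaussian elimination of an at-most-$|V|\times|V|$ matrix over $\mathbb F_2$ ($O(|V|^3)$) plus $O(|V|^2)$ back-substitutions, giving a bound of $O(|V|^4)$ overall.

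The main obstacle will be the bookkeeping around Lemma~\ref{lemGFlowV1}: one has to verify that the parity conditions stated there (which concern the \emph{original} graph $G$ but involve the \emph{current} output set $O_k$ as ``outputs'') are correctly translated into the linear-algebra formulation above, and in particular that restricting $K$ to $O_k\setminus I$ (required because correction sets avoid inputs) is compatible with the lemma's statement for $K\subseteq O_k$. Once that bridge is firmly in place, the rest of the argument is a straightforward induction on the iteration count combined with the four cited lemmas.
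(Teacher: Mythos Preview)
Your proposal is correct and follows essentially the same approach as the paper: iterate from the outputs, use Lemma~\ref{lemGFlowV1} to identify the next layer via $\mathbb F_2$-linear systems sharing a common coefficient matrix, apply Lemma~\ref{lemGFlowV0V1} to recurse, and invoke Lemma~\ref{lemGFlowV0V1Back} for the inductive maximality argument, with the same $O(|V|^4)$ bound from one Gaussian elimination per layer plus back-substitutions. Your explicit target vectors $t_u$ are a nice concretisation of what the paper leaves implicit in its pseudocode, and the concern you flag about restricting $K$ to $O_k\setminus I$ is indeed benign (since any $K$ witnessing membership in $V_1^\prec$ arises as $g(u)\setminus\{u\}$ for some gflow $g$, hence already avoids $I$).
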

\begin{proof}
Let $(G,I,O,\ld)$ be a \LOG.
The algorithm \verb@GFLOW@$(M, I, O, \ld)$ (Algorithm \ref{algoGFlow}),
where $M$ is the adjacency matrix of $G$,
finds a maximally delayed gflow and returns true if one exists and returns false otherwise.

\begin{itemize}
  \item \textbf{Correctness of gflow:}
At the $k^{th}$ recursive call,
the set $C$ constructed by the algorithm in the \verb@for@ loop at line \ref{algoOperatorSolveForx} corresponds to the layer $V^\prec_k$
of the partition induced by the returned strict partial order (Definition \ref{defVk}).
The construction of $K'$ corresponds to
finding a correcting set in $\bigcup_{j \le k} V^\prec_j$
that has the desired properties for its odd neighbourhood
(as specified in conditions~\ref{it:odd}--\ref{it:YZ} of the gflow definition,
see Lemma~\ref{lemGFlowV1}.)
Furthermore the restriction $K' \sse O'$ ensures condition~\ref{it:g}, thus
if the algorithm returns a solution, then the solution satisfies the conditions of a gflow.

\item \textbf{Maximality of gflow:}
By induction on the number of non-output vertices, we prove that if
the given labelled open graph has a gflow, then the algorithm outputs a
maximally delayed gflow.
Assume that the given labelled open graph has a gflow.
\begin{itemize}
  \item The base case is when there is no non-output vertex,
and here there is no correction needed:
The empty flow $(g,\emptyset)$ (where $g$ is a function with an empty
domain) is a maximally delayed gflow.
\item For the inductive step suppose that there exist some non-output vertices;
according to Lemma~\ref{lemGFlowV1} the elements of $V^\prec_1$ are those that satisfy one of
the tests inside the loop at line \ref{algoOperatorSolveForx}.
Thus, after the loop (line \ref{algoGFlowAfterLoop}),
$C=V^\prec_1$.
Since the existence of a gflow is assumed,
$V^\prec_1$ cannot be empty (all non-output vertices have to be corrected),
thus the algorithm is called recursively
on the labelled open graph $(G, I, O \cup V^\prec_1, \lambda)$ which has fewer non-output vertices.
Lemma \ref{lemGFlowV0V1} ensures the existence of a gflow in
$(G,I,O \cup V^\prec_1, \ld)$,
and by induction we know that the recursive call returns
a maximally delayed gflow for $(G,I,O \cup V^\prec_1, \ld)$.
By Lemma \ref{lemGFlowV0V1Back}
we know that this construction for $V_0^\prec$ and $V_1^\prec$
yields a maximally delayed gflow for $(G,I,O, \ld)$.
\end{itemize}

\item \textbf{Complexity:}
First form the matrix $A$ from the adjacency matrix $M$
by removing those rows that correspond to output vertices,
and keeping all columns that correspond to outputs that are not also inputs
(i.e.\ keep candidate vertices for $K$).
Notice that finding a solution for $K$ inside the \verb@for@ loop at line \ref{algoOperatorSolveForx}
consists of solving a system $Ax=b_j$ in $\mathbb F_2$,
where $b_j$ is either a column vector with a 1 corresponding to the vertex $u$
(in the \XY and \XZ cases), or $b_j$ is the column vector of all 0s (in the \YZ case.)
The resulting column vector $x$ has a 1 in position $x_v$ if $v \in K$ and a 0 otherwise.

Set
$n=\abs{V}$,
$\ell=\abs{O}$,
$\ell' = \abs{O \setminus I}$, and
$j$ to range over the $n - \ell$ vertices in $\comp O$.
In order to solve these $n-\ell$ equations
the block matrix
\begin{align}
A'=\begin{bmatrix}
\quad A &\vert& b_1 &\vert& \ldots &\vert& b_{n-\ell} \quad
\end{bmatrix}
\end{align}
of dimensions $(n-\ell)\times (\ell' + n - \ell)$
is transformed into an upper
triangular form,
for instance using Gaussian elimination,
within $O(n^3)$ operations.
Then for each $b_j$ a back substitution within $O(n^2)$
operations is used to find $x_i$, if it exists, such that
$Ax_j=b_j$ (see Ref.~\cite{Bard06}). The back substitutions costs $O(n^3)$ operations at each
call of the function. Since there are at most $n$ recursive calls, the
overall complexity is $O(n^4)$.

\end{itemize}  \qedhere
\end{proof}

\newpage
\section{Pseudocode for algorithms}\label{sec:pseudocode}

\makeatletter
\algrenewcommand\ALG@beginalgorithmic{\small}
\makeatother

\renewcommand\algorithmiccomment[1]{\hfill{\color{gray!80!green}$\triangleright$\,\textit{#1}}}

\begin{algorithm}
\caption{Extended Generalized Flow \label{algoGFlow}}
\begin{algorithmic}[1]
\Procedure{GFLOW}{$M$, $I$, $O$, $\ld$}
\State initialise functions $g$ and $d$
\ForAll{$v\in O$}
  \State $d(v) \gets 0$ \Comment{Outputs have depth $0$}
\EndFor
\State \textbf{return} \Call{GFLOWAUX}{$M$,  $I$, $O$, $\ld$, $1$, $d$, $g$} \Comment{Start the recursive process of finding $V_j^\prec$}
\EndProcedure

\Statex

\Procedure{GFLOWAUX}{$M$, $I$, $O$, $\ld$, $k$, $d$, $g$}
  \State $O' \gets O \setminus I$
  \State $C \gets \emptyset$
  \ForAll{$u \in \comp{O}$} \label{algoOperatorSolveForx}
    \If{$\ld (u) =$ \XY}
      \State $K' \gets$ Solution for $K \sse O'$ where $\odd{}{K} \cap \comp{O} = \set{u}$
    \ElsIf{$\ld (u) =$ \XZ}
      \State $K' \gets \; \set{u} \; \cup$ (Solution for $K \sse O'$ where $\odd{}{K \cup{\set{u}}} \cap \comp{O} = \set{u}$)
    \Else{ $\ld (u) =$ \YZ}
      \State $K' \gets \; \set{u} \; \cup$ (Solution for $K \sse O'$ where $\odd{}{K \cup{\set{u}}} \cap \comp{O} = \emptyset$)
    \EndIf
    \If{$K'$ exists}
      \State $C \gets C \cup \set{u}$ \label{algoOperatorUpdateC}
      \State $g(u) \gets K'$ \Comment{Assign a correction set for $u$}
      \State $d(u) \gets k$ \Comment{Assign a depth value for $u$}
    \EndIf
  \EndFor
  \If{$C = \emptyset$} \label{algoGFlowAfterLoop}
    \If{$O = V$}
      \State \textbf{return} (true, $g$, $d$) \Comment{Halt, returning maximally delayed $g$ and depth values $d$}
    \Else
      \State \textbf{return} (false, $\emptyset$, $\emptyset$) \Comment{Halt if no gflow exists}
    \EndIf
  \Else
    \State \textbf{return} \Call{GFLOWAUX}{$M$, $I$, $O \cup C$, $\ld$, $k+1$, $d$, $g$}
  \EndIf
\EndProcedure
\end{algorithmic}
\end{algorithm}

\begin{algorithm}[!htb]
\caption{Circuit Extraction}\label{alg:extraction}
\begin{algorithmic}[1]
\Procedure{Extract}{$D$}\Comment{input is MBQC diagram $D$}
  \State Init empty circuit $C$
  \State $G,I,O\gets $ Graph$(D)$\Comment{get the underlying graph of $D$}
  \State $F \gets O$ \Comment{initialise the frontier}

  \State $D,C \gets$ ProcessFrontier($D,F,C$)

  \While{$\exists v\in D\setminus F$}\Comment{there are still vertices to be processed}
    \State $D,F,C \gets $ UpdateFrontier$(D,F,C)$
  \EndWhile
  \For{$v\in F$} \Comment{the only vertices still in $D$ are in $F$}
    \If{$v$ connected to input via Hadamard}
      \State $C\gets$ Hadamard(Qubit($v$))
    \EndIf
  \EndFor
  \State Perm $\gets$ Permutation of Qubits of $F$ to qubits of inputs \Comment{step 5 of Section~\ref{sec:generalextractalgorithm}}
  \For{swap$(q_1,q_2)$ in PermutationAsSwaps(Perm)}
    \State $C\gets$ swap$(q_1,q_2)$
  \EndFor
  \State \textbf{return} $C$
\EndProcedure

\Procedure{ProcessFrontier}{$D,F,C$} \Comment{Corresponds to step 1 of Section~\ref{sec:generalextractalgorithm}}
    \For{$v\in F$}
        \If{$v$ has local Cliffords}
          \State $C\gets $ Cliffords(Qubit($v$))
          \State Remove Cliffords from $v$ to output
        \EndIf
      \EndFor
      \For{edge between $v$ and $w$ in $F$}
        \State $C\gets $ CZ(Qubit($v$), Qubit($w$))
        \State Remove edge between $v$ and $w$
  \EndFor
  \State \textbf{return} $D,C$
\EndProcedure

\Procedure{UpdateFrontier}{$D,F,C$} \Comment{Corresponds to steps 3,4 of Section~\ref{sec:generalextractalgorithm}}
  \State $N\gets $ Neighbours$(F)$
  \State $M \gets$ Biadjacency$(F,N)$
  \State $M^\prime \gets $ GaussReduce$(M)$
  \State Init $vs$ \Comment{initialise empty set $vs$}
  \For{row $r$ in $M^\prime$}
    \If{sum$(r) == 1$}\Comment{there is a single 1 on row $r$}
      \State Set $v$ to vertex corresponding to non-zero column of $r$
      \State Add $v$ to $vs$ \Comment{$v$ will be part of the new frontier}
    \EndIf
  \EndFor
  \If{$vs$ is empty}
    \While{There is \YZ vertex connected to $F$}
        \State $v \gets$ \YZ vertex connected to $F$
        \State $w \gets$ a neighbour of $w$ in $F$
        \State $D\gets $ Pivot($D$,$v$,$w$)
    \EndWhile
    \State $D,C \gets$ ProcessFrontier$(D,F,C)$
    \State \textbf{return} $D,F,C$
  \EndIf
  \State $M \gets $  Biadjacency($F,ws$) \Comment{smaller biadjacency matrix}
  \For{$(r_1,r_2) \in $ GaussRowOperations$(M)$}
    \State $C\gets $ CNOT(Qubit$(r_1)$, Qubit$(r_2)$)
    \State Update $D$ based on row operation
  \EndFor
  \For{$v\in vs$} \Comment{all $v$ now have a unique neighbour in $F$}
    \State $w \gets $ Unique neighbour of $v$ in $F$
    \State $C\gets $ Hadamard(Qubit$(w)$)
    \State $C\gets $ Phase-gate$(Phase(v),Qubit($w$))$
    \State Remove $w$ from $F$
    \State Add $v$ to $F$
  \EndFor
  \State $D,C \gets$ ProcessFrontier$(D,F,C)$
  \State \textbf{return} $D,F,C$
\EndProcedure
\end{algorithmic}

\end{algorithm}

\end{document}